\newif\ifcomments
\definecolor{ForestGreen}{rgb}{0.1333,0.5451,0.1333}
\definecolor{DarkRed}{rgb}{0.65,0,0}
\definecolor{Red}{rgb}{1,0,0}
\DeclarePairedDelimiter{\ceil}{\lceil}{\rceil}
\DeclarePairedDelimiter{\floor}{\lfloor}{\rfloor}
\DeclarePairedDelimiter{\ang}{\langle}{\rangle}
\declaretheorem[numberwithin=section]{theorem}
\declaretheorem[numberlike=theorem]{lemma}
\declaretheorem[numberlike=theorem,name=Lemma]{lem}
\declaretheorem[numberlike=theorem]{fact}
\declaretheorem[numberlike=theorem]{corollary}
\declaretheorem[numberlike=theorem]{conjecture}
\declaretheorem[numberlike=theorem]{claim}
\declaretheorem[numberlike=theorem]{observation}
\declaretheorem[numberlike=theorem,style=definition]{definition}
\declaretheorem[numberlike=theorem,name=Definition,style=definition]{defn}
\declaretheorem[numberlike=theorem,name=Open Question]{question}
\newcommand{\rb}[2]{\raisebox{#1 mm}[0mm][0mm]{#2}}
\newcommand{\emphbf}[1]{\emph{\bfseries{#1}}}
\newcommand{\ind}[1]{\mathbbm{1}\left({#1}\right)}
\newcommand{\Var}{\mathbb{V}}
\newcommand{\E}{\mathbb{E}}
\newcommand{\uid}{\mathsf{uid}}
\newcommand{\singleton}{\mathsf{singleton}}
\newcommand{\rank}{\mathsf{rank}}
\newcommand{\sk}{\mathsf{sk}}
\newcommand{\lge}{\mathsf{lge}}
\newcommand{\LGE}{\mathsf{LGE}}
\newcommand{\wt}{\mathsf{wt}}
\newcommand{\bydef}{\stackrel{\operatorname{def}}{=}}
\newcommand{\Ball}{\mathsf{Ball}}
\newcommand{\dist}{\mathsf{dist}}
\newcommand{\DFS}{\mathsf{DFS}}
\newcommand{\Clique}{\mathsf{Clique}}
\newcommand{\Unite}{\mathsf{Unite}}
\newcommand{\Z}{\mathbb{Z}}
\newcommand{\F}{\mathbb{F}}
\global\long\def\wtilde{\widetilde}
\global\long\def\affected{{\rm affected}}
\global\long\def\sp{{\rm sp}}
\global\long\def\root{\mathsf{root}}
\global\long\def\Euler{\mathsf{Euler}}
\global\long\def\pos{\mathsf{pos}}
\global\long\def\larr{\leftarrow}
\global\long\def\sstart{\mathsf{start}}
\global\long\def\eend{\mathsf{end}}
\global\long\def\prefixSum{\mathsf{PrefixEnum}}
\global\long\def\exc{\rm exc}
\global\long\def\inc{\rm inc}
\global\long\def\deg{\mathsf{Deg}}
\global\long\def\aff{\rm aff}
\global\long\def\unaff{\rm unaff}
\global\long\def\valid{\rm valid}
\global\long\def\qry{\rm qry}
\global\long\def\edgenum{\mathsf{Enum}}
\global\long\def\incidentedge{\mathsf{IncidentEdge}}
\global\long\def\ArtificialE{\mathsf{ArtificialE}}
\global\long\def\giant{{\rm giant}}
\global\long\def\ext{\rm ext}
\global\long\def\arbo{\rm arbo}
\global\long\def\child{\rm child}
\global\long\def\Conn{\mathsf{Conn}}
\global\long\def\nd{\rm nd}
\global\long\def\poly{\mathrm{poly}}
\global\long\def\tmn{\mathrm{tmn}}
\global\long\def\oc{\mathrm{oc}}
\global\long\def\nb{\mathrm{nb}}
\global\long\def\ListAffectedComps{\mathsf{ListAffectedComps}}
\global\long\def\ListSubtrees{\mathsf{ListSubtrees}}
\global\long\def\ListTerminals{\mathsf{ListTerminals}}
\global\long\def\ListNeighbors{\mathsf{ListNeighbors}}
\global\long\def\IsTerminal{\mathsf{IsTerminal}}
\global\long\def\EnumFromGiant{\mathsf{EnumFromGiant}}
\global\long\def\PickTerminal{\mathsf{PickTerminal}}
\global\long\def\profile{\mathsf{profile}}
\global\long\def\id{\mathsf{id}}
\global\long\def\boundary{\mathsf{NeighborEdge}}
\global\long\def\NeighborEdge{\mathsf{NeighborEdge}}
\global\long\def\NeighborVertex{\mathsf{NeighborVertex}}
\global\long\def\InnerTerminals{\mathsf{InnerTerminals}}
\global\long\def\occurrences{\mathsf{occurrences}}
\global\long\def\SuccTerminal{\mathsf{SuccTerminal}}
\global\long\def\Degree{\mathsf{Degree}}
\global\long\def\Enum{\mathsf{Enum}}
\global\long\def\ArtificialE{\mathsf{ArtificialEdge}}
\global\long\def\ccolor{\mathsf{color}}
\newcommand{\level}{\mathsf{level}}
\newcommand{\Patrascu}{P{\v{a}}tra\c{s}cu\xspace}
\newcommand{\Boruvka}{Bor\r{u}vka\xspace}
\newcommand{\ignore}[1]{}
\title{Connectivity Labeling Schemes for Edge and Vertex Faults\\via Expander Hierarchies}
\author{Yaowei Long\\ University of Michigan
\and 
Seth Pettie\thanks{Supported by NSF Grant CCF-2221980.}
\\ University of Michigan
\and 
Thatchaphol Saranurak\thanks{Supported by NSF Grant CCF-2238138.}
\\ University of Michigan}
\date{}
\begin{document}

\maketitle
\pagenumbering{gobble}

\begin{abstract}
We consider the problem of assigning short \emph{labels} to 
the vertices and edges of a graph $G$ so that given any 
query $\ang{s,t,F}$ with $|F|\leq f$, we can determine whether $s$ and $t$ are still connected in $G-F$, given only the labels of $F\cup\{s,t\}$.

This problem has been considered when $F\subset E$ (edge faults),
where correctness is guaranteed with high probability (w.h.p.) \cite{DoryP21} or deterministically~\cite{IzumiEWM23}, 
and when $F\subset V$ (vertex faults), both w.h.p.~and deterministically~\cite{ParterP22a,ParterPP24}.
Our main results are as follows.

\begin{description}
\item[Deterministic Edge Faults.] We give a new deterministic labeling scheme for edge faults that uses $\tilde{O}(\sqrt{f})$-bit labels, which can be constructed in polynomial time.  This improves on Dory and Parter's~\cite{DoryP21} \emph{existential} bound of 
$O(f\log n)$ (requiring exponential time to compute) and the efficient $\tilde{O}(f^2)$-bit scheme of 
Izumi, Emek, Wadayama, and Masuzawa~\cite{IzumiEWM23}.
Our construction uses an improved 
edge-expander hierarchy 
and a distributed coding technique based on 
Reed-Solomon codes.

\item[Deterministic Vertex Faults.] We improve Parter, Petruschka, and Pettie's~\cite{ParterPP24} deterministic $O(f^7\log^{13} n)$-bit labeling scheme for vertex faults to $O(f^4\log^{7.5} n)$ bits, 
using an improved vertex-expander hierarchy and better sparsification of shortcut graphs. We completely bypass deterministic graph sketching \cite{IzumiEWM23} and hit-and-miss families~\cite{KarthikP21}.

\item[Randomized Edge/Verex Faults.] 
We improve the size of Dory and Parter's~\cite{DoryP21} randomized 
edge fault labeling scheme 
from $O(\min\{f+\log n, \log^3 n\})$ bits to $O(\min\{f+\log n, \log^2 n\log f\})$ bits, shaving a $\log n/\log f$ factor. 
We also improve the size of Parter, Petruschka, and Pettie's~\cite{ParterPP24}
randomized vertex fault labeling scheme
from $O(f^3\log^5 n)$ bits 
to $O(f^2\log^6 n)$ bits, 
which comes closer to their $\Omega(f)$-bit lower bound~\cite{ParterPP24}.
\end{description}

\end{abstract}

\newpage
\tableofcontents

\newpage

\pagenumbering{arabic}

\section{Introduction}

\label{sec:introduction}

A \emph{labeling scheme} for a graph problem can be viewed as a distributed data structure in which all queries must be answered without inspecting the underlying graph, but only the \emph{labels} of the query arguments. Early work focused on labeling schemes for \emph{adjacency}~\cite{Breuer66,BreuerF67,KannanNR92}, which is connected to finding small \emph{induced universal graphs}~\cite{AlstrupDK17,AlstrupKTZ19}.

There are now many labeling schemes for basic navigation queries in rooted trees~\cite{AlstrupDK17,AbiteboulAKMR06,AlstrupHL14}, such as adjacency, ancestry, and least common ancestors. There are labeling schemes for computing \emph{distances} in general graphs~\cite{AlstrupGHP16b}, planar graphs~\cite{GavoillePPR04,BonamyGP22,GawrychowskiU23}, and trees~\cite{GavoillePPR04,AlstrupBR05,AlstrupGHP16a}, as well as labelings for approximate distances~\cite{TZ05,AbrahamG11}. There are several labeling schemes for answering queries about the pairwise edge- and vertex-connectivity in undirected graphs \cite{KatzKKP04,HsuL09,IzsakN12,PettieSY22}. 

\paragraph{Label Schemes under Faults.}

Courcelle and Twigg~\cite{CourcelleT07,CourcelleGKT08} initiated the study of \emph{forbidden set} or \emph{fault tolerant} labeling schemes. The idea is to support a standard connectivity/distance query, subject to \emph{faults} (deletions) of some subset $F$ of vertices or edges. Several fault tolerant labeling schemes focus on special graph classes such as bounded treewidth graphs \cite{CourcelleT07} and planar graphs \cite{CourcelleGKT08,AbrahamCGP16,Bar-NatanCGMW22,chechik2023optimal}. On general graphs, labeling schemes that handle at most one or two faults were shown for single-source reachability \cite{choudhary2016optimal} and single-source approximate distance \cite{BaswanaCHR20}. 

The first labeling scheme on general graphs under multiple faults was given by Dory and Parter~\cite{DoryP21}, for connectivity under edge faults. 
More precisely, they assigned labels to edges and vertices of an undirected $n$-vertex graph $G$ so that, given the labels of $\ang{s,t,F}$ where $F\subset E(G)$, one can determine if $s$ and $t$ are still connected in $G-F$. When $|F|\leq f$, they gave a Monte Carlo randomized construction of labels of size $O(\min\{f+\log n,\log^{3}n\})$ bits that answer \emph{each} query correctly with high probability. By increasing the size to $O(f\log n)$ bits, their scheme answers \emph{all} queries correctly, with high probability, though confirming this property seems to require an exponential time brute force search. Recently, Izumi, Emek, Wadayama, and Masuzawa~\cite{IzumiEWM23} gave a deterministic polynomial-time construction of labels of size $\tilde{O}(f^{2})$.

Parter and Petruschka~\cite{ParterP22a} considered the same problem, but with \emph{vertex} faults rather than edge faults, i.e., $F\subset V(G)$. They gave deterministic labeling schemes with length $O(\log n)$ for $f=1$, $O(\log^{3}n)$ for $f=2$, and a randomized scheme of length $\tilde{O}(n^{1-1/2^{f-2}})$ for general $f$. This year, Parter, Petruschka, and Pettie~\cite{ParterPP24} developed randomized and deterministic labeling schemes with label length $\tilde{O}(f^{3})$ and $\tilde{O}(f^{7})$, respectively. They also observed an $\Omega(f+\log n)$-bit lower bound for vertex faults (randomized or deterministic), which established a complexity separation between edge and vertex faults. See \cref{tab:prior-results}.

\begin{table}
\begin{tabular}{|l|l|l|l|}
\multicolumn{3}{l}{\bfseries{\textsf{\large Edge Fault Tolerant Connectivity Labels}}}\\
\multicolumn{1}{l}{\textsf{Reference}}
&
\multicolumn{1}{l}{\textsf{Label Size (Bits)}}
&
\multicolumn{1}{l}{\textsf{Guarantee}}
&
\multicolumn{1}{l}{\textsf{Notes}}\\\hline
\rb{-3}{Dory \& Parter~\cite{DoryP21}} &  $O(\min\{f + \log n, \log^3 n\})$ & Monte Carlo & Query correct w.h.p.\\\cline{2-4}
               &  $O(f\log n)$                      & Deterministic  & \emph{Existential bound}\\\hline
Izumi, Emek, Wadayama  & $O(f^2\log^2 n\log\log n)$ & \rb{-3}{Deterministic} & Polynomial construction\\
\& Masuzawa~\cite{IzumiEWM23}    & $O(f^2\log^3 n)$  &  & $\tilde{O}(mf^2)$ 
construction\\\hline
Trivial     & $\Omega(\log n)$ & any & trivial lower bound\\\hline
       &  $O(\min\{f+\log n,$    & \rb{-3}{Monte Carlo} & \rb{-3}{Query correct w.h.p.}\\
\rb{-3}{\bf{new}}       & \hfill $\log^2 n\log(f/\log^2 n)\})$ &&\\\cline{2-4}
         &  $O(\sqrt{f}\log^{2} n\log f)$                   &  & \emph{Existential bound}\\
                &  $O(\sqrt{f}\log^{2.25} n\log (f\log n))$          & \rb{3.5}{Deterministic} & Polynomial construction \\\hline\hline
\multicolumn{3}{l}{}\\
\multicolumn{3}{l}{\bfseries{\textsf{\large Vertex Fault Tolerant Connectivity Labels}}}\\\hline
Parter & $O(\log^3 n)$      & Deterministic & $f\leq 2$\\\cline{2-4}
\& Petruschka~\cite{ParterP22a}     & $\tilde{O}(n^{1-2^{-f+2}})$   & Monte Carlo   & $f\in[3,o(\log\log n)]$\\\hline
\rb{-3}{Parter, Petruschka}        &   $O(f^3\log^5 n)$            & Monte Carlo & Query correct w.h.p.\\\cline{2-4}
\rb{-3}{\& Pettie~\cite{ParterPP24}}    &   $O(f^7\log^{13} n)$  & Deterministic & Polynomial construction\\\cline{2-4}
            & $\Omega(f + \log n)$ & any & lower bound\\\hline
            & $O(f^2\log^6 n)$ & Monte Carlo & Query correct w.h.p.\\\cline{2-4}
   
{\bf new} &  $O(f^{4}\log^{7} n)$       &  & \emph{Existential bound}\\
 &  $O(f^4\log^{7.5} n)$       & \rb{3.5}{Deterministic} & Polynomial construction\\\hline\hline
\end{tabular}
\caption{\label{tab:prior-results}All \emph{Monte Carlo} results 
have a one-sided error probability of $1/\poly(n)$, i.e.,
they may report two vertices disconnected when they are, in fact, connected.  The \emph{existential} result of Dory and Parter~\cite{DoryP21} constructs labels in $O(mf\log n)$ time that, with high probability, answer all queries correctly. However, to verify this fact requires a brute force search.
The new existential results require solving an NP-hard problem, namely \emph{sparsest cut}.}
\end{table}

\paragraph{Our Results.}

We show new connectivity labels under both edge and vertex faults that improve the state-of-the-art as follows:
\begin{enumerate}
\item \label{result:det edge} Deterministic labels under edge faults of size $\tilde{O}(\sqrt{f})$ bits (\Cref{thm:shorter-deterministic-labels}). This simultaneously improves Dory and Parter's~\cite{DoryP21} labels of size $O(f\log n)$, 
which require exponential time to construct, 
and Izumi et al.'s~\cite{IzumiEWM23} efficiently 
constructed labels of size $\tilde{O}(f^{2})$. 
In fact, Izumi et al.~\cite{IzumiEWM23} stated that 
``it seems plausible that the $\Omega(f)$-bit lower bound holds.''  We refute this possibility.
\item \label{result:det vertex} Deterministic labels under vertex faults of size $O(f^{4}\log^{7.5}n)$ bits (\Cref{thm:vertex label}).
This improves the $O(f^{7}\log^{13}n)$-bit labels of \cite{ParterPP24}.
\item \label{result:rand edge} Randomized labels under edge faults of size $O(\min\{f+\log n,\log^{2}n\log f\})$ bits (\Cref{thm:edge-failure-long-labels,thm:rand-edge-labeling}). 
This improves the $O(\min\{f+\log n,\log^{3}n\})$-bit 
labels of Dory and Parter~\cite{DoryP21}.
\item \label{result:rand edge} Randomized labels under vertex faults of size $O(f^2 \log^6 n)$ bits (\Cref{thm:rand-vertex-labeling}). This improves the $O(f^3\log^5 n)$-bit labels of Parter et al.~\cite{ParterPP24}.
\end{enumerate}

\paragraph{Related Work: Connectivity Oracles.}

Our connectivity labels can be viewed as a distributed version of \emph{connectivity oracles under faults}. In this problem, we must build a centralized data structure for an input graph $G$ so that, given a query $\ang{s,t,F}$, we can check if $s$ and $t$ are still connected in $G-F$ as fast as possible using the centralized data structure. Connectivity oracles have been well-studied under both edge faults \cite{PatrascuT07,DuanP20,GibbKKT15} 
and vertex faults \cite{DuanP20,BrandS19,LongS22,PilipczukSSTV22,kosinas2023connectivity,long2024better} 
and the optimal preprocessing/query bounds have been proven \cite{PatrascuT07,LongS22,DuanP20,KopelowitzPP16,HenzingerKNS15},
either unconditionally, or conditioned on standard fine-grained complexity assumptions.

\paragraph{Our Techniques.}

The basis of Results \ref{result:det edge} and \ref{result:det vertex} is an \emph{expander hierarchy}. This is the first application of expanders in the context of fault-tolerant labeling schemes, although they have been widely applied in the centralized dynamic setting, e.g.~\cite{NanongkaiSW17,goranci2021expander,LongS22}. 

We give a clean definition of expander hierarchies for both \emph{edge expansion} (\Cref{def:edge hie}) and \emph{vertex expansion }(\Cref{def:vertex hie}) as well as simple algorithms for computing them. In the edge expansion version, our formulation turns out to be equivalent to \Patrascu and Thorup's \cite{PatrascuT07}, but our algorithm improves the quality by a $\Theta(\log n)$ factor. Our vertex expander hierarchy is new. 
Combined with the observation that any $\phi$-vertex-expanding set has an $O(1/\phi)$-degree 
Steiner tree  (\Cref{lemma:LowDegreeSteinerTree}),\footnote{Previously, \cite{LongS22} showed an $O(\log(n)/\phi)$-degree Steiner tree spanning any $\phi$-vertex-expanding set. } this implies a new \emph{low-degree hierarchy} that is strictly stronger 
and arguably cleaner than 
all previous low-degree hierarchies \cite{DuanP20,LongS22,long2024better,ParterPP24}, 
which are the critical structures behind vertex fault tolerant connectivity oracles.

To obtain our deterministic labels under edge faults, we first show that the edge expander hierarchy immediately leads to a simple $\tilde{O}(f)$-bit label (\Cref{thm:det-edge-labeling}), which already improves the state-of-the-art~\cite{IzumiEWM23,DoryP21}. Then, we introduce a new distributed coding technique based on 
Reed-Solomon codes to improve the label size to $\tilde{O}(\sqrt{f})$ bits, obtaining Result \ref{result:det edge}.

Our deterministic labeling scheme under vertex faults (Result \ref{result:det vertex}) employs a high-level strategy from Parter et al.~\cite{ParterPP24}, 
but our label is shorter by an $\tilde{\Theta}(f^{3})$ factor. Roughly, this improvement comes from two sources. 
First, Parter et al.~\cite{ParterPP24} employed the deterministic graph cut sketch of Izumi et al.~\cite{IzumiEWM23}, which contributes an $\tilde{\Theta}(f^{2})$ factor to the size.  We can bypass deterministic sketching and pay only an $\tilde{O}(f)$ factor because our low-degree hierarchy has an additional vertex expansion property. Second, Parter et al.~\cite{ParterPP24} constructed 
a \emph{sparsified shortcut graph} with arboricity $\tilde{O}(f^{4})$ using the 
hit-and-miss families by Karthik and Parter~\cite{KarthikP21}. 
We are able to use 
the simpler 
Nagamochi-Ibaraki sparsification \cite{NagamochiI92} 
to obtain a sparse 
shortcut graph with arboricity $\tilde{O}(f^{2})$. 
These two improvements cannot be applied in a modular way, 
so our final scheme ends up being rather different from \cite{ParterPP24}. 

All previous centralized connectivity oracles under vertex faults (including the one using vertex expanders \cite{LongS22}) crucially used 2D-range counting data structures, which seem inherently incompatible with the distributed labeling setting. Thus, our scheme is inherently 
different than the centralized oracles~\cite{LongS22,DuanP20}.

\section{Deterministic Edge Fault Connectivity Labels}

\label{sec:det-edge-faults}

The goal of this section is to prove the following theorem.
\begin{theorem}
\label{thm:shorter-deterministic-labels} Fix any undirected graph $G=(V,E)$ and integer $f\geq1$. There are deterministic labeling functions $L_{V}:V\to\{0,1\}^{\log n}$ and $L_{E}:E\to\{0,1\}^{O(\sqrt{f/\phi}\log(f/\phi)\log^{2}n)}$ such that given any query $\ang{s,t,F}$, $F\subset E$, $|F|\leq f$, one can determine whether $s$ and $t$ are connected in $G-F$ by inspecting only $L_{V}(s),L_{V}(t),\{L_{E}(e)\mid e\in F\}$. The construction time is exponential for $\phi=1/2$ and polynomial for $\phi=\Omega(1/\sqrt{\log n})$.
\end{theorem}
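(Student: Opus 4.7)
My plan is two-stage: (i) derive an $\tilde{O}(f/\phi)$-bit-per-edge label directly from an edge-expander hierarchy (essentially recovering the $\tilde{O}(f)$-bit scheme of \Cref{thm:det-edge-labeling} mentioned in the introduction), and (ii) compress by a factor of $\sqrt{f/\phi}$ using a distributed Reed-Solomon encoding.

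I would first construct an edge-expander hierarchy of depth $L = O(\log n)$ and parameter $\phi$: at level $i$, the current contracted graph is partitioned into $\phi$-edge-expanding clusters, and level $i{+}1$ is obtained by contracting each cluster to a super-vertex. For $\phi = 1/2$ this requires exact sparsest cut (hence exponential time), while for $\phi = \Omega(1/\sqrt{\log n})$ it is polynomial via ARV-style approximate expander decomposition. The structural fact driving everything is that removing $k$ edges from a $\phi$-edge-expander leaves a connected \emph{core} together with at most $O(k/\phi)$ small \emph{trim pieces} of total size $O(k/\phi)$. The query walks the hierarchy bottom-up: at each level $i$, using the labels of the faults $F_i$ inside the current cluster of $s$ (and of $t$), the algorithm decides which core/trim component each vertex belongs to and lifts them to the corresponding super-vertex at level $i{+}1$ (or declares disconnection). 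The vertex label $L_V(v)$ is simply $v$'s $\log n$-bit ID; all hierarchy information lives in $L_E$. In the naive scheme, every fault edge inside cluster $C$ broadcasts the entire trim description --- names of the $O(|F_i|/\phi) = O(f/\phi)$ trim vertices and which piece each belongs to --- costing $\tilde{O}(f/\phi)$ bits per edge across the $O(\log n)$ levels, which matches \Cref{thm:det-edge-labeling}.

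To shave a $\sqrt{f/\phi}$ factor, at each level $i$ and cluster $C$ I view the trim description as a vector $v_{i,C} \in \mathbb{F}_q^{O(f/\phi)}$ over a field of size $q = \poly(n)$. I group the coordinates of $v_{i,C}$ into $O(\sqrt{f/\phi})$ blocks of length $O(\sqrt{f/\phi})$ and encode each block as the coefficients of a polynomial $P^{(b)}_{i,C}$ of degree $O(\sqrt{f/\phi})$. Each edge $e \in E(C)$ is assigned a distinct field element $\alpha_e$ deterministically from its ID, and $L_E(e)$ stores the values $\{P^{(b)}_{i,C}(\alpha_e)\}_b$, contributing $\tilde{O}(\sqrt{f/\phi})$ bits per level and $O(\sqrt{f/\phi}\log(f/\phi)\log^2 n)$ bits in total across levels. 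At query time, the $|F_i|$ fault edges inside $C$ supply at least $\deg P^{(b)}_{i,C} + 1$ evaluations of each polynomial, so Lagrange interpolation recovers $v_{i,C}$ and hence the trim structure for $C$.

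The Reed-Solomon step is where the main difficulty lies. The encoding must (a) allow recovery for \emph{any} $|F_i| \leq f$, not only the extremal case --- likely by storing a geometric ladder of polynomials of degrees $1, 2, 4, \ldots, O(\sqrt{f/\phi})$ and letting the query pick the right rung from the observed $|F_i|$; (b) carry not only the trim vertex IDs but also enough information to locate where $s$ and $t$ themselves land after trimming; and (c) consistently handle cluster-crossing fault edges whose ``evaluation slots'' participate in several clusters' polynomials. Orchestrating these three points coherently across the $O(\log n)$ hierarchy levels, while staying fully deterministic, is the bulk of the technical work.
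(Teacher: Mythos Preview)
Your proposal has a genuine gap at the Reed--Solomon step: the message $v_{i,C}$ you want to encode --- the ``trim description'' of cluster $C$ after removing $F_i$ --- is \emph{query-dependent}, but the polynomials $P^{(b)}_{i,C}$ and their evaluations stored in $L_E(e)$ must be fixed at preprocessing time, before $F$ is known. You cannot interpolate a polynomial whose coefficients depend on $F$ from evaluations that were written down before $F$ existed. This circularity is not a detail to be patched; it is the central obstacle, and your plan does not name the fixed combinatorial object that would replace $v_{i,C}$.

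The paper resolves this by encoding, for each fixed dyadic interval $I_j$ of the Euler tour at level~$\ell$, the set $\LGE_\ell(I_j)$ of \emph{large-gap edges}: cut edges of $I_j$ whose outside endpoints are far apart (distance $> r=\lceil\sqrt{f/\phi}\rceil$) in a weighted Euler-tour metric. This set is fixed at preprocessing. Code shares of $\LGE_\ell(I_j)$ are distributed not to the edges of $I_j$ themselves, but to all level-$\ell$ edges within a radius-$r$ ball of each large-gap edge. The query then falls into three cases for each $I_j$: (1) $|\LGE_\ell(I_j)|\le 4r$, stored explicitly; (2) $F$ reveals at least half the code shares, so interpolate; (3) $F$ reveals fewer than half --- and here is the idea you are missing --- one does \emph{not} decode at all, but instead infers non-constructively that $I_j$ lies in a component of volume $>f/\phi$, because each unrevealed large-gap edge certifies a fault-free ball of weight $\ge 2r$ attached to $I_j$. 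Your ``geometric ladder'' in point~(a) does not substitute for this: when $|F_i|$ is small you have few evaluations, but the information you need is not correspondingly smaller in any obvious way; what saves the paper is that failure-to-decode is itself informative.
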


We remark that the above labeling scheme is actually more flexible. By reading only the labels of the failed edges $F$, it can compute a representation of connected components of $G-F$ in $\poly(f\log n)$ time. From this representation, we can, for example, count the number of connected components in $G-F$.
This is impossible in the vertex-failure setting for any 
vertex-labels of size $o(n^{1-1/f}/f)$. See \Cref{sec:lowerbound}.
Given the additional labels of $s$ and $t$, we can then check whether $s$ and $t$ are connected in $G-F$, in $O(1+\min\{\frac{\log\log n}{\log\log\log n},\,\frac{\log f}{\log\log n}\})$ time. We can also straightforwardly handle edge insertions.

To prove \Cref{thm:shorter-deterministic-labels}, we introduce two new tools into the context of labeling schemes. The first tool is the 
\emph{edge expander hierarchy}, for which we give an improved construction in \Cref{sec:edge exp hie}. 
This tool alone already leads to a simple and efficient deterministic labeling scheme of size $\tilde{O}(f)$ bits, 
improving prior work~\cite{IzumiEWM23,DoryP21}. 
In \Cref{sec:code share} we introduce a second 
tool, distributed \emph{code shares}, 
and in \Cref{sec:shorter edge label}
we combine the two tools and 
prove \Cref{thm:shorter-deterministic-labels}.

\subsection{First Tool: Edge Expander Hierarchies}\label{sec:edge exp hie}

In this section we recall \Patrascu{} and Thorup's~\cite{PatrascuT07} definition of an expander hierarchy, then give a new construction that improves the quality by a factor of $\Theta(\log n)$.

Given a graph $G=(V,E)$, a set $X\subseteq E$ and a vertex $u$, let $\deg_{X}(u)$ denote the number of edges from $X$ incident to $u$ and let 
$\deg_{X}(S)=\sum_{u\in S}\deg_{X}(u)$ denote the \emph{volume} of $S$ with respect to $X$. We say that $X$ is \emph{$\phi$-expanding} in $G$ if, for every cut $(S,V\setminus S)$, 
\[
|E_{G}(S,V\setminus S)|\ge\phi\min\{\deg_{X}(S),\deg_{X}(V\setminus S)\}.
\]
Consider a partition $\{E_{1},\dots,E_{h}\}$ of $E$. We denote $E_{\le\ell}:=\bigcup_{i\le\ell}E_{i}$, 
$E_{>\ell}:=\bigcup_{i>\ell}E_{i}$,
and $G_{\le\ell}:=G\cap E_{\le\ell}$.
We also write $\deg_{\ell}:=\deg_{E_{\ell}}$, $\deg_{\le\ell}:=\deg_{E_{\le\ell}}$, and so on.

\begin{defn}
[Expander Hierarchy]Given a graph $G=(V,E)$, an edge-partition ${\cal P}=\{E_{1},\dots,E_{h}\}$ of $E$ \emph{induces an $(h,\phi)$-expander hierarchy} of $G$ if, for every level $\ell\le h$ and every connected component 
$\Gamma$ of $G_{\leq\ell}$, $E_{\ell}\cap\Gamma$ is $\phi$-expanding in $\Gamma$. That is, for every cut $(S,\Gamma\setminus S)$ of $\Gamma$, we have 
\[
|E_{\le\ell}(S,V(\Gamma)\setminus S)|\ge\phi\min\{\deg_{\ell}(S),\deg_{\ell}(V(\Gamma)\setminus S)\}.
\]
Over all levels $\ell$, the set of all connected components $\Gamma$ of $G\setminus E_{>\ell}$ form a laminar family ${\cal C}$. Let ${\cal H}$ be the tree representation of ${\cal C}$. We also call $({\cal C},{\cal H})$ an $(h,\phi)$-expander hierarchy of $G$. 
\label{def:edge hie}
\end{defn}

\begin{figure}
\begin{centering}
\includegraphics[width=1\textwidth]{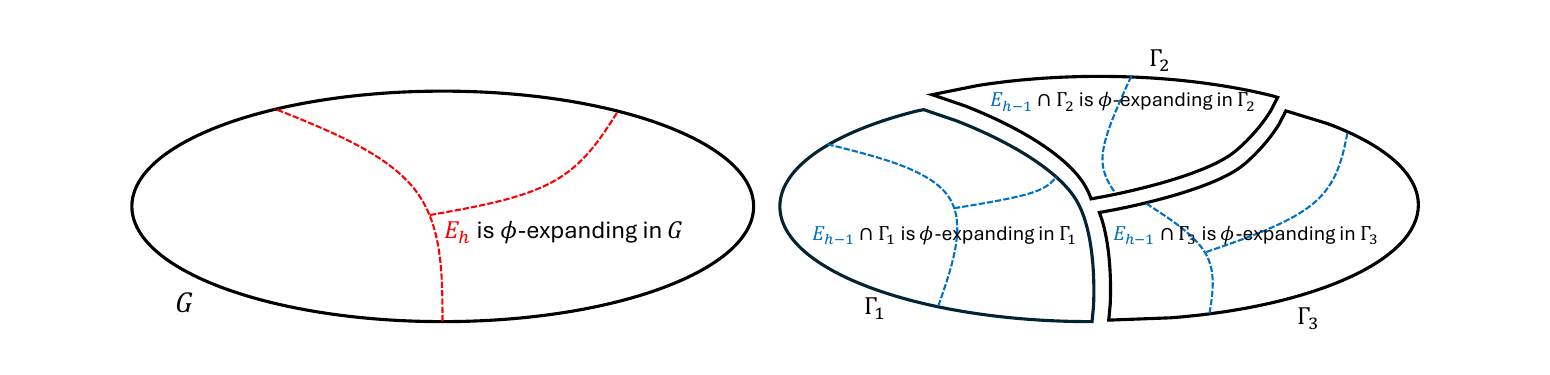}
\par\end{centering}
\caption{Illustration of the top two levels of an $(h,\phi)$-expander hierarchy. $E_h$ and $E_{h-1}$ are drawn in red and blue, respectively.}\label{fig:exp hie}

\end{figure}

See \Cref{fig:exp hie} for an example.
Below, we give an improved construction of the expander hierarchy.
\begin{theorem}\label{thm:edge exp hie} 
There exists an algorithm that, given a graph $G$, computes an $(h,\phi)$-expander hierarchy with $h\le\log n$ and $\phi=1/2$ in exponential time, or $\phi\ge\Omega(1/\sqrt{\log n})$ in polynomial time. 
\end{theorem}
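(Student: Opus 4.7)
The plan is a top-down recursive construction in which the top-level edges $E_h$ are the inter-cluster edges of a suitably balanced and well-linked vertex partition of $V(G)$. Given $G$ on $n$ vertices, if $G$ is itself a $\phi$-expander I return the trivial hierarchy with $h=1$ and $E_1 = E(G)$. Otherwise, I compute a vertex partition $(V_1, \ldots, V_k)$ of $V(G)$ with two properties: (a) each $|V_i| \le n/2$, and (b) the inter-cluster edge set $F$ is $\phi$-expanding in $G$ in the sense of \Cref{def:edge hie}. I then set $E_h := F$, recursively build expander hierarchies for each $G[V_i]$, and paste these recursively-computed levels below level $h$ (padding shorter sub-hierarchies with empty levels to align depths). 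Disconnected $G$ is handled by applying the construction to each connected component independently.

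Correctness below level $h$ follows by induction, since the components of $G_{\le h-1} = G \setminus F$ are precisely the connected components of the $G[V_i]$, and the recursively-built sub-hierarchies certify $\phi$-expansion at every lower level. At level $h$ the only component is $G$, and property (b) is exactly what is required. The depth bound $h \le \log n$ is immediate from (a).

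The main obstacle is producing a partition that simultaneously satisfies balancedness (a) and the global expansion property (b). My plan is to use a \emph{boundary-linked} expander decomposition: the clusters $(V_1, \ldots, V_k)$ are chosen so that each cluster is a $\phi$-expander under an augmented degree function which counts $F$-boundary edges in the degrees. This strengthening lets me take an arbitrary cut $(S, V\setminus S)$ of $G$, decompose it as $S = \bigsqcup_i (S \cap V_i)$, and charge the $F$-degree contribution of each $S \cap V_i$ against the $G[V_i]$-edges cut inside $V_i$. Summing across all clusters, together with the $F$-edges directly cut by $(S, V\setminus S)$, yields $|E_G(S, V\setminus S)| \ge \phi \min\{\deg_F(S), \deg_F(V\setminus S)\}$, which is property (b).

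For the $\phi = 1/2$ exponential-time bound I would use an exact most-balanced sparse cut oracle, which is NP-hard in general but suffices existentially. For the polynomial-time bound $\phi = \Omega(1/\sqrt{\log n})$ I would instead invoke the Arora-Rao-Vazirani $O(\sqrt{\log n})$-approximate sparsest cut, iteratively carving out sparse pieces from the current graph until every remaining piece is a boundary-linked $\phi$-expander. The subtle technical point is ensuring that the carving process both terminates with the boundary-linked property and respects the balance condition (a); this is achieved by always splitting along a \emph{most-balanced} sparse cut, so that every recursive subproblem drops below half the previous vertex count within one level, and by using the linkage parameter to absorb the boundary edges produced by peeling into the expansion inequality without accumulating losses across the $\log n$ levels.
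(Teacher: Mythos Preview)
Your top-down recursion matches the paper exactly: the theorem reduces to producing, for connected $G$, an edge set $X=E_h$ that is $\phi$-expanding in $G$ and whose removal leaves components of size at most $n/2$, then recursing. The gap is in how you obtain that $X$.

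The charging argument does not go through. Consider a cut $S$ that is a union of whole clusters. Then every $S\cap V_i$ is $V_i$ or $\emptyset$, no intra-cluster edge is cut, and the boundary-linked hypothesis on each $V_i$ is vacuous; the inequality you need collapses to an expansion statement about the \emph{contracted cluster multigraph}, about which boundary-linkedness of the individual $V_i$ tells you nothing. Concretely, take four $k$-cliques $V_1,\dots,V_4$ with a perfect matching of $F$-edges between $V_1,V_2$, another between $V_3,V_4$, and a single $F$-edge from $V_2$ to $V_3$. Each $V_i$ is a boundary-linked $\tfrac12$-expander, and your most-balanced-sparse-cut carving produces exactly this partition; yet with $S=V_1\cup V_2$ one has $|E_G(S,V\setminus S)|=1$ while $\deg_F(S)=\deg_F(V\setminus S)=2k+1$, so $F$ is only $\Theta(1/k)$-expanding. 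The paper's key lemma avoids this by never asking the pieces to be expanders. It initializes $X\gets E$ and, while $X$ is not $\tfrac12$-expanding, takes a witnessing cut $(S,V\setminus S)$ with $|S|\le n/2$ and updates $X\gets (X\cup E(S,V\setminus S))\setminus(X\cap E(S,S))$; a two-line count from $\tfrac12\deg_X(S)>|E(S,V\setminus S)|$ gives $|X\cap E(S,S)|>|E(S,V\setminus S)\setminus X|$, so $|X|$ strictly drops and the loop halts with an $X$ that is $\tfrac12$-expanding \emph{by construction} (balance is maintained throughout since each $|S|\le n/2$). Replacing the exact witness by an ARV-approximate sparsest cut gives the polynomial-time $\Omega(1/\sqrt{\log n})$ version. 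This direct attack on the expansion of $X$ itself---rather than of the pieces it leaves behind---is precisely what lifts $\phi$ from \Patrascu--Thorup's $1/(2\log n)$ to $1/2$.
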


\Patrascu and Thorup \cite{PatrascuT07} gave an exponential-time algorithm for $h\le\log m$ and $\phi=1/(2\log n)$ and a polynomial-time algorithm with $\phi\ge\Omega(1/\log^{1.5}n)$. \Cref{thm:edge exp hie} shaves one $\log n$ factor in $\phi$ for both settings.\footnote{The factor $1/2$ in \Cref{thm:edge exp hie} is quite artificial. It can be improved to $1$ if we slightly change the definition such that $X$ is $\phi$-expanding in $G$ if, for every cut $(S,V\setminus S)$, $|E_{G}(S,V\setminus S)|\ge\phi\min\{|E_{G}(S,V)\cap X|,|E_{G}(V\setminus S,V)\cap X|\}.$}

\Cref{thm:edge exp hie} is perhaps surprising. Recall a related and seemingly weaker concept of $\phi$-expander decomposition.\emph{ }A $\phi$-expander decomposition of a graph $G$ is an edge set $X\subseteq E$ such that, for each connected component $\Gamma$ of $G\setminus X$, $E\cap\Gamma$ is $\phi$-expanding in $G[\Gamma]$. It is known that there is no expander decomposition with $X\leq0.99|E|$ where $\phi=\omega(1/\log n)$ \cite{alev2018graph,moshkovitz2018decomposing}. Here, we give an expander \emph{hierarchy} with $\phi=1/2$. 

\Cref{thm:edge exp hie} follows immediately from the lemma below, inspired by~\cite[Lemma 3.1]{racke2014computing}.
\begin{lem}
There exists an exponential-time algorithm that, given a graph $G=(V,E)$, computes an edge set $X$ such that every connected component of $G\setminus X$ contains at most $n/2$ vertices and $X$ is $\frac{1}{2}$-expanding. In polynomial time, we can instead guarantee that $X$ is $\Omega(1/\sqrt{\log n})$-expanding.\label{lem:edge exp sep} 
\end{lem}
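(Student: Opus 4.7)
The plan is to characterize $X$ as a (nearly) minimum cardinality edge set whose removal leaves only components of size at most $n/2$, and then to argue that any such minimizer is automatically $\tfrac{1}{2}$-expanding.

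For the exponential-time version, I take $X^*$ to be an exact minimizer, found by brute force. Suppose for contradiction that some cut $(S, T := V\setminus S)$ witnesses $|E_G(S,T)| < \tfrac{1}{2}\min\{\deg_{X^*}(S), \deg_{X^*}(T)\}$; without loss of generality, $|S| \leq n/2$. Partition $X^* = X^*_{SS} \sqcup X^*_{TT} \sqcup X^*_{ST}$ according to whether an edge lies inside $S$, inside $T$, or crosses, and define $X' := X^*_{TT} \cup E_G(S,T)$. Since $X'$ contains every cross edge of $G$, the components of $G\setminus X'$ cannot cross $(S,T)$: inside $S$ the graph $G[S]$ is intact but has only $|S| \leq n/2$ vertices, while inside $T$ the components coincide with those of $G\setminus X^*$ restricted to $T$, each of size $\leq n/2$. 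Hence $X'$ also satisfies the size condition. Using $|E_G(S,T)| < \tfrac{1}{2}\deg_{X^*}(S) = |X^*_{SS}| + \tfrac{1}{2}|X^*_{ST}|$, a direct computation yields
\[
|X^*| - |X'| \;=\; |X^*_{SS}| + |X^*_{ST}| - |E_G(S,T)| \;>\; \tfrac{1}{2}|X^*_{ST}| \;\geq\; 0,
\]
and even when $|X^*_{ST}|=0$ strictness persists because then the hypothesis reduces to $|E_G(S,T)| < |X^*_{SS}|$. This contradicts the minimality of $X^*$.

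For the polynomial-time version, I iterate. Initialize $X \leftarrow E$, so the components of $G\setminus X$ are singletons. At each step, invoke a $\beta = O(\sqrt{\log n})$-approximation for minimum conductance on $G$ with vertex weights $\deg_X$, obtaining a cut $\hat S$ of ratio
$\rho := |E_G(\hat S, V \setminus \hat S)| / \min\{\deg_X(\hat S), \deg_X(V \setminus \hat S)\}$.
If $\rho \geq \tfrac{1}{2}$, then the true minimum ratio is at least $\tfrac{1}{2\beta} = \Omega(1/\sqrt{\log n})$, so $X$ is $\Omega(1/\sqrt{\log n})$-expanding and we return it. Otherwise $\rho < \tfrac{1}{2}$; pick the side of $\hat S$ of size $\leq n/2$ and apply the same $X \to X'$ replacement as above to strictly decrease $|X|$ while preserving the size invariant. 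Since $|X|$ is a nonnegative integer, the loop terminates in $O(|E|)$ iterations, each polynomial.

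The main obstacle is establishing strict progress in the replacement step: the slack $|X^*| - |X'| > \tfrac{1}{2}|X^*_{ST}|$ can vanish when no $X$-edge crosses, but the strict inequality from the violated expansion condition carries through and guarantees $|X'| < |X^*|$ in every case. A secondary subtlety is that the conductance used in the polynomial step is vertex-weighted by $\deg_X$, so the sparsest-cut approximation must handle non-uniform vertex weights; any standard $O(\sqrt{\log n})$-approximation suffices to deliver the claimed $\phi = \Omega(1/\sqrt{\log n})$.
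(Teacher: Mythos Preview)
Your proof is correct and follows essentially the same route as the paper. The paper phrases both cases iteratively (start from $X=E$ and repeatedly apply the replacement $X \leftarrow X_{TT}\cup E_G(S,T)$ until no violating cut exists), whereas you phrase the exponential-time case as ``take the global minimum and argue by contradiction''; but the heart of both arguments is the identical replacement step and the same strict-decrease calculation $|X|-|X'|=|X_{SS}|+|X_{ST}|-|E_G(S,T)|>0$.
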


\begin{proof}
Initialize $X\gets E$. If $X$ is $\frac{1}{2}$-expanding, we are done. Otherwise, we repeatedly update $X$ as follows. Since $X$ is not $\frac{1}{2}$-expanding, there exists a vertex set $S$ where $|S|\le n/2$ such that $|E(S,V\setminus S)|<\frac{1}{2}\min\{\deg_{X}(S),\deg_{X}(V\setminus S)\}$. Update $X\gets X\cup E(S,V\setminus S)\setminus(X\cap E(S,S))$.

Let $X'$ denote $X$ after the update. Observe that every connected component of $G\setminus X'$ still contains at most $n/2$ vertices because $|S|\le n/2$. Moreover, since
\begin{align*}
\frac{1}{2}(2|X\cap E(S,S)|+|X\cap E(S,V\setminus S)|)=\frac{1}{2}\deg_{X}(S)>|E(S,V\setminus S)|,\intertext{we have}|X\cap E(S,S)|>|E(S,V\setminus S)|-\frac{1}{2}|X\cap E(S,V\setminus S)|\geq|E(S,V\setminus S)\setminus X|.
\end{align*}
Thus, $|X'|<|X|$ and there can be at most $|E|$ iterations of the procedure.

To get a polynomial time construction, we instead apply the sparsest cut algorithm of Arora et al.~\cite{AroraRV09} that, given $X$, either guarantees that $X$ is $\Omega(1/\sqrt{\log n})$-expanding or returns a set $S$ where $|S|\le n/2$ and $|E(S,V\setminus S)|<\frac{1}{2}\min\{\deg_{X}(S),\deg_{X}(V\setminus S)\}$. 
\end{proof}
\begin{proof}[Proof of \Cref{thm:edge exp hie}]Given $G$, compute the edge set $X$ from \Cref{lem:edge exp sep} and set $E_{h}\gets X$. To compute $E_{h-1},E_{h-2},\dots,E_{1}$, we recurse on each connected component $C$ of $G\setminus X$. We have $h\le\log n$ because each component $C$ has size $|C|\le n/2$. 
\end{proof}
\subsection{A Simple $\tilde{O}(f)$-Bit Labeling Scheme}\label{sec:simple edge label}

In this section, we prove \cref{thm:det-edge-labeling}, which uses $\tilde{O}(f)$-bit labels. It is the basis for our final $\tilde{O}(\sqrt{f})$-bit labeling scheme presented in \cref{thm:shorter-deterministic-labels}.
\begin{theorem}
\label{thm:det-edge-labeling} Fix any undirected graph $G=(V,E)$ and integer $f\geq1$. There are deterministic labeling functions $L_{V}:V\to\{0,1\}^{\log n}$ and $L_{E}:E\to\{0,1\}^{O(f\phi^{-1}\log^{2}n)}$ such that given any query $\ang{s,t,F}$, $F\subset E$, $|F|\leq f$, one can determine whether $s$ and $t$ are connected in $G-F$ by inspecting only $L_{V}(s),L_{V}(t),\{L_{E}(e)\mid e\in F\}$. The construction time is exponential for $\phi=1/2$ and polynomial for $\phi=\Omega(1/\sqrt{\log n})$. \end{theorem}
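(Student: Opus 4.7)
The plan is to combine the edge-expander hierarchy of \Cref{thm:edge exp hie} with the following \emph{damage} corollary of expansion. Fix $|F|\le f$, a level $\ell$, and a level-$\ell$ cluster $\Gamma$; every non-giant component $S$ of $\Gamma-F$ inside $G_{\le\ell}$ induces a cut of $\Gamma$ whose $G_{\le\ell}$-crossing edges all lie in $F$, so by the $\phi$-expansion of $E_\ell\cap\Gamma$ in $\Gamma$ we get $|F|\ge\phi\,\deg_\ell(S)$ and hence $\deg_\ell(S)\le f/\phi$. Each level-$(\ell-1)$ child cluster of $\Gamma$ (when $\Gamma$ has $\ge 2$ children) has at least one incident level-$\ell$ edge---else it would already be its own level-$\ell$ component---so $S$ is the union of at most $f/\phi$ child clusters of $\Gamma$.

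I would perform a DFS of the laminar tree $\mathcal{H}$ and set $L_V(v)$ to $v$'s DFS index, an $O(\log n)$-bit ID; each cluster $\Gamma$ then corresponds to a preorder interval $[a_\Gamma,b_\Gamma]$ of its descendant leaves, and checking ``$v\in\Gamma$'' reduces to $a_\Gamma\le L_V(v)\le b_\Gamma$. For an edge $e\in E_\ell$ and each ancestor cluster $\Gamma^{(\ell')}$ with $\ell\le\ell'\le h$, $L_E(e)$ stores (i) the interval of $\Gamma^{(\ell')}$ and the intervals of the two child clusters of $\Gamma^{(\ell')}$ containing $e$'s endpoints, and (ii) an $O(f\phi^{-1}\log n)$-bit local certificate describing the structure of $E_{\ell'}\cap\Gamma^{(\ell')}$ near $e$. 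The certificate is designed so that the certificates of any at-most-$f$ edges from $E_{\ell'}\cap\Gamma^{(\ell')}$ jointly determine the partition of $\Gamma^{(\ell')}$'s child clusters into the non-giant components of $\Gamma^{(\ell')}-F$ in $G_{\le\ell'}$---which, by the damage bound, is a short list of at most $f/\phi$ child-cluster intervals per component. Summed across $h\le\log n$ levels, $|L_E(e)|=O(f\phi^{-1}\log^2 n)$.

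A concrete realization: for each level-$\ell$ cluster $\Gamma$ fix a spanning tree $T_\Gamma$ of the quotient graph of $\Gamma$ over its children (using edges of $E_\ell\cap\Gamma$); each edge $e\in E_\ell\cap\Gamma$ stores the intervals of its endpoint child clusters together with, for tree edges, the intervals of the up-to-$f/\phi$ child clusters on the smaller side of $e$'s induced tree cut, and, for non-tree edges, the intervals on the fundamental cycle of $e$ in $T_\Gamma$. Decoding is a tree sweep: use the fundamental cycles from failed non-tree edges to decide which failed tree edges actually split the quotient graph, and then union the sides supplied by those tree edges to extract the non-giant components as lists of child-cluster intervals.

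To answer a query $\ang{s,t,F}$, read $\{L_E(e):e\in F\}$ and process top-down: at each level $\ell'$, group failed edges by their level-$\ell'$ cluster interval, decode the local certificates of each group to partition that cluster's children into a giant component plus non-giant components (each a short list of intervals), and combine across levels to get a compact description of the components of $G-F$ meeting the neighborhood of $F$. Finally, use $L_V(s)$ and $L_V(t)$ via interval membership to locate $s,t$ in this description; they are connected in $G-F$ iff they fall in the same component. The main obstacle is engineering the local certificate so that only the $f$ supplied labels (and no information about unfailed edges) suffice to recover all non-giant components; the $\le f/\phi$ damage bound is what makes an $O(f\phi^{-1}\log n)$-bit encoding per level information-theoretically natural, and $h\le\log n$ from \Cref{thm:edge exp hie} then delivers the claimed $O(f\phi^{-1}\log^2 n)$-bit label, with the exponential-vs-polynomial construction trade-off inherited directly from \Cref{thm:edge exp hie}.
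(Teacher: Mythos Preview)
Your high-level plan---use the expander hierarchy and the bound $\deg_\ell(S)\le f/\phi$ for non-giant components---matches the paper, but the concrete realization has two real gaps.

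First, the claim that a non-giant component $S$ of $G_{\le\ell}[\Gamma]-F$ is a union of at most $f/\phi$ \emph{whole} child clusters is false: failures at levels $<\ell$ can split a child cluster $C$ into pieces that land in different components of $G_{\le\ell}[\Gamma]-F$, and level-$\ell$ edges may reconnect only some of those pieces. So the atomic units at level $\ell$ are not child clusters but the pieces produced recursively, and your labels (which store only child-cluster intervals) do not identify which piece a level-$\ell$ edge endpoint lies in. Second, your quotient-tree certificate does not encode what is actually needed. To decide whether a failed tree edge $e_t$ disconnects anything, you must know whether some \emph{surviving} non-tree edge reconnects its two sides; ``fundamental cycles from failed non-tree edges'' are irrelevant for this, and ``up-to-$f/\phi$ child clusters on the smaller side'' describes the composition of one side but says nothing about which non-$F$ edges cross the cut (and the number of such crossing edges is unrelated to $f/\phi$).

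The paper sidesteps both issues by working with a single minimum spanning tree $T^*$ with respect to $\level$ and its Euler tour. Removing the tree edges of $F$ breaks $\Euler(T^*)$ into intervals that serve as the atomic units at \emph{every} level simultaneously, with point location for free via DFS numbers. For a tree edge $e$ and each level $\ell'\ge\level(e)$, the label of $e$ records the three intervals of $\Euler(T_{\ell'})$ adjacent to $e$ together with the first $f/\phi+1$ level-$\ell'$ non-tree edges incident to each interval. At query time one unites intervals using (i) shared endpoints, (ii) recursion from level $\ell'-1$, (iii) the stored non-tree edges that are not in $F$, and (iv) the expansion rule that any two parts with $\deg_{\ell'}>f/\phi$ are in the same component. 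The piece you are missing is (iii): the labels must explicitly expose \emph{surviving} reconnecting edges up to the $f/\phi$ threshold, not just structural information about the failed ones.
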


At a very high level, \Cref{thm:det-edge-labeling} is proved 
by adapting \Patrascu{} and Thorup's~\cite{PatrascuT07}
centralized edge-failure connectivity oracle to the distributed setting. Note that \Cref{thm:det-edge-labeling} already improves the 
state-of-the-art polynomial-time computable $\tilde{O}(f^2)$-bit labeling of~\cite{IzumiEWM23}.

Given a function $\level : E\to [h]$, let $\{E_1,\ldots,E_h\}$
be the corresponding edge partition, where $E_\ell = \level^{-1}(\ell)$.
Suppose $\{E_{1},\dots,E_{h}\}$ 
induces an $(h,\phi)$-expander hierarchy $\mathcal{H}$.
Define $T^{*}$ to be a minimum spanning tree with respect to $\level$,
and let $\Euler(T^{*})$ be its Euler tour, which is a list of length $n+2(n-1)$ that includes each vertex once (its first appearance) and each $T^{*}$-edge $\{u,v\}$ twice, as $(u,v)$ and $(v,u)$, according to a DFS traversal of $T^{*}$, starting from an arbitrary root vertex. 
Each vertex $u$ is identified by its 
position in $\Euler(T^*)$, denoted $\DFS(u)$.
We call each edge in $\Euler(T^{*})$ an \emph{oriented} edge. 
See \cref{fig:Euler} for a small example. 

For each $\ell\le h$, let $\mathcal{T}_{\leq\ell}$ be the set of 
\emph{level-$\ell$ trees} in the forest $T^{*}\cap E_{\le\ell}$. 
Observe that for each $T\in \mathcal{T}_{\leq \ell}$, 
$\Euler(T)$ is a subsequence of 
$\Euler(T^*)$, not necessarily contiguous. Furthermore, because $T^{*}$ is a minimum spanning tree with respect to $\level$, each connected component $\Gamma$ of $G_{\leq \ell}$ has a unique level-$\ell$ tree $T\in {\cal T}_{\leq \ell}$ such that $T$ spans $\Gamma$.

\begin{figure}[h]
\centering
\begin{tabular}{cc}
\parbox{1.5in}{\includegraphics[scale=0.5]{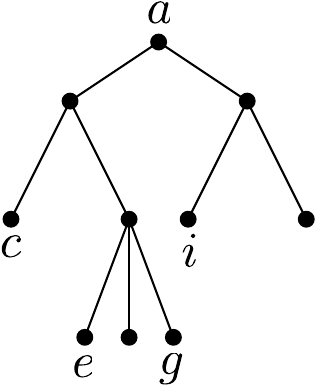}}
&
\parbox{4in}{$\Euler(T^*) = \begin{array}{l}
{\Big(}a,(a,b),b,(b,c),c,(c,b),(b,d),d,(d,e),e,(e,d),\\
\;\;\;\;(d,f),f,(f,d),(d,g),g,(g,d),(d,b),(b,a),\\
\;\;\;\;(a,h),h,(h,i),i,(i,h),(h,j),j,(j,h),(h,a){\Big).}\end{array}$}
\end{tabular}
\caption{\label{fig:Euler} Left: $T^*$ on vertex set $\{a,b,c,\ldots,j\}$, rooted at $a$.  
Right: $\Euler(T^*)$.}
\end{figure}

\begin{defn}
[Simple Deterministic Edge Labels] 
For $v\in V$, the vertex label $L_{V}(v)$ is just $\DFS(v)$.
Each \emph{non-tree} edge 
$e=\{u,v\}\not\in E(T^*)$ has a $2\log n$-bit label
$L_E(e)=(\DFS(u),\DFS(v))$.
Each \emph{tree edge} $e=\{u,v\}\in E(T^{*})$
is assigned an $O((f/\phi)\log^{2}n)$-bit label,
generated as follows.
\begin{enumerate}
\item \label{item:level simple} Store $(L_{V}(u),L_{V}(v),\level(e))$.

\item 
For each $\ell\in[\level(e),h]$, let $T_{\ell}\in\mathcal{T}_{\leq\ell}$ be the tree containing $e$. 
Write $\Euler(T_{\ell})$ as 
\[
\Euler(T_{\ell})=X\cdot(u,v)\cdot Y\cdot(v,u)\cdot Z.
\]
For each $W\in\{X,Y,Z\}$, 
\begin{enumerate}
\item \label{item:endpoint ET} 
Store the labels of the first and last elements 
of $W$, and store the labels of the first and last vertices in 
$W$ (i.e. $\min_{u\in W} \DFS(u)$ and $\max_{v\in W} \DFS(v)$).
\item \label{item:incident ET} Store the first $f/\phi+1$ level-$\ell$ 
non-tree edges incident to vertices in $W$.  
(Each such edge $\{u,v\}$ is encoded 
by $L_E(\{u,v\}) = (\DFS(u),\DFS(v))$.)
\end{enumerate}
\end{enumerate}
\label{def:simple edge label}
\end{defn}

\paragraph{The Query Algorithm.}

For each level $1\leq \ell\leq h$, let ${\cal T}_{\leq \ell, F}\subseteq {\cal T}_{\leq \ell}$ collect all level-$\ell$ trees $T$ such that $T$ intersects $F$. Recall that each connected component $\Gamma$ of $G_{\ell}$ has a unique $T\in {\cal T}_{\leq \ell}$ as its spanning tree. We define $G_{\leq \ell, F}$ to be a subgraph of 
$G_{\ell}$ that only collects the connected components of $G_{\leq \ell}$ whose spanning tree is in ${\cal T}_{\leq \ell,F}$.

Our goal is to sequentially build 
vertex partitions ${\cal P}_1,\ldots,{\cal P}_h$, 
where ${\cal P}_{\ell}$ is a partition of $V(G_{\leq \ell, F})$ that reflects the connected components of $G_{\leq \ell, F} - F$. 
In fact, we will compute, for each $T\in{\cal T}_{\leq\ell,F}$, a partition ${\cal P}_{\ell}[T]$ of $V(\Gamma)$ that reflects the connected component of $\Gamma-F$ (where $\Gamma$ is the connected component of $G_{\leq \ell}$ that has $T$ as its spanning tree), and then ${\cal P}_{\ell}$ is simply the union of ${\cal P}_{\ell}[T]$ over all $T\in{\cal T}_{\leq \ell, F}$. After ${\cal P}_h$ is computed, we can count the number of connected
components in $G-F$, or answer $s$-$t$ connectivity queries,
given $L_V(s),L_V(t)$.

Each ${\cal P}_{\ell}[T]$ has a compact representation as follows. We start with defining \emph{intervals}.
Consider a $T\in \mathcal{T}_{\leq\ell}$.
We can detect whether $T$ intersects $F$
and if so, enumerate 
$T\cap F=\{e_1,\ldots,e_{f_0}\}$
using \cref{item:endpoint ET} of the $F$ labels.
If we remove the oriented copies of $T\cap F$, 
$\Euler(T)$ breaks into a set of $2f_0+1$ intervals $\mathcal{J}(T,F)$. Note that from \Cref{item:endpoint ET}, for each interval in ${\cal J}(T,F)$, we can obtain the labels of its first and last elements and its first and last vertices. Towards the compact representation of ${\cal P}_{\ell}[T]$, we can think of each $J\in{\cal J}(T,F)$ as the vertex set $J\cap V(T)$. Each part $P\in {\cal P}_{\ell}[T]$ is represented by a set of intervals ${\cal J}_{P}\subseteq {\cal J}(T,F)$ such that $P = \bigcup_{J\in {\cal J}_{P}} J\cap V(T)$.

For the purposes of \emph{point location}, 
we will write $J\in \mathcal{J}(T,F)$ 
as $[\min_{u\in J\cap T} \DFS(u), \max_{v\in J\cap T} \DFS(v)]$.
Note that in general, an interval $[\DFS(u),\DFS(v)]$ in 
$\mathcal{J}(T,F)$ contains 
(the DFS numbers of) vertices outside of $T$.
Nonetheless, for vertices in $V(T)$, 
these intervals allow us to do correct
point location.

\begin{observation}[Point Location]\label{obs:pointlocation}
Suppose $x\in V(T)$, $T\in \mathcal{T}_{\leq \ell,F}$. 
If $[\DFS(u),\DFS(v)]$ is the (unique) 
interval in $\mathcal{J}(T,F)$
containing $\DFS(x)$ then there is a 
component in $T-F$ containing $u,x,$ and $v$.
\end{observation}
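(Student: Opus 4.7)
My plan is to establish two facts and then combine them. Fact~(i): within any single interval $J\in\mathcal{J}(T,F)$, all $T$-vertices appearing in $J$ lie in a common connected component of $T-F$. Fact~(ii): the DFS-ranges $[\DFS(u),\DFS(v)]$ of distinct intervals are pairwise disjoint when restricted to $V(T)$, so that the unique interval whose DFS-range contains $\DFS(x)$ is precisely the interval in which $x$'s first appearance sits.

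For~(i) I will invoke the standard Euler-tour property: along $\Euler(T)$, the walk can move between distinct components of $T-F$ only by traversing an oriented copy of a failed tree edge. Since $J$ is a maximal contiguous subsequence of $\Euler(T)$ containing no oriented failed edge, the walk restricted to $J$ remains inside a single component $C\subseteq V(T)$ of $T-F$. Thus every $T$-vertex whose first appearance falls in $J$ belongs to $C$; in particular both $u$ and $v$ do.

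For~(ii) I first need to argue that $\Euler(T)$ is itself a genuine DFS Euler tour of $T$, so that the first-appearance order of $V(T)$-vertices inside $\Euler(T)$ coincides with the order of their $\DFS$ values (their positions in $\Euler(T^*)$). The key structural observation is that $V(T)$ is a connected component of the forest $T^*\cap E_{\le\ell}$ and $T^*$ is acyclic, so no $T^*$-edge of level~$>\ell$ can have both endpoints in $V(T)$ (else $T^*$ would contain a cycle). Hence any exit of the $T^*$-DFS from $V(T)$ goes to a vertex outside $V(T)$, and filtering $\Euler(T^*)$ down to $T$'s vertices and oriented edges produces a valid DFS Euler tour of $T$. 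It then follows that the $V(T)$-vertices inside any contiguous subsequence of $\Euler(T)$ form a DFS-consecutive block, so the intervals' DFS-ranges partition $V(T)$ into disjoint segments, giving~(ii).

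Combining the two facts: if $x\in V(T)$ and $\DFS(x)\in[\DFS(u),\DFS(v)]$, then~(ii) forces the first appearance of $x$ to lie in $J$, and~(i) then places $u$, $x$, $v$ in a common component of $T-F$. The delicate step is~(ii); specifically, pinning down that the subsequence of $\Euler(T^*)$ induced by $T$ really is a DFS Euler tour of $T$, so that $\DFS$ values give a consistent order on $V(T)$ even though $\Euler(T^*)$ may interleave visits to $V(T)$ with excursions into other components via level-$>\ell$ $T^*$-edges.
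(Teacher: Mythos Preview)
Your proof is correct. The paper leaves this as an unproved observation, so there is no paper proof to compare against; your argument fills in exactly the details one would expect.

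One minor streamlining: Fact~(ii) does not actually require the Euler-tour structure of $\Euler(T)$. Since $\Euler(T)$ is by definition a subsequence of $\Euler(T^*)$, the vertex elements of $V(T)$ automatically appear in $\Euler(T)$ in strictly increasing $\DFS$ order; hence for consecutive intervals $J_1,J_2$ (separated by a single failed oriented edge) the last vertex $v_1$ of $J_1$ and first vertex $u_2$ of $J_2$ satisfy $\DFS(v_1)<\DFS(u_2)$, so the DFS-ranges are in fact disjoint as integer intervals outright, not merely when restricted to $V(T)$. The Euler-tour structure you establish (via the ``no level-$>\ell$ edge has both endpoints in $V(T)$'' cycle argument) is still essential, but it is Fact~(i), not Fact~(ii), that genuinely depends on it: without knowing that $\Euler(T)$ is an actual walk on $T$, you could not conclude that the segment $J$ stays inside one component of $T-F$.
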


The query algorithm \emph{only} does point location on vertices $x$
known to be in $V(T)$, 
so \cref{obs:pointlocation} 
suffices for correctness.

\medskip 

Suppose $\mathcal{P}_{\ell-1}$ has been computed.  
For each $T\in \mathcal{T}_{\leq \ell}$ intersecting $F$, 
we enumerate 
$T\cap F=\{e_1,\ldots,e_{f_0}\}$
and
initialize $\mathcal{P}_\ell[T] \gets \mathcal{J}(T,F)$ (i.e. each part $P\in {\cal P}_{\ell}[T]$ is only one interval),
then proceed to unify parts of $\mathcal{P}_\ell[T]$
by applying rules {\bf R1}--{\bf R4} and the operation $\Unite_{T}(x,y)$. The $x,y$ in $\Unite_{T}(x,y)$ can be vertices in $V(T)$, intervals in ${\cal J}(T,F)$ or even parts in ${\cal P}_{\ell}[T]$, and $\Unite_{T}(x,y)$ will unite the parts containing $x$ and $y$ in ${\cal P}_{\ell}[T]$.

\begin{description}
    \item[R1.] If $J,J'$ are two intervals of $\Euler(T)$ 
    that share a common endpoint, say $J$ ends
    with `$u$' and $J'$ begins with `$(u,v)$', call $\Unite_{T}(J,J')$.

    \item[R2.] For each call to $\Unite_{T'}(x,y)$
    made in the construction of $\mathcal{P}_{\ell-1}[T']$, 
    $T'\subset T$, call $\Unite_{T}(x,y)$.

    \item[R3.] 
    For each non-tree edge $\{u,v\} \in E_\ell$ 
    encoded in the labels $\{L_E(e_i) \mid i\in [f_0]\}$,
    if $\{u,v\}\not\in F$, call $\Unite_{T}(u,v)$.
\end{description}

Rule {\bf R1} is implemented with \cref{item:endpoint ET} of the $F$-labels.
The enumeration of edges in {\bf R3}
uses \cref{item:incident ET} of the $F$-labels,
but to implement $\Unite_{T}(u,v)$ we need
to locate the intervals in $\mathcal{J}(T,F)$ 
containing $u,v$.
Since $\level(\{u,v\})=\ell$, 
both $u,v$ are in $V(T)$, 
and by \cref{obs:pointlocation} we can
locate the intervals containing $u,v$, 
given $\DFS(u),\DFS(v)$.

According to {\bf R2}, every $\Unite_{T'}(x,y)$ performed at level $\ell-1$ 
on some $T'\subset T$
is re-executed verbatim
as $\Unite_{T}(x,y)$ 
if $x,y$ are vertices.  
Since $x,y \in V(T') \subseteq V(T)$, \cref{obs:pointlocation}
lets us identify the intervals in $\mathcal{J}(T,F)$ containing
$x,y$. 
If $x,y$ were intervals from $\mathcal{J}(T',F)$
we can pick the first vertices from $x$ and $y$, say $x',y'$,
and call $\Unite_{T}(x',y')$. Once again, $x',y'\in V(T')\subseteq V(T)$, so \cref{obs:pointlocation} applies.

After executing {\bf R1}, 
$\mathcal{P}_\ell[T]$ reflects
the connected components of $T-F$. 
After executing {\bf R2}, 
$\mathcal{P}_\ell[T]$ reflects
the connected components of $(G_{\leq \ell-1}[V(T)] \cup T) - F$.
If it were the case that {\bf R3} had access to \emph{all}
level-$\ell$ non-tree edges, then it would be sufficient 
to find all connected components of 
$G_{\leq\ell}[V(T)]-F$.
However, \cref{item:incident ET} of the 
$F$-labels are only 
guaranteed to reveal up to $f/\phi+1$ 
level-$\ell$ non-tree edges per interval 
in $\mathcal{J}(T,F)$.

\begin{lem}
\label{lem:expansion-implies-connected} If $P_{1},\ldots,P_{k}\in\mathcal{P}_{\ell}[T]$ are those parts with $\deg_{\ell}(P_{i})>f/\phi$, then $\bigcup_{i=1}^{k}P_{i}$ are contained in a single connected component of $G_{\le\ell}-F$. 
\end{lem}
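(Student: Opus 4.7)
My plan is to derive a contradiction by pairing the $\phi$-expansion of $E_\ell\cap\Gamma$ in $\Gamma$ with a small-cut witness provided by the failure set $F$, where $\Gamma$ is the connected component of $G_{\leq\ell}$ spanned by $T$. Before doing so I would establish, by induction on $\ell$, the following soundness claim for the algorithm: every part $P\in\mathcal{P}_\ell[T]$ lies entirely in a single connected component of $G_{\leq\ell}-F$. This is straightforward because \textbf{R1} only unites endpoints of tree edges in $T\setminus F$, \textbf{R3} only unites endpoints of non-tree edges in $E_\ell\setminus F$, and \textbf{R2} merely replays merges that are already sound at level $\ell-1$ and remain so at level $\ell$ (since $G_{\leq\ell-1}\subseteq G_{\leq\ell}$). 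The base case is that each initial interval in $\mathcal{J}(T,F)$ lies in one component of $T-F$, which is exactly the content of \cref{obs:pointlocation}.

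Now suppose for contradiction that $P_1,\ldots,P_k$ are not all in a common component of $G_{\leq\ell}-F$. Pick any component $C$ of $G_{\leq\ell}-F$ meeting at least one of them; since $\Gamma$ is a whole connected component of $G_{\leq\ell}$, in fact $C\subseteq V(\Gamma)$. Setting $S=C$, the soundness claim above forces each $P_i$ to lie entirely on one side of the cut $(S,V(\Gamma)\setminus S)$, so the split is nontrivial and each of the two sides contains at least one part whose $\deg_\ell$-value exceeds $f/\phi$.

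The pivotal observation is that every edge in $E_{\leq\ell}(S,V(\Gamma)\setminus S)$ must lie in $F$: otherwise it would survive in $G_{\leq\ell}-F$ and join $C$ to a different component, contradicting the maximality of $C$. Hence $|E_{\leq\ell}(S,V(\Gamma)\setminus S)|\leq|F|\leq f$, and the $\phi$-expansion of $E_\ell\cap\Gamma$ in $\Gamma$ (guaranteed by the expander-hierarchy hypothesis at level $\ell$) yields
\[
f\;\geq\;|E_{\leq\ell}(S,V(\Gamma)\setminus S)|\;\geq\;\phi\min\{\deg_\ell(S),\deg_\ell(V(\Gamma)\setminus S)\}.
\]
WLOG $\deg_\ell(S)\leq f/\phi$. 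But $S$ contains some $P_i$ with $\deg_\ell(P_i)>f/\phi$, and monotonicity of $\deg_\ell(\cdot)$ in the vertex set gives $\deg_\ell(S)\geq\deg_\ell(P_i)>f/\phi$, the desired contradiction. I do not anticipate a serious obstacle here: the only step requiring real bookkeeping is the soundness induction for $\mathcal{P}_\ell[T]$ (in particular making sure that \textbf{R2} merges carried up from level $\ell-1$ still respect $G_{\leq\ell}-F$ connectivity), after which the result reduces to the one-line cut-versus-expansion argument above.
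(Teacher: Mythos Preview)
Your argument is correct and follows essentially the same cut-versus-expansion contradiction as the paper's proof. The paper's version is terser: it simply partitions $\{P_1,\ldots,P_k\}$ into disconnected groups $A,B$ and observes $|E_{\leq\ell}(A,B)|/\min\{\deg_\ell(A),\deg_\ell(B)\}<f/(f/\phi)=\phi$, leaving implicit both the soundness of \textbf{R1}--\textbf{R3} (stated as ``clear'' only in the subsequent correctness lemma) and the extension from $A$ to a full cut $(S,V(\Gamma)\setminus S)$; your write-up makes both of these steps explicit, which is an improvement in rigor rather than a different approach.
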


\begin{proof}
Suppose the claim is false, that there is some partition of $\{P_{1},\ldots,P_{k}\}$ into $A$ and $B$ which are disconnected. Then 
\[
\frac{|E_{\leq\ell}(A,B)|}{\min\{\deg_{\ell}(A),\deg_{\ell}(B)\}}<\frac{f}{f/\phi}=\phi,
\]
contradicting the fact that $\mathcal{H}$ is an $(h,\phi)$-expander hierarchy. 
\end{proof}

In light of \cref{lem:expansion-implies-connected}, 
we continue to unify parts according to a fourth rule.
\begin{description}
\item[R4.] If $P,P' \in\mathcal{P}_{\ell}[T]$ 
have $\deg_{\ell}(P),\deg_{\ell}(P')>f/\phi$,
call $\Unite_{T}(P,P')$. 
\end{description}

The \emph{full} partition $\mathcal{P}_{\ell}$ is obtained by taking the union of all $\mathcal{P}_{\ell}[T]$, for $T\in\mathcal{T}_{\leq\ell}$ intersecting $F$, plus the trivial partitions 
$\mathcal{P}_\ell[T] = \{V(T)\}$
for every $T\in\mathcal{T}_{\leq \ell}$ 
disjoint from $F$.

\begin{lem}[Correctness] 
If $P\in\mathcal{P}_{\ell}$, 
then $P$ is a connected component in 
$G_{\leq \ell}-F$. 
\end{lem}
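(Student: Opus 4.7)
The plan is to prove the lemma by induction on $\ell$. The base case $\ell=0$ is vacuous. For the inductive step I fix $\ell$ and show that for every $T\in\mathcal{T}_{\leq\ell}$, the partition $\mathcal{P}_\ell[T]$ produced by rules R1--R4 coincides with the partition of $V(T)$ into connected components of $(G_{\leq\ell}[V(T)])-F$; the lemma then follows by taking the union over $T\in\mathcal{T}_{\leq\ell}$. The case $T\cap F=\emptyset$ is immediate, since $T$ is intact and spans the entire $G_{\leq\ell}$-component on $V(T)$, matching the trivial part $\mathcal{P}_\ell[T]=\{V(T)\}$. So I would concentrate on $T\in\mathcal{T}_{\leq\ell,F}$.

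I would split the argument into two directions. For \emph{soundness} (no over-merging), I would verify that every call $\Unite_T(x,y)$ issued during the construction places $x$ and $y$ in the same component of $G_{\leq\ell}-F$. This is a rule-by-rule check: R1 joins two intervals through a shared endpoint that represents the same vertex of $V(T)$; R2 replays unifications at level $\ell-1$ that are sound by the inductive hypothesis and hence sound in the larger graph $G_{\leq\ell}-F$; R3 uses a genuine non-faulted edge of $E_\ell$; and R4 is exactly \cref{lem:expansion-implies-connected}.

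For \emph{completeness} (no under-merging), I would argue that any $u,v$ in a common component of $(G_{\leq\ell}[V(T)])-F$ end up in the same part of $\mathcal{P}_\ell[T]$. Fix a path $u=v_0,v_1,\dots,v_k=v$ in that component and show by induction on $i$ that $v_{i-1}$ and $v_i$ are unified. The edge $\{v_{i-1},v_i\}$ falls into one of three easy cases: (i) it lies in $E_{\leq\ell-1}$, so by the inductive hypothesis it was already unified inside some $\mathcal{P}_{\ell-1}[T']$ with $T'\subseteq T$, and R2 replays that unification; (ii) it is a tree edge in $T\cap E_\ell$ not in $F$, in which case its two endpoints sit inside a common interval $J\in\mathcal{J}(T,F)$, which is a single initial part; (iii) it is a non-tree edge of $E_\ell$ revealed by some $L_E(e_j)$, in which case R3 unifies its endpoints directly.

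The hard part will be the remaining case: a non-tree edge $\{v_{i-1},v_i\}\in E_\ell\setminus F$ that is not revealed by any failed-edge label. Here I would exploit the fact that Item~2b of the label of each $e_j\in T\cap F$ stores the first $f/\phi+1$ level-$\ell$ non-tree edges incident to each window $W\in\{X_j,Y_j,Z_j\}$. Taking the common refinement of these windows across all $e_j\in T\cap F$ recovers $\mathcal{J}(T,F)$, and non-revelation of $\{v_{i-1},v_i\}$ forces $\deg_\ell(J)>f/\phi$ for the refinement intervals $J$ containing $v_{i-1}$ and $v_i$, and hence $\deg_\ell(P),\deg_\ell(P')>f/\phi$ for the parts $P,P'\in\mathcal{P}_\ell[T]$ containing them. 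Rule R4 then merges $P$ and $P'$, and \cref{lem:expansion-implies-connected} guarantees that the merger is still sound. Combining the two directions yields $\mathcal{P}_\ell[T]$ equal to the connected-component partition of $(G_{\leq\ell}[V(T)])-F$, completing the induction.
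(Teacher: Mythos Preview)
Your approach is essentially the paper's: induction on $\ell$, rule-by-rule soundness (with \textbf{R4} via \cref{lem:expansion-implies-connected}), and completeness by arguing that any unrevealed level-$\ell$ non-tree edge forces $\deg_\ell>f/\phi$ on both incident intervals, so \textbf{R4} finishes the job. One small fix to your case (i): if the level-$(\ell-1)$ tree $T'\subseteq T$ containing the edge is \emph{unaffected} ($T'\cap F=\emptyset$) there are no $\Unite_{T'}$ calls for \textbf{R2} to replay---but then $T'$ is an intact subtree of $T-F$ and \textbf{R1} already places both endpoints in the same part (the paper accordingly credits ``\textbf{R1} and \textbf{R2}'' jointly here).
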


\begin{proof}
Rules {\bf R1}--{\bf R3} are clearly sound
and \cref{lem:expansion-implies-connected} 
implies
{\bf R4} is sound.  We consider completeness.
If there is a path between 
$u,v\in V(T)$ in $G_{\leq \ell-1}-F$
then by induction on $\ell$, 
$u$ and $v$ will be in the same part of 
$\mathcal{P}_{\ell}$ after executing 
{\bf R1} and {\bf R2}.
Suppose $u,v$ are joined by a path in 
$G_{\leq \ell}-F$,
but $u,v$ are in different parts $P_u,P_v$ connected 
by a level-$\ell$ edge $e'=\{u',v'\}$. 
Because {\bf R3} could not be
applied, $e'$ is not contained 
in \cref{item:incident ET} of the $F$-edges
bounding the intervals in $\mathcal{J}(T,F)$
containing $u',v'$, 
implying $\deg_\ell(P_u),\deg_\ell(P_v) > f/\phi$,
but then by {\bf R4}, 
$P_u,P_v$ would have been united.
\end{proof}

Once we have constructed $\mathcal{P}_{h}$, a connectivity query $\ang{s,t,F}$ works as follows. First, identify the two intervals in $\Euler(T^{*})$ of the forest $T^{*}-F$ that contains $s$ and $t$. 
This can be done using the 
predecessor search over the endpoints
of the intervals in 
$O(\min\{\frac{\log\log n}{\log\log\log n},\,\frac{\log f}{\log\log n}\})$ time~\cite{PatrascuT06,PatrascuT14}, 
since there are $O(f)$ intervals, 
each represented by $O(\log n)$-bit numbers. 
Then we check whether the two intervals are in the 
same part of $\mathcal{P}_{h}$, corresponding to the 
same connected component of $G-F$.

\subsection{Second Tool: Code Shares}
\label{sec:code share}

\cref{thm:code-shares} gives 
the distributed coding scheme.  
We will only invoke it with $d=2$.

\begin{theorem}[Reed-Solomon Code Shares]\label{thm:code-shares}
Let $\mathbf{m}\in \mathbb{F}_q^k$ be a message, with $q>k$.  For any integer parameter $d\geq 2$, there are
$O(d\log q)$-bit \emph{code shares} 
$C_1,\ldots,C_{k}$ so that for any index 
set $J\subset [k]$ with 
$|J|\geq k/d$,
we can reconstruct $\mathbf{m}$ from the code shares $\{C_j \mid j\in J\}$ in polynomial time.
\end{theorem}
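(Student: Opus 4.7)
The plan is to exploit the classical fact that a polynomial of degree less than $k$ over a field is uniquely determined by any $k$ of its evaluations, and to pack $d$ evaluations into each share so that $k/d$ shares already suffice. Concretely, I would identify the message $\mathbf{m}=(m_1,\ldots,m_k)$ with the polynomial
\[
P(x) \;=\; \sum_{i=1}^{k} m_i\, x^{i-1} \;\in\; \mathbb{F}_q[x]
\]
of degree strictly less than $k$. Next I would fix $kd$ pairwise distinct evaluation points $\{\alpha_{j,t} : j\in[k],\, t\in[d]\}$ and define the share of index $j$ to be the $d$-tuple
\[
C_j \;=\; \bigl(P(\alpha_{j,1}),\,P(\alpha_{j,2}),\,\ldots,\,P(\alpha_{j,d})\bigr).
\]
Each $C_j$ stores $d$ field elements of $O(\log q)$ bits, matching the claimed share size of $O(d\log q)$ bits.

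For correctness, observe that any index set $J\subseteq[k]$ with $|J|\ge k/d$ yields a collection $\{C_j : j\in J\}$ of at least $|J|\cdot d \ge k$ evaluations of $P$ at pairwise distinct points. Since $\deg P < k$, these evaluations uniquely determine $P$, and standard Lagrange interpolation recovers the coefficient vector $\mathbf{m}$ in polynomial time. Thus the only real engineering is producing the $kd$ evaluation points and running the interpolation, both of which are textbook.

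The main (mild) obstacle I anticipate is the field-size condition: the hypothesis supplies only $q>k$, yet the construction wants $kd$ distinct evaluation points. I would handle this by working over a small extension $\mathbb{F}_{q^c}$ chosen so that $q^c \ge kd$; since $q\ge k+1$, taking $c=2$ suffices whenever $d \le k$, which is the only nontrivial regime (if $d>k$ then the requirement $|J|\ge k/d$ reduces to $|J|\ge 1$, and each share can simply carry $\mathbf{m}$ verbatim within its $O(d\log q)=\Omega(k\log q)$ budget). Elements of $\mathbb{F}_{q^c}$ still occupy $O(\log q)$ bits and Lagrange interpolation in $\mathbb{F}_{q^c}$ is still polynomial-time, so neither the share size bound nor the reconstruction complexity is affected by moving to the extension.
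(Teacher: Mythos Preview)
Your argument is correct. Both your construction and the paper's rest on the same interpolation principle, but they trade off the two natural parameters in opposite ways. The paper views $\mathbf{m}$ as the coefficient vector of a polynomial $g_d$ of degree $\lceil k/d\rceil-1$ over the extension field $\mathbb{F}_{q^d}$ and lets $C_i=(i,g_d(i))$ be a \emph{single} evaluation; then $|J|\ge k/d$ points already determine $g_d$, and each share is one element of $\mathbb{F}_{q^d}$, i.e.\ $O(d\log q)$ bits. You instead keep the polynomial at degree $k-1$ over (an at-most-quadratic extension of) $\mathbb{F}_q$ and pack $d$ evaluations into each share. The paper's version avoids your field-size wrinkle entirely, since only $k<q$ distinct evaluation points are needed; your version is perhaps more transparently ``Reed--Solomon with bundled symbols'' and sidesteps reasoning about how to reinterpret $\mathbf{m}$ as coefficients over $\mathbb{F}_{q^d}$. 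Either way the share size and reconstruction guarantees are identical.
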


\begin{proof}
One can regard $\mathbf{m}$ as the coefficients of a degree-$(k-1)$
polynomial $g_1$ over $\mathbb{F}_q$, 
or even as a degree-$(\ceil{k/d}-1)$ 
polynomial $g_d$ over $\mathbb{F}_{q^d}$.
The code shares $(C_i)_{1\leq i\leq k}$
are defined to be distinct evaluations of $g_d$.
\[
C_i = (i,g_d(i)).
\]
Given the code shares $\{C_i \mid i\in J\}$ for $|J|\geq k/d$, we can reconstruct
$g_d$ and hence $\mathbf{m}$ in 
polynomial time via polynomial 
interpolation.
\end{proof}

\subsection{An $\tilde{O}(\sqrt{f})$-Bit Labeling Scheme}
\label{sec:shorter edge label}

\paragraph{High-level Idea.}
The labeling scheme of \cref{sec:simple edge label} is \emph{non-constructive} inasmuch as rule \textbf{R4} infers that two $P,P'$ are connected, not by finding a path between them, but by checking if their volumes $\deg_{\ell}(P),\deg_{\ell}(P')>f/\phi$. In this section, we give a labeling scheme that is \emph{even more} non-constructive. We can sometimes infer that a part $P$ with $\deg_{\ell}(P)<f/\phi$ is nonetheless in a connected component $C$ of $G_{\le\ell}-F$ with $\deg_{\ell}(C)>f/\phi$ without explicitly knowing an edge incident to $P$.

A key idea in the construction is to store a large volume of information about an interval of an Euler tour as \emph{code shares} distributed across labels of ``nearby'' edges. Given a sufficient number of code shares, we will be able to reconstruct the information about the interval.

\paragraph{Notations.}

We shall assume without loss of generality that the graph has degree 3. Given any $G'=(V',E')$ with irregular degrees, we form $G=(V,E)$ by substituting for each $v'\in V$ a $\deg(v')$-cycle, then attach each edge incident to $v'$ to a distinct vertex in the cycle, hence $|V|=|E'|/2$. Given labelings $L_{V}:V\to\{0,1\}^{*},L_{E}:E\to\{0,1\}^{*}$, we let $L_{E'}(e')=L_{E}(\phi(e')),L_{V'}(v')=L_{V}(\phi(v'))$, where $\phi(e')\in E$ is the edge corresponding to $e'$ and $\phi(v')\in V$ is any vertex in the $\deg(v')$-cycle of $v'$. Correctness is immediate, since $F'\subset E'$ disconnects $s,t\in V'$ iff $\phi(F')\subset E$ disconnects $\phi(s),\phi(t)\in V$.

Recall $T^{*}$ is the minimum spanning tree of $G$ with respect to the $\level$ function. Fix a level $\ell$. Let $T\in\mathcal{T}_{\leq\ell}$ be a tree spanning a component of $G_{\le\ell}$. 
For each $v\in V(T)$, let 
\[
\wt_{\ell}(v)=\left\{ \begin{array}{ll}
1 & \mbox{ if \ensuremath{v} is incident to a level-\ensuremath{\ell} non-tree edge,}\\
0 & \mbox{ otherwise},
\end{array}\right.
\]
and for each oriented tree edge $(u,v)$, let $\wt_{\ell}(u,v)=0$. If $S$ is an interval of vertices and oriented edges in $\Euler(T)$, $\wt_{\ell}(S)=\sum_{x\in S}\wt_{\ell}(x)$.

Let $[\alpha,\beta]$ denote the interval of $\Euler(T)$ starting at $\alpha$ and ending at $\beta$. Then $\dist_{\ell}(\alpha,\beta)=\dist_{\ell}(\beta,\alpha)\bydef\sum_{\gamma\in[\alpha,\beta]\setminus\{\alpha,\beta\}}\wt_{\ell}(\gamma)$.\footnote{We exclude the endpoints of the intervals from the sum just to avoid double counting at the endpoints when we sum distances of two adjacent intervals. } For any vertex/edge element $\alpha$ in $\Euler(T)$, the set of all vertices within distance $r$ is: 
\[
\Ball_{\ell}(\alpha,r)=\{v\in V(T)\mid\dist_{\ell}(\alpha,v)\leq r\}.
\]
We overload the $\Ball$-notation for edges 
$e=\{u,v\}$. Here $(u,v),(v,u)$ refer to the oriented occurrences of $e$ in an Euler tour, if $e$ is a tree edge.
\[
\Ball_{\ell}(e,r)=\left\{ \begin{array}{ll}
\Ball_{\ell}((u,v),r)\cup\Ball_{\ell}((v,u),r) & \mbox{ when \ensuremath{e\in E(T)} is a tree edge,}\\
\Ball_{\ell}(u,r)\cup\Ball_{\ell}(v,r) & \mbox{ when \ensuremath{e\not\in E(T)} is a non-tree edge.}
\end{array}\right.
\]

Henceforth, the only balls we consider have radius $r$, 
\[
r\bydef \ceil{\sqrt{f/\phi}}.
\]

Assume, without loss of generality, that $\wt(\Euler(T))$ is a power of 2, by padding the end with dummy weight-1 elements if necessary. For every $j\leq j_{\max}\bydef\ceil{\log(f/\phi)}$, define $\mathcal{I}_{j}$ to be a partition of $\Euler(T)$ into consecutive intervals with weight $2^{j}$, each the union of two intervals from $\mathcal{I}_{j-1}$.

\paragraph{The Scheme.}
The key idea of this labeling scheme is to focus on \emph{large gap edges}. See \cref{fig:LGE}.

\begin{figure}[h]
    \centering\includegraphics[scale=.4]{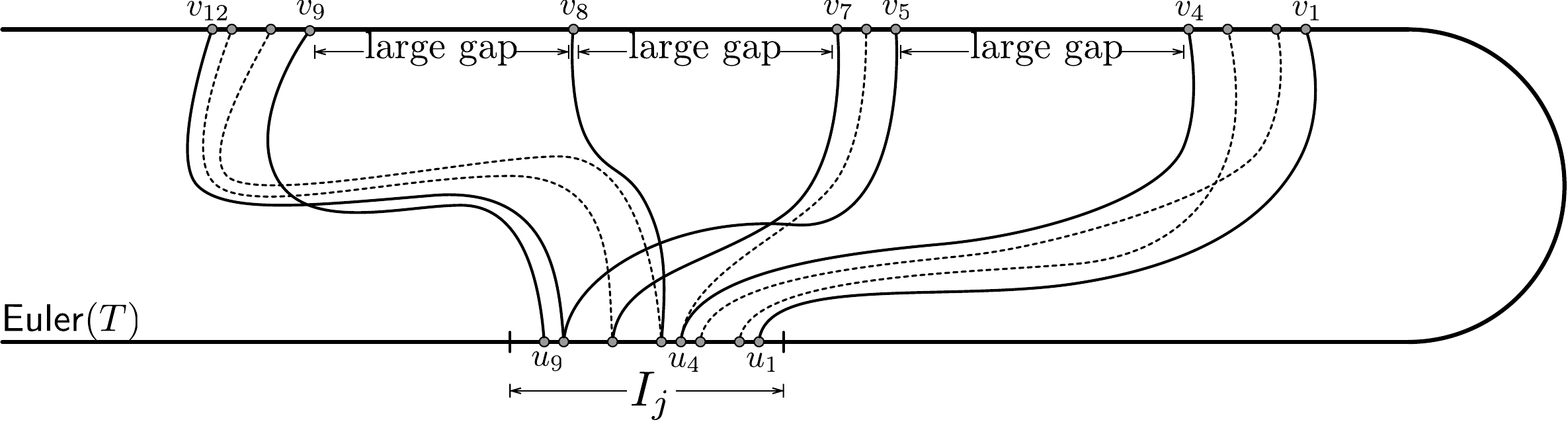}
    \caption{\label{fig:LGE}
    An interval 
    $I_j\in \mathcal{I}_j$ 
    with $\wt(I_j)=2^j$ ($j=3$).
    There are 12 level-$\ell$ edges in $E(I_j,\overline{I_j})$, 
    $\{u_1,v_1\},\ldots,\{u_{12},v_{12}\}$, 
    ordered by their non-$I_j$ endpoint.  
    The large gap edges of $I_j$ are $\{u_1,v_1\},\{u_4,v_4\},\{u_5,v_5\},\{u_7,v_7\},\{u_8,v_8\},\{u_9,v_9\},\{u_{12},v_{12}\}$.}
\end{figure}

\begin{defn}
[Large Gap Edges]\label{def:large-gap-edges} Fix an interval $I_{j}\in\mathcal{I}_{j}$ and let $E_{\ell}(I_{j}):=E_{\ell}(I_{j},\overline{I_{j}})$ be all level-$\ell$ edges with exactly one endpoint in $I_{j}$. We write $E_{\ell}(I_{j})=\{\{u_{1},v_{1}\},\{u_{2},v_{2}\},\dots\}$ such that, for all $i$, $u_{i}\in I_{j}$ and $v_{i}\in\overline{I_{j}}$. Order the edges according to $v_i$, so
\begin{equation}
\DFS(v_{1})\leq \DFS(v_{2}) \leq \cdots \leq \DFS(v_{|E_{\ell}(I_{j})|}).\label{eq:lge}
\end{equation}
If, for $q\in[1,|E_{\ell}(I_{j})|)$, $v_{q+1}\not\in\Ball_{\ell}(v_{q},r)$, then $\{u_{q},v_{q}\},\{u_{q+1},v_{q+1}\}$ are called \emphbf{large gap edges} w.r.t.~$\ell$ and $I_{j}$. The first and last edges $\{u_{1},v_{1}\},\{u_{|E_{\ell}(I_{j})|},v_{|E_{\ell}(I_{j})|}\}$ are always large gap edges. Define $\LGE_{\ell}(I_{j})\subseteq E_{\ell}(I_{j})$ to be the set of large gap edges and $\lge_{\ell}(I_{j})=|\LGE_{\ell}(I_{j})|$ to be their number. 
\end{defn}

Regard $\LGE_{\ell}(I_{j})$ as a \emph{message} $\mathbf{m}_{\ell}(I_{j})\in\F_{q}^{\lge_{\ell}(I_{j})}$, where $q=\poly(n)$ is large enough to encode a single edge. We apply \cref{thm:code-shares} with $d=2$
to break $\mathbf{m}_{\ell}(I_{j})$ into code shares so that given any set of $\lge_{\ell}(I_{j})/2$ shares we can reconstruct $\mathbf{m}_{\ell}(I_{j})$. If $\{u,v\}\in\LGE_{\ell}(I_{j})$ is a large gap edge with $u\in I_{j}$ and $v\in\overline{I_{j}}$, let $C_{\ell,j}(v,u)$ 
be the code share of 
$\{u,v\}$ w.r.t.~$\ell$ and $I_{j}$. Set $C_{\ell,j}(v,u)\bydef\bot$ if $\{u,v\}\notin\LGE_{\ell}(I_{j})$ is not a large gap edge. (Note that $C_{\ell,j}(u,v)$ would be a different code share w.r.t.~some interval $I_{j}'\ni v$.) To simplify notation, for each edge $e=\{u,v\}$, we define the \emph{level-$\ell$ code share of $e$} as a bundle $C_{\ell}(e)=\{C_{\ell,j}(u,v),C_{\ell,j}(v,u)\mid j\le j_{\max}\}$. Note that $C_{\ell}(e)$ also indicates whether $e$ is a large gap edge w.r.t.~all intervals $I_{j}\ni u$ and $I'_{j}\ni v$, 
for all $j\leq j_{\max}$.
\begin{defn}
[Shorter Deterministic Edge Labels]\label{def:shorter-deterministic-labels}
An $O\left(\sqrt{f/\phi}\log(f/\phi)\log^{2}n\right)$-bit label $L_{E}(e)$ for each edge $e=\{u,v\}$  is constructed as follows. 
\begin{enumerate}
\item \label{item:level shorter}Store $(L_{V}(u),L_{V}(v),\level(e))$.
\item \label{item:all edge shorter}For each $\ell\in[\level(e),h]$ and all level-$\ell$ edges $e'=\{u',v'\}$ incident to $\Ball_{\ell}(e,r)$ vertices,  
\begin{enumerate}
\item \label{item:position}
store $(\DFS(u'),\DFS(v'))$,
\item \label{item:codeshare}
store the level-$\ell$ code share bundle $C_{\ell}(e')$.
\end{enumerate}
\item \label{item:tree edge shorter}If $e\in T^{*}$, then for each $\ell\in[\level(e),h]$, let $T_{\ell}\in\mathcal{T}_{\leq\ell}$ be the tree containing $e$. Write $\Euler(T_{\ell})$ as 
\[
\Euler(T_{\ell})=X\cdot(u,v)\cdot Y\cdot(v,u)\cdot Z.
\]
\begin{enumerate}
\item \label{item:interval endpoint}
For each $W\in\{X,Y,Z\}$, %
store the labels of the first and last elements 
of $W$, and store the labels of the first and last vertices in 
$W$ (i.e. $\min_{u\in W} \DFS(u)$ and $\max_{v\in W} \DFS(v)$).
\item \label{item:store lge}For each $j\leq j_{\max}$, let $I_{j}^{(1)},I_{j}^{(2)},I_{j}^{(3)},I_{j}^{(4)}\in\mathcal{I}_{j}$ be the closest intervals on either side of $(u,v),(v,u)$ that do not contain $(u,v)$ or $(v,u)$. For each $k\in\{1,\ldots,4\}$, store $\lge_{\ell}(I_{j}^{(k)})$, and if $\lge_{\ell}(I_{j}^{(k)})\leq4r$, store $\LGE_{\ell}(I_{j}^{(k)})$.
\end{enumerate}
\end{enumerate}
The bit-length of the edge labeling 
is justified as follows.
\Cref{item:level shorter} has length $O(\log n)$. 
Since $C_{\ell}(e)$ uses  $O(j_{\max} \log n)$ bits by \Cref{thm:code-shares}, we have that
\Cref{item:all edge shorter} has length 
$O(hrj_{\max}\log n)$ $=O(\sqrt{f/\phi}\log(f/\phi)\log^{2}n)$. \Cref{item:tree edge shorter} has length $h\cdot(O(\log n)+rj_{\max}O(\log n))=O(\sqrt{f/\phi}\log(f/\phi)\log^{2}n)$ bits.
\end{defn}

\paragraph{The Query Algorithm.}
We initialize $\mathcal{P}_\ell[T]$, $T\in\mathcal{T}_{\leq \ell,F}$, 
exactly 
as in the proof of \cref{thm:det-edge-labeling},
and proceed to apply rules {\bf R1}--{\bf R3} as-is.
Note that we can implement {\bf R1} using 
\cref{item:interval endpoint}
and {\bf R3} using \cref{item:position}.
{\bf R2} is simply re-executing calls to 
$\Unite_{T'}$ from those $T'\in {\cal T}_{\leq \ell-1,F}$ such that $T'\subset T$.
We replace rule \textbf{R4} 
with the similar rule \textbf{R4'}.

\begin{itemize}
\item[{\bf R4'.}] Suppose that $\mathcal{Q}\subseteq\mathcal{P}_{\ell}[T]$ is such that for each $P\in\mathcal{Q}$, $P$ must be in a connected component $C$ of $G_{\ell}-F$ with $\deg_{\ell}(C)>f/\phi$. 
Then, unite all parts of $\mathcal{Q}$
with calls to $\Unite_{T}$.
\end{itemize}
To prove  correctness, we recall the definition of ${\cal J}=\mathcal{J}(T,F)$. Let $F\cap E(T)=\{e_{1},\ldots,e_{f_{0}}\}$ be the set of deleted tree edges of $T$, which we can enumerate using \Cref{item:interval endpoint} of the edge labels.
These edges break $\Euler(T)$ into a set of $2f_{0}+1$ intervals denoted by $\mathcal{J}(T,F)$. Each part $P\in\mathcal{P}_{\ell}[T]$ consists of a collection of intervals from $\mathcal{J}(T,F)$. Our goal is to prove the following.
\begin{lem}
\label{lem:correct shorter}For each interval $J\in\mathcal{J}$, we can either (i) list all other intervals $J'\in\mathcal{J}$ adjacent to $J$, or (ii) infer that $J$ is in a connected component $C$ of $G_{\le\ell}-F$ with $\deg_{\ell}(C)>f/\phi$.
\end{lem}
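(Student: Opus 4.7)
The plan is to cover $J$ by $O(\log(f/\phi))$ dyadic pieces from $\bigcup_{j\le j_{\max}}\mathcal{I}_j$, recover the large-gap-edge set $\LGE_\ell(I_j)$ of each piece from the failed-edge labels, and then point-locate each LGE's outer endpoint in $\mathcal{J}$ to enumerate the intervals adjacent to $J$. When reconstruction fails for some piece, I will argue that the obstruction itself forces $J$ to lie in a single component of $G_{\le\ell}-F$ whose level-$\ell$ volume exceeds $f/\phi$.

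First, $J$ is flanked by two failed tree edges $e_L,e_R$ (or by an endpoint of $\Euler(T)$). Using a standard interval-tree decomposition relative to the partitions $\mathcal{I}_0,\mathcal{I}_1,\ldots,\mathcal{I}_{j_{\max}}$, $J$ splits into $O(1)$ pieces from each level $j$, each of which sits next to $e_L$ or $e_R$ and is therefore among the intervals $I_j^{(1)},\ldots,I_j^{(4)}$ stored in \cref{item:store lge} of $L_E(e_L)$ or $L_E(e_R)$. This already reveals $\lge_\ell(I_j)$ for every piece, and the full set $\LGE_\ell(I_j)$ whenever $\lge_\ell(I_j)\le 4r$.

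Second, when some piece satisfies $\lge_\ell(I_j)>4r$, I attempt reconstruction from code shares. By \cref{item:all edge shorter}, the label of each failed edge $e\in F$ carries the bundle $C_\ell(e')$ for every level-$\ell$ edge $e'$ whose endpoints meet $\Ball_\ell(e,r)$, and $C_\ell(e')$ contains a share of $\LGE_\ell(I_j)$ whenever $e'\in\LGE_\ell(I_j)$. By \cref{thm:code-shares} with $d=2$, any $\ceil{\lge_\ell(I_j)/2}$ such shares recover $\LGE_\ell(I_j)$ in polynomial time. Once $\LGE_\ell(I_j)$ is known for every piece, I enumerate adjacencies: for each LGE $\{u,v\}$ with $u\in I_j$ and $v\notin I_j$, a predecessor search over the already-known endpoints of $\mathcal{J}$ (extracted from \cref{item:interval endpoint} of the failed tree-edge labels) identifies the $J'\in\mathcal{J}$ containing $v$. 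This yields every adjacency because any level-$\ell$ non-tree edge from $J$ to $J'$ lies in some cluster of $E_\ell(I_j)$ for the piece $I_j$ containing its inner endpoint, and by definition the first and last edges of every cluster belong to $\LGE_\ell(I_j)$.

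Finally---and this is the step I expect to be the main obstacle---if code-share reconstruction fails for some piece $I_j$, I must conclude that $J$ already sits in a component $C$ of $G_{\le\ell}-F$ with $\deg_\ell(C)>f/\phi$. The driving observation is that a failure to gather $\ceil{\lge_\ell(I_j)/2}$ shares means at least that many LGEs of $I_j$ have neither endpoint within $\wt_\ell$-distance $r$ of any $e\in F$; in particular these LGEs are themselves not in $F$, so they survive in $G_{\le\ell}-F$ and connect $I_j\subseteq J$ to the outside. I will then combine this surplus of surviving boundary edges with the edge-expansion of level $\ell$ in $\mathcal{H}$, either concluding $\deg_\ell(J)>f/\phi$ directly or invoking \cref{lem:expansion-implies-connected} (applied to the subgraph induced by the endpoints of the surviving LGEs) to merge them into a single component $C\supseteq J$ of volume exceeding $f/\phi$, exactly as needed to trigger rule \textbf{R4'}.
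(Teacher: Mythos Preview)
Your overall three-case structure (dyadic decomposition; small $\lge$; reconstruction via shares; failure case) matches the paper, but two of your steps have genuine gaps.

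\textbf{Enumeration from $\LGE_\ell(I_j)$ is not enough.} You assert that locating the outer endpoint of every LGE finds every adjacent $J'$, because ``the first and last edges of every cluster belong to $\LGE_\ell(I_j)$.'' This is false as stated: a cluster can straddle the boundary of $J'$, so both LGEs at its ends land outside $J'$ while some interior edge of the cluster lands inside $J'$. The paper's fix (its \cref{lem:find edge}) is to combine $\LGE_\ell(I_j)$ with the edges \emph{revealed} by $F$: if the first surviving edge $\beta_0=\{u,v\}$ into $J'$ is not an LGE, then its predecessor $\{x,y\}$ in the $v$-ordering either has $y\in J'$ (so $\{x,y\}\in F$ and reveals $\beta_0$) or $y\notin J'$ (so the failed tree edge $\alpha'_0$ bounding $J'$ lies within distance $r$ of $v$ and reveals $\beta_0$). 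You have access to these revealed edges via \cref{item:position}, but your argument never uses them.

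\textbf{The failure case needs a volume argument, not an expansion argument.} When fewer than half the code shares are gathered, you correctly note that $>\lge_\ell(I_j)/2>2r$ LGEs are ``unrevealed,'' hence survive in $G_{\le\ell}-F$. But having $>2r$ surviving level-$\ell$ edges incident to $I_j$ only gives $\deg_\ell(C)\gtrsim r=\sqrt{f/\phi}$, not $f/\phi$; and \cref{lem:expansion-implies-connected} goes the wrong direction (it \emph{assumes} large volume, it does not produce it). The paper instead observes that for each unrevealed LGE $\{u,v\}$ with $v\notin I_j$, the entire ball $\Ball_\ell(v,r)$ is disjoint from every $\Ball_\ell(e,r)$, $e\in F$, hence lies in $C$ and contributes weight $\ge 2r$ to $\deg_\ell(C)$. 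Since these balls overlap at most $4$-fold, $\deg_\ell(C)\ge (2r/4)\cdot(\lge_\ell(I_j)/2)>r^2\ge f/\phi$. This ball-volume step is the missing idea.
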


The above lemma implies correctness of the algorithm as we can keep applying \textbf{R3} and \textbf{R4'} to obtain the correct ${\cal P}_{\ell}[T]$ at the end.

Consider an interval $J\in \mathcal{J}$. Let the $F$-edges bounding $J$ be $\alpha_{0},\alpha_{1}$. Observe that we can partition $J$ into less than $2(j_{\max}+1)$ intervals from $\mathcal{I}_{0}\cup\cdots\cup\mathcal{I}_{j_{\max}}$. We consider each of these intervals $I_{j}\in{\cal I}_{j}$ individually. Below, we say that $F$ \emph{reveals} $e'$ if $e'$ is incident to a vertex in $\Ball_{\ell}(e,r)$, for some failed edge $e\in F$. Whenever $F$ reveals $e'$, 
\Cref{item:all edge shorter} of the edge labels
gives us the position of its endpoints
and the code share bundle for $e'$.
The following lemma is crucial.

\begin{figure}[h]
\centering\includegraphics[scale=.4]{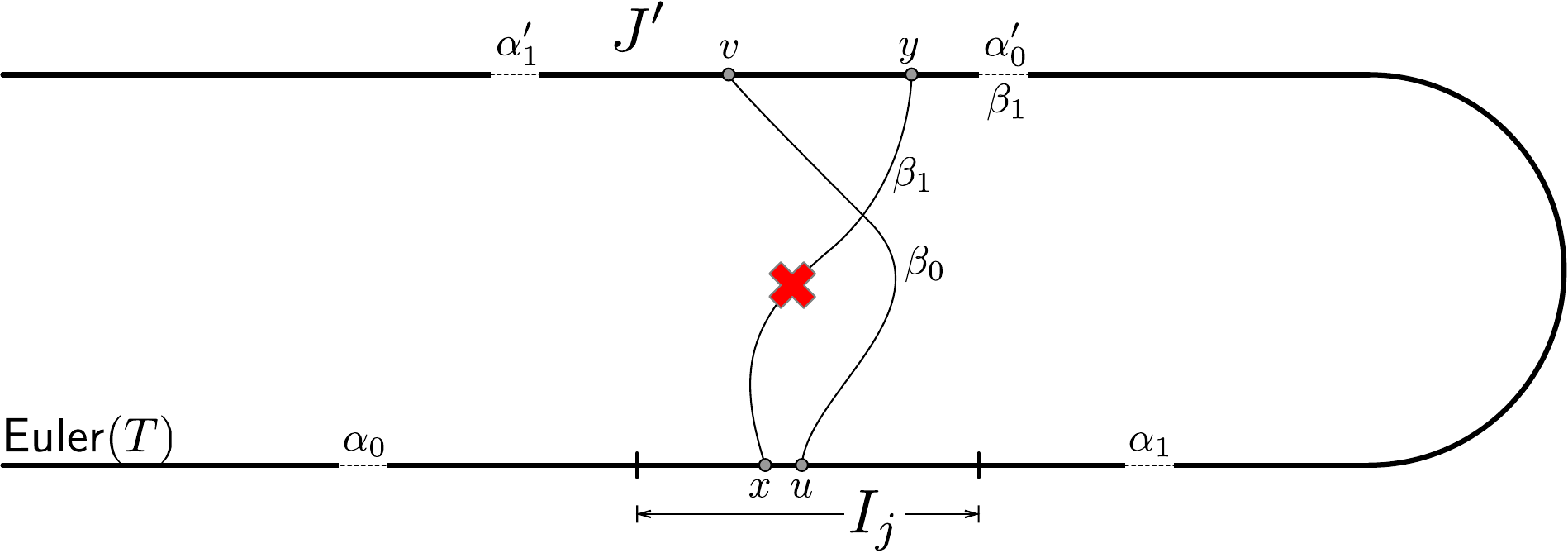}
\caption{\label{fig:LGE-known}Illustration of \Cref{lem:find edge}. An interval $I_{j}\subseteq J$ is incident to $J'$. $J,J'\in\mathcal{J}$ are bounded by $\alpha_{0},\alpha_{1}\in F$ and $\alpha_{0}',\alpha_{1}'\in F$, respectively. Either $\beta_{0}$ is a large gap edge, and stored in either $L_{E}(\alpha_{0})$ or $L_{E}(\alpha_{1})$, or it is stored in $L_{E}(\beta_{1})$, where $\beta_{1}=\{x,y\}\in F$ (if it exists), or $\beta_{1}=\alpha_{0}'$.}
\end{figure}

\begin{lem}
\label{lem:find edge}Consider an interval $I_{j}\subseteq J\in{\cal J}$ where $I_{j}\in{\cal I}_{j}$. If we have access to the set $\LGE_{\ell}(I_{j})$ of large gap edges, then we can check if another interval $J'\in\mathcal{J}$ is adjacent to $I_{j}$.
\end{lem}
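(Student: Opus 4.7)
The plan is to characterize adjacency of $I_j$ and a candidate $J' \in \mathcal{J}$ by the existence of a level-$\ell$ edge $\{u,v\} \in E_\ell(I_j)$ with $v$ lying in the DFS range of $J'$ (which we can read off from $L_E(\alpha_0'), L_E(\alpha_1')$ via \Cref{item:interval endpoint}), and then to hunt for such an edge with two complementary searches. The \emph{first search} iterates over each $\{u_q,v_q\} \in \LGE_\ell(I_j)$, which we are given, and tests whether $\DFS(v_q)$ falls in $J'$. The \emph{second search} iterates over each failure $\beta_1 \in F$ and inspects the level-$\ell$ edges stored in $L_E(\beta_1)$ by \Cref{item:all edge shorter,item:position}, testing whether any such edge $\{u',v'\}$ has $u' \in I_j$ and $v' \in J'$. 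We report $J'$ adjacent if either search succeeds.

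Soundness is immediate, so the content of the proof is the completeness argument, which I will base on a \emph{run} analysis. Suppose $J'$ is connected to $I_j$ via some $\beta_0 = \{u_q,v_q\}$ with $v_q \in J'$, and assume $\beta_0 \notin \LGE_\ell(I_j)$ (otherwise the first search catches it). Then $\beta_0$ lies strictly inside a maximal run $[p,p']$ of consecutive non-large-gap edges, as runs of length $\le 2$ would force $\beta_0$ to sit on a run boundary and therefore lie in $\LGE_\ell(I_j)$ by \Cref{def:large-gap-edges}. The run's boundary edges $\{u_p,v_p\}, \{u_{p'},v_{p'}\}$ are in $\LGE_\ell(I_j)$, and the failure of the first search forces $v_p, v_{p'} \notin J'$. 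Walking through the DFS-sorted sequence $v_p, v_{p+1}, \ldots, v_{p'}$, we pass from outside $J'$ into $v_q \in J'$, so there must be a transition index $a$ with $v_a \notin J'$, $v_{a+1} \in J'$; let $\alpha_0' \in F$ be the boundary of $J'$ lying strictly between $v_a$ and $v_{a+1}$ in $\Euler(T)$.

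The key weight inequality will be
\[
\dist_\ell(\alpha_0', v_{a+1}) \;\leq\; \dist_\ell(v_a, v_{a+1}) \;\leq\; r,
\]
where the right inequality is the definition of a non-large-gap pair and the left holds because $\dist_\ell$ sums $\wt_\ell$ only over an open interior and $(\alpha_0', v_{a+1}) \subseteq (v_a, v_{a+1})$. Hence $v_{a+1} \in \Ball_\ell(\alpha_0', r)$, so the edge $\{u_{a+1}, v_{a+1}\}$, together with the DFS numbers of both endpoints, is stored in $L_E(\alpha_0')$, and the second search catches it with $\beta_1 = \alpha_0'$. The main obstacle I anticipate will be the careful endpoint bookkeeping: verifying the open-interval conventions in the definition of $\dist_\ell$, handling the two oriented copies of $\alpha_0'$ in $\Euler(T)$ (the ball $\Ball_\ell$ is defined as the union over both copies, which is what makes the inequality go through), and the corner case where $J'$ touches an end of $\Euler(T)$ and has only one bounding $F$-edge available to play the role of $\beta_1$.
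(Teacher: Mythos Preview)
Your overall shape is right—find a transition into $J'$ and use a failure near that transition to reveal a certifying edge—but there is a real gap. For soundness, both searches must discard edges that lie in $F$: finding some $\{u',v'\}\in F$ with $u'\in I_j$ and $v'\in J'$ does not certify adjacency in $G-F$. Once you add that filter, your completeness argument breaks, because the transition edge $\{u_{a+1},v_{a+1}\}$ that $\alpha_0'$ reveals may itself be a failed edge. Concretely, if several failed level-$\ell$ non-tree edges land in $J'$ with DFS-smaller outer endpoints than the surviving witness $\beta_0$, the first transition into $J'$ lands on one of those failed edges; it is revealed by $\alpha_0'$ but then filtered out, and nothing in your argument produces a surviving edge that the second search accepts.

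The paper sidesteps this by choosing $\beta_0=\{u,v\}$ to be the \emph{first non-deleted} edge in $E_\ell(I_j)$ with $v\in J'$ (ordered by $\DFS(v)$), and then looking only at its immediate predecessor $\{x,y\}$ in the full list $E_\ell(I_j)$. Since $\beta_0\notin\LGE_\ell(I_j)$ we have $v\in\Ball_\ell(y,r)$, and now the dichotomy is clean: if $y\in J'$ then minimality of $\beta_0$ forces $\{x,y\}\in F$, so taking $\beta_1=\{x,y\}$ reveals $\beta_0$; if $y\notin J'$ then $\alpha_0'$ lies between $y$ and $v$ in $\Euler(T)$, so $\beta_1=\alpha_0'$ reveals $\beta_0$. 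Either way it is $\beta_0\notin F$ that gets revealed. Your run framework can be repaired by taking $\beta_0$ minimal and setting $a=q-1$, but at that point the run machinery is doing no work and you have essentially recovered the paper's argument.
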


\begin{proof}
Suppose that $J'\in\mathcal{J}$ is adjacent to $I_{j}$. Let $\beta_{0}=\{u,v\}\in E-F$ be the \emph{first} level-$\ell$ non-deleted edge with $u\in I_{j},v\in J'$, when ordered by $\DFS$ number. 
We claim that either $\beta_{0}\in\LGE_{\ell}(I_{j})$ is a large gap edge or $\beta_{0}$ is revealed by $F$. In either case, we learn the endpoints of $\beta_{0}$, which certifies 
that $J'$ is adjacent to $I_{j}$. 

Let $J'$ be bounded by $F$-edges $\alpha_{0}',\alpha_{1}'$. Consider a level-$\ell$ edge $\{x,y\}$ that is the predecessor of $\beta_{0}$ according to \Cref{eq:lge}. 
In particular, $x\in I_{j}$ and $y\notin I_{j}$ 
has the largest $\DFS(y)$ such that 
$\DFS(y)\le\DFS(v)$. 
Suppose $\beta_{0}\notin\LGE_{\ell}(I_{j})$. Then, $v\in\Ball_{\ell}(y,r)$ by definition. Now, we claim there is a $\beta_{1}\in F$ where $v\in\Ball_{\ell}(\beta_{1},r)$. Hence, $\beta_{0}$ is revealed by $F$ which would complete the proof. There are just two cases. 
\begin{itemize}
    \item If $y\in J'$, then $\{x,y\}\in F$ since $\beta_{0}$ is the first non-deleted edge. 
    We choose $\beta_{1}=\{x,y\}\in F$ 
    and therefore $v\in\Ball_{\ell}(\beta_{1},r)$. 
    \item If $y\notin J'$, then $\dist_{\ell}(v,\alpha'_{0})\le\dist_{\ell}(v,y)\leq r$ and therefore $v\in\Ball_{\ell}(\alpha'_{0},r)$. 
    We set $\beta_{1}=\alpha'_{0}\in F$ so $v\in\Ball_{\ell}(\beta_{1},r)$.\qedhere
\end{itemize}
\end{proof}

We are now ready to prove \Cref{lem:correct shorter}. 

\begin{proof}[Proof of \Cref{lem:correct shorter}]
There are three cases.
\paragraph{Case 1:} An $I_{j}\subseteq J$ has $\lge_{\ell}(I_{j})\leq4r$. By \Cref{item:store lge} of the label $L_{E}(\alpha_{0})$ or $L_{E}(\alpha_{1})$, we can access the whole set $\LGE_{\ell}(I_{j})$. So, by \Cref{lem:find edge}, we can list all intervals $J'\in\mathcal{J}$ adjacent to $I_{j}$.
\paragraph{Case 2:} An $I_{j}\subseteq J$ has $\lge_{\ell}(I_{j})>4r$, and $F$ reveals at least $(\lge_{\ell}(I_{j}))/2$ large gap edges in $\LGE_{\ell}(I_{j})$. Given $(\lge_{\ell}(I_{j}))/2$ code shares, by \cref{thm:code-shares}, we can also reconstruct $\LGE_{\ell}(I_{j})$. So, we can again list all intervals $J'\in\mathcal{J}$ adjacent to $I_{j}$, by \Cref{lem:find edge}. 
\paragraph{Case 3:} An $I_{j}\subseteq J$ has $\lge_{\ell}(I_{j})>4r$, and $F$ reveals at most $(\lge_{\ell}(I_{j}))/2$ large gap edges in $\LGE_{\ell}(I_{j})$. In this case, we claim that the connected component $C$ in $G_{\le\ell}-F$ containing $I_{j}$ has $\deg_{\ell}(C)>f/\phi$. Observe that for each \emph{unrevealed} large gap edge $\{u,v\}\in\LGE_{\ell}(I_{j})$ with $u\in I_{j}$, it must be that $\Ball_{\ell}(v,r)\subseteq C$ because there is no failed edge $e\in F$ within distance $r$ from $v$. Each $\Ball_{\ell}(v,r)$ has weight at least $2r$, 
and the sum of their weights can be at most four times the weight of their union. So $\deg_{\ell}(C)>(2r/4)\cdot(\lge_{\ell}(I_{j})/2)>(2r/4)\cdot(4r/2)=r^{2}\ge f/\phi$ as desired.
\end{proof}

\section{Deterministic Vertex Fault Connectivity Labels}

This section is dedicated to proving \cref{thm:vertex label} concerning labels for \emph{vertex} faults.

\begin{theorem}[Improved Deterministic Vertex Labels]\label{thm:vertex label}
Fix any undirected graph $G=(V,E)$ with $n$ vertices and integer $f\geq 1$.
    There are deterministic labeling functions 
    $L_V : V \to \{0,1\}^{O(f^{4}\phi^{-1}\log^{7} n)}$ 
    such that given any query 
    $\ang{s,t,F}$, $F\subset V$, $|F|\leq f$,
    one can determine
    whether $s$ and $t$ are connected in 
    $G-F$ by inspecting only 
    $L_V(s),L_V(t),\{L_V(v) \mid v\in F\}$. The construction time is exponential when $\phi = 1$ and polynomial when $\phi = \Omega(1/\sqrt{\log n})$. The query time is $\poly(f,\log n)$.
\end{theorem}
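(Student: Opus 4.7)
The plan is to lift the edge-fault scheme of \cref{sec:det-edge-faults} to the vertex-fault setting, following the high-level strategy of Parter, Petruschka, and Pettie~\cite{ParterPP24} but instantiated over an improved vertex-expander hierarchy (\cref{def:vertex hie}) rather than an edge-expander hierarchy, and with a cleaner sparsified shortcut graph. The two main inputs that drive the bound are (i) the low-degree Steiner tree of \cref{lemma:LowDegreeSteinerTree}, which asserts that every $\phi$-vertex-expanding set admits a Steiner tree of maximum degree $O(1/\phi)$, and (ii) Nagamochi--Ibaraki sparsification~\cite{NagamochiI92}, which preserves $(f{+}1)$-edge-connectivity using a subgraph of arboricity $\tilde{O}(f^{2})$. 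Together these replace, respectively, the deterministic cut sketch of~\cite{IzumiEWM23} and the hit-and-miss families of~\cite{KarthikP21} used in~\cite{ParterPP24}, and they are responsible for the two factors that compose the $\tilde{\Theta}(f^{3})$ savings.

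Concretely, I first compute a vertex-expander hierarchy $(\mathcal{C},\mathcal{H})$ of depth $h=O(\log n)$ with parameter $\phi$, and for each cluster $\Gamma$ at each level $\ell$ I pick a low-degree Steiner tree $T_{\Gamma}$ spanning $\Gamma$ via \cref{lemma:LowDegreeSteinerTree}. These $T_{\Gamma}$ play the role of the minimum-$\level$ spanning tree $T^{*}$ from \cref{sec:simple edge label}: every $v\in\Gamma$ gets an Euler-tour position $\DFS(v)$ inside $T_{\Gamma}$, and any $|F|\le f$ vertex failures cut $\Euler(T_{\Gamma})$ into $O(f)$ intervals on which we can do point location. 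Next, at each level $\ell$ I build the \emph{shortcut graph} $H_{\ell}$ encoding the long-range connectivity contributed at that level and apply Nagamochi--Ibaraki to it to obtain a sparsifier of arboricity $\tilde{O}(f^{2})$. The label $L_{V}(v)$ then stores, for each of the $O(\log n)$ clusters $\Gamma$ containing $v$: the position of $v$ inside $\Euler(T_{\Gamma})$, the labels of a small set of anchors of $T_{\Gamma}$ near $v$ (so that intervals cut by $F$ can be located), and the incidence list of $v$ in the sparsified $H_{\ell}$ with each endpoint encoded by its position in its own parent Steiner tree. Multiplying $O(\log n)$ levels by $\tilde{O}(f^{2})$ sparsifier neighbors per level, by an additional $\tilde{O}(f)$ factor coming from the extra vertex-expansion information that replaces deterministic sketching, and by the polylogarithmic overhead for encoding positions, yields a label of size $O(f^{4}\phi^{-1}\log^{7}n)$.

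The query algorithm, given $L_{V}(s)$, $L_{V}(t)$, and $\{L_{V}(v)\mid v\in F\}$, sweeps levels $\ell=1,\dots,h$. At each level it identifies the clusters $\Gamma$ whose Steiner tree intersects $F$, partitions each $\Euler(T_{\Gamma})$ into $O(f)$ intervals using the anchor labels, and merges intervals using vertex-fault analogues of rules $\mathbf{R1}$--$\mathbf{R4}$ from \cref{sec:simple edge label}: tree-adjacency within $T_{\Gamma}-F$, inheritance of the partition computed at level $\ell-1$ on child clusters, revealed sparsifier edges joining endpoint intervals, and a vertex-expansion argument uniting every set of intervals whose residual sparsifier-volume exceeds $\tilde{O}(f/\phi)$. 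After level $h$ the $s$--$t$ query reduces to point-location plus a part-lookup in the final partition. The main obstacle, I expect, is proving the vertex-fault analogue of \cref{lem:expansion-implies-connected}: because the $f$ failures are now vertices sitting \emph{inside} expander clusters rather than edges on their boundary, a single failed vertex can simultaneously destroy many sparsifier edges, so the ``heavy residual volume implies co-connected'' argument has to be rerun on a graph that itself depends on $F$. This is exactly where the extra vertex-expansion property of our low-degree hierarchy replaces the role of deterministic cut sketches, and making the degree of the sparsified shortcut graph, the parameter $\phi$, and the $f$-dependence of the Steiner tree all line up so that the final label bound is $O(f^{4}\phi^{-1}\log^{7}n)$ rather than something larger is the delicate bookkeeping step.
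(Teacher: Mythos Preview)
Your high-level architecture matches the paper: a vertex-expander hierarchy, low-degree Steiner trees on its cores, and Nagamochi--Ibaraki-sparsified shortcut graphs.  But two load-bearing pieces are missing, and without them the scheme does not work.

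\textbf{The asymmetry that breaks the lift of $\mathbf{R1}$--$\mathbf{R4}$.}  In the edge-fault scheme every interval can enumerate its own boundary edges, so ``heavy parts are co-connected, and every light part can list all its outgoing edges'' is complete.  In the vertex-fault setting the analogous rules are \emph{asymmetric}: an edge of $\wtilde G^{\qry}_\Gamma$ always has at least one endpoint in the core $\gamma$, so from a non-giant subtree $\tau_y$ you can enumerate the vertices of $N_\gamma(\tau_y)$, but you cannot enumerate edges that leave a \emph{giant} subtree $\tau_x$ and land in $V^{\tmn}_\gamma(\tau_y)$.  Your $\mathbf{R4}$-analogue (``unite everything with residual volume $>f/\phi$'') does not catch this case, because $\tau_y$ is non-giant.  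The paper resolves this not by expansion alone but by a \emph{counting} primitive $\EnumFromGiant(\tau_y,\Gamma)$: it computes
\[
\delta_{\mathrm{all}}\;-\;\delta_{\mathrm{non\text{-}giant}}\;-\;\delta_F,
\]
the number of $\wtilde G^{\qry}_\Gamma$-edges from all of $Q_\Gamma^{\ext}$ into $V^{\tmn}_\gamma(\tau_y)$, minus those from non-giant subtrees and from failed vertices.  If the result is positive, $\tau_y$ is merged with the giant group.  Making this count implementable from labels is where the $\tilde O(f^2)$-arboricity orientation and the prefix-sum labels along Euler tours are actually used; your proposal has no mechanism for this step.

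\textbf{The neighborhood hitter and extended Steiner trees.}  You never introduce the deterministic partition $\mathcal S=\{S_1,\dots,S_{f+1}\}$ of $V$ (the ``failure-free colour'').  The paper builds a separate copy of the data structure for each $S_i$; at query time it picks any $S_i$ disjoint from $F$.  This device is what forces every relevant neighbour set $N_{\hat\gamma}(\Gamma)$ either to contain an $S_i$-vertex (so the two Steiner trees can be glued through that vertex into an \emph{extended} Steiner tree $T_{\gamma^{\ext}}$) or to have size at most $\lambda_{\nb}=O(f\log n)$.  Both alternatives are essential: the extended Steiner trees are what give the second summand in the divide-and-conquer lemma ($\Conn(\gamma^{\ext}\setminus F,\,T_{\gamma^{\ext}}\setminus F)$), and the $\lambda_{\nb}$ bound is what makes the Nagamochi--Ibaraki parameter $d=\lambda_{\arbo}=O(f^2\log^3 n)$ suffice (you must tolerate not $f$ but $f+h^2 f\lambda_{\nb}$ ``virtual'' failures coming from endpoints of affected shortcut edges).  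A minor related point: the Steiner tree $T_\gamma$ spans the \emph{core} $\gamma=V_\ell\cap\Gamma$, not all of $\Gamma$; only the core is $\phi$-vertex-expanding, so \cref{lemma:LowDegreeSteinerTree} applies to it, not to the component.
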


To prove \Cref{thm:vertex label}, we first describe the underlying hierarchical structure of the algorithm in \Cref{sec:structure vertex}. This structure allows us to prove a divide-and-conquer lemma (\Cref{coro:MainCorrectness}) that is crucial for answering connectivity queries under vertex failures in a bottom-up manner on the hierarchy. Based on the divide-and-conquer lemma, in \Cref{sec:strategy vertex} we then describe how to answer connectivity queries assuming access to primitives that return information about the hierarchy. Finally, we describe how to implement these primitives 
in a distributed manner by providing 
the labeling scheme in 
\Cref{sec:implementing vertex label}.

\subsection{Overview and Challenges} 

We briefly discuss our approach at a very high level.  We would like to highlight the specific challenges that arise when tolerating vertex 
faults relative to edge faults.

\paragraph{The First Challenge.} We start with the ideal scenario that the input graph is already a vertex expander. The first challenge is how to obtain a \emph{stable connectivity certificate}. Recall that in the edge fault scenario (also assuming the graph is an edge expander), we can simply take a spanning tree as a stable connectivity certificate. 
The removal of any $f$ edges will only break the spanning tree into $f+1$ subtrees.
However, in general there is no upper bound 
on the number of subtrees when removing $f$ vertices.

This is a natural barrier to handling vertex faults, 
and several previous works, e.g. \cite{DuanP20,LongS22,ParterPP24}, 
follow the same idea to overcome it. 
Take a \emph{low-degree} spanning tree as a stable connectivity certificate. A vertex expander indeed admits a low-degree spanning tree; this is 
formally stated in \Cref{lemma:LowDegreeSteinerTree}. 

Using this idea, one can easily generalize our edge fault connectivity labeling scheme to obtain an $\wtilde{O}(f)$-size vertex fault connectivity labeling scheme for vertex expander input graphs. Roughly speaking, let $F$ be the vertex faults.
From the low-degree spanning tree $T$, we first obtain an initial partition ${\cal P}$ 
of $G\setminus F$ consisting of the connected components of $T\setminus F$.
Then we exploit the nature of a $\phi$-vertex expander $G$: all sets $A\in {\cal P}$ s.t. $|A\cup N_{G}(A)|>f/\phi$---call them \emph{giant sets}---must be inside the same connected 
component of $G\setminus F$. 
On the other hand, for each non-giant set $A\in{\cal P}$, its neighbor set $N_{G}(A)$ is of size at most $f/\phi$ and we can obtain $N_{G}(A)$ explicitly by designing labels on the Euler tour of the low-degree spanning tree. Therefore, we first merge all the giant sets of ${\cal P}$ together, and then merge further using $N_{G}(A)$ for each non-giant $A\in{\cal P}$.

\paragraph{The Second Challenge.} 
The second challenge arises when the input graph is \emph{not} a vertex expander. 
In fact, an input graph $G$ admitting a two-level vertex expander hierarchy already captures this challenge. 
A graph $G$ admits a two-level $\phi$-vertex expander hierarchy if there is a separator $X\subseteq V(G)$ s.t. $X$ is $\phi$-expanding in $G$, and each connected component $C$ of $G\setminus X$ is a $\phi$-vertex expander.

Let $F$ be the vertex faults. We assume that we are given an initial partition ${\cal P}$ of $V(G)\setminus F$ that captures 
(1) the connectivity of $C\setminus F$ for each connected component $C$ of $G\setminus X$, 
and 
(2) the connectivity of $X$ in $T\setminus F$ certified by the stable connectivity certificate $T$. For example, we can think of
\[
{\cal P} = {\cal P}_{X}\cup \bigcup_{\text{components $C$ of $G\setminus X$}}\{\text{components of $C\setminus F$}\},
\]
where ${\cal P}_{X}$ is some partition of $X\setminus F$ which is with respect to the connectivity of $X$ in $G\setminus F$, but may not fully capture the connectivity of $X$ in $G\setminus F$.

Clearly, we are done if we can further merge sets in ${\cal P}$ using 
edges incident to $X$, call them \emph{$X$-edges}. 
Let us try the same algorithm and see how it breaks. 
The fact that $X$ is $\phi$-expanding in $G$ tells us that all sets $A\in{\cal P}$ s.t. $|(A\cup N_{G}(A))\cap X| > f/\phi$ (giant sets) must belong to the same connected component of $G\setminus F$. Also, for each non-giant set $A\in{\cal P}$, $N_{G}(A)\cap X$ is of size at most $f/\phi$ and we can obtain it explicitly with $\tilde{O}(f)$-bit labels. 
Again, we first merge giant sets into one \emph{giant group}, 
and then for each $A\in{\cal P}$, merge it with each group intersecting $N_{G}(A)\cap X$. Let us try to confirm the correctness of this merging procedure. We use $\{x,v\}$ to denote an $X$-edge with $x\in X$.
\begin{enumerate}
\item Any two giant sets $A_{1},A_{2}\in{\cal P}$ are indeed merged.
\item Suppose an $X$-edge $\{x,v\}$ joins $A_{1},A_{2}\in{\cal P}$
with $x\in A_1, v\in A_2$, where $A_2$ is non-giant.
Then we will merge $A_{1}$ and $A_{2}$ because $A_{1}$ intersects $N_{G}(A_{2})\cap X$.
\item Suppose an $X$-edge $\{x,v\}$ joins $A_{1},A_{2}\in{\cal P}$
with $x\in A_1, v\in A_2$, but now $A_1$ is non-giant and $A_2$ is giant.
Although $A_{1}$ intersects $N_{G}(A_{2})\cap X$, we are \underline{\emph{not}} guaranteed that $A_{1}$ and $A_{2}$ are merged because $A_{2}$ is \underline{\emph{giant}}. In other words, $X$-edges in case 3 will not be detected by this method.
\end{enumerate}

This \emph{asymmetry} is the major difference between the vertex-fault case and the edge-fault case. To overcome it, the key observation is that if there exists a case-3 $X$-edge for a non-giant set $A_{1}$,
i.e. there exists an $X$-edge $\{x,v\}$ such that $x\in A_{1}$ and $v$ is in some giant set, 
then $A_{1}$ must be merged with the giant group. In other words, for a non-giant 
$A_{1}$, instead of knowing all case-3 $X$-edges incident to $A_1$, 
it suffices to check if any such case-3 $X$-edges \emph{exist}. 

Therefore, we will \emph{count} the number of case-3 $X$-edges for each non-giant set $A_{1}$. Roughly speaking, this is possible because this number is exactly $\delta_{\text{all}} - \delta_{\text{non-giant}} - \delta_{F}$, where
\begin{align*}
\delta_{\text{all}} &= \text{number of $X$-edges $\{x,v\}$ s.t. $x\in A_{1}$},\\
\delta_{\text{non-giant}} &= \text{number of $X$-edges $\{x,v\}$ s.t. $x\in A_{1}$ and $v$ is in some non-giant $A_{2}$},\\
\delta_{F} &= \text{number of $X$-edges $\{x,v\}$ s.t. $x\in A_{1}$ and $v\in F$}.
\end{align*}
We will not elaborate now 
on how to count these numbers. 

\subsection{The Structure}
\label{sec:structure vertex}
\subsubsection{The Basis: A Vertex Expander Hierarchy}

In this section, we construct an 
expander hierarchy for vertex expansion 
similar to the edge expansion version 
from \Cref{sec:edge exp hie}. 
This will be the basis of our structure.

For any graph $G=(V,E)$, a vertex cut $(L,S,R)$ is a partition of $V$ such that $L,R\neq\emptyset$ and there is no edge between $L$ and $R$. For any vertex set $X\subseteq V$, we say that $X$ is \emph{$\phi$-vertex-expanding} in $G$ if for every vertex cut $(L,S,R)$ in $G$, 
\[
|S|\ge\phi\min\{|X\cap(L\cup S)|,\, |X\cap(R\cup S)|\}.
\]
Consider a partition $\{V_{1},\dots,V_{h}\}$ of $V$. We denote $V_{\le\ell}:=\bigcup_{i\le\ell}V_{i}$ and $V_{>\ell}:=\bigcup_{i>\ell}V_{i}$. 
Let $G_{\leq \ell}$ be the graph 
induced by $V_{\leq \ell}$.
\begin{defn}
[Vertex Expander Hierarchy]Given a graph $G=(V,E)$, a vertex-partition ${\cal P}=\{V_{1},\dots,V_{h}\}$ of $V$ induces an \emph{$(h,\phi)$-vertex-expander hierarchy} if, for every level $\ell\le h$ and every connected component $\Gamma$ in $G_{\leq\ell}$, $V_{\ell}\cap\Gamma$ is $\phi$-vertex-expanding in $\Gamma$. That is, for every vertex cut $(L,S,R)$ of $\Gamma$, 
\[
|S|\ge\phi\min\{|V_{\ell}\cap(L\cup S)|,\, |V_{\ell}\cap(R\cup S)|\}.
\]

From ${\cal P}$, the connected components $\Gamma$ in $G_{\leq\ell}$ for all levels $\ell$ form a laminar family ${\cal C}$. Let ${\cal H}$ be the tree representation of ${\cal C}$. We also call $({\cal C},{\cal H})$ an $(h,\phi)$-vertex-expander hierarchy of $G$.
\label{def:vertex hie}
\end{defn}

The following theorem is analogous to \Cref{thm:edge exp hie}.
\begin{theorem}
There exists an algorithm that, given a graph $G$, computes an $(h,\phi)$-vertex-expander hierarchy with $h\le\log n$ and $\phi=1$ in exponential time, or $h\le\log n$ and $\phi\ge\Omega(1/\sqrt{\log n})$ in polynomial time. \label{thm:vertex exp hie}
\end{theorem}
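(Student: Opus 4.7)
The plan is to mirror the proof of \Cref{thm:edge exp hie}. First I would prove the vertex analog of \Cref{lem:edge exp sep}: given $G=(V,E)$, an exponential-time (resp.\ polynomial-time) algorithm computes $X\subseteq V$ such that every connected component of $G\setminus X$ has at most $n/2$ vertices, and $X$ is $1$-vertex-expanding (resp.\ $\Omega(1/\sqrt{\log n})$-vertex-expanding) in $G$. Given this lemma, one builds the hierarchy by setting $V_h \gets X$ and recursing on each connected component of $G\setminus V_h$ to produce $V_{h-1},V_{h-2},\dots,V_1$. Since each recursive subinstance at least halves in size, the depth is $h\le\log n$, matching the claimed bound.

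To prove the separator lemma, I would initialize $X\gets V$ (which trivially satisfies both invariants, since $G\setminus X$ is empty and there are no vertex cuts to check vacuously) and iteratively update $X$ as follows. While $X$ is not $1$-vertex-expanding, find a bad vertex cut $(L,S,R)$ of $G$, i.e., one satisfying $|S|<\min\{|X\cap(L\cup S)|,\,|X\cap(R\cup S)|\}$. Because this bad condition is symmetric in $(L,R)$, we may swap the two sides to assume $|L|\le|R|$, so in particular $|L|\le n/2$. Then update $X\gets(X\setminus L)\cup S$.

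I would then verify two properties of the update. First, the component-size invariant is preserved: $V\setminus X' = L\cup((V\setminus X)\cap R)$, and since $S\subseteq X'$ separates $L$ from $R$ in $G$, each component of $G[V\setminus X']$ is contained either in $L$ (of size $|L|\le n/2$) or in $(V\setminus X)\cap R$ (a subset of a pre-update component of $G\setminus X$, also of size $\le n/2$). Second, the potential $|X|$ strictly decreases: $|X'|=|X|-|X\cap L|+|S\setminus X|$, and the bad condition gives $|S\setminus X|=|S|-|X\cap S|<|X\cap L|$, hence $|X'|<|X|$. The algorithm therefore terminates in at most $n$ iterations. For the polynomial-time version, I would replace exact search with a polynomial-time $O(\sqrt{\log n})$-approximation algorithm for vertex sparsest cut, which given $X$ either certifies $\Omega(1/\sqrt{\log n})$-vertex-expansion or returns a bad cut with integral slack.

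The key ingredient making this work is the symmetry of the bad condition in $(L,R)$, which lets us WLOG take $|L|\le n/2$ and preserve the component-size invariant after replacing $L$ by $S$ in $X$. I expect the main subtlety (and the main point of departure from the edge case) to be designing the update $(X\setminus L)\cup S$ so that it simultaneously decreases $|X|$ and keeps every component of $G\setminus X$ of size at most $n/2$; the potential and cut-separation arguments above together handle it.
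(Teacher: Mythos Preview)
Your proposal is correct and follows essentially the same approach as the paper: prove a vertex-separator lemma via the same iterative refinement (start with $X=V$, repeatedly find a bad cut $(L,S,R)$ with $|L|\le n/2$, replace $X\gets(X\setminus L)\cup S$, argue $|X|$ strictly decreases and the component-size invariant is preserved), then recurse to build the hierarchy. Your justification of the component-size invariant is in fact more detailed than the paper's, and the polynomial-time variant via approximate vertex sparsest cut is exactly what the paper does (citing Feige--Hajiaghayi--Lee).
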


Long and Saranurak's \cite{LongS22} vertex expander hierarchy is weaker, both qualitatively and structually. To be precise, the Long-Saranurak hierarchy only guarantees $\phi \geq 1/n^{o(1)}$, but it admits almost-linear construction time. Furthermore, the expander components in the 
Long-Saranurak hierarchy may not form a laminar family. 
The proof of \Cref{thm:vertex exp hie} follows immediately from the lemma below. 
\begin{lem}
There exists an exponential-time algorithm that, given a graph $G=(V,E)$, computes a vertex set $X$ such that every connected component of $G\setminus X$ contains at most $n/2$ vertices and $X$ is $1$-vertex-expanding. In polynomial time, we instead guarantee that $X$ is $\Omega(1/\sqrt{\log n})$-vertex-expanding.\label{lem:vertex exp sep}
\end{lem}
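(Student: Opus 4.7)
The plan is to mimic the proof of \Cref{lem:edge exp sep} closely, adapting the potential argument from edges to vertices. Initialize $X\gets V$, so $G\setminus X=\emptyset$ and the component-size invariant holds vacuously. Repeat: if $X$ is $1$-vertex-expanding, stop. Otherwise, find a violating vertex cut $(L,S,R)$, i.e.\ one with $|S|<\min\{|X\cap(L\cup S)|,|X\cap(R\cup S)|\}$; since $|L|+|R|=n-|S|\le n$, at most one of $|L|,|R|$ exceeds $n/2$, so by renaming if needed I may assume $|L|\le n/2$. Then update $X\gets(X\setminus L)\cup S$.

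To verify the \emph{component-size invariant} after the update, note that
\[
V\setminus X' \;=\; L\;\cup\;\bigl((V\setminus X)\cap R\bigr),
\]
and there are no edges in $G$ between these two sets because $L$ and $R$ are separated by $S\subseteq X'$. Hence every connected component of $G[V\setminus X']$ lies entirely in $L$ (and so has size $\le|L|\le n/2$) or is contained in a component of $G[V\setminus X]$ (and so has size $\le n/2$ by induction). \emph{Termination} follows from a direct potential argument: the violation $|S|<|X\cap L|+|X\cap S|$ rearranges to $|S\setminus X|<|X\cap L|$, hence
\[
|X'| \;=\; |X|-|X\cap L|+|S\setminus X| \;<\; |X|,
\]
so the process halts after at most $|V|$ iterations, and it halts only when $X$ is $1$-vertex-expanding.

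For the polynomial-time variant, I replace the brute-force search for a violating cut by a polynomial-time $O(\sqrt{\log n})$-approximation algorithm for vertex sparsest cut (the vertex analogue of Arora--Rao--Vazirani used in \Cref{lem:edge exp sep}), applied with $0/1$ vertex weights (weight $1$ on $X$, weight $0$ on $V\setminus X$). Such an algorithm, given $X$, either returns a cut whose ratio $|S|/\min\{|X\cap(L\cup S)|,|X\cap(R\cup S)|\}$ is $<1$---which we treat as a violation and plug into the same update---or certifies that $X$ is $\Omega(1/\sqrt{\log n})$-vertex-expanding.

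\emph{Main obstacle.} In the edge case, both requirements (bounded component size after the update and $|X'|<|X|$) are met by taking $S$ to be the smaller side of the violating cut; there, $S$ plays both roles simultaneously. In the vertex case they decouple: the component-size bound demands $|L|\le n/2$, while the termination bound demands $|S\setminus X|<|X\cap L|$ for whichever side is labelled $L$. Fortunately, the violation is defined via a \emph{min} over the two sides, so $|S|<|X\cap(L\cup S)|$ holds for \emph{either} choice of $L$; this frees me to pick $L$ to be the vertex-count-smaller side without jeopardizing the termination argument.
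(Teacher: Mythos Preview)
Your proof is correct and follows essentially the same approach as the paper: initialize $X\gets V$, repeatedly find a violating cut $(L,S,R)$ with $|L|\le n/2$, update $X\gets (X\setminus L)\cup S$, and use $|S\setminus X|<|X\cap L|$ for termination, with the Feige--Hajiaghayi--Lee vertex-sparsest-cut approximation for the polynomial-time variant. Your component-size invariant is actually argued more carefully than in the paper (which simply asserts it ``because $|L|\le n/2$''), and your observation that the $\min$ in the violation condition lets you freely choose the smaller side as $L$ makes explicit a point the paper leaves implicit.
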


The proofs of \Cref{thm:vertex exp hie} and \Cref{lem:vertex exp sep} follow in exactly the same way as how we proved the analogous results in the edge version. We include them for completeness.
\begin{proof}
Initialize $X\gets V$. If $X$ is $1$-expanding, we are done. Otherwise, we repeatedly update $X$ as follows. Since $X$ is not $1$-expanding, there exists a vertex cut $(L,S,R)$ where $|L|\le n/2$ such that $|S|<\min\{|X\cap(L\cup S)|,|X\cap(R\cup S)|\}$. Update $X\gets X\setminus (X\cap L)\cup S$.

Let $X'$ denote $X$ after the update. Observe that every connected component of $G\setminus X'$ still contains at most $n/2$ vertices because $|L|\le n/2$. Moreover, $|X'|<|X|$ because, while we added at most $|S \setminus X|$ new vertices to $X$, we removed $|X\cap L|>|S \setminus X|$ vertices from $X$ where the inequality holds because $|S\cap X| + |S\setminus X|=|S| < |X\cap(L\cup S)|=|X\cap L|+|S\cap X|$. Therefore, there are at most $|V|$ iterations before $X$ is $1$-vertex-expanding. This concludes the proof of the exponential-time algorithm.

To get polynomial time, we instead apply the sparsest cut algorithm by \cite{feige2005improved} that, given $X$, either guarantees that $X$ is $\Omega(1/\sqrt{\log n})$-vertex-expanding or returns a vertex cut $(L,S,R)$ where $|L|\le n/2$ such that $|S|<\min\{|X\cap(L\cup S)|,|X\cap(R\cup S)|\}$.
\end{proof}
\begin{proof}
[Proof of \Cref{thm:vertex exp hie}]Given $G$, compute the vertex set $X$ from \Cref{lem:vertex exp sep} and set $V_{h}\gets X$. To compute $V_{h-1},V_{h-2},\dots,V_{1}$, we recurse on each connected component $C$ of $G\setminus X$. We have $h\le\log n$ because each component $C$ has size $|C|\le n/2$. 
\end{proof}

\paragraph{Notation in subsequent sections.}
Let $(\cal{C},\cal{H})$ be an $(h,\phi)$-vertex-expander hierarchy of $G$. 
For each level-$\ell$  component $\Gamma \in \cal{C}$, we define $\gamma := V_\ell \cap \Gamma$ to be the \emph{core} of $\Gamma$. The following \Cref{ob:VertexEH} is straightforward from the definition.

\begin{observation}
\label{ob:VertexEH}
We have the following.
\begin{enumerate}
\item\label{Property:VEH2} There is no edge connecting two disjoint components in ${\cal C}$.
\item\label{VEH2} For each component $\Gamma\in{\cal C}$, its core $\gamma$ is $\phi$-vertex-expanding in $\Gamma$.
\item The cores $\{\gamma\mid\Gamma\in{\cal C}\}$ partition $V(G)$.
\end{enumerate}
\end{observation}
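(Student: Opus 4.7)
The plan is to verify each of the three parts in turn, all directly from \Cref{def:vertex hie} and the construction in \Cref{thm:vertex exp hie}. All three are essentially bookkeeping exercises, so I do not expect any real obstacle; the only subtlety is to be precise about what ``at level $\ell$'' means for a component in ${\cal C}$.

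For Part 1, I would take two disjoint components $\Gamma,\Gamma'\in{\cal C}$, say $\Gamma$ being a component of $G_{\leq\ell}$ and $\Gamma'$ a component of $G_{\leq\ell'}$ with $\ell\leq\ell'$, and show there is no edge between them. If $\ell=\ell'$, then $\Gamma$ and $\Gamma'$ are distinct connected components of the same graph $G_{\leq\ell}$, so by definition of ``connected component'' there are no edges between them. If $\ell<\ell'$, then $V(\Gamma)\subseteq V_{\leq\ell}\subseteq V_{\leq\ell'}$, so $\Gamma$ is contained in some connected component $\Gamma''$ of $G_{\leq\ell'}$. Since $\Gamma$ and $\Gamma'$ are disjoint, $\Gamma''\neq\Gamma'$, and the argument from the $\ell=\ell'$ case applied to $\Gamma''$ and $\Gamma'$ finishes it.

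Part 2 is an immediate restatement of \Cref{def:vertex hie}: for each $\Gamma\in{\cal C}$ at level $\ell$, the set $V_\ell\cap\Gamma$ is defined to be $\phi$-vertex-expanding in $\Gamma$, and this set is exactly the core $\gamma$.

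For Part 3, I would show both covering and disjointness. For covering, fix any $v\in V(G)$ and let $\ell$ be the unique index with $v\in V_\ell$ (well-defined since ${\cal P}$ partitions $V$). Then $v\in V_{\leq\ell}$, so $v$ lies in some connected component $\Gamma$ of $G_{\leq\ell}$, and $v\in V_\ell\cap\Gamma=\gamma$. For disjointness, suppose $v\in\gamma\cap\gamma'$ for cores of components $\Gamma,\Gamma'\in{\cal C}$ at levels $\ell,\ell'$ respectively. Then $v\in V_\ell\cap V_{\ell'}$, forcing $\ell=\ell'$ because ${\cal P}$ is a partition; and then $\Gamma$ and $\Gamma'$ are two connected components of the same graph $G_{\leq\ell}$ both containing $v$, so $\Gamma=\Gamma'$ and hence $\gamma=\gamma'$. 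The main thing to be careful about is that the levels are uniquely determined by the partition, which is exactly what makes the core assignment well-defined.
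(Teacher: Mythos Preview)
Your proof is correct. The paper does not write out a proof for this observation, simply remarking that it is ``straightforward from the definition''; your argument correctly unpacks each part from \Cref{def:vertex hie} and the fact that $\{V_1,\ldots,V_h\}$ partitions $V$. The only point you might make slightly more explicit in Part~1 is that since $G_{\leq\ell'}$ is the \emph{induced} subgraph on $V_{\leq\ell'}$ and both $\Gamma''$ and $\Gamma'$ lie inside $V_{\leq\ell'}$, any $G$-edge between them would automatically be a $G_{\leq\ell'}$-edge --- but this is clearly what you intend.
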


By convention, the core of a $\Gamma$ 
decorated with subscripts, superscripts, and diacritic marks inherits those decorations, e.g.,
$\hat{\gamma}_i^j$ is the core of $\hat{\Gamma}_i^j$.
For two components $\Gamma,\Gamma'$ s.t. $\Gamma'\preceq \Gamma$ (resp. $\Gamma'\prec \Gamma$), we also write $\gamma'\preceq \gamma$ (resp. $\gamma'\prec\gamma$). For each component $\Gamma$, 
$N(\Gamma)$ denotes the set of 
neighbors of $\Gamma$ in $V-\Gamma$.
Define $N_{\hat{\gamma}}(\Gamma) = N(\Gamma) \cap \hat{\gamma}$, which is only non-empty when
$\gamma\prec\hat{\gamma}$.  We call such $N_{\hat{\gamma}}(\Gamma)$ \emph{neighbor sets}. 
For each vertex $v\in V(G)$, we use $\gamma_{v}$ to denote the unique core containing $v$, 
so $\Gamma_{v}$ denotes the corresponding 
component of $\gamma_{v}$.

\subsubsection{The Initial Structure: Low-Degree Steiner Trees and Shortcut Graphs}

Based on the vertex expander hierarchy, we construct low-degree Steiner trees and shortcut graphs, both of which will help answer connectivity queries.

\paragraph{Low-Degree Steiner Trees.} 

For each component $\Gamma\in{\cal C}$, using \Cref{lemma:LowDegreeSteinerTree}, we will compute a Steiner tree $T_{\gamma}$ with maximum degree $\Delta = O(1/\phi)$ that spans the core $\gamma$ in $G[\Gamma]$. 
Sometimes, we will call the vertices in Steiner tree $T_{\gamma}$ \emph{nodes}, just to be consistent with the terminology of \emph{extended Steiner trees}
introduced later. 
In particular, each node $u\in \gamma\subseteq V(T_{\gamma})$ is a \emph{terminal node}, and the other nodes $V(T_{\gamma})\setminus \gamma$ are \emph{Steiner nodes}. Observe that each vertex in $\Gamma$ will correspond to at most one node in $T_{\gamma}$, 
and vertices in $\gamma$ are in one-to-one 
correspondence with terminal nodes in $T_{\gamma}$.

A hierarchy with such low-degree Steiner trees but without the vertex-expanding property was first introduced by Duan and Pettie~\cite{DuanP20} 
as the \emph{low-degree hierarchy}, 
which has been shown to be useful for the vertex-failure connectivity problem in both the centralized \cite{DuanP20,LongS22} and labeling scheme \cite{ParterPP24} settings. Roughly speaking, these Steiner trees are useful because they serve as connectivity certificates. By the low-degree property, when $f$ vertices fail, each Steiner tree will be broken into 
at most $O(f/\phi)$ subtrees, 
each of them still being connected in the new graph. 
The query algorithm need only look for edges that reconnect the subtrees rather than determine connectivity from scratch.

Long and Saranurak~\cite{LongS22} gave an almost linear time 
algorithm to compute an $O(\log n/\phi)$-degree Steiner tree
spanning a $\phi$-vertex expanding set $A$ in $G$.
We give an improved algorithm that computes an 
$O(1/\phi)$-degree Steiner trees based on F\"urer and Raghavachari~\cite{FurerR94}, albeit with a slower running time. 
This improvement to the degree will shave logarithmic factors off 
our final label size. 
The algorithm below can be of independent interest. 
Its proof is deferred to \Cref{app:low-deg steiner tree}.

\begin{restatable}[Low-degree Steiner Trees]{lem}{steinertree}
    Given a graph $G$ such that a set $A\subseteq V(G)$ is $\phi$-vertex-expanding in $G$, there is an algorithm that computes an $O(1/\phi)$-degree Steiner tree that spans $A$ in $G$. The running time is $O(mn\log n)$.
\label{lemma:LowDegreeSteinerTree}
\end{restatable}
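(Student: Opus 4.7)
The plan is to reduce to Fürer and Raghavachari's~\cite{FurerR94} approximation algorithm, which computes a Steiner tree $T$ spanning $A$ with $\Delta(T) \le \Delta^*(G,A) + 1$ in $O(mn \log n)$ time, where $\Delta^*(G,A)$ denotes the minimum possible maximum degree of any Steiner tree spanning $A$ in $G$. Given this algorithmic black box, the lemma reduces to a purely existential bound: if $A$ is $\phi$-vertex-expanding in $G$, then $\Delta^*(G,A) = O(1/\phi)$.

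To prove the existential bound I would argue via an exchange (local search) argument in the spirit of~\cite{FurerR94}. Let $T^*$ be a Steiner tree realizing $\Delta(T^*) = \Delta^*$, pruned so every leaf is a terminal in $A$; among all such, choose one with the fewest vertices of degree exactly $\Delta^*$. Define the witness set $W = \{v \in V(T^*) : \deg_{T^*}(v) \ge \Delta^* - 1\}$. A standard forest-degree-counting argument (the degree sum of a forest with $k$ components on $n$ nodes is $2(n-k)$) implies that $T^* \setminus W$ consists of at least $(\Delta^* - 2)|W| + 2$ components, each containing at least one terminal, since its leaves inherit from leaves of $T^*$.

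The crucial consequence of local optimality is that $W$ acts as a genuine vertex cut in $G$ separating these components. Indeed, if some edge $\{x,y\} \in E(G)$ had $x,y \notin W$ in two distinct components of $T^* \setminus W$, then the two-edge swap that deletes an edge incident to a degree-$\Delta^*$ vertex on the $T^*$-path from $x$ to $y$ and inserts $\{x,y\}$ would strictly reduce the number of degree-$\Delta^*$ vertices without creating new ones, contradicting the choice of $T^*$. Now apply the $\phi$-vertex-expansion hypothesis to the cut $(L, W, R)$ that isolates any single component from the union of the rest: the smaller side contains at most $|W|/\phi$ terminals. Summing over components, $|A \setminus W| \le ((\Delta^* - 2)|W| + 2) \cdot |W|/\phi$, while on the other hand simply counting terminals per component forces $|A \setminus W| \ge (\Delta^* - 2)|W| + 2$. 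Combining these (together with a dual bound $|W| \le 2n/\Delta^*$ coming from $\sum_{v \in W} \deg_{T^*}(v) \le 2(n-1)$) yields $\Delta^* = O(1/\phi)$.

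The main obstacle is making the witness/exchange argument fully rigorous: precisely specifying the two-edge swap that strictly reduces the number of maximum-degree vertices without introducing new ones, handling corner cases when $v$, $x$, or $y$ themselves lie in $W$, and confirming that the expansion inequality can be applied component-by-component to finally yield a clean $O(1/\phi)$ upper bound on $\Delta^*$. Once these combinatorial details are nailed down, plugging the existential bound into the $O(mn \log n)$-time Fürer-Raghavachari procedure completes the proof.
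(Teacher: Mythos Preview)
Your plan matches the paper's in spirit---invoke F\"urer--Raghavachari and bound the maximum degree via the expansion hypothesis---but the final counting step has a genuine gap, not merely a detail to be filled in.

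Applying expansion to a cut that isolates a \emph{single} component $C_i$ yields $|A\cap C_i|\le|W|/\phi$; summing over all $k$ components gives $|A\setminus W|\le k\cdot|W|/\phi$, and combined with your lower bound $|A\setminus W|\ge k$ this says only $|W|\ge\phi$, which is vacuous and tells you nothing about $\Delta^*$. The dual bound $|W|\le 2n/\Delta^*$ does not rescue it. Component-by-component is simply the wrong cut. What works is a \emph{balanced} split: put half the components on each side, so expansion gives $|W|\ge\phi\lfloor k/2\rfloor$, i.e.\ $k\le 2|W|/\phi+O(1)$; together with the lower bound $k\ge\Omega((\Delta^*-3)|W|)$ this forces $\Delta^*=O(1/\phi)$. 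The paper packages exactly this balanced-split step as \emph{toughness}: it first shows that any $\phi$-vertex-expanding set is $\Omega(\phi)$-tough, and then the single toughness inequality $|W|\ge\Omega(\phi)\cdot c_{G-W}(A)$ finishes the argument.

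Two smaller corrections. First, ``each component of $T^*\setminus W$ contains a terminal'' is false: a Steiner vertex all of whose $T^*$-neighbors lie in $W$ becomes a terminal-free component. The paper's fix is to classify components as trivial\,/\,internal\,/\,leaf and observe that internal (terminal-free, non-trivial) components are incident to at least two deleted edges; this costs a factor~$2$ in the component count. Second, your acknowledged obstacle---the $T^*$-path from $x$ to $y$ might pass only through degree-$(\Delta^*{-}1)$ vertices of $W$, so a single swap need not reduce the number of degree-$\Delta^*$ vertices---is real, and the paper sidesteps it entirely: it quotes F\"urer--Raghavachari not just for the approximation ratio but for the stronger structural output, namely that the algorithm returns the tree $T$ \emph{together with} a witness set $B$ already certified to preserve terminal connectivity in $G\setminus B$. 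There is then no need to re-derive the exchange argument yourself.
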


\paragraph{The Neighborhood Hitter ${\cal S}$.}
We want the Steiner trees to have low degree $\Delta$ so that $f$ 
vertex failures generate at most $f\Delta$ subtrees.  This argument 
only requires that \emph{failed} vertices have low degree. 
Following~\cite{ParterPP24},
we generate a partition $\mathcal{S}=(S_1,\ldots,S_{f+1})$ of the vertex set,
and build a version of the data structure for each $S_i\in \mathcal{S}$, which
one can think of as vertices that are \emph{not allowed} to fail.
By the pigeonhole principle, for any failure set $F$ there exists an $S=S_i$
such that $S\cap F=\emptyset$.  Thus, it is fine if, 
in the data structure with failure-free set $S$, 
all $S$-vertices have \emph{unbounded} degrees.

The main benefit of having a failure-free $S$ is to 
effectively reduce neighborhood sizes, as follows.
If we were to generate the partition $\mathcal{S}$ randomly, then with high probability either (i) $N_{\hat{\gamma}}(\Gamma)\cap S \neq \emptyset$ or 
(ii) $|N_{\hat{\gamma}}(\Gamma)| \leq \lambda_{\nb} \bydef O(f\log n)$.
In case (i) we can link $T_{\hat{\gamma}}$ and $T_{\gamma}$ without increasing
the degrees of non-$S$ vertices by much (see \emph{extended cores} below), 
and in case (ii) we have a good upper bound on $|N_{\hat{\gamma}}(\Gamma)|$.
In fact, it is possible to achieve this guarantee deterministically using 
the method of conditional expectations~\cite{ParterPP24}. Concretely, we just compute ${\cal S}$ by invoking \Cref{lemma:NeighborHitter} with all such neighbor sets $N_{\hat{\gamma}}(\Gamma)$ 
as the inputs.

\begin{lemma}[{\cite[Lemma 8.1]{ParterPP24}}]
Given a graph $G$ with a polynomial number of vertex 
sets $\{B_{k}\subseteq V\mid 1\leq k\leq \poly(n)\}$, 
there is a deterministic algorithm that computes a partition 
${\cal S} = \{S_{1},...,S_{f+1}\}$ of $V(G)$ s.t.~for each $S_{i}$ and $B_{k}$, 
either $S_{i}\cap B_{k}\neq \emptyset$ or 
$|B_{k}|\leq O(f\log n)$. The running time is polynomial.\label{lemma:NeighborHitter}
\end{lemma}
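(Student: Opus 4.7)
The plan is to invoke the probabilistic method with a standard derandomization via conditional expectations. First I would assign each vertex $v\in V$ independently and uniformly at random to one of the $f+1$ parts; call the resulting partition $\mathcal{S}=(S_1,\ldots,S_{f+1})$. Fix the threshold $\lambda_{\nb}=c(f+1)\log n$ for a sufficiently large constant $c$. For every set $B_k$ with $|B_k|>\lambda_{\nb}$ and every index $i\in[f+1]$, the ``bad event'' $A_{k,i}:=\{S_i\cap B_k=\emptyset\}$ occurs with probability
\[
\Pr[A_{k,i}] \;=\; \left(1-\tfrac{1}{f+1}\right)^{|B_k|} \;\leq\; e^{-|B_k|/(f+1)} \;\leq\; n^{-c}.
\]
Since the number of such bad events is $\poly(n)\cdot(f+1)=\poly(n)$, choosing $c$ large enough makes $\Phi_0 := \sum_{k,i}\Pr[A_{k,i}] < 1$, so some partition avoids all bad events. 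This establishes existence.

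Next I would derandomize by the method of conditional expectations. Enumerate the vertices $v_1,\ldots,v_n$ in an arbitrary order; at step $j$, having already committed assignments $\sigma_1,\ldots,\sigma_{j-1}$ to $v_1,\ldots,v_{j-1}$, I pick $\sigma_j\in[f+1]$ to minimize the conditional pessimistic estimator
\[
\Phi_j(\sigma_j) \;:=\; \sum_{k,i} \Pr[A_{k,i}\mid \sigma_1,\ldots,\sigma_j].
\]
The key observation is that $\Pr[A_{k,i}\mid \sigma_1,\ldots,\sigma_j]$ has a trivial closed form: it equals $0$ if some already-assigned $v\in B_k$ was placed into part $i$, and otherwise equals $(1-\tfrac{1}{f+1})^{r_{k,i,j}}$, where $r_{k,i,j}$ is the number of as-yet-unassigned vertices of $B_k$. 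Hence each $\Phi_j(\sigma_j)$ is computable in polynomial time. The greedy rule preserves the invariant $\Phi_j\leq \Phi_{j-1}\leq\cdots\leq\Phi_0<1$, so at the end $\Phi_n<1$. But $\Phi_n$ is integer-valued (each conditional probability has collapsed to $0$ or $1$), so $\Phi_n=0$, meaning no bad event occurs and every ``large'' $B_k$ hits every $S_i$.

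The main obstacle is purely bookkeeping rather than conceptual: we must efficiently maintain, for each pair $(k,i)$, both the counter $r_{k,i,j}$ and a flag recording whether $A_{k,i}$ has already been ``killed'' by some prior assignment, and then recompute $\Phi_j(\sigma_j)$ for the $f+1$ candidate choices at each step. Since the total number of $(k,i)$ pairs is $\poly(n)$ and each committed assignment touches at most $\sum_k |B_k|\leq \poly(n)$ entries, the total running time is $\poly(n)$ as required, and the resulting partition $\mathcal{S}$ satisfies the desired guarantee with threshold $\lambda_{\nb}=O(f\log n)$.
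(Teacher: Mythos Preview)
Your proposal is correct and matches the approach the paper indicates: the paper does not prove this lemma itself but cites it from \cite{ParterPP24}, explicitly noting that the guarantee is obtained ``deterministically using the method of conditional expectations,'' which is precisely the random-coloring-plus-pessimistic-estimator argument you wrote out. One cosmetic remark: your counter $r_{k,i,j}$ does not actually depend on $i$ (it is just the number of unassigned vertices of $B_k$ after step $j$), so you could simplify the notation to $r_{k,j}$.
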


Henceforth we use $S$ to refer to an arbitrary part of the 
partition $\mathcal{S}$.
In the preprocessing phase we generate a data structure for 
each $S\in \mathcal{S}$, but in the context of a 
query $\ang{s,t,F}$, $S$ refers to any part for which $S\cap F=\emptyset$.

\paragraph{Extended Cores and Extended Steiner Trees.} 
Each component $\Gamma\in{\cal C}$ has an \emph{extended core} $\gamma^{\ext}$ defined
as follows:
\[
\gamma^{\ext} \bydef \gamma \cup \bigcup_{\substack{\gamma'\prec \gamma\\\text{s.t.~$N_{\gamma}(\Gamma')\cap S\neq\emptyset$}}} \gamma'.
\]
Observe that if $\gamma' \nsubseteq \gamma^{\ext}$, then $|N_{\gamma}(\Gamma')|\leq \lambda_{\nb} = O(f\log n)$.
Whenever $N_{\hat{\gamma}}(\Gamma)\cap S\neq \emptyset$ is non-empty, 
let $s_{\hat{\gamma}}(\Gamma)\in N_{\hat{\gamma}}(\Gamma)\cap S$ 
be an arbitrary representative in its neighborhood set.

Just as each core $\gamma$ has a Steiner tree $T_{\gamma}$,
the extended core $\gamma^{\ext}$ has an \emph{extended Steiner tree} $T_{\gamma^{\ext}}$. Each tree node in $V(T_{\gamma^{\ext}})$ is either a \emph{terminal node} or \emph{Steiner node}. As we will see in the construction, the terminal nodes are one-one corresponding to vertices in $\gamma^{\ext}$. Each Steiner node corresponds to exactly one vertex in $\Gamma$, while each vertex in $\Gamma$ can correspond to arbitrary numbers of Steiner node in $V(T_{\gamma^{\ext}})$.

\medskip

\noindent{\underline{Construction of $T_{\gamma^{\ext}}$.}} The construction of $T_{\gamma^{\ext}}$ is as follows.
\begin{enumerate}
\item First, we make a copy of $T_\gamma$, and for each strict descendant $\gamma' \subseteq \gamma^{\ext}$,
make a copy of $T_{\gamma'}$.

\item Let $P_{\gamma'\to\gamma}$ be a copy of an arbitrary simple path in the 
graph $G[\Gamma'\cup\{s_{\gamma}(\Gamma')\}]$ connecting the vertex $s_{\gamma}(\Gamma')\in \gamma$ (which corresponds to the terminal node $s_{\gamma}(\Gamma')\in V(T_{\gamma})$) to some vertex 
$v'\in \Gamma'$ such that $v'$ corresponds to some tree node in $V(T_{\gamma'})$.
\item Finally, we obtain $T_{\gamma^{\ext}}$ by attaching the copy of $T_{\gamma'}$ (for all strict descendants $\gamma'\subseteq \gamma^{\ext}$) to the copy of $T_{\gamma}$ using the path $P_{\gamma'\to\gamma}$. That is, we glue the endpoint $s_{\gamma}(\Gamma')$ of $P_{\gamma'\to\gamma}$ to the terminal node $s_{\gamma}(\Gamma')\in V(T_{\gamma})$, and glue the other endpoint $v'$ of $P_{\gamma'\to\gamma}$ to the tree node $v'\in V(T_{\gamma'})$.
\end{enumerate}

By the construction, $V(T_{\gamma^{\ext}})$ is made up of $V(T_{\gamma})$, $V(T_{\gamma'})$ of each strict descendant $\gamma'\subseteq \gamma^{\ext}$, and the \underline{internal} nodes of each path $P_{\gamma'\to\gamma}$.
We define the terminal nodes in $V(T_{\gamma^{\ext}})$ to be 
\begin{itemize}
\item the terminal nodes in $V(T_{\gamma})$ (they one-one correspond to vertices in $\gamma$), and 
\item the terminal nodes in $V(T_{\gamma'})$ for each strict descendant $\gamma'\subseteq \gamma^{\ext}$ (they one-one correspond to vertices in $\gamma'$).
\end{itemize}
Other tree nodes in $V(T_{\gamma^{\ext}})$ are Steiner nodes. By definition, the terminal nodes in $V(T_{\gamma^{\ext}})$ one-one correspond to vertices in $\gamma^{\ext}$.

\medskip

\noindent{\underline{Properties of $T_{\gamma^{\ext}}$.}} First, $T_{\gamma^{\ext}}$ has some kind of low degree guarantee of non-$S$ vertices, as shown in \Cref{lemma:ExtLowDegree}.

\begin{lemma}
For each vertex $v\in \Gamma\setminus S$, the tree nodes corresponding to $v$ have total $T_{\gamma^{\ext}}$-degree at most $O(h\Delta)$. 
\label{lemma:ExtLowDegree}
\end{lemma}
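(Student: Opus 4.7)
The plan is to fix $v\in\Gamma\setminus S$ and count both the distinct tree nodes of $T_{\gamma^{\ext}}$ that correspond to $v$ and the contribution each makes to the total degree. From the construction, every such node falls into one of two categories: (a) a node inside some copy $T_{\gamma'}$ used in building $T_{\gamma^{\ext}}$, where $\gamma'$ is either $\gamma$ itself or a strict descendant with $\gamma'\subseteq\gamma^{\ext}$ and $v\in\Gamma'$; or (b) an internal node of one of the attaching paths $P_{\gamma'\to\gamma}$ that passes through $v$.

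The first key step is to bound how many such $\gamma'$ can arise. Here I use the laminarity of $\mathcal{C}$ (\Cref{ob:VertexEH}): since the cores partition $V(G)$ and the components form a laminar family, $v\in\Gamma'$ iff $\gamma_v\preceq\gamma'$, so the candidate $\gamma'$ lie on the unique chain from $\gamma_v$ up to $\gamma$ in $\mathcal{H}$, which has length at most $h$. The same chain controls (b), because each path $P_{\gamma'\to\gamma}$ lives in $G[\Gamma'\cup\{s_{\gamma}(\Gamma')\}]$ with $s_{\gamma}(\Gamma')\in S$; so for $v\notin S$ to appear on this path (either as an interior vertex or as the glued endpoint $v'$) we again need $v\in\Gamma'$, i.e.~$\gamma_v\preceq\gamma'\prec\gamma$. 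In both (a) and (b) the number of contributing $\gamma'$ is at most $h$.

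The second step is a per-appearance degree bound. By \Cref{lemma:LowDegreeSteinerTree}, each Steiner tree $T_{\gamma'}$ has maximum degree $\Delta = O(1/\phi)$, so every type-(a) appearance contributes at most $\Delta$ to $v$'s total degree in $T_{\gamma^{\ext}}$, plus possibly $+1$ if $v=v'$ happens to be the glued endpoint of $P_{\gamma'\to\gamma}$. Each type-(b) appearance is an interior vertex of a simple path and thus has degree exactly $2$. Summing over the chain of length at most $h$ gives a total degree of at most $h(\Delta+1)+2h = O(h\Delta)$, as claimed.

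I do not anticipate a deep obstacle; the only mild subtlety is the bookkeeping of the extra $+1$ that arises when $v$ sits inside a copy of $T_{\gamma'}$ and simultaneously serves as the glued endpoint of $P_{\gamma'\to\gamma}$, but this absorbs into the $O(\Delta)$ per chain element and does not affect the asymptotic bound.
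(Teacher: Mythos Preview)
Your proposal is correct and follows essentially the same approach as the paper's proof: both decompose $V(T_{\gamma^{\ext}})$ into the copies $T_{\gamma'}$ and the attaching paths $P_{\gamma'\to\gamma}$, use the chain $\gamma_v\preceq\gamma'\preceq\gamma$ of length at most $h$ to bound the number of relevant $\gamma'$, and bound the per-appearance degree by $\Delta+1$ for copies and $2$ for path interiors. The only cosmetic difference is that the paper singles out the node in $T_{\gamma}$ and observes it has degree exactly $\Delta$ (not $\Delta+1$) since $s_{\gamma}(\Gamma')\in S\neq v$, whereas you absorb this into the uniform $\Delta+1$ bound; this does not affect the $O(h\Delta)$ conclusion.
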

\begin{proof}
Recall that $V(T_{\gamma^{\ext}})$ is made up of $V(T_{\gamma})$, $V(T_{\gamma'})$ of each strict descendant $\gamma'\subseteq \gamma^{\ext}$, and the \underline{internal} nodes of each path $P_{\gamma'\to\gamma}$.
\begin{itemize}
\item $v$ can correspond to at most one tree node in $V(T_{\gamma})$. This tree node have degree at most $\Delta$ in $T_{\gamma^{\ext}}$ because it is not an endpoint of any path $P_{\gamma'\to\gamma}$ (since $v\notin S$).

\item For each strict descendant $\gamma'\subseteq \gamma^{\ext}$, if $v\in \Gamma'$, $v$ can correspond to at most one tree node in $V(T_{\gamma'})$ (note that if $v\notin \Gamma'$, $v$ must correspond to no tree node in $V(T_{\gamma'})$ since $V(T_{\gamma'})\subseteq \Gamma'$). This tree node have $T_{\gamma^{\ext}}$-degree at most $\Delta + 1$ because it has degree at most $\Delta$ in $T_{\gamma'}$ and it has degree $1$ in $P_{\gamma'\to\gamma}$ if it is an endpoint of $P_{\gamma'\to\gamma}$.

The number of such $\gamma'$ is at most $h$ since the number of components containing $v$ is at most $h$. Therefore, this part contributes at most $h(\Delta + 1)$.

\item For each path $P_{\gamma'\to\gamma}$, if $v\in \Gamma'$, $v$ can correspond to at most one internal node of $P_{\gamma'\to\gamma}$ (note that if $v\notin \Gamma'$, $v$ will correspond to no internal node of $P_{\gamma'\to\gamma}$ since all internal nodes are in $\Gamma'$). This node has $T_{\gamma^{\ext}}$-degree $2$.

Again the number of such path $P_{\gamma'\to\gamma}$ is at most $h$, and this part contributes at most $2h$.
\end{itemize}
\end{proof}

The second property is the simple \Cref{ob:ExtSteinerSubtree}. We will exploit it in the proof of \Cref{lemma:RecursiveConn} and \Cref{sec:strategy vertex}.

\begin{observation}
For each $\gamma'\prec\gamma$ such that $\gamma'\subseteq \gamma^{\ext}$, its Steiner tree $T_{\gamma'}$, or even $T_{\gamma'}\cup P_{\gamma'\to\gamma}$, is a subtree of $T_{\gamma^{\ext}}$, where $T_{\gamma'}\cup P_{\gamma'\to\gamma}$ denote the tree obtained by gluing the endpoint $v'$ of $P_{\gamma'\to\gamma}$ to the tree node $v'\in V(T_{\gamma'})$.
\label{ob:ExtSteinerSubtree}
\end{observation}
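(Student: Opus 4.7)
The approach is to verify both claims by directly inspecting the construction of $T_{\gamma^{\ext}}$ given above, which is explicitly described as ``attaching the copy of $T_{\gamma'}$ to the copy of $T_\gamma$ using the path $P_{\gamma'\to\gamma}$''. Since this procedure makes a literal copy of $T_{\gamma'}$ (preserving every node and every edge) and only touches that copied tree through the single gluing point $v'$, one should expect $T_{\gamma'}$ (and even $T_{\gamma'}\cup P_{\gamma'\to\gamma}$) to sit intact inside $T_{\gamma^{\ext}}$. The main work is to confirm that $T_{\gamma^{\ext}}$ is itself a tree, after which both subtree claims will be immediate.

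To show $T_{\gamma^{\ext}}$ is a tree, I would argue by induction on the number of attachments. Initially one has the disjoint union of the copy of $T_\gamma$ and the copies of $T_{\gamma'}$ for all strict descendants $\gamma'\subseteq\gamma^{\ext}$; this is an acyclic forest. Now insert the paths $P_{\gamma'\to\gamma}$ one at a time. The internal vertices of each $P_{\gamma'\to\gamma}$ are fresh copies, disjoint from everything built so far, while its two endpoints are glued respectively to a node of $V(T_\gamma)$ and to the node $v'\in V(T_{\gamma'})$, which lie in two \emph{different} components of the current forest (the component containing the copy of $T_\gamma$, and the component containing the still-isolated copy of $T_{\gamma'}$). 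Thus each insertion merges two components, introduces no cycle, and preserves the forest property. After all insertions the result is connected and acyclic, i.e., a tree.

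Once $T_{\gamma^{\ext}}$ is known to be a tree, both parts of the observation follow. The copy of $T_{\gamma'}$ sits inside $T_{\gamma^{\ext}}$ with all its nodes and edges preserved, so it is a connected subgraph of $T_{\gamma^{\ext}}$, and every connected subgraph of a tree is a subtree. For $T_{\gamma'}\cup P_{\gamma'\to\gamma}$, the same reasoning applies: the path $P_{\gamma'\to\gamma}$ is attached to the copy of $T_{\gamma'}$ precisely at $v'$, so the union defined by this gluing is connected; being a connected subgraph of the tree $T_{\gamma^{\ext}}$, it is a subtree as well.

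The only delicate point I anticipate is the disjoint-copy convention that the induction relies on, namely that the internal nodes of each $P_{\gamma'\to\gamma}$ are fresh and are not accidentally identified with nodes of any $T_{\gamma''}$ or any other $P_{\gamma''\to\gamma}$ (recall that the same underlying vertex $v\in\Gamma$ may correspond to many nodes of $T_{\gamma^{\ext}}$, as the paragraph after the construction states). The phrasing ``\emph{a copy of} an arbitrary simple path'' in step 2 of the construction is precisely what enforces this; once this convention is made explicit, the inductive argument above goes through without further complication, completing the proof.
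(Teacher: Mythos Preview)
Your proposal is correct. The paper itself provides no proof for this observation, calling it ``simple'' and treating it as immediate from the construction of $T_{\gamma^{\ext}}$; your argument is a faithful and more detailed unpacking of exactly that construction, so the approaches coincide.
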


\medskip

\Cref{lemma:ExtLowDegree} shows that
when $f$ vertices fail, they break $T_{\gamma^{\ext}}$ into at most 
$O(fh\Delta)= O(f\log n/\phi)$ 
subtrees, since $S$-vertices are not allowed to fail. The analogue of \cref{lemma:ExtLowDegree} 
in \cite{ParterPP24} 
creates extended Steiner trees with degree
$\Delta + h$ rather than $O(h\Delta)$,
but might not satisfy the property that
$T_{\gamma'}$ is a subtree of $T_{\gamma^{\ext}}$, 
which is used in our labeling scheme.

\paragraph{The Shortcut Graphs.} 
The \emph{global shortcut graph} $\hat{G}$ is also defined 
w.r.t.~an arbitrary $S\in\mathcal{S}$.
It is formed by adding \emph{shortcut edges} to $G$, each
with an assigned \emph{type}.\footnote{Previous papers call a shortcut edge an artificial edge and call a shortcut graph an auxiliary graph. We change the names to make them more descriptive.}
For each component $\Gamma\in {\cal C}$ and 
each strict ancestor $\hat{\Gamma}$ of $\Gamma$, we define 
\begin{align*}
\hat{N}_{\hat{\gamma}}(\Gamma) &=
\left\{
\begin{array}{ll}
N_{\hat{\gamma}}(\Gamma) & \text{ if $N_{\hat{\gamma}}(\Gamma)$ is disjoint from $S$,}\\
\{s_{\hat{\gamma}}(\Gamma)\} & \text{ if $N_{\hat{\gamma}}(\Gamma)$ intersects $S$.}
\end{array}
\right.\\
\hat{N}(\Gamma) &= \bigcup_{\hat{\gamma}} \hat{N}_{\hat{\gamma}}(\Gamma).
\end{align*}

$\hat{G}$ is a \emph{typed} multigraph with the same vertex set as $G$ and
\begin{align*}
E(\hat{G}) &= E(G) \cup \bigcup_{\Gamma} \Clique(\hat{N}(\Gamma)),
\end{align*}
where all edges in $E(G)$ have type \emph{original} and
all edges in the clique $\Clique(\hat{N}(\Gamma))$ have type $\gamma$.

The \emph{shortcut graph w.r.t. $\Gamma\in \mathcal{C}$}, 
denoted by $\hat{G}_{\Gamma}$ is the subgraph of $\hat{G}$
induced by edges with both endpoints in $\Gamma$ 
and at least one in the core $\gamma$.

The idea of adding shortcut edges has appeared in prior works \cite{DuanP20,LongS22,ParterPP24,ChanPR11} on the 
vertex-failure connectivity problem. Intuitively, the simplest way to add shortcut edges is to add a clique on $N(\Gamma)$ for each component $\Gamma$. With the shortcut edges, when failed vertices come, if some component $\Gamma$ 
is \emph{unaffected} (it has no failed vertices)
the query algorithm can ignore vertices in $\Gamma$, and use the shortcut edges to capture the connectivity provided by $\Gamma$. 
However, generally the performance of the algorithm depends on the sparsity of the shortcut edges, so this naive construction will not give good algorithms. Indeed, most prior work \cite{DuanP20,LongS22,ParterPP24} introduced different sparsification techniques on shortcut edges. In our work, we sparsify the shortcut edges by adding a clique on the sparsified neighbor set $\hat{N}(\Gamma)$ instead of the original one $N(\Gamma)$.

\subsubsection{Structures Affected by Queries}

In this subsection, we will define notations and terms related to a particular query $\ang{s,t,F}$, and then introduce the query strategy from a high level. Recall that $S \in{\cal S}$ represents any part of the partition disjoint from $F$.

\paragraph{Affected Components/Cores, Affected Edges, and Query Graphs.} We introduce the following notations and terms.
\begin{itemize}
\item 
For each component $\Gamma\in {\cal C}$, if $\Gamma$ intersects $F\cup\{s,t\}$, we say $\Gamma$ is an \emph{affected component} and $\gamma$ is an \emph{affected core}, otherwise they are \emph{unaffected}. 
\item 
For each edge $e=\{u,v\}\in E(\hat{G}_{\Gamma})$ in the shortcut graph w.r.t.~$\Gamma$, it is an \emph{affected edge} if the type of $e$ is $\gamma'$ for some affected $\gamma'$.
Let $\hat{E}_{\Gamma,\aff}\subseteq E(\hat{G}_{\Gamma})$ collect all affected 
edges in $\hat{G}_\Gamma$, 
and let $\hat{E}_{\Gamma,\unaff} = E(\hat{G}_{\Gamma})\setminus \hat{E}_{\Gamma,\aff}$ be the set of \emph{unaffected} edges.
\item 
For each affected component $\Gamma$, its \emph{query set} is $Q_{\Gamma} = \gamma\cup \bigcup_{\affected\ \gamma'\prec \gamma}\gamma'$ and its \emph{extended query set} is $Q^{\ext}_{\Gamma} = \gamma^{\ext}\cup \bigcup_{\affected\ \gamma'\prec \gamma}\gamma'$. 
\item We define the \emph{query graph} $\hat{G}^{\qry}_{\Gamma}$ of $\Gamma$ to be $\hat{G}^{\qry}_{\Gamma} = \hat{G}_{\Gamma}[Q^{\ext}_{\Gamma}]\setminus \hat{E}_{\Gamma,\aff}$. Namely, the query graph $\hat{G}^{\qry}_{\Gamma}$ is the subgraph of $\hat{G}_{\Gamma}$ induced by the extended query set $Q^{\ext}_{\Gamma}$, excluding all affected edges.
\end{itemize}

\begin{observation}
The number of affected components is at most $h(f+2)$.
\label{ob:AffectedNum}
\end{observation}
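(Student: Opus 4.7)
The plan is a direct counting argument based on the laminar structure of $\mathcal{C}$. First I would recall that a component $\Gamma$ is affected iff $\Gamma$ intersects $F\cup\{s,t\}$, so each affected component can be charged to at least one vertex in $F\cup\{s,t\}$. The size bound $|F\cup\{s,t\}|\le f+2$ is immediate from the query specification $|F|\le f$.

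Next I would bound, for each fixed vertex $v\in V(G)$, the number of components $\Gamma\in\mathcal{C}$ containing $v$. By the construction in Definition 3.5, each level $\ell\le h$ contributes exactly one component of $G_{\le\ell}$ containing $v$ (the connected component of $G_{\le\ell}$ that contains $v$, if $v\in V_{\le\ell}$, and zero components otherwise). Summing over levels, at most $h$ components of $\mathcal{C}$ contain $v$.

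Combining the two observations, the number of affected components is at most
\[
\sum_{v\in F\cup\{s,t\}}|\{\Gamma\in\mathcal{C}:v\in\Gamma\}|\;\le\;(f+2)\cdot h,
\]
which is the desired bound. The step is essentially bookkeeping; no obstacle is anticipated beyond invoking the laminar/hierarchy structure inherited from Definition 3.5 and Observation 4.1.
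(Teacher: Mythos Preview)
Your argument is correct and is exactly the intended one: each vertex lies in at most $h$ components of the hierarchy (one per level), so the union bound over the at most $f+2$ vertices in $F\cup\{s,t\}$ gives the claim. The paper states this observation without proof, and your counting argument is precisely the reasoning behind it.
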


\subsubsection{A Divide-and-Conquer Lemma}

Next, we state the key lemma, \Cref{lemma:MainCorrectness}, saying the connectivity after failures can be captured by the structure we defined in previous sections. Roughly, for any affected component $\Gamma$, the connectivity between vertices can be captured by either (1) shortcut edges in $\Gamma$, (2) extended Steiner trees $T_{\gamma^{\ext}}$, or (3) the recursive structure on affected children of $\Gamma$.
Naturally, this equivalence hints at a divide-and-conquer strategy by querying bottom-up from the hierarchy. We will formally describe this strategy in \Cref{sec:strategy vertex}.

Before stating \Cref{lemma:MainCorrectness}, we introduce some notations. For an undirected graph $H$ and a subset of vertices $A\subseteq V(H)$, we define $\Conn(A,H)$ to be an undirected graph on vertices $A$ s.t. an edge $(u,v)$ exists in $\Conn(A,H)$ if and only if $u$ and $v$ are connected in $H$. We note that when $H$ refers to an extended Steiner tree and $A\subseteq V(G)$ refers to a set of original vertices, this notation $\Conn(A,H)$ is still well-defined as long as each vertex in $A$ corresponds to exactly one terminal node in $H$. For an extended Steiner tree $T_{\gamma^{\ext}}$, we use $T_{\gamma^{\ext}}\setminus F$ to denote the forest by removing all nodes corresponding to vertices in $F$.

\begin{lemma}
\label{lemma:MainCorrectness}
Let $\Gamma\in{\cal C}$ be an affected component. Each pair of vertices $x,y\in Q_{\Gamma}\setminus F$ are connected in $G[\Gamma]\setminus F$ if and only if they are connected in the union of 
\begin{enumerate}
\item\label{Conn1} $\Conn(Q^{\ext}_{\Gamma}\setminus F, \hat{G}^{\qry}_{\Gamma}\setminus F)$, 
\item\label{Conn2} $\Conn(\gamma^{\ext}\setminus F, T_{\gamma^{\ext}}\setminus F)$, and 
\item\label{Conn3} $\bigcup_{\Gamma_{\child}} \Conn(Q_{\Gamma_{\child}}\setminus F, G[\Gamma_{\child}]\setminus F)$, 
where the union is over all \emph{affected} 
children $\Gamma_{\child}$ of $\Gamma$.
\end{enumerate}
\end{lemma}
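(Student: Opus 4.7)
I would prove the two implications separately.

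\emph{Soundness} (connected in union $\Rightarrow$ connected in $G[\Gamma]\setminus F$). Every edge in the three connectivity graphs witnesses a genuine path in $G[\Gamma]\setminus F$. An edge in $\Conn(\gamma^{\ext}\setminus F,\,T_{\gamma^{\ext}}\setminus F)$ places its endpoints in the same tree of the forest $T_{\gamma^{\ext}}\setminus F$, whose tree edges are edges of $G[\Gamma]\setminus F$. An original edge of $\hat{G}^{\qry}_{\Gamma}\setminus F$ is literally an edge of $G[\Gamma]\setminus F$, while a shortcut edge of type $\gamma'$ lying in $\hat{E}_{\Gamma,\unaff}$ joins two vertices of $\hat{N}(\Gamma')\subseteq N(\Gamma')$; since $\Gamma'$ is unaffected ($\Gamma'\cap F=\emptyset$), the component $\Gamma'$ remains connected in $G\setminus F$ and mediates the join. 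Finally, $\Conn(Q_{\Gamma_{\child}}\setminus F,\,G[\Gamma_{\child}]\setminus F)$ tautologically embeds into $G[\Gamma_{\child}]\setminus F\subseteq G[\Gamma]\setminus F$.

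\emph{Completeness} (the converse). Fix an $x$-$y$ path $P$ in $G[\Gamma]\setminus F$. The plan is to decompose $P$ by the direct children of $\Gamma$ it visits: maximal runs of $P$ are either within $\gamma$ or entirely within some direct child $\Gamma_{\child}$, connected by crossing edges having one endpoint in $\gamma$. A run in $\gamma$ contributes to (1), since each of its edges is an original edge of $\hat{G}^{\qry}_{\Gamma}$ with both endpoints in $\gamma\subseteq Q^{\ext}_{\Gamma}$. A run inside an \emph{affected} child $\Gamma_{\child}$ lies entirely in $G[\Gamma_{\child}]\setminus F$, so (3) supplies a connection between any two of its vertices in $Q_{\Gamma_{\child}}\subseteq Q^{\ext}_{\Gamma}$; the crossing edges bridging this run to $\gamma$ are original edges that live in $\hat{G}^{\qry}_{\Gamma}$.

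A run inside an \emph{unaffected} child $\Gamma_{\child}$, entering at $u'\in\gamma$ and exiting at $v'\in\gamma$, is the critical case, and I would split on whether $N_{\gamma}(\Gamma_{\child})\cap S=\emptyset$. If so, then $\hat{N}_{\gamma}(\Gamma_{\child})=N_{\gamma}(\Gamma_{\child})$ contains both $u'$ and $v'$, so the type-$\gamma_{\child}$ clique edge $(u',v')$ sits in $\hat{E}_{\Gamma,\unaff}\subseteq\hat{G}^{\qry}_{\Gamma}$ and directly yields the $u'$-$v'$ link via (1). Otherwise $\gamma_{\child}\subseteq\gamma^{\ext}$, and \Cref{ob:ExtSteinerSubtree} guarantees that $T_{\gamma_{\child}}\cup P_{\gamma_{\child}\to\gamma}$ is a subtree of $T_{\gamma^{\ext}}$ that is untouched by $F$ (since $\Gamma_{\child}$ is unaffected). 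This subtree links $s_{\gamma}(\Gamma_{\child})$ to every vertex of $\gamma_{\child}$ via (2); gluing this together with original edges from $u',v'$ to $\gamma_{\child}$ (which live in $\hat{G}^{\qry}_{\Gamma}$ because $\gamma_{\child}\subseteq\gamma^{\ext}\subseteq Q^{\ext}_{\Gamma}$) and the shortcut clique on $\hat{N}(\Gamma_{\child})$ routed through $s_{\gamma}(\Gamma_{\child})$ completes a $u'$-$v'$ walk in the union.

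\emph{Main obstacle.} The bookkeeping in the completeness direction is the delicate part: every intermediate vertex of the replacement walk must lie in $Q^{\ext}_{\Gamma}$ and every edge must lie in the union. The definition of $\gamma^{\ext}$ is calibrated so that exactly those descendants whose shortcut neighborhoods get sparsified to a single representative are absorbed into the extended core, guaranteeing that the extended Steiner tree fills in precisely the connectivity lost to sparsification. The most delicate subcase is when a run's entry or exit vertex itself lands in an unaffected descendant of an affected $\Gamma_{\child}$ (and hence outside $Q_{\Gamma_{\child}}$); one then applies the unaffected-shortcut argument nested inside $\Gamma_{\child}$ to route to a vertex of $Q_{\Gamma_{\child}}$ before invoking (3), matching the hierarchical nature of the structure.
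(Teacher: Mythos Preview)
Your soundness argument and the high-level structure of the completeness argument are fine, and the easy subcases (runs in $\gamma$, unaffected direct children with $N_\gamma(\Gamma_{\child})\cap S=\emptyset$) are handled correctly. However, there is a genuine gap in the unaffected-child case when $N_\gamma(\Gamma_{\child})\cap S\neq\emptyset$.

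You assert the existence of ``original edges from $u',v'$ to $\gamma_{\child}$.'' But the crossing edge of $P$ leaving $u'\in\gamma$ lands at some $w\in\Gamma_{\child}$ whose core $\gamma_w$ may satisfy $\gamma_w\prec\gamma_{\child}$ (a \emph{strict} descendant), not $\gamma_w=\gamma_{\child}$. If additionally $N_\gamma(\Gamma_w)\cap S=\emptyset$, then $\gamma_w\not\subseteq\gamma^{\ext}$, so $w\notin Q^{\ext}_\Gamma$ and the original edge $\{u',w\}$ is \emph{not} in $\hat{G}^{\qry}_\Gamma$; nor is $w$ a terminal of $T_{\gamma^{\ext}}$. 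Your stated mechanism (the subtree $T_{\gamma_{\child}}\cup P_{\gamma_{\child}\to\gamma}$ plus the $\hat{N}(\Gamma_{\child})$-clique) gives no link from $u'$ into this picture, since $u'\notin\hat{N}(\Gamma_{\child})=\{s_\gamma(\Gamma_{\child})\}$ in general. The last paragraph of your proposal only acknowledges a related issue for \emph{affected} children and waves at a ``nested'' fix; the actual obstruction already appears for unaffected children and requires more than nesting one level.

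The paper repairs this in two ways. First, it decomposes $P$ at $Q_\Gamma$-vertices rather than at direct-child boundaries; this makes subpaths inside an affected child automatically have both endpoints in $Q_{\Gamma_{\child}}$, so part~(3) applies directly and your ``delicate subcase'' disappears. Second, for a subpath with an endpoint in $\gamma$, it takes the \emph{minimal} unaffected component $\Gamma'$ with $u',v'\in N(\Gamma')$ and whose core $\gamma'$ the subpath actually meets (this may be far below the direct child). If $N_\gamma(\Gamma')\cap S=\emptyset$ one uses the type-$\gamma'$ shortcut clique directly; otherwise one picks the first $\gamma'$-vertex $w_{u'}$ along the subpath, connects $w_{u'}$ to $w_{v'}$ inside $T_{\gamma'}\subseteq T_{\gamma^{\ext}}$ via part~(2), and then proves a separate recursive lemma (the paper's \Cref{lemma:RecursiveConn}) handling the hop $u'\leadsto w_{u'}$: that hop may itself go through an even deeper unaffected $\Gamma''\prec\Gamma'$, and the lemma recurses down the hierarchy, at each level either exhibiting an original/shortcut edge in $\hat{G}^{\qry}_\Gamma$ or descending further. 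The recursion terminates because the depth strictly increases. Your sketch is missing both the ``minimal descendant'' choice and this recursive lemma; without them the argument does not close.
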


\begin{proof}
It is relatively simple to see that $x,y$ are connected in $G_{\Gamma}\setminus F$ if they are connected in the union, because each of $\hat{G}^{\qry}_{\Gamma}\setminus F$, $T_{\gamma^{\ext}}\setminus F$ and $G[\Gamma_{\child}]\setminus F$ only 
use valid connectivity in $G[\Gamma]\setminus F$. To be precise, for the graph $\hat{G}^{\qry}_{\Gamma}\setminus F$, consider an edge $e=\{u,v\}$ in it.
\begin{itemize}
\item If $e$ has type \emph{original} 
then $e$ also exists in $G[\Gamma]\setminus F$.
\item Otherwise, $e$ is a shortcut edge with some type $\gamma'$ such that $\gamma'$ is unaffected. By definition, $u,v \in N(\Gamma')$ 
and $\Gamma'$ is disjoint from $F$, hence
$u,v$ are connected in $G[\Gamma]\setminus F$.
\end{itemize}
Similarly, $T_{\gamma^{\ext}}\setminus F$ is a subgraph of $G[\Gamma]\setminus F$
and 
$G[\Gamma_{\child}]\setminus F$ is obviously a subgraph of $G[\Gamma]\setminus F$.

From now we focus on proving the other direction. Let $P_{xy}$ be a simple path connecting $x$ and $y$ in $G[\Gamma]\setminus F$. We can write $P_{xy}$ as $P_{xy} = P_{1}\circ P_{2}\circ \cdots\circ P_{\ell}$ where the endpoints $u_{i},v_{i}$ of $P_{i}$ are the $Q_{\Gamma}$-vertices and each subpath $P_{i}$ are internally disjoint with $Q_{\Gamma}$. It suffices to show that for each subpath $P_{i}$, $u_{i}$ and $v_{i}$ are connected in the union. 

\medskip

\noindent\textbf{Case (a)}. Suppose $\gamma_{u_{i}},\gamma_{v_{i}}\prec \gamma$ (recall that $\gamma_{u_{i}},\gamma_{v_{i}}$ are the cores containing $u_{i},v_{i}$ respectively). Then there is a child $\Gamma_{\child}$ of $\Gamma$ s.t. $\gamma_{u_{i}},\gamma_{v_{i}}\preceq \gamma_{\child}$ and all vertices in $P_{i}$ are inside $\Gamma_{\child}\setminus F$. To see this, assume for contradiction that $P_{i}$ contains two vertices from two different children $\Gamma_{\child},\Gamma'_{\child}$ of $\Gamma$. However, by property \ref{Property:VEH2} of the hierarchy, $P_{i}$ must go through some vertex in $\gamma$ under this assumption, contradicting that $P_{i}$ is internally disjoint from $Q_{\Gamma}$. Furthermore, we know $\gamma_{\child}$ is affected because $\gamma_{u_{i}}$ and $\gamma_{v_{i}}$ are affected (since $\gamma_{u_{i}},\gamma_{v_{i}}\subseteq Q_{\Gamma}$ and $Q_{\Gamma}$ only collects affected cores). Also, note that $u_{i},v_{i}\in Q_{\Gamma_{\child}}\setminus F$ because $Q_{\Gamma}\cap \Gamma_{\child} = Q_{\Gamma_{\child}}$. Putting it all together, we know $u_{i}$ and $v_{i}$ are connected in $\Conn(Q_{\Gamma_{\child}}\setminus F, G[\Gamma_{\child}]\setminus F)$ (i.e. Part \ref{Conn3}) by the existence of $P_{i}$.  

\medskip

In what follows, we will argue that arbitrary two vertices $u,v\in Q_{\Gamma}\setminus F$ are connected in the union, if they satisfy
\begin{itemize}
\item[(i)] $u\in \gamma$ or $v\in \gamma$, and
\item[(ii)] there is an original edge $e=\{u,v\}\in E(G[\Gamma])$ or there is an \emph{unaffected} component $\Gamma'\prec \Gamma$ s.t. $u,v\in N(\Gamma')$.
\end{itemize}
Indeed, for each above subpath $P_{i}$ not in Case (a), its endpoints $u_{i}$ and $v_{i}$ must satisfy the conditions (i) and (ii). Condition (i) is easy to see. For condition (ii), if $P_{i}$ has only one edge, that this edge $e=\{u_{i},v_{i}\}$ is an original edge in $G[\Gamma]$. If $P_{i}$ has more than one edge, all internal vertices of $P_{i}$ fall in unaffected cores, because $P_{i}$ is internally disjoint from $Q_{\Gamma}$. Again by property \ref{Property:VEH2} of the hierarchy, there exists an unaffected component $\Gamma'$ containing all internal vertices of $P_{i}$, so $u_{i},v_{i}\in N({\Gamma'})$.

\medskip

\noindent\textbf{Case (b).} Suppose condition (ii) tells there is an original edge $e=\{u,v\}\in E(G[\Gamma])$. Then this edge is added to $\hat{G}_{\Gamma}$ because $u\in \gamma$ or $v\in \gamma$, and it is inside $\hat{G}^{\qry}_{\Gamma}$ because it has type original and $u,v\in Q_{\Gamma}\setminus F\subseteq Q^{\ext}_{\Gamma}\setminus F$. Therefore, $u$ and $v$ are connected in $\Conn(Q^{\ext}_{\Gamma}\setminus F, \hat{G}^{\qry}_{\Gamma}\setminus F)$ (i.e. Part \ref{Conn1}).

\medskip

From now on, we suppose condition (ii) tells there is an unaffected component $\Gamma'\prec\Gamma$ s.t. $u,v\in N({\Gamma'})$. We can further assume $u,v,\Gamma'$ satisfy condition 
\begin{itemize}
\item[(iii)] there is a path $P$ connecting $u$ and $v$ with all internal vertices inside ${\Gamma'}$, and $P$ intersects the core $\gamma'$.
\end{itemize}
This is without loss of generality by letting $\Gamma'$ be the \emph{minimal} component s.t. $u,v\in N({\Gamma'})$.

\medskip

\noindent\textbf{Case (c)}. Suppose $u,v\in \gamma$. 

\medskip

{\underline{Subcase (c1)}}. Suppose $N_{\gamma}({\Gamma'})$ is disjoint from $S$. Recall the way we add shortcut edges, we add a clique on $\hat{N}(\Gamma')\supseteq N_{\gamma}({\Gamma'})$ with type $\gamma'$. Because $u,v\in N(\Gamma')\cap \gamma = N_{\gamma}(\Gamma')$, there is a shortcut edge $e=\{u,v\}$ in this clique. Note that $e$ will be added into $\hat{G}_{\Gamma}$ because $u\in \gamma$ or $v\in \gamma$, and $e$ will survive in $\hat{G}^{\qry}_{\Gamma}\setminus F$ because $u,v\in Q^{\ext}_{\Gamma}\setminus F$ and the type $\gamma'$ is unaffected. Therefore, $u$ and $v$ are connected in $\Conn(Q^{\ext}_{\Gamma}\setminus F, \hat{G}^{\qry}_{\Gamma}\setminus F)$ (i.e. Part \ref{Conn1}).

\medskip

{\underline{Subcase (c2).}} Otherwise $N_{\gamma}(\Gamma')$ intersects $S$. By definition, $\gamma'$ is in the extended core $\gamma^{\ext}$ of component $\Gamma$. Condition (iii) tells that $P$ intersect $\gamma'$. Let $w_{u},w_{v}\in P\cap \gamma'$ be the $P$-vertices closest to $u$ and $v$ respectively. \Cref{claim:b2} tells that $w_{u}$ and $w_{v}$ are connected in $\Conn(\gamma^{\ext}\setminus F, T_{\gamma^{\ext}}\setminus F)$ (i.e. Part \ref{Conn2}).

\begin{claim}
\label{claim:b2}
We have $w_{u},w_{v}\in \gamma^{\ext}\setminus F$, and the terminal nodes $w_{u}$ and $w_{v}$ are connected in $T_{\gamma^{\ext}}\setminus F$.
\end{claim}
\begin{proof}
First, we have $w_{u},w_{v}\in \gamma'\setminus F\subseteq \gamma^{\ext}\setminus F$. Next, the terminal nodes $w_{u}$ and $w_{v}$ are connected in $T_{\gamma^{\ext}}\setminus F$ is because (1) the terminal nodes $w_{u}$ and $w_{v}$ fall in $T_{\gamma'}$ (since the vertices $w_{u},w_{v}$ are inside $\gamma'$), (2) the Steiner tree $T_{\gamma'}$ connects $w_{u}$ and $w_{v}$, (3) $T_{\gamma'}$ is a subtree of $T_{\gamma^{\ext}}$ (since $\gamma'\subseteq \gamma^{\ext}$) and (4) $T_{\gamma'}$ is disjoint from $F$ (since ${\Gamma'}$ is unaffected).
\end{proof}

It remains to show that $u\in\gamma$ and $w_{u}\in \gamma'$ are connected in the union, and that $w_{v}\in\gamma'$ and $v\in \gamma$ are connected in the union. Actually, these two claims can be derived from \Cref{lemma:RecursiveConn}. Here we show that $u$ and $w_{u}$ satisfy the requirements of \Cref{lemma:RecursiveConn} (so do $v$ and $w_{v}$ by a similar argument). For $u$, trivially $u\notin F$ and $u\in \gamma$. For $w_{u}$, recall that $\gamma_{w_{u}}$ denotes the core containing $w_{u}$, and we know $\gamma_{w_{u}} = \gamma'$. Hence, as we mentioned above, $\gamma_{w_{u}}\prec\gamma$, $\gamma_{w_{u}}$ is unaffected and $N_{\gamma}(\Gamma_{w_{u}})$ intersects $S$. Because we take the $w_{u}\in P\cap \gamma'$ closest to $u$, the subpath of $P$ from $u$ to $w_{u}$ either has one edge or has all interval vertices inside some $\gamma''\prec\gamma_{w_{u}}$, which means $e=\{u,w_{u}\}\in E(G[\Gamma])$ or $u,w_{u}\in N({\Gamma''})$.

\medskip

\noindent\textbf{Case (d).} Suppose $u\in \gamma$ and $v\in \gamma_{v}\prec \gamma$. Note that $u$ and $v$ satisfy the requirements of \Cref{lemma:RecursiveConn}, so they are connected in the union.

\begin{lemma}
Let $u,v$ be two vertices satisfying that
\begin{itemize}
\item $u\notin F$ and $u\in \gamma$;
\item $v\notin F$, $\gamma_{v}\prec \gamma$ and either
\begin{itemize}
\item $\gamma_{v}$ is affected or
\item $\gamma_{v}$ is unaffected and $N_{\gamma}(\Gamma_{v})$ intersects $S$;
\end{itemize}
\item The edge $e=\{u,v\}\in G[\Gamma]$, or there is an unaffected component $\gamma'\prec\gamma_{v}$ s.t. $u,v\in N({\Gamma'})$.
\end{itemize}
We have $u$ and $v$ are connected in the union.
\label{lemma:RecursiveConn}
\end{lemma}
\begin{proof}
First observe that $\gamma_{v}$ belongs to the extended core $\gamma^{\ext}$ of $\gamma$ because $N_{\gamma}(\Gamma_{v})$ intersects $S$.

Similar to the Case (b) above, if there is an original edge $e=\{u,v\}\in E(G[\Gamma])$, then this edge is added to $\hat{G}_{\Gamma}$ because $u\in \gamma$, and it survives in $\hat{G}^{\qry}_{\Gamma}\setminus F$ because it has type original, $u\in \gamma\in Q^{\ext}_{\Gamma}$ and $v\in \gamma_{v}\subseteq \gamma^{\ext}\subseteq Q^{\ext}_{\Gamma}$. Therefore, $u$ and $v$ are connected in $\Conn(Q^{\ext}_{\Gamma}\setminus F, \hat{G}^{\qry}_{\Gamma}\setminus F)$ (i.e. Part \ref{Conn1}).

From now we assume there is an unaffected component $\gamma'\prec \gamma_{v}$ s.t. $u,v\in N(\Gamma')$. The following argument is similar to the Case (d) above. Again, by choosing $\gamma'$ whose component ${\Gamma'}$ is the minimal one that contains all internal vertices of $P$, we can assume there is a path $P$ connecting $u$ and $v$ with all internal vertices falling in ${\Gamma'}$, and $P$ intersects $\gamma'$.

\medskip

\noindent\underline{Case (1).} Suppose $N_{\gamma_{v}}({\Gamma'})$ intersects $S$, and $v\neq s_{\gamma_{v}}({\Gamma'})$. We show that $v$ and $v'\overset{\text{def}}{=}s_{\gamma_{v}}(\Gamma')$ are connected in the union.

\begin{itemize}
\item If $\gamma_{v}$ is affected, the vertices $v$ and $v'$ are connected by Part \ref{Conn3} by the following reasons. Let $\Gamma_{\child}$ be the child of $\Gamma$ such that $\Gamma_{v}\preceq \Gamma_{\child}$. First, $\gamma_{\child}$ is affected because $\gamma_{v}\subseteq Q_{\Gamma}$ (it means $\gamma_{v}$ is affected) and $\gamma_{v}\preceq \gamma_{\child}$. Second, $v,v'\in Q_{\Gamma_{\child}}\setminus F$ because $v,v'\in \gamma_{v}\setminus F$ ($v'\notin F$ because $v'\in S$ and $S$ is disjoint from $F$), and $\gamma_{v}\subseteq  Q_{\Gamma}\cap {\Gamma_{\child}} = Q_{\Gamma_{\child}}$. Third, $v$ and $v'$ are connected in $G[\Gamma_{\child}]\setminus F$ because $v,v'\in N({\Gamma'})$, ${\Gamma'}$ is an unaffected component, and $\Gamma'\prec \Gamma_{v}\preceq \Gamma_{j}$. 

\item If $\gamma_{v}$ is unaffected and $N_{\gamma}(\Gamma_{v})$ intersects $S$, the terminal nodes $v$ and $v' = s_{\gamma_{v}}({\Gamma'})$ are connected in $\Conn(\gamma^{\ext}\setminus F, T_{\gamma^{\ext}}\setminus F)$ (i.e. Part \ref{Conn2}) by the following reasons. First, $T_{\gamma_{v}}$ is a subtree of $T_{\gamma^{\ext}}$ (since $\gamma_{v}\subseteq \gamma^{\ext}$ from $N_{\gamma}(\Gamma_{v})$ intersects $S$). Furthermore, $T_{\gamma_{v}}$ has no vertex in $F$ (since $\gamma_{v}$ is unaffected), and the terminal nodes in $T_{\gamma^{\ext}}$ corresponding to vertices $v,v'$ are on the subtree $T_{\gamma_{v}}$ (since $v,v'\in \gamma_{v}$). 
\end{itemize}
Therefore, it suffices to show that $u$ is connected to $v'$ in the union, which can be reduced to the following cases.

\medskip

\noindent\underline{Case (2).} Suppose $N_{\gamma_{v}}({\Gamma'})$ is disjoint from $S$, or $v = s_{\gamma_{v}}({\Gamma'})$. Further assume that $N_{\gamma}({\Gamma'})$ is disjoint from $S$. Recall the construction of shortcut edges, we have $u\in N_{\gamma}({\Gamma'})\subseteq \hat{N}(\Gamma')$ and $v \in \hat{N}(\Gamma')$ (if $N_{\gamma_{v}}(\Gamma')$ is disjoint from $S$, then $v\in N_{\gamma_{v}}(\Gamma')\subseteq \hat{N}(\Gamma')$; if $v = s_{\gamma_{v}}(\Gamma')$, then $v = s_{\gamma_{v}}(\Gamma')\subseteq \hat{N}(\Gamma')$), so there is a shortcut edge connected $u'$ and $v$ with type $\gamma'$. This edge is in $\hat{G}_{\Gamma}$ because $u\in\gamma$, and it survives in $\hat{G}^{\qry}_{\Gamma}\setminus F$ because $\gamma'$ is unaffected, $u\in \gamma\subseteq Q^{\ext}_{\Gamma}$ and $v\in \gamma_{v}\subseteq \gamma^{\ext}\subseteq Q^{\ext}_{\Gamma}$. Therefore, $u$ and $v$ are connected in $\Conn(Q^{\ext}_{\Gamma}\setminus F, \hat{G}^{\qry}_{\Gamma}\setminus F)$ (i.e. Part \ref{Conn1}).

\medskip

\noindent\underline{Case (3).} Suppose $N_{\gamma_{v}}({\Gamma'})$ is disjoint from $S$, or $v = s_{\gamma_{v}}({\Gamma'})$. Further assume that $N_{\gamma}({\Gamma'})$ intersects $S$. Let $u' = s_{\gamma}({\Gamma'})$. Let $w$ be the $P$-vertex in $\gamma'$ closest to $u$.

First, $v$ and $u'$ are connected in $\Conn(Q^{\ext}_{\Gamma}\setminus F, \hat{G}^{\qry}_{\Gamma}\setminus F)$ (i.e. Part \ref{Conn1}) by the following reasons. When adding shortcut edges, we have $v,u'\in\hat{N}(\Gamma')$ (by the same reason as above), so there is a shortcut edge connecting $u'$ and $v$ with type $\gamma'$. This edge is in $\hat{G}_{\Gamma}$ because $u\in\gamma$, and it survives in $\hat{G}^{\qry}_{\Gamma}\setminus F$ because $\gamma'$ is unaffected, $u'\in \gamma\subseteq Q^{\ext}_{\Gamma}$, $u'\notin F$ (since $u'\in S$) and $v\in \gamma_{v}\subseteq \gamma^{\ext}\subseteq Q^{\ext}_{\Gamma}$. 

Next, $u'$ and $w$ are connected in $\Conn(\gamma^{\ext}\setminus F, T_{\gamma^{\ext}}\setminus F)$ (i.e. Part \ref{Conn2}). This is because in the extended Steiner tree $T_{\gamma^{\ext}}$, the terminal nodes $u'$ and $w$ fall in subtree $T_{\gamma'}\cup P_{\gamma'\to\gamma}$ (from \Cref{ob:ExtSteinerSubtree}), and all vertices in $T_{\gamma'}\cup P_{\gamma'\to\gamma}$ are not in $F$ (since all vertices in $T_{\gamma'}$ fall in the unaffected component $\Gamma'$, and all vertices in $P_{\gamma'\to\gamma}$ also fall in $\Gamma'$ except one endpoint $s_{\gamma}(\Gamma')$, which is in $S$).

It remains to show that $w$ and $u$ are connected in the union. We can reuse the argument in the whole proof of \Cref{lemma:RecursiveConn}. We now verify that $u$ and $w$ satisfy the conditions of \Cref{lemma:RecursiveConn}. Because $u$ is unchanged, $u\notin F$ and $u\in \gamma$. For $w$ (note that $\gamma_{w} = \gamma'$), because $w\in\gamma_{w}\prec\gamma_{v}\prec\gamma$ and $\gamma_{v}$ is unaffected, we know $w\notin F$, $\gamma_{w}\prec\gamma$ and $\gamma_{w}$ is unaffected. Moreover, $N_{\gamma}(\gamma_{w})$ intersects $S$ is from the assumption of case (3). Next, consider the subpath of $P$ from $u$ to $w$, denoted by $P'$. If $P'$ has only one edge $e=\{u,w\}$, then we are good. Otherwise, $P'$ has some internal vertices. Note that by our choice of $w$, the internal vertices of $P'$ are all in $\Gamma'\setminus \gamma'$. By Property \Cref{Property:VEH2} of the hierarchy, they must belong to ${\Gamma'_{\child}}$ for some child $\gamma'_{\child}$ of $\gamma'$. In other words, the unaffected core $\gamma'_{\child}\prec \gamma_{w}$ has $u,w\in N({\Gamma'_{\child}})$. In conclusion $u,w$ indeed satisfy the conditions of \Cref{lemma:RecursiveConn}. Finally, this recursive argument will stop because when we reach Case (3) again, the $\gamma'$ this time will have depth larger than the one of the last time, and the hierarchy has finite depth.
\end{proof}
\end{proof}

\subsubsection{An Improved Divide-and-Conquer Lemma via Sparsified Shortcut Graphs}
\label{sect:DivideAndConquer}

The goal of this section is to show an improved version \Cref{lemma:MainCorrectness} which proves precisely the same statement except that the query graph $\hat{G}^{\qry}_{\Gamma}$ is replaced by its sparsified version $\wtilde{G}^{\qry}_{\Gamma}$. 

\begin{lem}
\label{coro:MainCorrectness}
Let $\Gamma\in{\cal C}$ be an affected component. For each pair of vertices $x,y\in Q_{\Gamma}\setminus F$, they are connected in $G[\Gamma]\setminus F$ if and only if they are connected in the union of 
\begin{enumerate}
\item $\Conn(Q^{\ext}_{\Gamma}\setminus F, \wtilde{G}^{\qry}_{\Gamma}\setminus F)$, 
\item $\Conn(\gamma^{\ext}\setminus F, T_{\gamma^{\ext}}\setminus F)$, and 
\item $\Conn(Q_{\Gamma_{\child}}\setminus F, G[\Gamma_{\child}]\setminus F)$ for all affected children $\Gamma_{\child}$ of $\Gamma$.
\end{enumerate}
\end{lem}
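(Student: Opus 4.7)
The plan is to derive \Cref{coro:MainCorrectness} from \Cref{lemma:MainCorrectness} by showing that the only change---replacing $\hat{G}^{\qry}_{\Gamma}$ with its Nagamochi--Ibaraki sparsification $\wtilde{G}^{\qry}_{\Gamma}$ in part~1 of the union---does not alter the connected components on $Q^{\ext}_{\Gamma}\setminus F$. Since parts~2 and~3 are identical in the two statements, it suffices to prove that $\Conn(Q^{\ext}_{\Gamma}\setminus F,\wtilde{G}^{\qry}_{\Gamma}\setminus F)$ and $\Conn(Q^{\ext}_{\Gamma}\setminus F,\hat{G}^{\qry}_{\Gamma}\setminus F)$ realize the same equivalence relation on $Q^{\ext}_{\Gamma}\setminus F$. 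Given this equality, the union in \Cref{coro:MainCorrectness} has the same connected components as the union in \Cref{lemma:MainCorrectness}, and the corollary follows by direct substitution.

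The ``if'' direction of \Cref{coro:MainCorrectness} is immediate. Since $\wtilde{G}^{\qry}_{\Gamma}$ is (by construction) a subgraph of $\hat{G}^{\qry}_{\Gamma}$, any path used to certify connectivity in the sparsified union is also a path in the union of \Cref{lemma:MainCorrectness}, and the latter already certifies that $x$ and $y$ are connected in $G[\Gamma]\setminus F$.

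For the ``only if'' direction, I would first apply \Cref{lemma:MainCorrectness} to $x,y\in Q_{\Gamma}\setminus F$, obtaining that they are connected in the union that uses $\Conn(Q^{\ext}_{\Gamma}\setminus F,\hat{G}^{\qry}_{\Gamma}\setminus F)$. The task then reduces to showing the following key property of the sparsifier: $\wtilde{G}_{\Gamma}-F$ and $\hat{G}_{\Gamma}-F$ share the same connected components on $V(\hat{G}_{\Gamma})\setminus F$, for every $F$ with $|F|\leq f$. Restricting both graphs to $Q^{\ext}_{\Gamma}$ and uniformly deleting the affected-edge set $\hat{E}_{\Gamma,\aff}$ commutes with this guarantee, so the connected components on $Q^{\ext}_{\Gamma}\setminus F$ in $\wtilde{G}^{\qry}_{\Gamma}\setminus F$ coincide with those in $\hat{G}^{\qry}_{\Gamma}\setminus F$, completing the reduction.

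The main obstacle lies in establishing the fault-tolerance property of the Nagamochi--Ibaraki sparsifier, because classical Nagamochi--Ibaraki preserves connectivity only after removing a bounded number of \emph{edges}, whereas we must handle removal of up to $f$ \emph{vertices}, which may destroy arbitrarily many edges. The plan is to sidestep this by partitioning $E(\hat{G}_{\Gamma})$ into classes indexed by edge type---the original edges of $G$, plus one class per unaffected descendant $\Gamma'\prec\Gamma$ consisting of the shortcut clique on $\hat{N}(\Gamma')$---and to apply Nagamochi--Ibaraki independently to each class with arboricity parameter $k=\Theta(f)$. For each shortcut clique, $|\hat{N}(\Gamma')|\leq\lambda_{\nb}+1=O(f\log n)$, so after removing at most $f$ vertices the remaining induced subgraph is still a complete graph, whose trivial connectivity is preserved by any NI sparsifier with $k\geq f+1$. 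For the original-edge class, a similar per-class NI sparsification of parameter $O(f)$ suffices. Merging these per-class sparsifications yields the global $\wtilde{G}_{\Gamma}$ with arboricity $\tilde{O}(f^{2})$ and the desired vertex-fault-tolerant equivalence on $V\setminus F$, at which point the reduction to \Cref{lemma:MainCorrectness} is complete.
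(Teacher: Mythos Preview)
Your high-level reduction to \Cref{lemma:MainCorrectness} is right, and you correctly identify that the whole content of the corollary is the equivalence
\[
\Conn(Q^{\ext}_{\Gamma}\setminus F,\wtilde{G}^{\qry}_{\Gamma}\setminus F)=\Conn(Q^{\ext}_{\Gamma}\setminus F,\hat{G}^{\qry}_{\Gamma}\setminus F).
\]
However, your argument for this equivalence has a genuine gap, and your proposed sparsifier is not the one the lemma is about.

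First, $\wtilde{G}_{\Gamma}$ is already fixed in \Cref{sect:DivideAndConquer}: one applies Nagamochi--Ibaraki \emph{once} to the \emph{simple} graph $\hat{G}^{\sp}_{\gamma^{\ext}}$ with parameter $d=\lambda_{\arbo}=f+h^{2}f\lambda_{\nb}+1$, and then reinstates all parallel edges. Your per-type sparsifier is a different object; even if it had the right properties, proving things about it would not establish the stated lemma. (And the arboricity claim for your construction is unsupported: a single vertex of $\gamma$ can lie in $\hat{N}(\Gamma')$ for an unbounded number of descendants $\Gamma'$, so summing per-type arboricities gives no $\tilde{O}(f^{2})$ bound.)

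Second, and more importantly, you assert that ``uniformly deleting the affected-edge set $\hat{E}_{\Gamma,\aff}$ commutes with this guarantee''. This is exactly the nontrivial step, and it does \emph{not} follow from the vertex-fault-tolerance of NI with parameter $\Theta(f)$. Removing $\hat{E}_{\Gamma,\aff}$ is an \emph{edge} deletion of possibly many edges, and NI has no direct guarantee under arbitrary edge deletions. The paper's actual argument (\Cref{lemma:ExtendedCoreSubgraphEquivalance}) converts this edge deletion into a vertex deletion: every affected edge has both endpoints in the set
\[
V_{F}=\Big((F\cup\textstyle\bigcup_{\text{affected }\gamma'}\hat{N}(\Gamma'))\cap\gamma^{\ext}\Big)\setminus\{u,v\},
\]
which has size at most $f+h^{2}(f+2)\lambda_{\nb}\le d-1$. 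Hence, deleting $V_{F}$ from the simple graph simultaneously removes all affected edges \emph{and} all failed vertices, and NI with the inflated parameter $d$ (not $\Theta(f)$) guarantees that $\hat{G}^{\sp}_{\gamma^{\ext}}\setminus V_{F}$ and $\wtilde{G}^{\sp}_{\gamma^{\ext}}\setminus V_{F}$ have the same components. A short additional argument (\Cref{claim:4.2}) then shows $\wtilde{G}^{\sp}_{\gamma^{\ext}}\setminus V_{F}$ sits inside the simple graph of $(\wtilde{G}_{\gamma^{\ext}}\setminus\hat{E}_{\Gamma,\aff})\setminus F$, which is what is needed. The choice $d=\Theta(h^{2}f\lambda_{\nb})$ rather than $\Theta(f)$ is therefore essential; your proposal with $k=\Theta(f)$ cannot absorb the affected-edge deletions.
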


Working the sparsified query graph $\wtilde{G}^{\qry}_{\Gamma}$ is crucial for bounding the size of the vertex label to be $\poly(f \log n)$ in \Cref{sec:implementing vertex label}. This technique was also used in \cite{ParterPP24}. 
However, sparsified query graphs are not crucial for understanding the overall strategy of the algorithm in \Cref{sec:strategy vertex}. Hence, during the first read, we suggest readers assume $\wtilde{G}^{\qry}_{\Gamma} = \hat{G}^{\qry}_{\Gamma}$, skip this section, and continue until when sparsified query graphs are needed in \Cref{sec:implementing vertex label}.

Below, we define sparsified shortcut graphs and sparsified query graphs and then prove \Cref{coro:MainCorrectness}.

\paragraph{Sparsified Shortcut Graphs.} Here, we define the sparsified shortcut graph $\wtilde{G}_{\Gamma}$ of the shortcut graph $\hat{G}_{\Gamma}$ with respect to each component $\Gamma\in{\cal C}$. Roughly speaking but not precisely, we will sparsify the subgraph $\hat{G}_{\Gamma}[\gamma^{\ext}]$,
abbreviated as $\hat{G}_{\gamma^{\ext}}$, 
into a graph $\wtilde{G}_{\gamma^{\ext}}$ with arboricity $\wtilde{O}(f^{2})$ while keeping the (pairwise) connectivity unchanged under $\wtilde{O}(f^{2})$ vertex failures. 
We will see later why it should tolerate $\wtilde{O}(f^{2})$ 
failures rather than $f$.  
Let us recall the guarantees of the 
Nagamochi-Ibaraki~\cite{NagamochiI92} sparsifiers.

\begin{lemma}[Nagamochi and Ibaraki~\cite{NagamochiI92}]
\label{coro:FaultTolerantSparsify}
Given a \emph{simple} undirected graph $R$ and a parameter $d$, there is a deterministic algorithm that computes a subgraph $\wtilde{R}$ of $R$ with $V(\wtilde{R}) = V(R)$ satisfying the following.
\begin{enumerate}
\item $\wtilde{R}$ has arboricity $d$.
\item Given arbitrary vertex failures $F\subseteq V(R)$ s.t. $|F| < d$, each pair of vertices $u,v\in V(R)\setminus F$ 
are connected in $R\setminus F$ 
if and only if they are connected in $\wtilde{R}\setminus F$.
\end{enumerate}
\end{lemma}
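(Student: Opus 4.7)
The plan is to invoke the classical Nagamochi-Ibaraki edge-forest decomposition. Order the edges of $R$ via a scan-first traversal from a fixed root vertex (so that each forest $F_i$ is constructed in a breadth-first-like manner), and greedily assign each edge $e$ to the smallest-indexed forest $F_i$ in which adding $e$ does not create a cycle. Set $\wtilde{R} := F_1 \cup F_2 \cup \cdots \cup F_d$; this is a subgraph of $R$ on the same vertex set and can be computed in linear time.

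Property~(1) is immediate from the construction: $\wtilde{R}$ is a union of $d$ forests, hence has arboricity at most $d$.

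For property~(2), the direction ``connected in $\wtilde{R}\setminus F$ implies connected in $R\setminus F$'' is trivial since $\wtilde{R}\subseteq R$. For the converse, I will use a path-rerouting argument. Take any $u$-$v$ path $P$ in $R\setminus F$. For each edge $e = \{x,y\} \in E(P)$ with $e \notin \wtilde{R}$, the fact that $e$ was rejected by every forest $F_1,\ldots,F_d$ means that, at the time $e$ was processed, $x$ and $y$ were already in the same component of each $F_i$. Hence there exist $x$-$y$ paths $\pi_1,\ldots,\pi_d$ lying in $F_1,\ldots,F_d$ respectively. The classical Nagamochi-Ibaraki analysis, exploiting the scan-first ordering, guarantees that these paths can be chosen to be pairwise \emph{internally vertex-disjoint}. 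Since $|F|<d$, at least one $\pi_i$ avoids every vertex of $F$ (its endpoints $x,y$ lie on $P\subseteq V(R)\setminus F$, so they are automatically not in $F$). Substituting $e$ by $\pi_i$ in $P$ yields a walk whose number of non-sparsifier edges is strictly smaller. Iterating until no non-sparsifier edges remain gives a $u$-$v$ walk inside $\wtilde{R}\setminus F$, from which a path can be extracted.

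The main technical obstacle is the internal vertex-disjointness of $\pi_1,\ldots,\pi_d$: this is not true under arbitrary edge orderings but is precisely the central structural lemma established by Nagamochi and Ibaraki for the scan-first ordering~\cite{NagamochiI92}. I will invoke this lemma as a black box rather than reprove it, as the rest of the argument is a routine consequence.
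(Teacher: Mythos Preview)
The paper does not prove this lemma at all; it is stated with attribution to Nagamochi and Ibaraki~\cite{NagamochiI92} and used as a black box. Your sketch correctly reconstructs the intended construction (scan-first forest decomposition, take the union of the first $d$ forests) and the path-rerouting argument for property~(2) is sound.

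One minor imprecision: the Nagamochi--Ibaraki structural lemma does not literally say that the $d$ paths $\pi_1,\ldots,\pi_d$ can be chosen with $\pi_i\subseteq F_i$ \emph{and} pairwise internally vertex-disjoint. What they establish is that for any edge $e=\{x,y\}\in F_j$ with $j>d$, the endpoints $x,y$ have $d$ internally vertex-disjoint paths in $F_1\cup\cdots\cup F_d$; these paths need not each lie in a single forest. This weaker statement is exactly what your rerouting step requires (you only use that some $\pi_i$ avoids $F$ and lies in $\wtilde{R}$), so the argument goes through unchanged once you phrase the black box this way.
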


A graph with arboriticy $d$ is one whose edge-set can be partitioned into $d$ forests and as a consequence, the edge-set can be \emph{oriented} so that the maximum out-degree is $d$.
Parter et al.~\cite{ParterPP24} also computed low-arboricity sparsifiers deterministically, but used hit-miss hash families~\cite{KarthikP21}
rather than Nagamochi-Ibaraki sparsification.

Finally, we are ready to formally describe how to sparsify $\hat{G}_{\Gamma}$ into $\wtilde{G}_{\Gamma}$.
\begin{enumerate}
\item Let $\hat{G}_{\gamma^{\ext}} = \hat{G}_{\Gamma}[\gamma^{\ext}]$ be the subgraph of $\hat{G}_{\Gamma}$ induced by the extended core $\gamma^{\ext}$, and let $\hat{G}_{\gamma^{\ext}}^{\sp}$ be the \emph{simple} graph corresponding to $\hat{G}_{\gamma^{\ext}}$, i.e., a bundle of edges with the same endpoints but different types collapse to one edge.
\item We obtain the sparsified graph $\wtilde{G}^{\sp}_{\gamma^{\ext}}$ of $\hat{G}^{\sp}_{\gamma^{\ext}}$ by applying \Cref{coro:FaultTolerantSparsify} with parameter 
$d = \lambda_{\arbo} \bydef f + h^{2}f\lambda_{\nb} + 1 = O(f^{2}\log^{3}n)$.
\item Obtain $\wtilde{G}_{\gamma^{\ext}}$ by including, 
for each $\{u,v\} \in E(\wtilde{G}^{\sp}_{\gamma^{\ext}})$, 
\emph{all} edges with endpoints 
$\{u,v\}$ in $\hat{G}_{\gamma^{\ext}}$ (such edges may have various types).
\item Lastly, obtain $\wtilde{G}_{\Gamma}$ by 
substituting $\wtilde{G}_{\gamma^{\ext}}$ for
$\hat{G}_{\gamma^{\ext}}$ in $\hat{G}_{\Gamma}$. 
Namely, $E(\wtilde{G}_{\Gamma}) = (E(\hat{G}_{\Gamma})\setminus E(\hat{G}_{\gamma^{\ext}}))\cup E(\wtilde{G}_{\gamma^{\ext}})$. 
By this definition, 
$\wtilde{G}_{\Gamma}[\gamma^{\ext}] = \wtilde{G}_{\gamma^{\ext}}$.
\end{enumerate}

\paragraph{Sparsified Query Graphs.}
Recall that we define the \emph{query graph} $\hat{G}^{\qry}_{\Gamma}$ of $\Gamma$ to be $\hat{G}^{\qry}_{\Gamma} = \hat{G}_{\Gamma}[Q^{\ext}_{\Gamma}]\setminus \hat{E}_{\Gamma,\aff}$. 
Naturally, we define a \emph{sparsified query graph} $\wtilde{G}^{\qry}_{\Gamma} = \wtilde{G}_{\Gamma}[Q^{\ext}_{\Gamma}]\setminus \hat{E}_{\Gamma,\aff}$.
That is, the sparsified query graph $\hat{G}^{\qry}_{\Gamma}$ is the subgraph of $\wtilde{G}_{\Gamma}$ induced by the extended query set $Q^{\ext}_{\Gamma}$, excluding all affected edges. 

Now, we proceed to prove \Cref{coro:MainCorrectness}.

\begin{lemma}
Let $\Gamma\in{\cal C}$ be an affected component. Two vertices $x,y\in \gamma^{\ext}\setminus F$ are connected in $\hat{G}^{\qry}_{\Gamma}[\gamma^{\ext}]\setminus F$ if and only if they are connected in $\wtilde{G}^{\qry}_{\Gamma}[\gamma^{\ext}]\setminus F$.
\label{lemma:ExtendedCoreSubgraphEquivalance}
\end{lemma}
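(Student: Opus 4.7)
The ($\Leftarrow$) direction is immediate since $\wtilde{G}_{\gamma^{\ext}}$ is a subgraph of $\hat{G}_{\gamma^{\ext}}$ by construction, so $\wtilde{G}^{\qry}_{\Gamma}[\gamma^{\ext}]\setminus F$ is a subgraph of $\hat{G}^{\qry}_{\Gamma}[\gamma^{\ext}]\setminus F$. For the ($\Rightarrow$) direction, the plan is to apply the Nagamochi--Ibaraki sparsifier property (\Cref{coro:FaultTolerantSparsify}) one edge at a time along a connecting path, using an augmented vertex-failure set that hides all affected shortcut edges.

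I would first define $V_{\aff}\subseteq\gamma^{\ext}$ to be the set of vertices incident to an edge of $\hat{E}_{\Gamma,\aff}$ in $\hat{G}_{\gamma^{\ext}}$. Because every affected shortcut edge of type $\gamma'$ has its endpoints in $\hat{N}(\Gamma')$ which has size at most $h\lambda_{\nb}$, and by \Cref{ob:AffectedNum} the number of affected cores $\gamma'\prec\gamma$ is at most $h(f+2)$, we get $|V_{\aff}|\leq h^{2}(f+2)\lambda_{\nb}=O(h^{2}f\lambda_{\nb})$, matching the slack in $\lambda_{\arbo}=f+h^{2}f\lambda_{\nb}+1$.

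Now suppose $x,y\in\gamma^{\ext}\setminus F$ are connected via a path $P=(x=u_{0},u_{1},\ldots,u_{k}=y)$ in $\hat{G}^{\qry}_{\Gamma}[\gamma^{\ext}]\setminus F$; every edge of $P$ is unaffected. For each consecutive pair $(u_{i},u_{i+1})$, set $F^{*}_{i}=F\cup(V_{\aff}\setminus\{u_{i},u_{i+1}\})$, so $|F^{*}_{i}|<\lambda_{\arbo}$. The pair $\{u_{i},u_{i+1}\}$ is an edge in $\hat{G}^{\sp}_{\gamma^{\ext}}$ with both endpoints outside $F^{*}_{i}$, so by the sparsifier property there is a path $Q_{i}$ from $u_{i}$ to $u_{i+1}$ in $\wtilde{G}^{\sp}_{\gamma^{\ext}}\setminus F^{*}_{i}$. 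The key step is to lift $Q_{i}$ to a path in $\wtilde{G}^{\qry}_{\Gamma}[\gamma^{\ext}]\setminus F$. Every edge $\{a,b\}$ of $Q_{i}$ has both endpoints in $\gamma^{\ext}\setminus F^{*}_{i}$; hence each endpoint either lies outside $V_{\aff}$ or equals $u_{i}$ or $u_{i+1}$. If at least one of $a,b$ lies outside $V_{\aff}$, no affected edge can be incident to it, so every typed edge between $a$ and $b$ in $\wtilde{G}_{\gamma^{\ext}}$ (by construction, \emph{all} parallel types between a kept pair are retained) is unaffected. Otherwise $\{a,b\}=\{u_{i},u_{i+1}\}$, and the edge of $P$ itself witnesses an unaffected edge between this kept pair in $\wtilde{G}_{\gamma^{\ext}}$. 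Concatenating $Q_{0},\ldots,Q_{k-1}$ yields a walk from $x$ to $y$ in $\wtilde{G}^{\qry}_{\Gamma}[\gamma^{\ext}]\setminus F$, completing the argument.

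The main obstacle, I expect, is the discipline of choosing exactly the right augmented failure set at each step: $F^{*}_{i}$ must hide all affected edges (so that the lifted edges are unaffected) while still keeping $u_{i},u_{i+1}$ live and not overshooting $\lambda_{\arbo}$. The definition $V_{\aff}\setminus\{u_{i},u_{i+1}\}$ is the one choice that makes the case analysis on edges of $Q_{i}$ exhaustive---any intermediate vertex of $Q_{i}$ is forced outside $V_{\aff}$, which is precisely what guarantees the corresponding multi-edge includes an unaffected type. Verifying the counting $|F^{*}_{i}|<\lambda_{\arbo}$ (rather than just $O(\lambda_{\arbo})$) from the bound on $|V_{\aff}|$ is the only arithmetic that needs attention.
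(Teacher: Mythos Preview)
Your proof is correct and follows essentially the same strategy as the paper: for each unaffected edge $\{u_i,u_{i+1}\}$ of $\hat{G}_{\gamma^{\ext}}$ with both endpoints outside $F$, delete the augmented set $F\cup(V_{\aff}\setminus\{u_i,u_{i+1}\})$ (the paper calls this $V_F$), apply the Nagamochi--Ibaraki guarantee to get a path in $\wtilde{G}^{\sp}_{\gamma^{\ext}}$, and then argue each edge on that path lifts to an unaffected edge in $\wtilde{G}_{\gamma^{\ext}}$. The only cosmetic difference is in how the ``both endpoints exempted'' edge is handled: the paper first disposes of the case $\{u_i,u_{i+1}\}\in E(\wtilde{G}^{\sp}_{\gamma^{\ext}})$ trivially and then, assuming non-adjacency, rules out $\{a,b\}=\{u_i,u_{i+1}\}$ on the sparsifier path entirely; you instead keep that case and resolve it using the ``all parallel types retained'' rule together with the unaffected witness from $P$. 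Both arguments are valid and equally short.
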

\begin{proof}
By definition, $\hat{G}^{\qry}_{\Gamma}[\gamma^{\ext}] = \hat{G}_{\Gamma}[\gamma^{\ext}]\setminus \hat{E}_{\Gamma,\aff}$ and $\wtilde{G}^{\qry}_{\Gamma}[\gamma^{\ext}] = \wtilde{G}_{\Gamma}[\gamma^{\ext}]\setminus \hat{E}_{\Gamma,\aff}$, because $\gamma^{\ext}\subseteq Q^{\ext}_{\Gamma}$. Recall that when sparsifying $\hat{G}_{\Gamma}$ into $\wtilde{G}_{\Gamma}$, we construct $\wtilde{G}_{\gamma^{\ext}} = \wtilde{G}_{\Gamma}[\gamma^{\ext}]$ from $\hat{G}_{\gamma^{\ext}} = \hat{G}_{\Gamma}[\gamma^{\ext}]$ by the following way.
\begin{itemize}
\item First let $\hat{G}_{\gamma^{\ext}}^{\sp}$ be the corresponding simple graph of $\hat{G}_{\gamma^{\ext}}$.
\item $\wtilde{G}_{\gamma^{\ext}}^{\sp}$ is the sparsified simple graph from applying \Cref{coro:FaultTolerantSparsify} on $\hat{G}_{\gamma^{\ext}}^{\sp}$ with parameter $d = f + h^{2}f\lambda_{\nb} + 1$.
\item Lastly let $\wtilde{G}_{\gamma^{\ext}} = \hat{G}_{\gamma^{\ext}}\cap E(\wtilde{G}^{\sp}_{\gamma^{\ext}})$.
\end{itemize}

Trivially, $x$ and $y$ are connected in $\hat{G}^{\qry}_{\Gamma}[\gamma^{\ext}]$ if they are connected in $\wtilde{G}^{\qry}_{\Gamma}[\gamma^{\ext}]$, because $\wtilde{G}^{\qry}_{\Gamma}[\gamma^{\ext}]$ is a subgraph of $\hat{G}^{\qry}_{\Gamma}[\gamma^{\ext}]$ by definition.

From now we focus on the other direction. It suffices to show that, for each edge $\hat{e}\in E(\hat{G}^{\qry}_{\Gamma}[\gamma^{\ext}]) = E(\hat{G}_{\gamma^{\ext}})\setminus \hat{E}_{\Gamma,\aff}$ with endpoints $u,v\notin F$, we have $u$ and $v$ are connected in $\wtilde{G}^{\qry}_{\Gamma}[\gamma^{\ext}]\setminus F = (\wtilde{G}_{\gamma^{\ext}}\setminus\hat{E}_{\Gamma,\aff})\setminus F$. First, if $\{u,v\}\in E(\wtilde{G}_{\gamma^{\ext}}^{\sp})$, then trivially $e\in E(\wtilde{G}_{\gamma^{\ext}})$ by the construction, which immediately gives $u$ and $v$ are connected in $(\wtilde{G}_{\gamma^{\ext}}\setminus \hat{E}_{\Gamma,\aff})\setminus F$. 

Hence, we assume $u$ and $v$ are not adjacent in $\wtilde{G}^{\sp}_{\gamma^{\ext}}$ from now. Recall the construction of shortcut graphs, for each $\gamma'\prec \gamma$, the shortcut edges with type $\gamma'$ must have their endpoints in $\hat{N}(\Gamma')$, and we know that $|\hat{N}(\Gamma')|\leq h\cdot\lambda_{\nb}$. Let 
\[
V_{F} = ((F\cup\bigcup_{\text{affected }\gamma'} \hat{N}(\Gamma'))\cap \gamma^{\ext})\setminus \{u,v\}.
\]
That is, $V_{F}$ collects all $\gamma^{\ext}$-vertices which are failed or incident to some affected edges, excluding $u$ and $v$. We have that $|V_{F}|\leq f + h^{2}(f+2)\lambda_{\nb}$, because the number of affected $\gamma'$ is at most $h(f+2)$ by \Cref{ob:AffectedNum}.

Let $\wtilde{G}_{\gamma^{\ext},\valid}^{\sp}$ be the corresponding simple graph of $(\wtilde{G}_{\gamma^{\ext}}\setminus\hat{E}_{\Gamma,\aff})\setminus F$. Because $\hat{e}\in E(\hat{G}_{\gamma^{\ext}})$, we have $\{u,v\}\in E(\hat{G}^{\sp}_{\gamma^{\ext}})$, and furthermore $\{u,v\}\in E(\hat{G}^{\sp}_{\gamma^{\ext}}\setminus V_{F})$ because $u,v\notin V_{F}$. This means $u$ and $v$ are connected in $\wtilde{G}^{\sp}_{\gamma^{\ext}}\setminus V_{F}$ by \Cref{coro:FaultTolerantSparsify} and the fact that $|V_{F}|\leq d-1$. Finally, by \Cref{claim:4.2}, $u$ and $v$ are connected in $\wtilde{G}^{\sp}_{\gamma^{\ext},\valid}$, so they are also connected in $(\wtilde{G}_{\gamma^{\ext}}\setminus \hat{E}_{\Gamma,\aff})\setminus F$ as desired.

\begin{claim}
$\wtilde{G}^{\sp}_{\gamma^{\ext}}\setminus V_{F}$ is a subgraph of $\wtilde{G}_{\gamma^{\ext},\valid}^{\sp}$.
\label{claim:4.2}
\end{claim}
\begin{proof}
Consider an edge $\{x,y\}\in E(\wtilde{G}_{\gamma^{\ext}}^{\sp}\setminus V_{F})$, and we will show that $\{x,y\}\in E(\wtilde{G}^{\sp}_{\gamma^{\ext},\valid})$. Because we have assumed that $u$ and $v$ are not adjacent in $\wtilde{G}^{\sp}_{\gamma^{\ext}}$, we have either $x$ or $y$ is not in $\{u,v\}$, say $x\notin\{u,v\}$. From the construction of $\wtilde{G}_{\gamma^{\ext}}$, there must be an edge $\wtilde{e}\in E(\wtilde{G}_{\gamma^{\ext}})$ connecting $x,y$. 

We now show that $\wtilde{e}$ is an unaffected edge. Assume for contradiction that $\wtilde{e}$ is affected. Let $\gamma'$ be the type of $\wtilde{e}$. Then $\gamma'$ is affected, and $x\in \hat{N}_{\Gamma'}$. Because we also have $x\notin\{u,v\}$ and $V(\wtilde{G}^{\sp}_{\gamma^{\ext}}) = \gamma^{\ext}$, we know $x\in (\hat{N}_{\Gamma'}\cap \gamma^{\ext})\setminus\{u,v\}\subseteq V_{F}$, a contradiction.

Now because $\wtilde{e}\in E(\wtilde{G}_{\gamma^{\ext}})$ is an unaffected edge connecting $x,y\notin F$ (because $F\subseteq V_{F}$), $\wtilde{e}$ shows up in $(\wtilde{G}_{\gamma^{\ext}}\setminus \hat{E}_{\Gamma,\aff})\setminus F$, and $\{x,y\}\in E(\wtilde{G}^{\sp}_{\gamma^{\ext},\valid})$.
\end{proof}
\end{proof}

\begin{corollary}
Let $\Gamma\in{\cal C}$ be an affected component. Two vertices $x,y\in Q^{\ext}_{\Gamma}\setminus F$ are connected in $\hat{G}^{\qry}_{\Gamma}\setminus F$ if and only if they are connected in $\wtilde{G}^{\qry}_{\Gamma}\setminus F$.
\label{coro:ConnEqSparsified}
\end{corollary}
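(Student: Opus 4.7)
The plan is to reduce this corollary directly to the already-established \Cref{lemma:ExtendedCoreSubgraphEquivalance}, which gives the equivalence restricted to the extended core $\gamma^{\ext}$. The only difference between $\hat{G}_{\Gamma}$ and $\wtilde{G}_{\Gamma}$ lies in the edge set of the induced subgraph on $\gamma^{\ext}$, since by construction $E(\wtilde{G}_{\Gamma}) = (E(\hat{G}_{\Gamma})\setminus E(\hat{G}_{\gamma^{\ext}}))\cup E(\wtilde{G}_{\gamma^{\ext}})$. Restricting to $Q^{\ext}_{\Gamma}$ and removing $\hat{E}_{\Gamma,\aff}$ preserves this property, so $\hat{G}^{\qry}_{\Gamma}$ and $\wtilde{G}^{\qry}_{\Gamma}$ differ only on edges with both endpoints in $\gamma^{\ext}$.

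The ``if'' direction is immediate because $\wtilde{G}^{\qry}_{\Gamma}$ is a subgraph of $\hat{G}^{\qry}_{\Gamma}$ (Nagamochi–Ibaraki only removes edges). For the ``only if'' direction, I would fix a path $P = x = v_0, v_1, \ldots, v_k = y$ in $\hat{G}^{\qry}_{\Gamma}\setminus F$ and classify each edge $\{v_{i-1},v_i\}$ as \emph{internal} if both endpoints lie in $\gamma^{\ext}$ (i.e., the edge lies in $E(\hat{G}_{\gamma^{\ext}})$) or \emph{external} otherwise.

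For external edges, they belong to $E(\hat{G}_{\Gamma})\setminus E(\hat{G}_{\gamma^{\ext}})$ and are therefore unchanged under the sparsification. Since they survive in $\hat{G}^{\qry}_{\Gamma}\setminus F$, they also survive in $\wtilde{G}^{\qry}_{\Gamma}\setminus F$ (their endpoints are still in $Q^{\ext}_{\Gamma}\setminus F$ and they remain unaffected). For internal edges, I would group them into maximal subpaths $P_1, P_2, \ldots$ of $P$ consisting entirely of internal edges. Each such $P_j$ has its endpoints $u_j, w_j$ in $\gamma^{\ext}\setminus F$ (because both endpoints of every internal edge lie in $\gamma^{\ext}$, and $P$ avoids $F$), and witnesses connectivity of $u_j$ and $w_j$ in $\hat{G}^{\qry}_{\Gamma}[\gamma^{\ext}]\setminus F$. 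Applying \Cref{lemma:ExtendedCoreSubgraphEquivalance} to each such pair yields a path between $u_j$ and $w_j$ in $\wtilde{G}^{\qry}_{\Gamma}[\gamma^{\ext}]\setminus F \subseteq \wtilde{G}^{\qry}_{\Gamma}\setminus F$.

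Concatenating the replaced internal segments with the preserved external edges gives a walk from $x$ to $y$ in $\wtilde{G}^{\qry}_{\Gamma}\setminus F$, completing the proof. I do not anticipate a substantial obstacle here; the only point requiring care is verifying that the endpoints of each maximal internal segment indeed lie in $\gamma^{\ext}\setminus F$ so that \Cref{lemma:ExtendedCoreSubgraphEquivalance} can be applied, and that external edges are literally present in $\wtilde{G}^{\qry}_{\Gamma}$ with the same $Q^{\ext}_{\Gamma}$-membership and non-affected type as in $\hat{G}^{\qry}_{\Gamma}$, both of which follow directly from the definitions.
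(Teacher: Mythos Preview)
Your proposal is correct and takes essentially the same approach as the paper: both observe that $\hat{G}^{\qry}_{\Gamma}$ and $\wtilde{G}^{\qry}_{\Gamma}$ agree on all edges not entirely within $\gamma^{\ext}$, and then invoke \Cref{lemma:ExtendedCoreSubgraphEquivalance} for the part inside $\gamma^{\ext}$. The paper states this in two sentences and says the corollary follows ``immediately,'' whereas you spell out the path-splicing argument explicitly; the content is the same.
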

\begin{proof}
Note that the edge set $E(\hat{G}_{\Gamma})\setminus E(\hat{G}_{\Gamma}[\gamma^{\ext}])$ is exactly the same as $E(\wtilde{G}_{\Gamma})\setminus E(\wtilde{G}_{\Gamma}[\gamma^{\ext}])$. This means $E(\hat{G}^{\qry}_{\Gamma})\setminus E(\hat{G}^{\qry}_{\Gamma}[\gamma^{\ext}])$ is exactly the same as $E(\wtilde{G}^{\qry}_{\Gamma})\setminus E(\wtilde{G}^{\qry}_{\Gamma}[\gamma^{\ext}])$, because $\hat{G}^{\qry}_{\Gamma},\wtilde{G}^{\qry}_{\Gamma}$ are just constructed from $\hat{G}_{\Gamma},\wtilde{G}_{\Gamma}$ by removing affected edges and taking the restriction on vertices $Q^{\ext}_{\Gamma}$. Combining \Cref{lemma:ExtendedCoreSubgraphEquivalance}, we get this corollary immediately.
\end{proof}

Combining \Cref{lemma:MainCorrectness} and \Cref{coro:ConnEqSparsified}, we get \Cref{coro:MainCorrectness} immediately.

\subsection{The Strategy for Handling Queries} 
\label{sec:strategy vertex}
In this section, we will describe the query algorithm, but we will assume some interfaces
along the way. The labeling scheme for these interfaces will be deferred to the next subsection.

As we discussed above, the query algorithm will solve the connectivity of $Q_{\Gamma}\setminus F$ on $G[\Gamma]\setminus F$ for all affected components $\Gamma$ in a bottom-up order. In fact, here we already need an interface $\ListAffectedComps()$ which lists all the affected components in a bottom-up order.

At a particular step $\Gamma$, naturally the output is a partition ${\cal P}_{\Gamma}$ of $Q_{\Gamma}\setminus F$ which capture the connectivity (i.e. two vertices $x,y\in Q_{\Gamma}\setminus F$ are in the same group of ${\cal P}_{\Gamma}$ if and only if they are connected in $G[\Gamma]\setminus F$). However, in our implementation, we will represent ${\cal P}_{\Gamma}$ implicitly by a partition ${\cal K}_{\Gamma}$ of \emph{subtrees} on the Steiner trees. %

\begin{definition}[Subtrees w.r.t. $F$]
\label{def:Subtrees}
For each affected core $\gamma$, we break the Steiner tree $T_{\gamma}$ into \emph{subtrees} by removing failed vertices $F$, and let ${\cal T}_{\gamma}$ be the set of these subtrees. Similarly, for the extended Steiner tree $T_{\gamma^{\ext}}$, we define the subtrees ${\cal T}_{\gamma^{\ext}}$ by removing \emph{all nodes} corresponding to failed vertices. For each subtree $\tau$, we let $V^{\tmn}(\tau)$ denote the terminal nodes in $\tau$, and without ambiguity, $V^{\tmn}(\tau)$ also refers to the vertices in $G$ corresponding to the terminal nodes in $V^{\tmn}(\tau)$.
\end{definition}

For each affected component $\Gamma$, we define
\[
{\cal T}_{\Gamma} = \bigcup_{\affected\ \gamma'\preceq \gamma}{\cal T}_{\gamma'}
\]
Note that $\{V^{\tmn}(\tau)\mid \tau\in{\cal T}_{\Gamma}\}$ partitions $Q_{\Gamma}\setminus F$, because $\{\gamma'\setminus F\mid \affected\ \gamma'\preceq \gamma\}$ partitions $Q_{\Gamma}\setminus F$ (by \Cref{ob:VertexEH} and the definition of $Q_{\Gamma}$), and for each affected $\gamma'\preceq \gamma$, $\{V^{\tmn}(\tau)\mid \tau\in{\cal T}_{\gamma'}\}$ partitions $\gamma'\setminus F$.
Moreover, each subtree $\tau\in{\cal T}_{\Gamma}$ certifies that vertices in $V^{\tmn}(\tau)$ are connected in $G[\Gamma]\setminus F$. Therefore, we can indeed represent the partition ${\cal P}_{\Gamma}$ of vertices $Q_{\Gamma}\setminus F$ using a partition ${\cal K}_{\Gamma}$ of subtrees ${\cal T}_{\Gamma}$. More precisely, the output of the step at $\Gamma$ is a partition ${\cal K}_{\Gamma}$ of ${\cal T}_{\Gamma}$ s.t. its mapping
${\cal P}_{\Gamma}=\{\bigcup_{\tau\in K}V^{\tmn}(\tau)\mid K\in {\cal K}_{\Gamma}\}$ on $Q_{\Gamma}\setminus F$ capture the connectivity of $Q_{\Gamma}\setminus F$ in $G[\Gamma]\setminus F$.

\paragraph{Solving New Connectivity at $\Gamma$.} As shown in \Cref{coro:MainCorrectness}, the new connectivity at step $\Gamma$ is the connectivity of $Q^{\ext}_{\Gamma}\setminus F$ in $\wtilde{G}^{\qry}_{\Gamma}\setminus F$ and the connectivity of (terminal nodes) $\gamma^{\ext}\setminus F$ in $T_{\gamma^{\ext}}\setminus F$. Again, we want a partition ${\cal P}^{\ext}_{\Gamma}$ of $Q^{\ext}_{\Gamma}\setminus F$ that captures the new connectivity, and in the implementation we will represent ${\cal P}^{\ext}_{\Gamma}$ implicitly by a partition ${\cal K}^{\ext}_{\Gamma}$ of subtrees of 
extended Steiner subtrees.

Formally, we define
\[
{\cal T}^{\ext}_{\Gamma} = {\cal T}_{\gamma^{\ext}} \cup \bigcup_{\substack{\affected\ \gamma'\prec\gamma\\\text{s.t. $\gamma'$ is not in $\gamma^{\ext}$}}}{\cal T}_{\gamma'}.
\]
We also have $\left\{V^{\tmn}(\tau)\mid \tau\in{\cal T}^{\ext}_{\Gamma}\right\}$ 
partitions $Q^{\ext}_{\Gamma}\setminus F$. Our goal is to compute a partition ${\cal K}^{\ext}_{\Gamma}$ of ${\cal T}^{\ext}_{\Gamma}$ with its mapping ${\cal P}^{\ext}_{\Gamma} = \{\bigcup_{\tau\in K} V^{\tmn}(\tau)\mid K\in{\cal K}^{\ext}_{\Gamma}\}$ satisfying the following.
\begin{enumerate}
\item\label{req1} Each $u,v\in Q^{\ext}_{\Gamma}\setminus F$ in the same part of ${\cal P}^{\ext}_{\Gamma}$ 
are connected in $G[\Gamma]\setminus F$.
\item\label{req2} Each $u,v\in Q^{\ext}_{\Gamma}\setminus F$ connected in $\wtilde{G}^{\qry}_{\Gamma}\setminus F$ are in the same part of ${\cal P}^{\ext}_{\Gamma}$.
\item\label{req3} Each $u,v\in \gamma^{\ext}\setminus F\subseteq Q^{\ext}_{\Gamma}\setminus F$ whose terminal nodes are connected in $T_{\gamma^{\ext}}\setminus F$
are in the same part of ${\cal P}^{\ext}_{\Gamma}$.
\end{enumerate}
Note that requirement \ref{req3} is automatically satisfied because each pair of terminal nodes $u,v\in \gamma^{\ext}\setminus F$ connected in $T_{\gamma^{\ext}}\setminus F$ must belong to $V^{\tmn}(\tau)$ for the same subtree $\tau\in {\cal T}_{\gamma^{\ext}}$.

To compute a partition ${\cal K}^{\ext}_{\Gamma}$ satisfying 
requirements \ref{req1} and \ref{req2}, we will exploit the following key observation, \Cref{lemma:GiantComponent}, from the vertex expander hierarchy. Roughly speaking, those subtrees with too many $\gamma$-vertices inside and surrounding must belong to the same connected component of $G_{\Gamma}\setminus F$. 

Before stating \Cref{lemma:GiantComponent}, we introduce some notations. For a graph $G'$ and a vertex set $A\subseteq V(G')$, we let $N^{G'}(A)$ denote the neighbors of $A$ in $G'$. For each subtree $\tau\in {\cal T}^{\ext}_{\Gamma}$, let $V^{\tmn}_{\gamma}(\tau) = V^{\tmn}(\tau)\cap \gamma$ be the $\gamma$-vertices inside $V^{\tmn}(\tau)$,\footnote{In fact, $V^{\tmn}_{\gamma}(\tau)$ is not empty only when $\tau\in {\cal T}_{\gamma^{\ext}}$.} and let $N_{\gamma}(\tau) = N^{\wtilde{G}^{\qry}_{\Gamma}}(V^{\tmn}(\tau))\cap \gamma$ be the intersection of $\gamma$ and the neighbors of $V^{\tmn}(\tau)$ in $\wtilde{G}^{\qry}_{\Gamma}$.

\begin{lemma}
\label{lemma:GiantComponent}
Let $\tau_{1},\tau_{2}\in{\cal T}^{\ext}_{\Gamma}$ be two subtrees such that $|N_{\gamma}(\tau_{1})| + |V^{\tmn}_{\gamma}(\tau_{1})| > f/\phi$ and 
$|N_{\gamma}(\tau_{2})| + |V^{\tmn}_{\gamma}(\tau_{2})|>f/\phi$. 
Then $V^{\tmn}(\tau_{1})$ and $V^{\tmn}(\tau_{2})$ are contained in the same connected component of $G[\Gamma]\setminus F$.
\end{lemma}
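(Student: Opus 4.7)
The plan is to argue by contradiction, in direct analogy with \Cref{lem:expansion-implies-connected} for the edge-expander setting, using the $\phi$-vertex-expansion of $\gamma$ in $\Gamma$ (\Cref{ob:VertexEH}\ref{VEH2}). Suppose $V^{\tmn}(\tau_{1})$ and $V^{\tmn}(\tau_{2})$ lie in different connected components of $G[\Gamma]\setminus F$. Write $F' = F\cap \Gamma$, let $L$ be the union of all components of $G[\Gamma]\setminus F$ that meet $V^{\tmn}(\tau_{1})$, and let $R = \Gamma\setminus(L\cup F')$. Then $(L,F',R)$ is a vertex cut of $\Gamma$ with $V^{\tmn}(\tau_{1})\subseteq L$ and $V^{\tmn}(\tau_{2})\subseteq R$ (both nonempty since each $\tau_{i}$ has at least one terminal node). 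Vertex-expansion then gives
\[
f \;\ge\; |F'| \;\ge\; \phi\,\min\{|\gamma\cap(L\cup F')|,\;|\gamma\cap(R\cup F')|\},
\]
so one of the two sides, say $\gamma\cap(L\cup F')$, has size at most $f/\phi$; the symmetric case uses $\tau_{2}$ instead of $\tau_{1}$.

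Next, the key step is to argue $V^{\tmn}_{\gamma}(\tau_{1})\cup N_{\gamma}(\tau_{1})\subseteq \gamma\cap(L\cup F')$. The inclusion $V^{\tmn}_{\gamma}(\tau_{1})\subseteq L\cap\gamma$ is immediate from $V^{\tmn}(\tau_{1})\subseteq L$. For $u\in N_{\gamma}(\tau_{1})$, if $u\in F'$ there is nothing to prove; otherwise $u\notin F$ is a $\wtilde{G}^{\qry}_{\Gamma}$-neighbor of some $v\in V^{\tmn}(\tau_{1})\subseteq L\setminus F$, and I need to verify that this edge witnesses actual $u$-to-$v$ connectivity in $G[\Gamma]\setminus F$. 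If the edge is original, it sits in $G[\Gamma]$ directly; otherwise it is a shortcut edge of some unaffected type $\gamma'\prec\gamma$, in which case $u,v\in\hat{N}(\Gamma')\subseteq N(\Gamma')$ and $\Gamma'$ is disjoint from $F$ (since $\gamma'$ is unaffected), so the path $u\,\text{--}\,\Gamma'\,\text{--}\,v$ survives in $G[\Gamma]\setminus F$. Hence $u$ lies in the same component as $v$, i.e.\ $u\in L$.

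Finally, since $V^{\tmn}_{\gamma}(\tau_{1})$ and $N_{\gamma}(\tau_{1})$ are disjoint by definition (neighbors exclude the set itself), combining the above gives
\[
|V^{\tmn}_{\gamma}(\tau_{1})|+|N_{\gamma}(\tau_{1})| \;=\; |V^{\tmn}_{\gamma}(\tau_{1})\cup N_{\gamma}(\tau_{1})| \;\le\; |\gamma\cap(L\cup F')| \;\le\; f/\phi,
\]
contradicting the hypothesis that this sum exceeds $f/\phi$. The main technical subtlety — really the only one — is justifying that every edge of $\wtilde{G}^{\qry}_{\Gamma}$ incident to $V^{\tmn}(\tau_{1})$ corresponds to a genuine path in $G[\Gamma]\setminus F$; this is precisely where removing affected edges in the definition of $\wtilde{G}^{\qry}_{\Gamma}$ (so that every shortcut edge has an unaffected type) pays off. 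The sparsification step replacing $\hat{G}^{\qry}_{\Gamma}$ by $\wtilde{G}^{\qry}_{\Gamma}$ only shrinks the neighbor set $N_{\gamma}(\tau_{1})$, so the argument is unaffected by sparsification.
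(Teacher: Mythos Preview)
Your proof is correct and follows essentially the same approach as the paper's: set up a vertex cut separating the two subtrees, show that $V^{\tmn}_{\gamma}(\tau_i)\cup N_{\gamma}(\tau_i)$ lies on one side (using that unaffected shortcut edges witness genuine connectivity in $G[\Gamma]\setminus F$), and invoke $\phi$-vertex-expansion of $\gamma$ in $\Gamma$ to derive a contradiction with $|F|\le f$. The only cosmetic difference is that the paper shows both sides of the cut have $\gamma$-mass exceeding $f/\phi$ before applying expansion, whereas you apply expansion first and then argue by symmetry; one minor quibble is that ``each $\tau_i$ has at least one terminal node'' is not true for arbitrary subtrees but does follow from your hypothesis (else $V^{\tmn}_{\gamma}(\tau_i)$ and $N_{\gamma}(\tau_i)$ would both be empty).
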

\begin{proof}
Assume for contradiction that $\tau_{1}$ and $\tau_{2}$ belong to two different connected components $C_{1}$ and $C_{2}$ of $G[\Gamma]\setminus F$. 
It must be that 
$N^{G[\Gamma]}(C_i) \subseteq F$ for $i\in\{1,2\}$.

We clearly have 
$V^{\tmn}_{\gamma}(\tau_{1})\cup N_{\gamma}(\tau_{1})\subseteq \left(C_{1}\cup N^{G[\Gamma]}(C_{1})\right)\cap \gamma$.
$V^{\tmn}_\gamma(\tau_1)$ is contained in 
$C_1\cap \gamma$ by assumption.  A $y\in N_\gamma(\tau_1)$ is joined by an edge $\{x,y\}\in E(\wtilde{G}^{\qry}_{\Gamma})$ s.t. $x\in \tau_1$.
This edge can have type \emph{original}, 
or type $\gamma'$ for some 
unaffected $\gamma'\preceq \gamma$.  
Either way, there is a path from $x$ to $y$ whose
internal vertices are in $\Gamma'$. Note that 
$\Gamma'$ is disjoint from $F$, so
$y\in C_1 \cup N^{G[\Gamma]}(C_1)$.
We have shown that for $i\in\{1,2\}$,
\[
\left|(C_{i}\cup N^{G[\Gamma]}(C_{i}))\cap \gamma\right|\geq \left|V^{\tmn}(\tau_{i})\right| + \left|N_{\gamma}(\tau_{i})\right|>f/\phi.
\]
Because $\gamma$ is $\phi$-vertex-expanding in $G[\Gamma]$, we have
\begin{align*}
\left|N^{G[\Gamma]}(C_{1})\right| &\geq \phi\cdot \min\left\{\left|(C_{1}\cup N^{G[\Gamma]}(C_{1}))\cap \gamma\right|, \left|(\Gamma\setminus C_{1})\cap \gamma\right|\right\}>\phi\cdot (f/\phi) = f.
\end{align*}
However, $|N^{G[\Gamma]}(C_{1})|>f$ contradicts the fact that $N^{G[\Gamma]}(C_{1})\subseteq F$.
\end{proof}

We say a subtree $\tau\in{\cal T}^{\ext}_{\Gamma}$ is \emph{giant} if $|V^{\tmn}_{\gamma}(\tau)| + |N_{\gamma}(\tau)|>f/\phi$, otherwise it is \emph{non-giant}. 
We start from a trivial partition of ${\cal T}^{\ext}_{\Gamma}$ in which each subtree forms a singleton group, and then merge the groups according to the following rules.
\begin{enumerate}
\item[{\bf R1.}] Put all giant subtrees into the same group, called the \emph{giant group}. 
Whenever a connected group $\{\tau_i\}$ of subtrees
collectively has 
$|V^{\tmn}_\gamma(\{\tau_i\})|+|N_\gamma(\{\tau_i\})| > f/\phi$,
merge it with the giant group.
\item[{\bf R2.}] Let $\tau_{x}\in {\cal T}^{\ext}_{\Gamma}$ be non-giant and $\tau_{y}\in {\cal T}^{\ext}_{\Gamma}$ be such that $N_{\gamma}(\tau_{x})$ intersects $V^{\tmn}(\tau_{y})$.
Then merge the groups containing $\tau_{x}$ and $\tau_{y}$.
\item[{\bf R3.}] Let $\tau_{y}\in{\cal T}^{\ext}_{\Gamma}$ be non-giant and $\tau_{x}\in {\cal T}^{\ext}_{\Gamma}$ be giant such that $N_{\gamma}(\tau_{x})$ intersects $V^{\tmn}(\tau_{y})$. Then merge $\tau_{y}$ with the giant group.
\end{enumerate}
Before we move on to discuss the implementation of these rules, we first show that the partition ${\cal K}^{\ext}_{\Gamma}$ generated as above will satisfy requirements \ref{req1} and \ref{req2}. The requirement \ref{req1} is satisfied because the rules are using valid connectivity in $G[\Gamma]\setminus F$. In particular, ${\bf R1}$ is safe according to \Cref{lemma:GiantComponent}. For each of ${\bf R2}$ and ${\bf R3}$, $N_{\gamma}(\tau_{x})$ intersects $V^{\tmn}(\tau_{y})$ means there is an edge in $E(\wtilde{G}^{\qry}_{\Gamma})$ connecting $x\in V^{\tmn}(\tau_{x})$ and $y\in V^{\tmn}(\tau_{y})$, which means $x$ and $y$ are connected in $G[\Gamma]\setminus F$ because this edge is unaffected and $x,y\notin F$ (note that $V^{\tmn}(\tau_{x})$ and $V^{\tmn}(\tau_{y})$ are disjoint from $F$). To show that requirement \ref{req2} is satisfied, we will show that for each edge $\{x,y\}\in E(\wtilde{G}^{\qry}_{\Gamma}\setminus F)$ connecting $V^{\tmn}(\tau_{1})$ and $V^{\tmn}(\tau_{2})$, the subtrees $\tau_{1}$ and $\tau_{2}$ will be merged by one of the rules. Recall that the edges in graph $\hat{G}^{\qry}_{\Gamma}$ must be incident to $\gamma$, and $\wtilde{G}^{\qry}_{\Gamma}$ is a subgraph of $\hat{G}^{\qry}_{\Gamma}$. Hence we have either $x\in \gamma$ or $y\in\gamma$, and without loss of generality, we assume $y\in\gamma$.

\begin{itemize}
\item When both $\tau_{1}$ and $\tau_{2}$ are giant they will be merged by {\bfseries R1}.
\item When both $\tau_{1}$ and $\tau_{2}$ are non-giant 
they will be merged by {\bfseries R2}.
\item When $\tau_1$ is non-giant and $\tau_2$ is giant,
they will be merged by either {\bfseries R2} or {\bfseries R3} (depending on whether $y\in V^{\tmn}(\tau_{2})$ or $y\in V^{\tmn}(\tau_{1})$).
\end{itemize}

The following primitives will allow us to 
implement rules 
{\bfseries R1}, {\bfseries R2}, and {\bfseries R3}.
Here $\Gamma$ is an affected component and $\tau\in\mathcal{T}^{\ext}_{\Gamma}$.
\begin{description}
    \item[$\ListSubtrees^{\ext}(\Gamma)$:] 
    List all subtrees of ${\cal T}^{\ext}_{\Gamma}$.
    \item[$\ListSubtrees(\Gamma)$:]
    List all subtrees of $\mathcal{T}_\Gamma$.
    \item[$\ListTerminals(\tau, \Gamma)$:] 
    Return up to $f/\phi+1$ elements of 
    $V^{\tmn}_{\gamma}(\tau)$.
    \item[$\ListNeighbors(\tau, \Gamma)$:] 
    Return up to $f/\phi+1$ elements of 
    $N_{\gamma}(\tau)$.
    \item[$\IsTerminal(v,\tau)$:] Return \textsf{true} iff 
    $v$ is a terminal in $\tau$.
    \item[$\PickTerminal(\tau)$:] Return any element
    of $V^{\tmn}(\tau)$.
    \item[$\EnumFromGiant(\tau,\Gamma)$:] Requirement: $\tau\in\mathcal{T}_{\gamma^{\ext}}$, and $\tau$ is 
    non-giant.
    Return the \emph{number} of edges 
    in $E(\wtilde{G}_{\Gamma}^{\qry} \setminus F)$
    joining 
    $V^{\tmn}_{\gamma}(\tau)$ and
    $\bigcup_{\text{giant }\tau_{x}\in{\cal T}^{\ext}_{\Gamma}} V^{\tmn}(\tau_{x})$.
\end{description}

Clearly $\ListSubtrees^{\ext}(\Gamma)$,
$\ListTerminals(\tau,\Gamma)$,
and $\ListNeighbors(\tau,\Gamma)$ can be used to 
list all subtrees and determine which are giant.
This suffices to implement rule {\bfseries R1}.
$\ListNeighbors$ and $\IsTerminal$ suffice
to implement rule {\bfseries R2}.
We implement rule {\bfseries R3} \emph{non-constructively}.
In other words, we do not find a \emph{specific} giant
$\tau_x$ such that 
$N_\gamma(\tau_x)\cap V^{\tmn}(\tau_y) \neq \emptyset$,
but merely infer that there exists such a $\tau_x$,
if $\EnumFromGiant(\tau_y,\Gamma)>0$.
Observe that the condition of rule 
{\bfseries R3} is $\tau_y \in \mathcal{T}_\Gamma^{\ext}$.
If $\tau_y \not\in \mathcal{T}_{\gamma^{\ext}}$ 
then we cannot call $\EnumFromGiant$, but
in this case $\tau_y \in \mathcal{T}_\Gamma^{\ext}
\setminus \mathcal{T}_{\gamma^{\ext}}$ implies
$V_\gamma^{\tmn}(\tau_y)=\emptyset$, 
so {\bfseries R3} cannot be applied anyway.

\paragraph{Obtain ${\cal K}_{\Gamma}$ from ${\cal K}^{\ext}_{\Gamma}$ and all ${\cal K}_{\Gamma_{\child}}$.} Next, we discuss how to compute ${\cal K}_{\Gamma}$ for a particular affected component $\Gamma$, assuming we have already got the 
${\cal K}^{\ext}_{\Gamma}$ above (capturing the new connectivity at $\Gamma$) and the ${\cal K}_{\Gamma_{\child}}$ of all affected children $\Gamma_{\child}$ of $\Gamma$. 

We call $\ListSubtrees(\Gamma)$ and form their 
initialize partition ${\cal K}_{\Gamma}$ as:
\[
{\cal K}^{0}_{\Gamma} = \{\{\tau\}\mid \tau\in {\cal T}_{\gamma}\}\cup\bigcup_{\text{$\Gamma$'s affected children $\Gamma_{\child}$}} {\cal K}_{\Gamma_{\child}}.
\]
That is, any subtrees grouped in $\mathcal{K}_{\Gamma_{\child}}$
are initially grouped in $\mathcal{K}_\Gamma^0$.
It remains to group subtrees according to the connectivity
learned from $\mathcal{K}_\Gamma^{\ext}$.

For each group $K^{\ext}\in{\cal K}^{\ext}_{\Gamma}$, 
we merge all the groups $K^{0}\in {\cal K}^{0}_{\Gamma}$ 
such that $V^{\tmn}(K^{0})$ intersects $V^{\tmn}(K^{\ext})$. 
By \cref{ob:ExtSteinerSubtree},
$V^{\tmn}(\tau)$ is either contained in or disjoint from $V^{\tmn}(\tau^{\ext})$, for $\tau\in K^0 \in {\cal K}^0_\Gamma$ and $\tau^{\ext}\in K^{\ext}\in {\cal K}^{\ext}_\Gamma$.
Thus we can detect whether they intersect by
calling $v\gets \PickTerminal(\tau)$ and $\IsTerminal(v,\tau^{\ext})$.
After all grouping, 
the final partition is ${\cal K}_{\Gamma}$.

\paragraph{Answering Query $\ang{s,t,F}$.} 
We compute the partitions $\{{\cal K}_{\Gamma}\}$ for all affected $\Gamma$ in postorder,
culminating in the root partition ${\cal K}_{\Gamma_{\root}}$.
Then $s$ and $t$ are connected iff there exists a 
$K\in {\cal K}_{\Gamma_{\root}}$ and 
$\tau_s,\tau_t\in K$
such that 
$\IsTerminal(s,\tau_s)=\IsTerminal(t,\tau_t)=\textsf{true}$.

\subsection{The Labeling Scheme: Implementing the Strategy}
\label{sec:implementing vertex label}

Finally, we describe a labeling scheme that supports the interfaces required by the query algorithm. Note that we still fix an $S_{i}\in{\cal S}$ and assume that vertices in $S_{i}$ will never fail, and the labeling scheme we describe is only for this $S_{i}$. At the very end, we will discuss our final labeling scheme and how to find a non-failed $S_{i}$ for a query. 

We restate the interfaces in a formal way as follows.
\begin{itemize}
\item $\ListAffectedComps()$ outputs the identifiers of affected components in a bottom-up order.
\item $\ListSubtrees(\Gamma), \ListSubtrees^{\ext}(\Gamma)$ receive the identifier of an affected component $\Gamma$, and list the profiles of subtrees in ${\cal T}_{\Gamma}$ and ${\cal T}^{\ext}_{\Gamma}$ respectively.
\item $\ListTerminals(\tau,\Gamma)$ receives the identifier of an affected component $\Gamma$ and the profile of a subtree $\tau\in {\cal T}^{\ext}_{\Gamma}$, and either detects $|V^{\tmn}_{\gamma}(\tau)|>f/\phi$ or outputs the profiles of all vertices in $V^{\tmn}_{\gamma}(\tau)$.
\item $\ListNeighbors(\tau,\Gamma)$ receives the identifier of an affected component $\Gamma$ and the profile of a subtree $\tau\in {\cal T}^{\ext}_{\Gamma}$, and either detects $|V^{\tmn}_{\gamma}(\tau)| + |N_{\gamma}(\tau)|>f/\phi$ or outputs the profiles of all vertices in $N_{\gamma}(\tau)$.
\item $\IsTerminal(v,\tau)$ receives the profile of a vertex $v$ and the profile of a subtree $\tau$ (an arbitrary one from \Cref{def:Subtrees}), and outputs whether $v$ is in $V^{\tmn}(\tau)$ or not.
\item $\PickTerminal(\tau)$ receives the profile of a subtree $\tau\in{\cal T}_{\gamma}$ of some affected $\gamma$, and outputs the profile of an arbitrary vertex in $V^{\tmn}(\tau)$.
\item $\EnumFromGiant(\tau_{y},\Gamma)$ receives the profile of a non-giant subtree $\tau_{y}\in{\cal T}_{\gamma^{\ext}}$, and outputs the number of edges in $E(\wtilde{G}^{\qry}_{\Gamma}\setminus F)$ connecting $\bigcup_{\giant\ \tau_{x}\in{\cal T}^{\ext}_{\Gamma}} V^{\tmn}(\tau_{x})$ and $V^{\tmn}_{\gamma}(\tau_{y})$.
\end{itemize}

We make some remarks on the terms \emph{identifiers} and \emph{profiles}. For those objects defined in the preprocessing phase (i.e. vertices in $G$, components/cores\footnote{A component $\Gamma$ and its core $\gamma$ has the same identifier.}, Steiner trees and extended Steiner trees), we assign each of them a distinct $O(\log n)$-bit integer as its identifier, denoted by $\id(\cdot)$. Note that $O(\log n)$ bits suffice because the number of such objects is polynomial. During the query phase, we may further generate \emph{profiles} for the objects (including those defined in the query phase, e.g. subtrees from \Cref{def:Subtrees}). A profile is something that representing this object and generally it is not only a single integer. 

\subsubsection{The Euler Tours of (Extended) Steiner Trees} 

Like what we did for the edge fault connectivity labels, we will exploit the Euler tour to ``linearize'' the (extended) Steiner trees. However, the Euler tours here are slightly different from the definition in \Cref{sec:det-edge-faults}.

\begin{definition}[Euler Tours]
For each (extended) Steiner tree $T$, we define $\Euler(T)$ to be its Euler tour, which is a list that includes all \emph{occurrence} of nodes according to a DFS traversal of $T$, starting from an arbitrary root node. For convenience, we add two \emph{virtual occurences} $\sstart(T)$ and $\eend(T)$ at the front and the end of $\Euler(T)$ as the ``guards''.
See \Cref{fig:EulerInVertexFault} for a small example.
\end{definition}

\begin{figure}[h]
\centering
\begin{tabular}{cc}
\parbox{1.5in}{\includegraphics[scale=0.5]{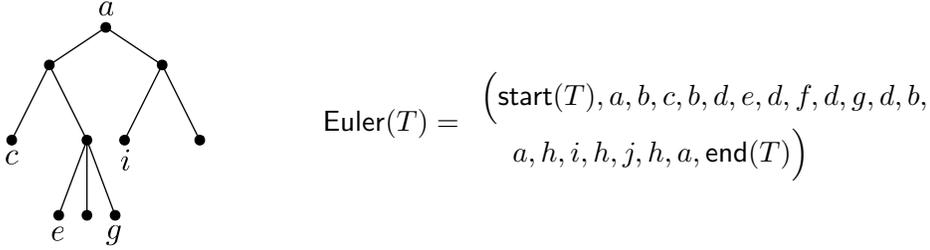}}
&
\parbox{4in}{$\Euler(T) = \begin{array}{l}
{\Big(}\sstart(T),a,b,c,b,d,e,d,f,d,g,d,b,\\
\;\;\;\;a,h,i,h,j,h,a,\eend(T){\Big)}\\
\end{array}$}
\end{tabular}
\caption{\label{fig:EulerInVertexFault} Left: an (extended) Steiner tree $T$, rooted at $a$.\ \ \ \ Right: $\Euler(T)$.}
\end{figure}

For each occurrence $v_{\oc}\in \Euler(T)$, we say a node $v_{\nd}\in V(T)$ \emph{owns} $v_{\oc}$ if $v_{\oc}$ is an occurrence of $v_{\nd}$, and naturally the vertex that \emph{owns} $v_{\oc}$ is the the vertex in $G$ corresponding to the node in $V(T)$ that owns $v_{\oc}$. We let $\pos(v_{\oc})$ denote the position of $v_{\oc}$ in $\Euler(T)$. In particular, $\pos(\sstart(T)) = 0$ and $\pos(\eend(T)) = 2|V(T)|$, since the number of non-virtual occurrences in $\Euler(T)$ is $2|V(T)|-1$. The profile of an occurrence $v_{\oc}\in \Euler(T)$ is defined as $\profile(v_{\oc}) = (\id(v),\id(T),\pos(v_{\oc}))$, where $v$ is the vertex that owns $v_{\oc}$. Exceptionally, $\profile(\sstart(T)) = (\perp, \id(T), \pos(\sstart(T)))$ and $\profile(\eend(T)) = (\perp, \id(T), \pos(\eend(T)))$.

\medskip

\noindent{\underline{Label $\occurrences(v,T)$.}} For each vertex $v\notin S$ and each (extended) Steiner tree $T\in\{T_{\gamma}\mid v\in V(T_{\gamma})\}\cup\{T_{\gamma^{\ext}}\mid v\in V(T_{\gamma^{\ext}})\}$, the label $\occurrences(v,T)$ stores the profiles of occurrences in $\Euler(T)$ owned by $v$, i.e. $\occurrences(v,T) = (\profile(v_{\oc,1}),\profile(v_{\oc,2}),...)$, where $v_{\oc,1},v_{\oc,2},\cdots\in \Euler(T)$ are owned by $v$.

\subsubsection{Profiles of Vertices, Components and Subtrees}

We define the profiles of vertices, components and subtrees, and use them to implement interfaces $\ListAffectedComps(),\ListSubtrees(\Gamma),\ListSubtrees^{\ext}(\Gamma)$ and $\IsTerminal(v,\tau)$.

\medskip

\noindent{\underline{Label $\profile(\Gamma)$.}} The profile of a component $\Gamma\in {\cal C}$, denoted by $\profile(\Gamma)$, is constructed as follows, and we will store $\profile(\Gamma)$ at each vertex $v\in \Gamma$.
\begin{itemize}
\item Store $\id(\Gamma)$, $\id(T_{\gamma})$ and $\id(T_{\gamma^{\ext}})$.
\item For each ancestor $\hat{\Gamma}$ s.t. $\Gamma\prec \hat{\Gamma}$, store $\id(\hat{\Gamma})$ and a one-bit indicator indicating whether $N_{\hat{\gamma}}(\Gamma)$ intersects $S$.
\end{itemize}

\noindent{\underline{Label $\profile(v)$.}} The profile of a vertex $v\in V(G)$, denoted by $\profile(v)$, is constructed as follows, and we will store $\profile(v)$ at vertex $v$. 
\begin{itemize}
\item Store $\id(v)$.
\item For each (extended) Steiner tree $T\in \{T_{\gamma_{v}}\}\cup\{T_{\gamma^{\ext}}\mid v\in \gamma^{\ext}\}$ (i.e. $v$ has a corresponding terminal node in $T$), let $\pi(v,T)$ denote the occurrence $\Euler(T)$ owned by the \underline{terminal node} (corresponding to the vertex) $v$ with the least $\pos(v_{\oc})$\footnote{In fact, we can fix $\pi(v,T)$ to be an arbitrary occurrence owned by the terminal node $v$ (no need to be the one with the least $\pos(v_{\oc})$).}, called the \emph{principal occurrence} of the vertex $v$ on $T$, and we store $\profile(\pi(v,T))$. 
\end{itemize}

\noindent\underline{Implement $\ListAffectedComps()$.} Note that we can already implement the interface $\ListAffectedComps()$ by inspecting the profiles of components storing at vertices $\{s,t\}\cup F$. In fact, we can even support the basic operations in \Cref{ob:BasicOperations}.

\begin{observation}
\label{ob:BasicOperations}
By inspecting the profiles of components storing at vertices $\{s,t\}\cup F$, we can suport the following.
\begin{itemize}
\item List $\id(\Gamma)$ of all affected components $\Gamma$.
\item Given $\id(\Gamma)$ of an affected component $\Gamma$, get $\id(T_{\gamma})$ and $\id(T_{\gamma^{\ext}})$. 
\item Given $\id(T)$ of an (extended) Steiner tree of some (unknown) affected component $\Gamma$, decide whether it is a Steiner tree or an extended Steiner tree and get $\id(\Gamma)$.
\item Given $\id(\Gamma_{1}),\id(\Gamma_{2})$ of two affected components $\Gamma_{1},\Gamma_{2}$, distinguish among (1) $\Gamma_{1}\preceq \Gamma_{2}$, (2) $\Gamma_{2}\preceq\Gamma_{1}$, and (3) no ancestral relation between $\Gamma_{1}$ and $\Gamma_{2}$.
\item Given $\id(\Gamma'),\id(\Gamma)$ of two affected components $\Gamma',\Gamma$ s.t. $\Gamma'\prec\Gamma$, decide if $\gamma'\subseteq \gamma^{\ext}$.
\end{itemize}
\end{observation}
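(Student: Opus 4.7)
The plan is to unpack the definition of $\profile(\Gamma)$ and observe that the contents of the profiles accessible at $\{s,t\}\cup F$ directly encode every data field required by items (1)--(5). First I would verify the key invariant that the multiset of profiles stored at vertices in $\{s,t\}\cup F$ coincides exactly with $\{\profile(\Gamma):\Gamma\text{ is affected}\}$. Indeed, $\Gamma$ is affected iff some $v\in\{s,t\}\cup F$ lies in $\Gamma$, which (by laminarity of $\mathcal{C}$) happens iff $\gamma_v\preceq\Gamma$; since $\profile(\Gamma)$ is stored at every $v\in\Gamma$, $\profile(\Gamma)$ appears in the query-input labels precisely when $\Gamma$ is affected. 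As a side benefit, the number of affected components is at most $h(f+2)$ (\Cref{ob:AffectedNum}), so the total amount of material the query scans is small.

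With this invariant, the first three items reduce to dictionary lookups. For (1), enumerate the $\id(\Gamma)$ fields of all available profiles. For (2), locate the unique available profile with matching $\id(\Gamma)$ and read off the stored identifiers $\id(T_\gamma)$ and $\id(T_{\gamma^{\ext}})$. For (3), scan the available profiles: by hypothesis $T$ belongs to some affected $\Gamma$, so some profile contains $\id(T)$ either in its $\id(T_\gamma)$ slot or in its $\id(T_{\gamma^{\ext}})$ slot; the match simultaneously identifies $\id(\Gamma)$ and distinguishes whether $T$ is the ordinary or extended Steiner tree.

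For (4), recall that $\profile(\Gamma_1)$ stores $\id(\hat{\Gamma})$ for every strict ancestor $\hat{\Gamma}\succ \Gamma_1$ in $\mathcal{H}$, so $\Gamma_1\prec\Gamma_2$ iff $\id(\Gamma_2)$ appears in the ancestor list of $\profile(\Gamma_1)$; symmetrically for $\Gamma_2\prec\Gamma_1$, and laminarity of $\mathcal{C}$ forces the remaining case to be incomparability. For (5), observe that the defining relation $\gamma'\subseteq\gamma^{\ext}$ (given $\Gamma'\prec\Gamma$) is equivalent to $N_{\gamma}(\Gamma')\cap S\neq\emptyset$, which is exactly the one-bit flag stored in the ancestor-$\Gamma$ entry of $\profile(\Gamma')$; we locate that entry by matching $\id(\Gamma)$ against the ancestor list of $\profile(\Gamma')$. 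The only step with a nontrivial correctness requirement is (3), which depends on the hypothesis that the queried tree identifier really comes from an affected component; all other items are purely syntactic extractions from fields the profile was built to expose, so I do not expect any substantive obstacle.
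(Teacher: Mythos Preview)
Your proposal is correct. The paper states this as an observation without proof, so you are simply filling in the straightforward details the authors left implicit; your unpacking of $\profile(\Gamma)$ and the five dictionary-style lookups is exactly what the definitions were designed to support.
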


\noindent\underline{Profiles of Subtrees.} For an (extended) Steiner tree $T\in\{T_{\gamma},T_{\gamma^{\ext}}\mid \text{$\gamma$ is affected}\}$, we will represent a subtree $\tau\in {\cal T}$ (${\cal T}$ is the set of subtrees of $T$ w.r.t. $F$ from \Cref{def:Subtrees}) by a collection of \emph{intervals} on $\Euler(T)$.

\begin{definition}[Intervals]
For each (extended) Steiner tree $T\in\{T_{\gamma},T_{\gamma^{\ext}}\mid \text{$\gamma$ is affected}\}$, we break $\Euler(T)$ into \emph{intervals} by removing $\sstart(T)$, $\eend(T)$ and all occurrences owned by failed vertices, and let ${\cal I}_{T}$ denote these intervals. For each interval $I$, we use $\ell_{\oc,I}$ and $r_{\oc,I}$ to denote the left and right \emph{outer endpoints} of $I$. To be precise, for an interval including occurrences from position $a$ to $b$, the outer endpoints of $I$ are the occurrences at position $a-1$ and $b+1$. The profile of an interval $I\in{\cal I}_{T}$ is $\profile(I) = (\profile(\ell_{\oc,I}),\profile(r_{\oc,I}))$. Furthermore, we define $V^{\tmn}(I)$ to be the set of vertices with its principal occurrences $\pi(v,T)$ falling in $I$.
\end{definition}

\begin{observation}
\label{ob:IntervalsAndSubtrees}
There is an assignment that assigns a subset of interval ${\cal I}_{\tau}\subseteq {\cal I}_{T}$ to each subtree $\tau\in{\cal T}$, satisfying the following.
\begin{itemize}
\item For each subtree $\tau\in{\cal T}$, $V^{\tmn}(\tau) = \bigcup_{I\in{\cal I}_{\tau}}V^{\tmn}(I)$.
\item $\{{\cal I}_{\tau}\mid\tau\in{\cal T}\}$ forms a partition of ${\cal I}_{T}$.
\end{itemize}
\end{observation}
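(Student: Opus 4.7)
The plan is to construct the assignment $\tau \mapsto {\cal I}_\tau$ directly from the Euler tour structure, by sending each interval to the unique subtree whose nodes appear in it. First I would establish the key purity lemma: every interval $I \in {\cal I}_T$ contains only occurrences owned by nodes of a \emph{single} subtree of $\tau \in {\cal T}$. This follows from the basic property of Euler tours that any two consecutive occurrences $v_{\oc}, v'_{\oc}$ in $\Euler(T)$ are either occurrences of the same node, or correspond to traversing a tree edge $\{v,v'\} \in E(T)$. Hence if $v_{\oc}, v'_{\oc}$ lie in a common interval $I$ (so neither is a failed occurrence, nor a virtual guard), then $v$ and $v'$ are non-failed nodes joined by a tree edge, so they live in the same connected component of $T$ after removing failed nodes, i.e., the same subtree $\tau \in {\cal T}$. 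Iterating this along $I$, all occurrences in $I$ are owned by nodes of a single subtree, and this subtree is uniquely determined (it is the one containing any node owning an occurrence in $I$).

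With purity in hand, define $\tau(I)$ to be this unique subtree and set
\[
{\cal I}_\tau \bydef \{I \in {\cal I}_T \mid \tau(I) = \tau\}.
\]
Then $\{{\cal I}_\tau \mid \tau \in {\cal T}\}$ is immediately a partition of ${\cal I}_T$, since each $I$ is assigned to exactly one subtree.

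It remains to verify $V^{\tmn}(\tau) = \bigcup_{I \in {\cal I}_\tau} V^{\tmn}(I)$. For the inclusion $(\supseteq)$: given $v \in V^{\tmn}(I)$ with $I \in {\cal I}_\tau$, by definition $\pi(v,T) \in I$ is owned by the terminal node corresponding to $v$, which by purity lies in $\tau$, so $v \in V^{\tmn}(\tau)$. For $(\subseteq)$: given a terminal vertex $v \in V^{\tmn}(\tau)$, the corresponding terminal node lies in $\tau$, so all of its occurrences in $\Euler(T)$ are non-failed and non-virtual, and hence lie in some intervals of ${\cal I}_T$; in particular, $\pi(v,T)$ lies in a unique $I \in {\cal I}_T$, and purity forces $\tau(I) = \tau$, giving $I \in {\cal I}_\tau$ and $v \in V^{\tmn}(I)$.

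The argument is essentially a direct consequence of the Euler tour; I do not expect a genuine obstacle. The only point that requires some care is the purity step, which relies on the convention that intervals are formed by deleting \emph{all} occurrences owned by failed vertices (so that failed Steiner nodes in $T_{\gamma^{\ext}}$, and not just failed terminal nodes, are also removed), matching the definition of ${\cal T}$ given in \Cref{def:Subtrees}; this guarantees that consecutive occurrences in a single interval really do correspond to a tree edge between two non-failed nodes.
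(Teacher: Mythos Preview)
Your proof is correct and is exactly the standard argument the paper has in mind: the paper does not prove this observation at all, merely stating that ``these are well-known facts from the nature of DFS traversals'' and that obtaining the partition is ``an easy exercise'' (simulating the DFS with a stack). Your purity lemma---that consecutive occurrences within an interval are either the same node or adjacent non-failed nodes, hence lie in the same subtree---is precisely the DFS property being invoked, and the rest is routine verification.
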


We state the relations between intervals ${\cal I}_{T}$ and subtrees ${\cal T}$ in \Cref{ob:IntervalsAndSubtrees}. These are well-known facts from the nature of DFS traversals. Furthermore, by inspecting only $\profile(v)$ for vertices $v\in F$, we can obtain $\profile(I)$ for all $I\in {\cal I}_{T}$ explicitly. Furthermore, it is an easy exercise to compute such a partition $\{{\cal I}_{\tau}\mid \tau\in {\cal T}\}$ of ${\cal I}_{T}$ with properties in \Cref{ob:IntervalsAndSubtrees} (roughly speaking, we can just simulate the DFS traversal using a stack). %

Finally, for each subtree $\tau\in{\cal T}$, we define its profile to be $\profile(\tau) = \{\profile(I)\mid I\in{\cal I}_{\tau}\}$, i.e. we collect $\profile(I)$ of all $I\in{\cal I}_{\tau}$ into $\profile(\tau)$.

\medskip

\noindent\underline{Implement $\ListSubtrees(\Gamma)$ and $\ListSubtrees^{\ext}(\Gamma)$.} We take $\ListSubtrees(\Gamma)$ as an example, and we can implement $\ListSubtrees^{\ext}(\Gamma)$ in a similar way. We have already computed the profiles of all subtrees in \Cref{def:Subtrees}. We just scan all subtrees $\tau$ and output $\profile(\tau)$ if $\tau\in{\cal T}_{\Gamma}$. To check whether $\tau\in{\cal T}_{\Gamma}$, recall that ${\cal T}_{\Gamma}$ collect subtrees from Steiner tree $T_{\gamma'}$ for all affected $\Gamma'\preceq \Gamma$. Note that from $\profile(\tau)$, we can obtain $\id(T)$ of the (extended) Steiner tree $T$ s.t. $\tau\in{\cal T}$, so using \Cref{ob:BasicOperations}, we can get the $\id(\Gamma')$ of the affected $\Gamma'$ s.t. $T_{\gamma'} = T$ (or know such $\Gamma'$ does not exist), and check whether $\Gamma'\preceq \Gamma$.

\medskip

\noindent\underline{Implement $\IsTerminal(v,\tau)$.} Note that $V^{\tmn}(\tau) = \bigcup_{I\in{\cal I}_{\tau}}V^{\tmn(I)}$ by \Cref{ob:IntervalsAndSubtrees}, so to check whether $v\in V^{\tmn}(\tau)$, it suffices to check whether $v\in V^{\tmn}(I)$ for each $I\in{\cal I}_{\tau}$. From $\profile(I)$ (stored in $\profile(\tau)$), we can get $\id(T)$ of the (extended) Steiner tree $T$ s.t. $I$ is on $\Euler(T)$, and further get the position of the outer endpoints of $I$, i.e. $\pos(\ell_{\oc,I})$ and $\pos(r_{\oc,I})$. Lastly, we look at $\profile(\pi(v,T))$ (stored in $\profile(v)$) of the principal occurrence of $v$ on $T$, and check if $\pos(\ell_{\oc,I})<\pos(\pi(v,T))<\pos(r_{\oc,I})$.

\subsubsection{Labels on Euler Tours} We introduce some labels related to Euler tours, and then use them to implement the interfaces $\PickTerminal(\tau), \ListTerminals(\tau,\Gamma)$ and $\ListNeighbors(\tau,\Gamma)$. 

\medskip

\noindent{\underline{Label $\SuccTerminal(v_{\oc}, T)$.}} For each component $\Gamma\in{\cal C}$ and each occurrence $v_{\oc}\in \Euler(T_{\gamma})$, we construct a label $\SuccTerminal(v_{\oc},T_{\gamma})$. Let $x\in\gamma$ be the vertex whose principal occurrence $\pi(x,T_{\gamma})$ is to the right of $v_{\oc}$ and has the least $\pos(\pi(x,T_{\gamma}))$. We store $\profile(x)$ in the label $\SuccTerminal(v_{\oc},T_{\gamma})$, and without ambiguity, $\SuccTerminal(v_{\oc},T_{\gamma})$ also refers to this vertex $x$ in our analysis. We store the label $\SuccTerminal(v_{\oc},T_{\gamma})$ at the vertex $v$ owning $v_{\oc}$. Exceptionally, when $v_{\oc}$ is $\sstart(T_{\gamma})$ or $\eend(T_{\gamma})$, we store $\SuccTerminal(v_{\oc},T_{\gamma})$ at all vertices inside $\Gamma$.

\medskip

\noindent{\underline{Implement $\PickTerminal(\tau)$.}} Note that the input guarantees that $\tau\in{\cal T}_{\gamma}$ for some affected $\gamma$. By \Cref{ob:IntervalsAndSubtrees}, we have $V^{\tmn}(\tau) = \bigcup_{I\in{\cal I}_{\tau}} V^{\tmn}(I)$, so it suffices to obtain a vertex of $V^{\tmn}(I)$ of some $I\in{\cal I}_{\tau}$. For a particular $I\in{\cal I}_{\tau}$, if $V^{\tmn}(I)$ is not empty, the vertex $x = \SuccTerminal(\ell_{\oc,I})$ must be inside $V^{\tmn}(I)$ by definition. Hence, we just need to check whether $x\in V^{\tmn}(I)$, or equivalently, whether $\pos(\ell_{\oc,I})<\pos(\pi(x,T_{\gamma}))<\pos(r_{\oc,I})$. Indeed, this is doable because we can obtain $\pos(\ell_{\oc,I}),\pos(r_{\oc,I}),\id(T_{\gamma})$ from $\profile(\tau)$ and obtain $\pos(\pi(x,T_{\gamma}))$ from $\profile(x)$ ($\profile(x)$ is obtained from the label $\SuccTerminal(\ell_{\oc,I}, T_{\gamma})$ by \Cref{ob:AccessSuccTerminal}).

\begin{observation}
Given $\profile(I)$ of some interval $I\in {\cal I}_{\tau}$ of some $\tau\in{\cal T}_{\gamma}$, we can access $\SuccTerminal(\ell_{\oc,I}, T_{\gamma})$.
\label{ob:AccessSuccTerminal}
\end{observation}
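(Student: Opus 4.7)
The plan is to use $\profile(I)$ to identify a vertex whose stored label table contains $\SuccTerminal(\ell_{\oc,I}, T_{\gamma})$. Unpacking $\profile(I) = (\profile(\ell_{\oc,I}), \profile(r_{\oc,I}))$ yields $\profile(\ell_{\oc,I}) = (\id(v), \id(T_{\gamma}), \pos(\ell_{\oc,I}))$, where $v$ is the owner of $\ell_{\oc,I}$ (with $\id(v) = \perp$ in the virtual case). From $\id(T_{\gamma})$ we already recover $\id(\Gamma)$ via \Cref{ob:BasicOperations}.

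Recall that ${\cal I}_{T_{\gamma}}$ is obtained from $\Euler(T_{\gamma})$ by deleting $\sstart(T_{\gamma})$, $\eend(T_{\gamma})$, and every occurrence owned by a vertex in $F$. Hence $\ell_{\oc,I}$ must be one of these deleted elements, and since it is a \emph{left} outer endpoint it is either $\sstart(T_{\gamma})$ or an occurrence owned by some $v \in F$. I would split the argument into these two sub-cases.

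In the non-virtual sub-case, the owner $v$ lies in $F$, so the query has free access to every label stored at $v$; by construction $\SuccTerminal(\ell_{\oc,I}, T_{\gamma})$ is one of them, and we identify it among $v$'s stored $\SuccTerminal$ labels using the key $(\id(T_{\gamma}), \pos(\ell_{\oc,I}))$. In the virtual sub-case $\ell_{\oc,I} = \sstart(T_{\gamma})$, and the label is exceptionally replicated at every vertex of $\Gamma$. Because $\tau \in {\cal T}_{\gamma}$ forces $\gamma$ to be affected, $\Gamma$ contains some $u \in \{s,t\} \cup F$. We locate such a $u$ by scanning $\{s,t\} \cup F$ and, for each candidate $w$, testing via \Cref{ob:BasicOperations} whether $\id(\Gamma)$ equals $\id(\Gamma_{w})$ or appears among the ancestor ids stored in $\profile(\Gamma_{w})$; at least one $w$ passes this test since $\gamma$ is affected. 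Once $u$ is found, $\SuccTerminal(\sstart(T_{\gamma}), T_{\gamma})$ is read directly from $u$'s label table.

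The only mildly subtle point is the virtual-endpoint sub-case, and the exceptional replication of $\SuccTerminal$ at $\sstart$ and $\eend$ across $\Gamma$ was introduced precisely to ensure that some already-accessible vertex (an affected vertex in $\Gamma$) carries this label. Everything else is routine bookkeeping on the profile structure established in the preceding subsections.
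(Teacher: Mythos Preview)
Your proof is correct and follows essentially the same approach as the paper: extract $\profile(\ell_{\oc,I})$ from $\profile(I)$, then split into the case where $\ell_{\oc,I}$ is owned by a failed vertex (so its stored label is accessible) and the virtual case where the label is replicated at every vertex of the affected component $\Gamma$. Your version is slightly more explicit about how to locate the relevant vertex and label, but the argument is the same.
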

\begin{proof}
We can obtain $\profile(\ell_{\oc,I})$ from $\profile(I)$. If $\ell_{\oc,I}$ is owned by some vertex $v$ (i.e. $\ell_{\oc,I}$ is not $\sstart(T_{\gamma})$ or $\eend(T_{\gamma})$), $v$ must be inside $F$, so the label $\SuccTerminal(\ell_{\oc,I}, T_{\gamma})$ stored at $v$ is accessible and we can locate it using $\profile(\ell_{\oc,I})$. If $\ell_{\oc,I}$ is $\sstart(T_{\gamma})$ or $\eend(T_{\gamma})$, $\SuccTerminal(v_{\oc}, T_{\gamma})$ is accessible from an arbitrary vertex $v\in \Gamma\cap (\{s,t\}\cup F)$. Such $v$ exists because $\Gamma$ is affected (${\cal T}_{\gamma}$ is well-defined only when $\Gamma$ is affected).
\end{proof}

\medskip

\noindent{\underline{Label $\InnerTerminals^{\to}_{\gamma}(v_{\oc},T_{\gamma^{\ext}})$.}} For each component $\Gamma\in{\cal C}$ and each occurrence $v_{\oc}\in \Euler(T_{\gamma^{\ext}})$ \underline{not} owned by $S$-vertices, we construct the label $\InnerTerminals^{\to}_{\gamma}(v_{\oc}, T_{\gamma^{\ext}})$ as follows.

Let $V^{\to}_{v_{\oc}}$ collect all vertices $x\in \gamma$ with principal occurrence $\pi(x,T_{\gamma^{\ext}})$ to the right of $v_{\oc}$. We sort vertices $x\in V^{\to}_{v_{\oc}}$ in order of $\pi(x,T_{\gamma^{\ext}})$ from leftmost to rightmost. The label $\InnerTerminals^{\to}_{\gamma}(v_{\oc}, T_{\gamma^{\star}})$ will take the first $f/\phi+1$ vertices in $V^{\to}_{v_{\oc}}$, and for each vertex $v\in \InnerTerminals^{\to}_{\gamma}(v_{\oc}, T_{\gamma^{\star}})$, we store $\profile(v)$ in the label $\InnerTerminals^{\to}_{\gamma}(v_{\oc}, T_{\gamma^{\star}})$. 

We will store the label $\InnerTerminals^{\to}_{\gamma}(v_{\oc}, T_{\gamma^{\ext}})$ at the vertex $v$ owning $v_{\oc}$. Exceptionally, when $v_{\oc}$ is a virtual occurrence (i.e. $\sstart(T_{\gamma^{\ext}})$ or $\eend(T_{\gamma}^{\ext})$), we store $\InnerTerminals^{\to}_{\gamma}(v_{\oc}, T_{\gamma^{\ext}})$ at all vertices $v\in \Gamma$.

\medskip

\noindent{\underline{Implement $\ListTerminals(\tau,\Gamma)$.}} Recall that the input guarantees that $\tau\in{\cal T}^{\ext}_{\Gamma}$. By definition, $V^{\tmn}_{\gamma}(\tau)$ is not empty only when $\tau\in{\cal T}_{\gamma^{\ext}}$. Using $\id(\tau)$ and \Cref{ob:BasicOperations}, we can easily check if $\tau\in {\cal T}_{\gamma^{\ext}}$, so from now we assume $\tau\in{\cal T}_{\gamma^{\ext}}$. For each interval $I\in{\cal I}_{\tau}$, we define $V^{\tmn}_{\gamma}(I) = V^{\tmn}(I)\cap \gamma$ to be the $\gamma$-vertices in $V^{\tmn}(I)$. From \Cref{ob:IntervalsAndSubtrees}, we know $V^{\tmn}_{\gamma}(\tau) = \bigcup_{I\in{\cal I}_{\tau}} V^{\tmn}_{\gamma}(I)$. By \Cref{lemma:ListTerminals}, it suffices to inspect $\InnerTerminals(\ell_{\oc,I},T_{\gamma^{\ext}})$ for all $I\in{\cal I}_{\tau}$. Note that we can access $\InnerTerminals(\ell_{\oc,I}, T_{\gamma^{\ext}})$ by \Cref{ob:AccessInnerTerminals}.

\begin{observation}
\label{ob:AccessInnerTerminals}
Given $\profile(\tau)$ of some $\tau\in{\cal T}_{\gamma^{\ext}}$, we can access $\InnerTerminals^{\to}_{\gamma}(\ell_{\oc,I}, T_{\gamma^{\ext}})$ for all $I\in{\cal I}_{\tau}$.
\end{observation}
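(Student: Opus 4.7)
The plan is to mirror the argument used in the proof of \Cref{ob:AccessSuccTerminal}, handling the non-virtual and virtual cases for $\ell_{\oc,I}$ separately, with one extra sanity check specific to the $\InnerTerminals^{\to}_{\gamma}$ label.

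First, I would observe that from $\profile(\tau)$, which is the collection $\{\profile(I) \mid I \in {\cal I}_{\tau}\}$, we can extract $\profile(I)$ for each $I \in {\cal I}_{\tau}$ by definition, and from each $\profile(I) = (\profile(\ell_{\oc,I}), \profile(r_{\oc,I}))$ we can read off $\profile(\ell_{\oc,I})$. This tells us, in particular, the identifier of the owning vertex (if non-virtual) and the tree identifier $\id(T_{\gamma^{\ext}})$.

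Next I would case-split on $\ell_{\oc,I}$. In the non-virtual case, $\ell_{\oc,I}$ is owned by some vertex $v$. Since $\ell_{\oc,I}$ is an outer endpoint of an interval in ${\cal I}_{T_{\gamma^{\ext}}}$, and intervals are obtained by deleting occurrences of failed vertices, we must have $v \in F$. Because $S$ is chosen disjoint from $F$, this gives $v \notin S$, so by the construction rule for $\InnerTerminals^{\to}_{\gamma}(v_{\oc}, T_{\gamma^{\ext}})$ (which stores the label at the owner of $v_{\oc}$ whenever that owner is not in $S$), the label is stored at $v$, and since $v \in F$ its labels are accessible to the query, and we can locate this particular label using $\profile(\ell_{\oc,I})$. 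In the virtual case, $\ell_{\oc,I}$ is one of $\sstart(T_{\gamma^{\ext}})$ or $\eend(T_{\gamma^{\ext}})$, and the construction stores $\InnerTerminals^{\to}_{\gamma}(\ell_{\oc,I}, T_{\gamma^{\ext}})$ at every vertex of $\Gamma$. Since ${\cal T}_{\gamma^{\ext}}$ is only defined when $\Gamma$ is affected, there exists some $v \in \Gamma \cap (\{s,t\} \cup F)$ whose labels we can inspect, and from it we retrieve the desired label.

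No real obstacle is expected: this is essentially a restatement of \Cref{ob:AccessSuccTerminal} with the single additional observation that failed vertices lie outside $S$, so the ``$v \notin S$'' precondition in the storage rule for $\InnerTerminals^{\to}_{\gamma}$ is automatically satisfied for the owning vertices of interval outer endpoints.
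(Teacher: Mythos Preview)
Your proposal is correct and follows essentially the same approach as the paper's proof: extract $\profile(\ell_{\oc,I})$ from $\profile(\tau)$, then case-split on whether $\ell_{\oc,I}$ is owned by a failed vertex (label stored there) or is virtual (label stored at all vertices of the affected $\Gamma$). Your explicit check that $v\in F$ implies $v\notin S$, so that the storage precondition for $\InnerTerminals^{\to}_{\gamma}$ is met, is a detail the paper leaves implicit but is indeed needed.
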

\begin{proof}
This is similar to \Cref{ob:AccessSuccTerminal}. For each $I\in{\cal I}_{\tau}$, we can obtain $\profile(\ell_{\oc,I})$ from $\profile(\tau)$. If $\ell_{\oc,I}$ is owned by some vertex $v$, we can access $\InnerTerminals^{\to}_{\gamma}(\ell_{\oc,I}, T_{\gamma^{\ext}})$ from the vertex $v$. Otherwise, $\ell_{\oc,I}$ is $\sstart(T_{\gamma})$ or $\eend(T_{\gamma})$, and we can access $\InnerTerminals^{\to}_{\gamma}(\ell_{\oc,I}, T_{\gamma^{\ext}})$ from the vertex $v$ from an arbitrary vertex $v\in \Gamma\cap (\{s,t\}\cup F)$ because $\Gamma$ is affected (${\cal T}_{\gamma^{\ext}}$ is well-defined only when $\gamma$ is affected).
\end{proof}

\begin{lemma}
\label{lemma:ListTerminals}
Given $\profile(I)$ of an interval $I\in{\cal I}_{\tau}$ of some $\tau\in{\cal T}_{\gamma^{\ext}}$, we can either detect $|V^{\tmn}_{\gamma}(I)|> f/\phi$ or output the profiles of all vertices in $V^{\tmn}_{\gamma}(I)$, if we can access $\InnerTerminals^{\to}_{\gamma}(\ell_{\oc,I}, T_{\gamma^{\ext}})$.
\end{lemma}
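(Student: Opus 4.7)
The plan is to use the label $\InnerTerminals^{\to}_{\gamma}(\ell_{\oc,I}, T_{\gamma^{\ext}})$ as a ``candidate list'' and simply truncate it to the portion that lies inside the interval $I$. Recall that $V^{\tmn}_{\gamma}(I)$ consists of exactly those $\gamma$-vertices $x$ whose principal occurrence $\pi(x, T_{\gamma^{\ext}})$ satisfies $\pos(\ell_{\oc,I}) < \pos(\pi(x, T_{\gamma^{\ext}})) < \pos(r_{\oc,I})$, and by definition the label stores (the profiles of) the first $f/\phi + 1$ $\gamma$-vertices, in increasing order of $\pos(\pi(\cdot, T_{\gamma^{\ext}}))$, whose principal occurrence lies strictly to the right of $\ell_{\oc,I}$. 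So the label already enforces the left endpoint constraint; the only remaining task is to enforce the right endpoint constraint.

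First I would read $\pos(r_{\oc,I})$ and $\id(T_{\gamma^{\ext}})$ out of $\profile(I)$, and for each vertex $x$ listed in $\InnerTerminals^{\to}_{\gamma}(\ell_{\oc,I}, T_{\gamma^{\ext}})$, extract $\pos(\pi(x, T_{\gamma^{\ext}}))$ from the copy of $\profile(x)$ stored in the label. Walking through the stored list in order, let $k$ be the length of the maximal prefix whose positions are all strictly less than $\pos(r_{\oc,I})$. Because the list is sorted by position, every stored vertex in this prefix belongs to $V^{\tmn}_{\gamma}(I)$, and every stored vertex \emph{after} the prefix has position $\geq \pos(r_{\oc,I})$.

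Two cases then arise. If $k < f/\phi + 1$, then either the full list was stored and its unused suffix lies at or beyond $\pos(r_{\oc,I})$, or the list contained all of the $\gamma$-vertices to the right of $\ell_{\oc,I}$; in either situation, any $\gamma$-vertex with principal-occurrence position in $(\pos(\ell_{\oc,I}),\pos(r_{\oc,I}))$ must have been among the stored candidates and thus among the $k$ that passed the filter. So $V^{\tmn}_{\gamma}(I)$ is exactly the prefix, and we output the $k$ profiles. If instead $k = f/\phi + 1$, then $V^{\tmn}_{\gamma}(I)$ already contains $f/\phi + 1$ vertices, so $|V^{\tmn}_{\gamma}(I)| > f/\phi$, and we report this.

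There is no real obstacle here; the only subtlety is justifying completeness in the first case, which follows immediately from the sortedness of the stored list together with the capacity $f/\phi + 1$ guarantee in the definition of $\InnerTerminals^{\to}_{\gamma}$. All the data accesses (positions of interval endpoints, $\id(T_{\gamma^{\ext}})$, and positions of principal occurrences of the candidate vertices) are directly available from $\profile(I)$ and from the profiles embedded in the label, so the procedure runs in $\poly(f/\phi,\log n)$ time as required.
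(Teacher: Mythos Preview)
Your proposal is correct and follows essentially the same approach as the paper: both filter the stored list $\InnerTerminals^{\to}_{\gamma}(\ell_{\oc,I}, T_{\gamma^{\ext}})$ by the right-endpoint constraint to obtain a candidate set (your ``prefix'' is exactly the paper's $\widetilde{V}^{\tmn}_{\gamma}(I)$), report ``large'' when the candidate already has size $f/\phi+1$, and otherwise argue completeness from the fact that the label stores the first $f/\phi+1$ $\gamma$-vertices in sorted order of principal-occurrence position. The paper phrases the completeness step as a short proof by contradiction rather than your two-subcase analysis, but the content is the same.
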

\begin{proof}
We consider a candidate $\wtilde{V}^{\tmn}_{\gamma}(I)$ of $V^{\tmn}_{I}(I)$ defined as 
\[
\wtilde{V}^{\tmn}_{\gamma}(I) = \{x\in \InnerTerminals^{\to}_{\gamma}(\ell_{\oc,I}, T_{\gamma^{\ext}})\mid \pos(\ell_{\oc,I},T_{\gamma^{\ext}})<\pos(\pi(x,T_{\gamma^{\ext}}))<\pos(r_{\oc,I}, T_{\gamma^{\ext}}) \}.
\]
By definition, we have $\wtilde{V}^{\tmn}_{\gamma}(I)\subseteq V^{\tmn}_{\gamma}(I)$. Thus, if $|\wtilde{V}^{\tmn}_{\gamma}(I)|>f/\phi$, we detect $|V^{\tmn}_{\gamma}(I)|>f/\phi$. 

From now, assume $|\wtilde{V}^{\tmn}_{\gamma}(I)|\leq f/\phi$, and we claim that $\wtilde{V}^{\tmn}_{\gamma}(I) = V^{\tmn}_{\gamma}(I)$. Assume for contradiction that there exists a vertex $x\in V^{\tmn}_{\gamma}(I)\setminus \wtilde{V}^{\tmn}_{\gamma}(I)$. Note that when constructing $\InnerTerminals^{\to}_{\gamma}(\ell_{\oc,I},T_{\gamma^{\ext}})$, the vertex $x$ is inside $V^{\to}_{\ell_{\oc,I}}$ because  $\pi(x,T_{\gamma^{\ext}})\in I$ is indeed to the right of $\ell_{\oc,I}$. The only scenario in which $x$ is not selected in $\wtilde{V}^{\tmn}_{\gamma}(I)$ is when the size of $\InnerTerminals^{\to}_{\gamma}(\ell,T_{\gamma^{\ext}})$ reaches $f/\phi+1$ and each vertex $x'\in \InnerTerminals^{\to}_{\gamma}(\ell,T_{\gamma^{\ext}})$ has $\pos(\ell_{\oc,I})<\pos(\pi(x',T_{\gamma^{\star}}))\leq \pos(\pi(x,T_{\gamma^{\star}}))$. However, this means these $f/\phi+1$ vertices $x'$ will be picked into $\wtilde{V}^{\tmn}_{\gamma}(I)$, contradicting $|\wtilde{V}^{\tmn}_{\gamma}(I)|\leq f/\phi$.
\end{proof}

\begin{corollary}
Given an interval $I\in{\cal I}_{\tau}$ of some non-giant subtree $\tau\in{\cal T}_{\gamma^{\ext}}$, we have $V^{\tmn}_{\gamma}(I)\subseteq \InnerTerminals^{\to}_{\gamma}(\ell_{\oc,I}, T_{\gamma^{\ext}})$.
\label{coro:InnerTerminals}
\end{corollary}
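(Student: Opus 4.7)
The plan is to derive this corollary as an immediate consequence of the non-giant hypothesis combined with the pigeonhole argument already embedded in the proof of \Cref{lemma:ListTerminals}. The corollary is essentially a repackaging of that lemma once the $f/\phi$ threshold is known a priori to be unreachable.

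First, I would use the definition of \emph{non-giant} to obtain an a priori bound on $|V^{\tmn}_\gamma(I)|$. Since $\tau\in\mathcal{T}_{\gamma^{\ext}}$ is non-giant, by definition $|V^{\tmn}_\gamma(\tau)| + |N_\gamma(\tau)| \le f/\phi$, so in particular $|V^{\tmn}_\gamma(\tau)| \le f/\phi$. By \Cref{ob:IntervalsAndSubtrees}, the sets $\{V^{\tmn}(I') \mid I'\in\mathcal{I}_\tau\}$ partition $V^{\tmn}(\tau)$, and hence
\[
V^{\tmn}_\gamma(I) \subseteq V^{\tmn}_\gamma(\tau), \qquad |V^{\tmn}_\gamma(I)| \le f/\phi.
\]

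Next, I would argue by contradiction: suppose some $x \in V^{\tmn}_\gamma(I)$ is missing from $\InnerTerminals^{\to}_\gamma(\ell_{\oc,I}, T_{\gamma^{\ext}})$. Since the principal occurrence $\pi(x, T_{\gamma^{\ext}})$ lies inside the interval $I$, it is strictly to the right of $\ell_{\oc,I}$, so $x\in V^\to_{\ell_{\oc,I}}$ is a candidate. Recall that $\InnerTerminals^{\to}_\gamma(\ell_{\oc,I}, T_{\gamma^{\ext}})$ was constructed by taking the first $f/\phi+1$ vertices of $V^\to_{\ell_{\oc,I}}$ in order of their principal occurrences; the only way $x$ can be omitted is that the label is already full with $f/\phi+1$ vertices $x'$, each satisfying $\pos(\pi(x', T_{\gamma^{\ext}})) \le \pos(\pi(x, T_{\gamma^{\ext}}))$.

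Finally, for each such $x'$ we have $\pos(\ell_{\oc,I}) < \pos(\pi(x', T_{\gamma^{\ext}})) \le \pos(\pi(x, T_{\gamma^{\ext}})) < \pos(r_{\oc,I})$, so $\pi(x',T_{\gamma^{\ext}}) \in I$, and by construction $x'\in\gamma$; hence $x' \in V^{\tmn}_\gamma(I)$. This produces at least $f/\phi+1$ distinct elements of $V^{\tmn}_\gamma(I)$, contradicting the bound $|V^{\tmn}_\gamma(I)| \le f/\phi$ established in the first step. I do not anticipate any real obstacle here: the argument is a direct pigeonhole once non-giantness is in hand, and reuses verbatim the logic of \Cref{lemma:ListTerminals}.
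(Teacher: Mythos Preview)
Your proof is correct and takes essentially the same approach as the paper: the paper simply cites the candidate set $\wtilde{V}^{\tmn}_{\gamma}(I)$ from the proof of \Cref{lemma:ListTerminals}, notes that $\wtilde{V}^{\tmn}_{\gamma}(I)\subseteq \InnerTerminals^{\to}_{\gamma}(\ell_{\oc,I}, T_{\gamma^{\ext}})$ by definition, and observes that non-giantness forces $|\wtilde{V}^{\tmn}_{\gamma}(I)|\le f/\phi$ so that the equality $V^{\tmn}_{\gamma}(I)=\wtilde{V}^{\tmn}_{\gamma}(I)$ from that lemma applies. You unpack this reference by re-running the pigeonhole contradiction inline, which is the same argument.
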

\begin{proof}
Recall the candidate $\wtilde{V}^{\tmn}_{\gamma}(I)$ in the proof of \Cref{lemma:ListTerminals}. By definition, $\wtilde{V}^{\tmn}_{\gamma}(I)\subseteq \InnerTerminals^{\to}_{\gamma}(\ell_{\oc,I}, T_{\gamma^{\ext}})$. Furthermore, because $\tau$ is non-giant, we have $V^{\tmn}_{\gamma}(I) = \wtilde{V}^{\tmn}_{\gamma}(I)$.
\end{proof}

\medskip

\noindent{\underline{Labels $\NeighborEdge^{\to}_{\gamma}(v_{\oc},T)$ and $\NeighborVertex^{\to}_{\gamma}(v_{\oc},T)$}.} For each component $\Gamma\in{\cal C}$, each (extended) Steiner tree $T\in\{T_{\gamma^{\ext}}\}\cup\{T_{\gamma'}\mid \gamma'\prec \gamma\}$, and each occurrence $v_{\oc}\in \Euler(T)$ \underline{not} owned by $S$-vertices, we construct the label $\NeighborEdge^{\to}_{\gamma}(v_{\oc},T)$ and $\NeighborVertex^{\to}_{\gamma}(v_{\oc},T)$ as follows.

For each undirected edge $\{u,v\}\in E(\wtilde{G}_{\Gamma})$ with type $\eta$, we treat it as two directed edges $(u,v,\eta)$ and $(v,u,\eta)$. Then we let $E^{\to}_{v_{\oc}}$ be a list collecting all \emph{directed} edges $e = (x,y,\eta)\in E(\wtilde{G}_{\Gamma})$ s.t. $y\in \gamma$ and $x$ has a principal occurrence $\pi(x,T)$ to the right of $v_{\oc}$. We sort edges $e=(x,y,\eta)\in E^{\to}_{v_{\oc}}$ in order of $\pi(x,T)$ from leftmost to rightmost. 

In what follows, we construct a list $\boundary^{\to}_{\gamma}(v_{\oc}, T)$ of edges in $E^{\to}_{v_{\oc}}$ by considering edges in $E^{\to}_{v_{\oc}}$ one by one according to the order. Meanwhile, we will maintain another list 
\[
\NeighborVertex^{\to}_{\gamma}(v_{\oc}, T) = \{y\mid (x,y,\eta)\in \boundary^{\to}_{\gamma}(v_{\oc}, T)\}.
\]
Each edge $(x,y,\eta)$ in $\NeighborEdge^{\to}_{\gamma}(v_{\oc}, T)$ is stored in the form $(\pos(\pi(x,T)),\id(y),\id(\eta))$, and each vertex $y$ in $\NeighborVertex^{\to}_{\gamma}(v_{\oc}, T)$ is stored as $\profile(y)$.

Let $e=(x,y,\eta)\in E^{\to}_{v_{\oc}}$ be the current edge, and we add $e$ into $\boundary^{\to}_{\gamma}(v_{\oc}, T)$ if all the following three conditions hold at this moment.
\begin{enumerate}
\item There is no edge $e'=(x',y',\eta')\in\boundary^{\to}_{\gamma}(v_{\oc},T)$ s.t. $y'=y$ and $\eta' = \eta$.
\label{Cond:boundary2}
\item The number of edges $e'=(x',y',\eta')\in\boundary^{\to}_{\gamma}(v_{\oc},T)$ s.t. $y' = y$ is smaller than $h(f+2)+1$.
\label{Cond:boundary3}
\item 
The size of $\NeighborVertex^{\to}_{\gamma}(v_{\oc}, T)$ should be smaller than $h(f+2)\lambda_{\nb} + f/\phi + 1$. %
\label{Cond:boundary4}
\end{enumerate}
Note that conditions \ref{Cond:boundary3} and \ref{Cond:boundary4} guarantee that the number of edges in $\NeighborEdge^{\to}_{\gamma}(v_{\oc},T)$ is at most $(h(f+2)+1)(h(f+2)\lambda_{\nb}+f/\phi+1)$.

We will store the labels $\NeighborEdge^{\to}_{\gamma}(v_{\oc},T)$ and $\NeighborVertex^{\to}_{\gamma}(v_{\oc},T)$ at the vertex $v$ owning $v_{\oc}$. Exceptionally, when $v_{\oc}$ is $\sstart(T)$ or $\eend(T)$, we store $\NeighborEdge^{\to}_{\gamma}(v_{\oc},T)$ and $\NeighborVertex^{\to}_{\gamma}(v_{\oc},T)$ at all vertices $v\in \bar{\Gamma}$, where $\bar{\Gamma} = \Gamma$ if $T = T_{\gamma^{\ext}}$ and $\bar{\Gamma} = \Gamma'$ if $T = T_{\gamma'}$ for some $\gamma'\prec\gamma$.

\medskip

\noindent{\underline{Implement $\ListNeighbors(\tau,\Gamma)$}.} From $\id(\tau)$, we can obtain $\id(T)$ where $T$ is the (extended) Steiner tree owning $\tau$.
For each interval $I\in{\cal I}_{\tau}$, we let 
\[
N_{\gamma}(I) = N^{\wtilde{G}^{\qry}_{\Gamma}}(V^{\tmn}(I))\cap \gamma
\]
be the neighbors of $V^{\tmn}(I)$ in graph $\wtilde{G}^{\qry}_{\Gamma}$ falling in $\gamma$. Furthermore, let
\begin{equation}
N^{\inc}_{\gamma}(I) = \{y\mid \{x,y\}\in E(\wtilde{G}^{\qry}_{\Gamma}),x\in V^{\tmn}(I)\}\cap \gamma
\label{eq:Ninc}
\end{equation}
denote the vertices in $\gamma$ adjacent to some $x\in V^{\tmn}(I)$ in the graph $\wtilde{G}^{\qry}_{\Gamma}$. By definition, $N^{\inc}_{\gamma}(I)$ may include vertices in $V^{\tmn}(I)$, and $N_{\gamma}(I) = N^{\inc}_{\gamma}(I)\setminus V^{\tmn}(I)$ is exactly the set by excluding $V^{\tmn}(I)$ from $N^{\inc}_{\gamma}(I)$. Therefore,
\[
N_{\gamma}(I)\subseteq N^{\inc}_{\gamma}(I)\subseteq N_{\gamma}(I)\cup V^{\tmn}_{\gamma}(I).
\]

To answer $\ListNeighbors(\tau,\Gamma)$, it suffices to apply \Cref{lemma:ListNeighbors} (with access guarantee from \Cref{ob:AccessNeighborVertex}) on each $I\in{\cal I}_{\tau}$ by the following reasons. By \Cref{ob:IntervalsAndSubtrees}, we have $V^{\tmn}(\tau) = \bigcup_{I\in{\cal I}_{\tau}} V^{\tmn}(I)$. Therefore, we have
\[
N_{\gamma}(\tau)\subseteq 
\bigcup_{I\in{\cal I}_{\tau}} N^{\inc}_{\gamma}(I)\subseteq N_{\gamma}(\tau)\cup V^{\tmn}_{\gamma}(\tau).
\]
If \Cref{lemma:ListNeighbors} detect $|N^{\inc}_{\gamma}(I)|>f/\phi$ for some $I\in{\cal I}_{\tau}$, it means $|V^{\tmn}_{\gamma}(\tau)| + |N_{\gamma}(\tau)|> f/\phi$. Otherwise, \Cref{lemma:ListNeighbors} outputs (the identifiers of) all vertices in $N^{\inc}_{\gamma}(I)$ for all $I\in{\cal I}_{\tau}$. To get $N_{\gamma}(\tau)$, we just need to remove from $\bigcup_{I\in{\cal I}_{\tau}}N^{\inc}_{\gamma}(I)$ vertices in $V^{\tmn}_{\gamma}(\tau)$. Note that we can check whether a vertex $v\in \bigcup_{I\in{\cal I}_{\tau}}N_{\gamma}(I)$ is in $V^{\tmn}_{\gamma}(\tau)$ using $\IsTerminal(v,\tau)$, or even just get $V^{\tmn}_{\gamma}(\tau)$ explicitly using $\ListTerminals(\tau,\Gamma)$.

\begin{lemma}
\label{lemma:ListNeighbors}
Given $\profile(I)$ of an interval $I\in{\cal I}_{\tau}$ of some $\tau\in{\cal T}^{\ext}_{\Gamma}$, we can either detect $|N^{\inc}_{\gamma}(I)|>f/\phi$ or output the profiles of all vertices in $N^{\inc}_{\gamma}(I)$, if we can access $\NeighborEdge^{\to}_{\gamma}(\ell_{\oc,I}, T)$ and $\NeighborVertex^{\to}_{\gamma}(\ell_{\oc,I}, T)$, where $T$ is the (extended) Steiner tree owning $\tau$.
\end{lemma}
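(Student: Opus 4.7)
The plan is to mirror the structure of the proof of \Cref{lemma:ListTerminals}. From $\profile(I)$, the stored triples $(\pos(\pi(x,T)),\id(y),\id(\eta))$ inside $\NeighborEdge^{\to}_{\gamma}(\ell_{\oc,I},T)$, and the profiles stored inside $\NeighborVertex^{\to}_{\gamma}(\ell_{\oc,I},T)$, I construct a candidate set $\wtilde{N}^{\inc}_{\gamma}(I)$ consisting of those $y$'s appearing in a stored triple with $\pos(\pi(x,T))<\pos(r_{\oc,I})$ (so that $x\in V^{\tmn}(I)$) and $\eta$ unaffected (decided using \Cref{ob:BasicOperations}). If $|\wtilde{N}^{\inc}_{\gamma}(I)|>f/\phi$ I report ``$|N^{\inc}_{\gamma}(I)|>f/\phi$''; otherwise I output $\wtilde{N}^{\inc}_{\gamma}(I)$. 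The inclusion $\wtilde{N}^{\inc}_{\gamma}(I)\subseteq N^{\inc}_{\gamma}(I)$ will be immediate, because an unaffected-type edge $\{x,y\}$ with $x\in V^{\tmn}(I)\subseteq Q^{\ext}_{\Gamma}$ and $y\in \gamma\subseteq Q^{\ext}_{\Gamma}$ survives in $\wtilde{G}^{\qry}_{\Gamma}=\wtilde{G}_{\Gamma}[Q^{\ext}_{\Gamma}]\setminus\hat{E}_{\Gamma,\aff}$; this already makes the ``too large'' report sound.

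Next I will establish the reverse inclusion when $|\wtilde{N}^{\inc}_{\gamma}(I)|\leq f/\phi$ by contradiction: suppose there is $y^{*}\in N^{\inc}_{\gamma}(I)\setminus\wtilde{N}^{\inc}_{\gamma}(I)$, witnessed by an unaffected-type edge $e^{*}=(x^{*},y^{*},\eta^{*})\in E(\wtilde{G}^{\qry}_{\Gamma})$ with $x^{*}\in V^{\tmn}(I)$. Since $\pos(\ell_{\oc,I})<\pos(\pi(x^{*},T))<\pos(r_{\oc,I})$, the edge $e^{*}$ belongs to $E^{\to}_{\ell_{\oc,I}}$ and is considered during the construction of $\boundary^{\to}_{\gamma}(\ell_{\oc,I},T)$, so it must have been rejected because one of Conditions \ref{Cond:boundary2}, \ref{Cond:boundary3}, \ref{Cond:boundary4} fails at that moment. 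I will rule out the first two by easy arguments: a Condition \ref{Cond:boundary2} failure directly produces a prior $e'=(x',y^{*},\eta^{*})\in\boundary$, while a Condition \ref{Cond:boundary3} failure exhibits $h(f+2)+1$ distinct-type edges incident to $y^{*}$, of which at least one must be unaffected because \Cref{ob:AffectedNum} bounds the number of affected components (and therefore of affected types) by $h(f+2)$. In both cases the sort order of $\NeighborEdge^{\to}$ by $\pos(\pi(\cdot,T))$ yields $\pos(\pi(x',T))\leq\pos(\pi(x^{*},T))<\pos(r_{\oc,I})$, so $x'\in V^{\tmn}(I)$ and $y^{*}\in \wtilde{N}^{\inc}_{\gamma}(I)$, a contradiction.

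The main obstacle will be the Condition \ref{Cond:boundary4} case, where $|\NeighborVertex^{\to}_{\gamma}(\ell_{\oc,I},T)|\geq h(f+2)\lambda_{\nb}+f/\phi+1$ at the moment $e^{*}$ is considered. The argument here will rely on the sparsity baked into the neighborhood-hitter construction: for each affected $\Gamma'$, any shortcut edge of type $\gamma'$ whose $\gamma$-endpoint appears in $\NeighborVertex^{\to}_{\gamma}(\ell_{\oc,I},T)$ must lie in $\hat{N}_{\gamma}(\Gamma')$, which has size at most $\lambda_{\nb}$. Combined with the bound of $h(f+2)$ on the number of affected components, the $y$'s in $\NeighborVertex^{\to}_{\gamma}(\ell_{\oc,I},T)$ whose in-boundary edges are all of affected type number at most $h(f+2)\lambda_{\nb}$; hence at least $f/\phi+1$ of them appear on some unaffected-type edge $(x',y,\eta')\in\boundary$. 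Each such $x'$ was processed before $e^{*}$ and therefore, by the same sort-order argument, lies in $V^{\tmn}(I)$, placing $f/\phi+1$ vertices in $\wtilde{N}^{\inc}_{\gamma}(I)$, contradicting $|\wtilde{N}^{\inc}_{\gamma}(I)|\leq f/\phi$. The delicate point is that the three thresholds $1$, $h(f+2)+1$, and $h(f+2)\lambda_{\nb}+f/\phi+1$ embedded in the definition of $\NeighborEdge^{\to}$ were chosen precisely to line up with these pigeonhole estimates, and care will be needed to verify this matching and to handle the edge case where $\ell_{\oc,I}$ is a virtual occurrence at the end of $\Euler(T)$.
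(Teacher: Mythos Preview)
Your proposal is correct and follows essentially the same approach as the paper's proof: define the candidate $\wtilde{N}^{\inc}_{\gamma}(I)$ from the stored $\NeighborEdge^{\to}$ triples with $\pos(\pi(x,T))$ inside $I$ and unaffected type, verify the easy inclusion $\wtilde{N}^{\inc}_{\gamma}(I)\subseteq N^{\inc}_{\gamma}(I)$, and then in the case $|\wtilde{N}^{\inc}_{\gamma}(I)|\le f/\phi$ argue equality by a three-way case split on which of Conditions~\ref{Cond:boundary2}--\ref{Cond:boundary4} blocked the witness edge, using \Cref{ob:AffectedNum} for Condition~\ref{Cond:boundary3} and the $\lambda_{\nb}$ bound on $|\hat N_{\gamma}(\Gamma')|$ times the $h(f+2)$ affected-component count for Condition~\ref{Cond:boundary4}. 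The paper's write-up is the same argument, phrased slightly differently in the Condition~\ref{Cond:boundary4} case (it counts the $\gamma$-vertices incident to \emph{any} affected edge rather than those whose in-$\boundary$ edges are all affected, but your set is contained in theirs so the same bound applies).
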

\begin{proof}
We consider a candidate $\wtilde{N}^{\inc}_{\gamma}(I)$ of $N^{\inc}_{\gamma}(I)$ defined as
\begin{align*}
\wtilde{N}^{\inc}_{\gamma}(I) = \{y\mid &(x,y,\eta)\in \boundary^{\to}_{\gamma}(\ell, T)\\
&\text{ s.t. } \pos(\ell_{\oc,I},T)<\pos(\pi(x,T))<\pos(r_{\oc,I},T) \text{ and $\eta$ is not affected}\}.
\end{align*}
Note that $\pos(\ell_{\oc,I},T)$ and $\pos(r_{\oc,I},T)$ are stored in $\profile(I)$, $\pos(\pi(x,T))$ is stored in $\NeighborEdge^{\to}_{\gamma}(\ell_{\oc,I},T)$, and we can obtain $\profile(y)$ from $\NeighborVertex^{\to}(\ell_{\oc,I},T)$.

By the construction of $\boundary^{\to}_{\gamma}(\ell, T)$, we have $\wtilde{N}^{\inc}_{\gamma}(I) \subseteq N^{\inc}_{\gamma}(I)$. Thus if $|\wtilde{N}^{\inc}_{\gamma}(I)|>f/\phi$, we detect $|N^{\inc}_{\gamma}(I)|>f/\phi$. From now we assume $|\wtilde{N}^{\inc}_{\gamma}(I)|\leq f/\phi$, and we will show that $N^{\inc}_{\gamma}(I) = \wtilde{N}^{\inc}_{\gamma}(I)$. %

Assume for contradiction that there exists $y$ s.t. $y\in N^{\inc}_{\gamma}(I)$ and $y\notin \wtilde{N}^{\inc}_{\gamma}(I)$. Let $e=\{x,y\}\in E(\wtilde{G}^{\qry}_{\Gamma})$ be an edge with endpoints $x\in V^{\tmn}(I)$ and $y$ and type $\eta$ ($\eta$ must be unaffected since we exclude edges with affected types from $\wtilde{G}^{\qry}_{\Gamma}$). Because $\pi(x,T)$ is to the right of $\ell_{\oc,I}$ and $y\in \gamma$, when constructing $\boundary^{\to}_{\gamma}(\ell_{\oc,I}, T)$, we will add the edge $(x, y, \eta)$ into $E^{\to}_{\ell_{\oc,I}}$. However, we know $(x, y, \eta)\notin\boundary^{\to}_{\gamma}(\ell_{\oc,I}, T)$, because otherwise $y$ will be added in $\wtilde{N}^{\inc}_{\gamma}(I)$. Namely, when we try to add $(x,y,\eta)$ into $\boundary^{\to}_{\gamma}(\ell_{\oc,I}, T)$, at least one of the three conditions is violated. 

Suppose Condition \ref{Cond:boundary2} is violated, i.e. there is an edge $(x',y',\eta')\in \boundary^{\to}_{\gamma}(\ell_{\oc,I},T)$ s.t. $y'=y$ and $\eta'=\eta$. Furthermore, we have $\pos(\ell_{\oc,I})<\pos(\pi(x',T))\leq \pos(\pi(x,T))$ because $(x',y',\eta')$ is added before. This means $y'=y$ will be selected into $\wtilde{N}^{\inc}_{\gamma}(I)$, a contradiction.

Suppose Condition \ref{Cond:boundary3} is violated, i.e. there are at least $h(f+2) + 1$ many edges $(x',y',\eta')\in\boundary^{\to}_{\gamma}(\ell, T)$ with $y'=y$, and all of them satisfy $\pos(\ell_{\oc,I})<\pos(\pi(x',T))\leq \pi(x,T)$. Furthermore, all these edges have different types $\eta'$ by Condition \ref{Cond:boundary2}. Because the number of affected types is at most $h(f+2)$ by \Cref{ob:AffectedNum}, at least one of these edges has an unaffected type $\eta'$. Then $y'=y$ will be selected into $\wtilde{N}^{\inc}_{\gamma}(I)$, a contradiction.

Suppose Condition \ref{Cond:boundary4} is violated, i.e. the number of vertices in $\NeighborVertex^{\to}_{\gamma}(\ell, T)$ reaches $hf\lambda_{\nb} + f/\phi + 1$. We know that at most $hf\lambda_{\nb}$ vertices in $\gamma$ can be incident to affected edges, because (1) the number of affected components is at most $h(f+2)$ by \Cref{ob:AffectedNum}, and (2) for each affected component, the shortcut edges it created will be incident to at most $\lambda_{\nb}$ vertices in $\gamma$ (recall the construction of shortcut edges). Thus, at least $f/\phi + 1$ vertices in $\NeighborVertex^{\to}_{\gamma}(\ell, T)$ will be incident to only unaffected edges. All these vertices will be picked into $\wtilde{N}^{\inc}_{\gamma}(I)$, which means $|\wtilde{N}^{\inc}_{\gamma}(I)|> f/\phi$, a contradiction. %

\end{proof}

\noindent{\underline{Labels $\NeighborEdge^{\larr}_{\gamma}(v_{\oc}, T)$ and $\NeighborVertex^{\larr}_{\gamma}(v_{\oc}, T)$}.} Symmetrically, for each component $\Gamma\in{\cal C}$, each (extended) Steiner tree $T\in\{T_{\gamma^{\ext}}\}\cup\{T_{\gamma'}\mid \gamma'\prec \gamma\}$, and each occurrence $v_{\oc}\in \Euler(T)$ \underline{not} owned by $S$-vertices, we work on the \emph{reversed} $\Euler(T)$, and construct and store labels $\NeighborEdge^{\larr}_{\gamma}(v_{\oc}, T)$ and $\NeighborVertex^{\larr}_{\gamma}(v_{\oc}, T)$ in the same way as $\NeighborEdge^{\to}_{\gamma}(v_{\oc}, T)$ and $\NeighborVertex^{\to}_{\gamma}(v_{\oc}, T)$. \Cref{ob:AccessNeighborVertex} below can be proved using an argument similar to \Cref{ob:AccessInnerTerminals}.

We will use labels $\NeighborEdge^{\larr}_{\gamma}(v_{\oc},T)$ and $\NeighborVertex^{\larr}_{\gamma}(v_{\oc},T)$ in \Cref{sect:CountingEdges}.

\begin{observation}
\label{ob:AccessNeighborVertex}
Given $\id(\Gamma)$ and $\profile(\tau)$ of some $\tau\in{\cal T}^{\ext}_{\Gamma}$, we can access $\NeighborEdge^{\to}_{\gamma}(\ell_{\oc,I}, T)$, $\NeighborVertex^{\to}_{\gamma}(\ell_{\oc,I}, T)$, $\NeighborEdge^{\larr}_{\gamma}(r_{\oc,I}, T)$, and $\NeighborVertex^{\larr}_{\gamma}(r_{\oc,I}, T)$ for all $I\in{\cal I}_{\tau}$, where $T$ is the (extended) Steiner tree owning $\tau$.
\end{observation}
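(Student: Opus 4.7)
The plan is to mimic the proof of \Cref{ob:AccessInnerTerminals} (and \Cref{ob:AccessSuccTerminal}), extending it in two ways: we need access at \emph{both} outer endpoints of each interval (not just the left one), and the underlying tree $T$ can now be either an extended Steiner tree $T_{\gamma^{\ext}}$ or a Steiner tree $T_{\gamma'}$ of an affected strict descendant $\gamma'\prec\gamma$. The starting point is the same: from $\profile(\tau)$ we can enumerate its intervals $\mathcal{I}_\tau$, and for each $I\in\mathcal{I}_\tau$ the profile $\profile(I) = (\profile(\ell_{\oc,I}), \profile(r_{\oc,I}))$ immediately reveals who owns each outer endpoint and the identity $\id(T)$ of the underlying tree.

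Fix an interval $I\in\mathcal{I}_\tau$ and consider the left outer endpoint $\ell_{\oc,I}$ (the argument for $r_{\oc,I}$ is symmetric, invoking the labels $\NeighborEdge^{\larr}_{\gamma}$ and $\NeighborVertex^{\larr}_{\gamma}$ instead). There are two cases by \Cref{def:Subtrees}: either $\ell_{\oc,I}$ is owned by a failed vertex $v\in F$, or it is a virtual occurrence $\sstart(T)$ or $\eend(T)$. In the first case, $v\in F$ is accessible by assumption and $\NeighborEdge^{\to}_{\gamma}(\ell_{\oc,I},T)$, $\NeighborVertex^{\to}_{\gamma}(\ell_{\oc,I},T)$ are stored at $v$ by construction (note that since $F\cap S=\emptyset$, $v\notin S$, so these labels are indeed generated).

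In the virtual case we use the fact that both labels are stored at \emph{every} vertex of $\bar\Gamma$, where $\bar\Gamma=\Gamma$ if $T=T_{\gamma^{\ext}}$ and $\bar\Gamma=\Gamma'$ if $T=T_{\gamma'}$ for some $\gamma'\prec\gamma$. Thus it suffices to exhibit one accessible vertex of $\bar\Gamma$, i.e.\ a vertex in $\bar\Gamma\cap(\{s,t\}\cup F)$. If $T=T_{\gamma^{\ext}}$ then $\bar\Gamma=\Gamma$ is affected (it is the input component), so such a vertex exists. If $T=T_{\gamma'}$, then $\tau\in\mathcal{T}_{\gamma'}$, and \Cref{def:Subtrees} only defines $\mathcal{T}_{\gamma'}$ when $\gamma'$ is affected, so $\Gamma'\cap(\{s,t\}\cup F)\neq\emptyset$ as required. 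In either subcase we can read the desired labels from that vertex.

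The one subtlety to double-check—and the only place the argument is not completely routine—is verifying that in the virtual case the accessing vertex's $\profile$ actually points us to the correct label among the many $\NeighborEdge^{\to}_\gamma(\sstart(T),T)$ stored at it (one per component/tree pair to which the vertex belongs). This is resolved by using $\id(\Gamma)$ (given in the hypothesis) together with $\id(T)$ (recovered from $\profile(\tau)$) and the position $\pos(\ell_{\oc,I})\in\{0,2|V(T)|\}$ to index into the stored labels uniquely; no new machinery is needed beyond the identifier/profile conventions already in place.
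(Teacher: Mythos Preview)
Your proof is correct and follows essentially the same approach as the paper, which simply says the observation ``can be proved using an argument similar to \Cref{ob:AccessInnerTerminals}.'' You have correctly identified and handled the two extensions needed (both outer endpoints, and the case $T=T_{\gamma'}$ for an affected descendant $\gamma'$), and your observation that $F\cap S=\emptyset$ ensures the labels are actually generated at failed vertices is a nice completeness check the paper leaves implicit.
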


\begin{corollary}
Given an interval $I\in{\cal I}_{\tau}$ of some non-giant subtree $\tau\in{\cal T}^{\ext}_{\Gamma}$, we have $N^{\inc}_{\gamma}(I)\subseteq \NeighborVertex^{\to}_{\gamma}(\ell_{\oc,I}, T)$ and $N^{\inc}_{\gamma}(I)\subseteq \NeighborVertex^{\larr}_{\gamma}(r_{\oc,I}, T)$, where $T$ is the (extended) Steiner tree owning $\tau$.
\label{coro:NeighborVertex}
\end{corollary}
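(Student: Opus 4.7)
The plan is to mirror the proof strategy of \Cref{coro:InnerTerminals}, which is already established as an immediate consequence of \Cref{lemma:ListTerminals}. Here, the analogous helper lemma is \Cref{lemma:ListNeighbors}, whose proof implicitly defines a candidate set $\wtilde{N}^{\inc}_{\gamma}(I)$ extracted from $\boundary^{\to}_{\gamma}(\ell_{\oc,I},T)$ and $\NeighborVertex^{\to}_{\gamma}(\ell_{\oc,I},T)$, and shows that $\wtilde{N}^{\inc}_{\gamma}(I)\subseteq N^{\inc}_{\gamma}(I)$ unconditionally, with equality whenever $|\wtilde{N}^{\inc}_{\gamma}(I)|\le f/\phi$. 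My proof will simply package this observation.

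First I would recall from the construction that $\wtilde{N}^{\inc}_{\gamma}(I)\subseteq \NeighborVertex^{\to}_{\gamma}(\ell_{\oc,I},T)$ by definition, since each element of $\wtilde{N}^{\inc}_{\gamma}(I)$ is read off as the $y$-coordinate of some edge $(x,y,\eta)\in \boundary^{\to}_{\gamma}(\ell_{\oc,I},T)$. Next I would invoke non-giantness: since $\tau\in {\cal T}^{\ext}_{\Gamma}$ is non-giant we have $|V^{\tmn}_{\gamma}(\tau)|+|N_{\gamma}(\tau)|\le f/\phi$, and in particular $|N^{\inc}_{\gamma}(I)|\le |V^{\tmn}_{\gamma}(\tau)|+|N_{\gamma}(\tau)|\le f/\phi$ because $N^{\inc}_{\gamma}(I)\subseteq N_{\gamma}(\tau)\cup V^{\tmn}_{\gamma}(\tau)$ as noted just above \eqref{eq:Ninc}. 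Consequently $|\wtilde{N}^{\inc}_{\gamma}(I)|\le f/\phi$ as well, and the second half of the proof of \Cref{lemma:ListNeighbors} yields $\wtilde{N}^{\inc}_{\gamma}(I)=N^{\inc}_{\gamma}(I)$. Chaining the two relations gives $N^{\inc}_{\gamma}(I)\subseteq \NeighborVertex^{\to}_{\gamma}(\ell_{\oc,I},T)$ as desired.

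For the $\NeighborVertex^{\larr}$ half I would appeal to symmetry: the labels $\NeighborEdge^{\larr}_{\gamma}(\cdot,T)$ and $\NeighborVertex^{\larr}_{\gamma}(\cdot,T)$ were defined by applying the same construction to the reversed Euler tour, so the very same argument applied to $r_{\oc,I}$ (reading leftward along $\Euler(T)$) produces a candidate set contained in $\NeighborVertex^{\larr}_{\gamma}(r_{\oc,I},T)$ that coincides with $N^{\inc}_{\gamma}(I)$ under the non-giantness bound.

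I do not anticipate a real obstacle: the work has essentially been done inside the proof of \Cref{lemma:ListNeighbors}, and the only new ingredient is translating the hypothesis ``$\tau$ is non-giant'' into the quantitative bound $|N^{\inc}_{\gamma}(I)|\le f/\phi$ per interval. The one small subtlety worth stating explicitly is why $N^{\inc}_{\gamma}(I)\subseteq N_{\gamma}(\tau)\cup V^{\tmn}_{\gamma}(\tau)$, which follows because any $y\in N^{\inc}_{\gamma}(I)$ is either a $\gamma$-terminal of $\tau$ (hence in $V^{\tmn}_{\gamma}(\tau)$) or an external $\gamma$-neighbor of $V^{\tmn}(I)\subseteq V^{\tmn}(\tau)$ (hence in $N_{\gamma}(\tau)$).
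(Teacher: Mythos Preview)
Your proposal is correct and follows essentially the same route as the paper's own proof: both recall the candidate $\wtilde{N}^{\inc}_{\gamma}(I)$ from the proof of \Cref{lemma:ListNeighbors}, use non-giantness to force $|\wtilde{N}^{\inc}_{\gamma}(I)|\le f/\phi$ and hence $N^{\inc}_{\gamma}(I)=\wtilde{N}^{\inc}_{\gamma}(I)\subseteq \NeighborVertex^{\to}_{\gamma}(\ell_{\oc,I},T)$, and then invoke symmetry for the $\larr$ direction. Your version simply spells out the intermediate bound $|N^{\inc}_{\gamma}(I)|\le |N_{\gamma}(\tau)|+|V^{\tmn}_{\gamma}(\tau)|$ a little more explicitly than the paper does.
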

\begin{proof}
This corollary follows the proof of \Cref{lemma:ListNeighbors}. Recall the definition of the candidate $\wtilde{N}^{\inc}_{\gamma}(I)$. When $\tau$ is non-giant, we have $|\wtilde{N}^{\inc}_{\gamma}(I)|\leq f/\phi$ and $N^{\inc}_{\gamma}(I)= \wtilde{N}^{\inc}_{\gamma}(I)$. By definition, $\wtilde{N}^{\inc}_{\gamma}(I)\subseteq \NeighborVertex^{\to}_{\gamma}(\ell_{\oc,I}, T)$, so $N^{\inc}_{\gamma}(I)\subseteq \NeighborVertex^{\to}_{\gamma}(\ell_{\oc,I}, T)$. $N^{\inc}_{\gamma}(I)\subseteq \NeighborVertex^{\larr}_{\gamma}(r_{\oc,I}, T)$ can be proved similarly.
\end{proof}

\subsubsection{Labels for Implementing $\EnumFromGiant(\tau_{y},\Gamma)$} 
\label{sect:CountingEdges}
Before introducing the labels, we first discuss the high level idea. For an undirected (multi-)graph $H$ and two vertex set $X,Y\subseteq V(H)$ ($X,Y$ may intersect), we use $\delta_{H}(X,Y)$ to denote the number of edges with one endpoint in $X$ and the other one in $Y$, but an edge with both endpoints in $X\cap Y$ will be counted twice. Namely, $\delta_{H}(X,Y) = \sum_{x\in X, y\in Y}\delta_{H}(x,y)$. 

Our goal is to compute $\delta_{\wtilde{G}^{\qry}_{\Gamma}\setminus F}(\bigcup_{\giant\ \tau_{x}\in{\cal T}^{\ext}_{\Gamma}} V^{\tmn}(\tau_{x}), V^{\tmn}_{\gamma}(\tau_{y}))$ for a non-giant subtree $\tau_{y}\in{\cal T}_{\gamma^{\ext}}$. We first rewrite it to get rid of the affected shortcut edges. Precisely, we have
\begin{align*}
\delta_{\wtilde{G}^{\qry}_{\Gamma}\setminus F}(\bigcup_{\giant\ \tau_{x}\in{\cal T}^{\ext}_{\Gamma}} V^{\tmn}(\tau_{x}), V^{\tmn}_{\gamma}(\tau_{y})) &= \sum_{\giant\ \tau_{x}\in{\cal T}^{\ext}_{\Gamma}}\delta_{\wtilde{G}_{\Gamma}}(V^{\tmn}(\tau_{x}), V^{\tmn}_{\gamma}(\tau_{y})) \\
&- \sum_{\giant\ \tau_{x}\in{\cal T}^{\ext}_{\Gamma}}\delta^{\aff}_{\wtilde{G}_{\Gamma}}(V^{\tmn}(\tau_{x}), V^{\tmn}_{\gamma}(\tau_{y})),
\end{align*}
where $\delta^{\aff}_{\wtilde{G}_{\Gamma}}(V^{\tmn}(\tau_{x}), V^{\tmn}_{\gamma}(\tau_{y}))$ is the number of affected edges in $\wtilde{\Gamma}$ connecting $V^{\tmn}(\tau_{x})$ and $V^{\tmn}_{\gamma}(\tau_{y})$. This equation holds
because each $\tau\in{\cal T}^{\ext}_{\Gamma}$ has $V^{\tmn}(\tau_{x})\subseteq Q^{\ext}_{\Gamma}\setminus F$ and by definition $\wtilde{G}^{\qry}_{\Gamma} = \wtilde{G}_{\Gamma}[Q^{\ext}_{\Gamma}]\setminus \{e\in E(\wtilde{G}_{\Gamma})\mid \text{$e$ is affected}\}$. For each term $\delta^{\aff}_{\wtilde{G}_{\Gamma}}(V^{\tmn}(\tau_{x}), V^{\tmn}_{\gamma}(\tau_{y}))$, its value is given by \Cref{lemma:CountArtificialEdge}. 

Therefore, it remains to compute $\sum_{\giant\ \tau_{x}\in{\cal T}^{\ext}_{\Gamma}}\delta_{\wtilde{G}_{\Gamma}}(V^{\tmn}(\tau_{x}), V^{\tmn}_{\gamma}(\tau_{y}))$. We further have
\begin{align*}
\sum_{\giant\ \tau_{x}\in{\cal T}^{\ext}_{\Gamma}} \delta_{\wtilde{G}_{\Gamma}}(V^{\tmn}(\tau_{x}), V^{\tmn}_{\gamma}(\tau_{y})) = &\sum_{x\in Q^{\ext}_{\Gamma}}\delta_{\wtilde{G}_{\Gamma}}(x, V^{\tmn}_{\gamma}(\tau_{y})) - \sum_{x\in Q^{\ext}_{\Gamma}\cap F}\delta_{\wtilde{G}_{\Gamma}}(x, V^{\tmn}_{\gamma}(\tau_{y}))\\
- &\sum_{\substack{\text{non-giant}\\\tau_{x}\in {\cal T}^{\ext}_{\Gamma}}}\delta_{\wtilde{G}_{\Gamma}}(V^{\tmn}(\tau_{x}), V^{\tmn}_{\gamma}(\tau_{y})),
\end{align*}
because $\{V^{\tmn}(\tau)\mid \tau\in{\cal T}^{\ext}_{\Gamma}\}$ partitions $Q^{\ext}_{\Gamma}\setminus F$. The second equation basically says that, to compute the number of edges from giant subtrees to $V^{\tmn}_{\gamma}(\tau_{y})$, we can first compute the number of edges from all vertices in $Q^{\ext}_{\Gamma}$ to $V^{\tmn}_{\gamma}(\tau_{y})$, and then subtract those edges starting from failed vertices and non-giant subtrees.
We will compute the three terms on the right hand side using \Cref{lemma:Term1,lemma:Term2,lemma:Term3} respectively.

\medskip

\noindent{\underline{Label $\ArtificialE(\gamma', \wtilde{G}_{\Gamma})$.}} For each component $\Gamma$ and each strict descendant $\Gamma'\prec\Gamma$, we construct a label $\ArtificialE(\gamma',\wtilde{G}_{\Gamma})$ which, for each artificial edges $\{x,y\}$ with type $\gamma'$ in $\wtilde{G}_{\Gamma}$, stores a tuple $(\profile(x), \profile(y), \id(\Gamma'))$. We store $\ArtificialE(\gamma', \wtilde{G}_{\Gamma})$ at each vertex $v\in \Gamma'$.

\begin{lemma}
\label{lemma:CountArtificialEdge}
Given $\id(\Gamma)$ and $\profile(\tau_{x}), \profile(\tau_{y})$ of $\tau_{x}\in{\cal T}^{\ext}_{\Gamma}$ and non-giant $\tau_{y}\in{\cal T}_{\gamma^{\ext}}$, we can access $\ArtificialE(\gamma',\wtilde{G}_{\Gamma})$ for each affected $\gamma'\prec \gamma$, and then compute $\delta^{\aff}_{\wtilde{G}_{\Gamma}}(V^{\tmn}(\tau_{x}), V^{\tmn}_{\gamma}(\tau_{y}))$. 
\end{lemma}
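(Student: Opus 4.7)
The plan is to establish accessibility of the labels and then describe how they are processed to compute the count. First I would argue that for every affected $\gamma' \prec \gamma$, the label $\ArtificialE(\gamma',\wtilde{G}_{\Gamma})$ is reachable: by definition of ``affected'', $\Gamma'$ intersects $\{s,t\}\cup F$, so there exists some $v\in \Gamma'\cap(\{s,t\}\cup F)$ whose label storage is available to the query algorithm. Since $\ArtificialE(\gamma',\wtilde{G}_{\Gamma})$ is replicated at every vertex of $\Gamma'$, we obtain it from $v$. The enumeration of all affected $\gamma'\prec \gamma$ is carried out by calling $\ListAffectedComps()$ and filtering with the ancestral test from \Cref{ob:BasicOperations} against $\id(\Gamma)$.

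Second, I would compute $V^{\tmn}_{\gamma}(\tau_{y})$ explicitly by invoking $\ListTerminals(\tau_{y},\Gamma)$. Since $\tau_{y}$ is non-giant, this call returns the full set (of size at most $f/\phi$) rather than a ``too large'' indicator, and each element is given by its profile.

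Third, I would walk through all affected artificial edges and classify them. For each affected $\gamma'\prec \gamma$, read $\ArtificialE(\gamma',\wtilde{G}_{\Gamma})$, which lists every artificial edge in $\wtilde{G}_{\Gamma}$ of type $\gamma'$ as a tuple $(\profile(x),\profile(y),\id(\Gamma'))$. These are precisely the edges counted by $\delta^{\aff}_{\wtilde{G}_{\Gamma}}$, because an edge of $\wtilde{G}_{\Gamma}$ is affected iff its type $\gamma'$ is affected. For each such edge $\{x,y\}$, I would determine, using $\IsTerminal(x,\tau_{x})$ and $\IsTerminal(y,\tau_{x})$, whether each endpoint lies in $V^{\tmn}(\tau_{x})$, and using direct profile comparison against the previously listed set, whether each endpoint lies in $V^{\tmn}_{\gamma}(\tau_{y})$.

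Fourth, the counter is incremented consistently with the definition $\delta_{H}(X,Y)=\sum_{x\in X,y\in Y}\delta_{H}(x,y)$: an edge with one endpoint in $V^{\tmn}(\tau_{x})$ and the other in $V^{\tmn}_{\gamma}(\tau_{y})$ contributes $1$; an edge with both endpoints in $V^{\tmn}(\tau_{x})\cap V^{\tmn}_{\gamma}(\tau_{y})$ contributes $2$; all other edges contribute $0$. Summing across all affected $\gamma'\prec\gamma$ yields $\delta^{\aff}_{\wtilde{G}_{\Gamma}}(V^{\tmn}(\tau_{x}),V^{\tmn}_{\gamma}(\tau_{y}))$. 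The only subtle point worth spelling out is the correct double-counting rule for edges lying inside the intersection, which is straightforward once the definition of $\delta$ is unfolded; there is no algorithmic obstacle, only bookkeeping.
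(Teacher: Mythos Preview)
Your proposal is correct and follows essentially the same approach as the paper: access each $\ArtificialE(\gamma',\wtilde{G}_{\Gamma})$ via a vertex in $\Gamma'\cap(\{s,t\}\cup F)$, list $V^{\tmn}_{\gamma}(\tau_{y})$ explicitly via $\ListTerminals$ (using that $\tau_y$ is non-giant), and for each affected edge test membership in $V^{\tmn}(\tau_x)$ via $\IsTerminal$ and in $V^{\tmn}_{\gamma}(\tau_y)$ by comparison against the listed set. Your fourth point about the double-counting convention is a detail the paper leaves implicit but is indeed the correct bookkeeping.
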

\begin{proof}
First, we can get $\id(\Gamma')$ for all affected component $\Gamma'\prec \Gamma$. For each affected $\Gamma'\prec\Gamma$, we then access $\ArtificialE_{\wtilde{G}_{\Gamma}}(\gamma_{\aff})$ from an arbitrary vertex in $\Gamma'\cap (F\cup\{s,t\})$ (such vertex exists because $\Gamma'$ is affected). In other words, we can obtain $(\profile(u),\profile(v))$ for all affected shortcut edges $\{u,v\}$ in $\wtilde{G}_{\Gamma}$.

Therefore, we just need to scan each affected $\wtilde{G}_{\Gamma}$-edge $\{u,v\}$, and decide the membership of each $u,v$ at each $V^{\tmn}(\tau_{x}),V^{\tmn}_{\gamma}(\tau_{y})$. The membership at $V^{\tmn}_{\gamma}(\tau_{y})$ can be easily decided because we can obtain $V^{\tmn}_{\gamma}(\tau_{y})$ explicitly by $\ListTerminals(\tau_{y},\Gamma)$ (note that $\tau_{y}$ is non-giant). To decide whether $u\in V^{\tmn}(\tau_{x})$ (resp. $v\in V^{\tmn}(\tau_{x})$), we just need to invoke $\IsTerminal(u,\tau_{x})$ (resp. $\IsTerminal(v,\tau_{x})$). 
\end{proof}

\medskip

\noindent{\underline{Label $\Degree(x, \wtilde{G}_{\Gamma}[\gamma^{\ext}])$.}} For each component $\Gamma\in{\cal C}$ and each occurrence $v_{\oc}\in \Euler(T_{\gamma^{\ext}})$ \underline{not} owned by $S$-vertices, we construct the following labels.

For each vertex $x\in \InnerTerminals_{\gamma}^{\to}(v_{\oc}, T_{\gamma^{\ext}})$, we let $\Degree(x,\wtilde{G}_{\Gamma}[\gamma^{\ext}])$ denote the degree of vertex $x$ in graph $\wtilde{G}_{\Gamma}[\gamma^{\ext}]$ (i.e. the subgraph of the sparsified shortcut graph $\wtilde{G}_{\Gamma}$ induced by the extended core $\gamma^{\ext}$). We store $\Degree(x,\wtilde{G}_{\Gamma}[\gamma^{\ext}])$ along with the vertex $x$ in the label $\InnerTerminals_{\gamma}^{\to}(v_{\oc}, T_{\gamma^{\ext}})$.

\medskip

\noindent{\underline{Labels $\Enum(\gamma',y,\wtilde{G}_{\Gamma})$ and $\Enum(x,y,\wtilde{G}_{\Gamma})$.}} For each component $\Gamma\in{\cal C}$ and each $\Gamma'\preceq\Gamma$ s.t. $\gamma'$ is not in $\gamma^{\ext}$, we construct the following labels. Note that $|N_{\gamma}(\Gamma')|\leq\lambda_{\nb}$ because $\gamma'$ is not in $\gamma^{\ext}$. 
\begin{itemize}
\item For each vertex $y\in N_{\gamma}(\Gamma')$, we let $\Enum(\gamma',y,\wtilde{G}_{\Gamma})$ be the total number of $\wtilde{G}_{\Gamma}$-edges connecting some vertex $x\in\gamma'$ and the vertex $\gamma$. We store $\Enum(\gamma',y,\wtilde{G}_{\Gamma})$ at each vertex $v\in \Gamma'$.
\item For each vertex $x\in\gamma'$ and vertex $y\in N_{\gamma}(\Gamma')$, let $\Enum(x,y,\wtilde{G}_{\Gamma})$ be the number of $\wtilde{G}_{\Gamma}$-edges connecting $x$ and $y$ (recall that $\wtilde{G}_{\Gamma}$ is a multigraph, so $\edgenum(x,y,\wtilde{G}_{\Gamma})$ may be larger than 1), and store it at vertex $x$.
\end{itemize}

\begin{lemma}
\label{lemma:Term1}
Given $\id(\Gamma)$ and $\profile(\tau_{y})$ of some non-giant subtree $\tau_{y}\in {\cal T}_{\gamma^{\ext}}$, we can compute $\delta_{\wtilde{G}_{\Gamma}}(Q^{\ext}_{\Gamma},V^{\tmn}_{\gamma}(\tau_{y}))$.
\end{lemma}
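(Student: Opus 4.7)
The plan is to decompose the quantity along the partition of $Q^{\ext}_{\Gamma}$. Observe that $Q^{\ext}_{\Gamma} = \gamma^{\ext} \sqcup \bigsqcup_{\gamma'} \gamma'$, where $\gamma'$ ranges over the affected strict descendant cores of $\gamma$ that are not contained in $\gamma^{\ext}$ (this is a disjoint union since cores partition $V(G)$ by Observation \ref{ob:VertexEH}). Hence
\[
\delta_{\wtilde{G}_{\Gamma}}(Q^{\ext}_{\Gamma}, V^{\tmn}_{\gamma}(\tau_{y})) = \delta_{\wtilde{G}_{\Gamma}}(\gamma^{\ext}, V^{\tmn}_{\gamma}(\tau_{y})) + \sum_{\substack{\text{affected}\ \gamma' \prec \gamma\\ \gamma' \not\subseteq \gamma^{\ext}}} \delta_{\wtilde{G}_{\Gamma}}(\gamma', V^{\tmn}_{\gamma}(\tau_{y})),
\]
and I would compute the two terms using the labels $\Degree(\cdot,\wtilde{G}_{\Gamma}[\gamma^{\ext}])$ and $\Enum(\gamma',\cdot,\wtilde{G}_{\Gamma})$ respectively.

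For the first term, since $V^{\tmn}_{\gamma}(\tau_{y}) \subseteq \gamma \subseteq \gamma^{\ext}$, every edge of $\wtilde{G}_{\Gamma}$ incident to some $y \in V^{\tmn}_{\gamma}(\tau_{y})$ with the other endpoint in $\gamma^{\ext}$ is counted exactly once in $\deg_{\wtilde{G}_{\Gamma}[\gamma^{\ext}]}(y)$, so
\[
\delta_{\wtilde{G}_{\Gamma}}(\gamma^{\ext}, V^{\tmn}_{\gamma}(\tau_{y})) = \sum_{y \in V^{\tmn}_{\gamma}(\tau_{y})} \Degree(y, \wtilde{G}_{\Gamma}[\gamma^{\ext}]).
\]
Because $\tau_{y}$ is non-giant, Corollary \ref{coro:InnerTerminals} applied to each interval $I \in \mathcal{I}_{\tau_{y}}$ (using Observation \ref{ob:AccessInnerTerminals} to access each $\InnerTerminals^{\to}_{\gamma}(\ell_{\oc,I}, T_{\gamma^{\ext}})$ label) gives us $V^{\tmn}_{\gamma}(\tau_{y})$ explicitly, and the $\Degree$ values are piggy-backed inside those same labels. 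So this term is computable.

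For the second term, I first use $\ListAffectedComps()$ together with Observation \ref{ob:BasicOperations} to enumerate exactly the $\gamma'$'s in the sum (checking ancestry and membership in $\gamma^{\ext}$). For each such $\gamma'$, the labels $\Enum(\gamma', y, \wtilde{G}_{\Gamma})$ for every $y \in N_{\gamma}(\Gamma')$ are stored at each vertex of $\Gamma'$, so they are accessible from any $v \in \Gamma' \cap (F \cup \{s,t\})$, which exists since $\Gamma'$ is affected. The key claim to verify is that any edge of $\wtilde{G}_{\Gamma}$ from $\gamma'$ to a vertex $y \in \gamma$ automatically has $y \in N_{\gamma}(\Gamma')$: for an original edge this is immediate; for a shortcut edge of type $\gamma''$, we must have $\Gamma'' \preceq \Gamma'$ (since one endpoint lies in $\gamma' \cap \hat{N}(\Gamma'')$), so the other endpoint lies in $\hat{N}_{\gamma}(\Gamma'') \subseteq N_{\gamma}(\Gamma'') \subseteq N_{\gamma}(\Gamma')$ (the last inclusion uses that $\gamma \cap \Gamma' = \emptyset$, so any edge from $\Gamma'' \subseteq \Gamma'$ to $y \in \gamma$ witnesses $y \in N(\Gamma')$). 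With this claim,
\[
\delta_{\wtilde{G}_{\Gamma}}(\gamma', V^{\tmn}_{\gamma}(\tau_{y})) = \sum_{y \in V^{\tmn}_{\gamma}(\tau_{y}) \cap N_{\gamma}(\Gamma')} \Enum(\gamma', y, \wtilde{G}_{\Gamma}),
\]
and iterating $y$ over the already-computed set $V^{\tmn}_{\gamma}(\tau_{y})$ (treating a missing label as $0$) yields the value.

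The main subtlety — and the part I expect to need the most care — is the membership claim $y \in N_{\gamma}(\Gamma')$ for shortcut-edge endpoints; everything else is either direct label access or routine arithmetic. A secondary bookkeeping concern is confirming that the decomposition really is disjoint so that no edge is counted twice, which follows cleanly from the partition property of cores.
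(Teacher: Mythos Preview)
Your proposal is correct and follows essentially the same approach as the paper: the same decomposition of $Q^{\ext}_{\Gamma}$ into $\gamma^{\ext}$ and the remaining affected cores, the same reduction of the $\gamma^{\ext}$-part to $\sum_{y}\Degree(y,\wtilde{G}_{\Gamma}[\gamma^{\ext}])$ via \Cref{coro:InnerTerminals} and \Cref{ob:AccessInnerTerminals}, and the same handling of each $\gamma'$-part via the $\Enum(\gamma',y,\wtilde{G}_{\Gamma})$ labels. Your explicit verification that any $\wtilde{G}_{\Gamma}$-edge from $\gamma'$ into $\gamma$ must land in $N_{\gamma}(\Gamma')$ (by tracing the type $\gamma''\prec\gamma'$ of a shortcut edge) is a welcome elaboration of what the paper merely asserts.
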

\begin{proof}
We further decompose the expression to be
\[
\delta_{\wtilde{G}_{\Gamma}}(Q^{\ext}_{\Gamma}, V^{\tmn}_{\gamma}(\tau_{y})) = \sum_{x\in \gamma^{\ext}}\delta_{\wtilde{G}_{\Gamma}}(x,V^{\tmn}_{\gamma}(\tau_{y})) + \sum_{\substack{\affected\ \gamma'\prec\gamma\\\text{s.t. $\gamma'$ is not in $\gamma^{\ext}$}}}\delta_{\wtilde{G}_{\Gamma}}(\gamma',V^{\tmn}_{\gamma}(\tau_{y})),
\]
because $\{\gamma'\mid \gamma'\prec \gamma,\text{ 
$\gamma'$ is affected and $\gamma'$ is not in $\gamma^{\ext}$}\}\cup\{\gamma^{\ext}\}$ forms a partition of $Q^{\ext}_{\Gamma}$.

For the first term on the right hand side, we can rewrite it as
\[
\sum_{x\in\gamma^{\ext}}\delta_{\wtilde{G}_{\Gamma}}(x, V^{\tmn}_{\gamma}(\tau_{y})) = \sum_{x\in\gamma^{\ext}}\delta_{\wtilde{G}_{\Gamma}[\gamma^{\ext}]}(x,V^{\tmn}_{\gamma}(\tau_{y})) = \sum_{y\in V^{\tmn}_{\gamma}(\tau_{y})}\Degree(y, \wtilde{G}_{\Gamma}[\gamma^{\ext}]).
\]
To compute it, we first obtain (the profiles of) all vertices in $V^{\tmn}_{\gamma}(\tau_{y})$ using $\ListTerminals(\tau_{y}, \Gamma)$ (because $\tau_{y}$ is non-giant). By \Cref{coro:InnerTerminals}, we have 
\[V^{\tmn}_{\gamma}(\tau_{y}) = \bigcup_{I\in{\cal I}_{\tau_{y}}} V^{\tmn}_{\gamma}(I)\subseteq \bigcup_{I\in {\cal I}_{\tau_{y}}}\InnerTerminals(\ell_{\oc,I}, T_{\gamma^{\ext}}).
\]
By \Cref{ob:AccessInnerTerminals}, we can access all these $\InnerTerminals(\ell_{\oc,I}, T_{\gamma^{\ext}})$, so we can further access $\Degree(y,\wtilde{G}_{\Gamma}[\gamma^{\ext}])$ for all $y\in V^{\tmn}_{\gamma}(\tau_{y})$.

Regarding the second term on the right hand side, note that for each vertex $y\in \gamma$ s.t. $y\notin N_{\gamma}(\Gamma')$, there is no $\wtilde{G}_{\Gamma}$-edge connecting $\gamma'$ and $y$. Therefore, we can compute
\[
\delta_{\wtilde{G}_{\Gamma}}(\gamma', V^{\tmn}_{\gamma}(\tau)) = \sum_{y\in V^{\tmn}_{\gamma}(\tau)\cap N_{\gamma}(\Gamma')} \edgenum_{\wtilde{G}_{\Gamma}}(\gamma',y),
\]
because we can access $\edgenum_{\wtilde{G}_{\Gamma}}(\gamma',y)$ for each $y\in N_{\gamma}(\Gamma')$ at any vertex in $\Gamma'\cap (F\cup\{s,t\})$ (note that $\Gamma'\cap (F\cup\{s,t\})$ is not empty because $\gamma'$ is affected).
\end{proof}

\medskip

\noindent{\underline{Labels $\incidentedge(x,\wtilde{G}_{\Gamma}[\gamma^{\ext}])$.}} For each component $\Gamma\in{\cal C}$, we construct the following labels. Recall the construction of the sparsified shortcut graph $\wtilde{G}_{\Gamma}$ in \Cref{sect:DivideAndConquer}, where we use $\wtilde{G}^{\sp}_{\gamma^{\ext}}$ to denote the simple graph corresponding to $\wtilde{G}_{\Gamma}[\gamma^{\ext}]$. Furthermore, it is guaranteed that $\wtilde{G}^{\sp}_{\gamma^{\ext}}$ has arboricity $\lambda_{\arbo}$. Namely, we have an orientation of $\wtilde{G}^{\sp}_{\gamma^{\ext}}$-edges s.t. each vertex $x\in \gamma^{\ext} = V(\wtilde{G}^{\sp}_{\gamma^{\ext}})$ has at most $\lambda_{\arbo}$ incident edges oriented outwards. 

For each vertex $x\in \gamma^{\ext}$, we define a label $\incidentedge(x, \wtilde{G}_{\Gamma}[\gamma^{\ext}])$ which, for all its incident edges $(x,y)\in E(\wtilde{G}^{\sp}_{\gamma^{\ext}})$ with orientation $x\to y$, stores $\id(x),\id(y)$ along with the number of $E(\wtilde{G}_{\Gamma}[\gamma^{\ext}])$-edges connecting $x$ and $y$ (i.e. $\delta_{\wtilde{G}_{\Gamma}[\gamma^{\ext}]}(x,y)$). %

For each vertex $x\in \gamma^{\ext}$, we will store $\incidentedge(x,\wtilde{G}_{\Gamma}[\gamma^{\ext}])$ at $x$. Besides, for each occurrence $v_{\oc}\in \Euler(T_{\gamma^{\ext}})$ not owned by $S$-vertices and each vertex $y\in \InnerTerminals^{\to}_{\gamma}(v_{\oc}, T_{\gamma^{\ext}})\subseteq \gamma^{\ext}$, we store $\incidentedge(y,\wtilde{G}_{\Gamma}[\gamma^{\ext}])$ along with the vertex $y$ in the label $\InnerTerminals^{\to}_{\gamma}(v_{\oc}, T_{\gamma^{\ext}})$.

\begin{observation}
\label{ob:IncidentEdge}
Given $\id(x),\id(y)$ of two vertices $x,y\in\gamma^{\ext}$, if we can access $\incidentedge(x,\wtilde{G}_{\Gamma}[\gamma^{\ext}])$ and $\incidentedge(y,\wtilde{G}_{\Gamma}[\gamma^{\ext}])$, we can compute $\delta_{\wtilde{G}_{\Gamma}[\gamma^{\ext}]}(x,y)$.
\end{observation}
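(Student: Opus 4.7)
The plan is to exploit the low-arboricity orientation of $\wtilde{G}^{\sp}_{\gamma^{\ext}}$ together with the symmetric access to both endpoint labels. By construction, each edge $\{x,y\}\in E(\wtilde{G}^{\sp}_{\gamma^{\ext}})$ receives a single orientation, either $x\to y$ or $y\to x$, and this orientation determines which of the two labels records the multiplicity $\delta_{\wtilde{G}_{\Gamma}[\gamma^{\ext}]}(x,y)$.

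First I would observe that $\delta_{\wtilde{G}_{\Gamma}[\gamma^{\ext}]}(x,y) > 0$ in the multigraph $\wtilde{G}_{\Gamma}[\gamma^{\ext}]$ if and only if $\{x,y\}$ is an edge of the underlying simple graph $\wtilde{G}^{\sp}_{\gamma^{\ext}}$. So the algorithm is: scan the entries of $\incidentedge(x,\wtilde{G}_{\Gamma}[\gamma^{\ext}])$ looking for an entry with second identifier $\id(y)$; if found, return the stored multiplicity. Otherwise, scan $\incidentedge(y,\wtilde{G}_{\Gamma}[\gamma^{\ext}])$ looking for an entry with second identifier $\id(x)$; if found, return the stored multiplicity. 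If neither label contains the relevant entry, return $0$.

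Correctness follows from a simple case analysis on the orientation. If $\{x,y\}\notin E(\wtilde{G}^{\sp}_{\gamma^{\ext}})$, then neither label stores an entry connecting $x$ and $y$, and indeed $\delta_{\wtilde{G}_{\Gamma}[\gamma^{\ext}]}(x,y)=0$. If $\{x,y\}\in E(\wtilde{G}^{\sp}_{\gamma^{\ext}})$ is oriented $x\to y$, then by the definition of $\incidentedge(x,\wtilde{G}_{\Gamma}[\gamma^{\ext}])$ an entry $(\id(x),\id(y),\delta_{\wtilde{G}_{\Gamma}[\gamma^{\ext}]}(x,y))$ is stored there and will be located by the first scan. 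The case $y\to x$ is symmetric. Since both labels are accessible by assumption and each contains at most $\lambda_{\arbo}$ entries, the whole procedure runs in $\poly(f,\log n)$ time.

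There is no real obstacle here; the observation is essentially a packaging statement that confirms the labels defined just above are adequate to query edge multiplicities. The only subtlety worth noting is that while $\wtilde{G}_{\Gamma}[\gamma^{\ext}]$ is a multigraph (edges carry types), the arboricity-based orientation is on the underlying simple graph $\wtilde{G}^{\sp}_{\gamma^{\ext}}$, which is why exactly one of the two labels stores the aggregated count $\delta_{\wtilde{G}_{\Gamma}[\gamma^{\ext}]}(x,y)$ rather than both labels partially accounting for parallel edges.
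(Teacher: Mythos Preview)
Your proposal is correct and matches the intended argument: the paper states this as an observation without proof, and the reasoning is exactly what you give—each simple edge $\{x,y\}$ is oriented one way, so the multiplicity $\delta_{\wtilde{G}_{\Gamma}[\gamma^{\ext}]}(x,y)$ is stored in exactly one of the two accessible labels, and if neither contains it the count is zero.
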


\begin{lemma}
\label{lemma:Term2}
Given $\id(\Gamma),\profile(\tau_{y}),\id(x)$ of some non-giant subtree $\tau_{y}\in {\cal T}_{\gamma^{\ext}}$ and some vertex $x\in \gamma^{\ext}\cap F$, we can compute $\delta_{\wtilde{G}_{\Gamma}}(x, V^{\tmn}_{\gamma}(\tau_{y}))$.
\end{lemma}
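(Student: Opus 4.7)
}

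The plan is to reduce the computation of $\delta_{\wtilde{G}_{\Gamma}}(x, V^{\tmn}_{\gamma}(\tau_{y}))$ to summing $\delta_{\wtilde{G}_{\Gamma}[\gamma^{\ext}]}(x,y)$ over $y\in V^{\tmn}_{\gamma}(\tau_{y})$ and then to invoke \Cref{ob:IncidentEdge} on each pair. First, since $x\in\gamma^{\ext}$ and $V^{\tmn}_{\gamma}(\tau_{y})\subseteq\gamma\subseteq\gamma^{\ext}$, every edge of $\wtilde{G}_{\Gamma}$ incident to $x$ and a vertex of $V^{\tmn}_{\gamma}(\tau_{y})$ lies inside $\wtilde{G}_{\Gamma}[\gamma^{\ext}]$, so
\[
\delta_{\wtilde{G}_{\Gamma}}(x, V^{\tmn}_{\gamma}(\tau_{y})) \;=\; \sum_{y\in V^{\tmn}_{\gamma}(\tau_{y})} \delta_{\wtilde{G}_{\Gamma}[\gamma^{\ext}]}(x,y).
\]

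Next, I would enumerate $V^{\tmn}_{\gamma}(\tau_{y})$ explicitly. Because $\tau_{y}$ is non-giant (so in particular $|V^{\tmn}_{\gamma}(\tau_{y})|\leq f/\phi$), the interface $\ListTerminals(\tau_{y},\Gamma)$ returns the profiles of all vertices in $V^{\tmn}_{\gamma}(\tau_{y})$, as guaranteed by \Cref{lemma:ListTerminals}. Moreover, by \Cref{coro:InnerTerminals} every such $y$ belongs to $\InnerTerminals^{\to}_{\gamma}(\ell_{\oc,I},T_{\gamma^{\ext}})$ for some $I\in{\cal I}_{\tau_{y}}$, and by \Cref{ob:AccessInnerTerminals} we can access each of these $\InnerTerminals^{\to}_{\gamma}(\ell_{\oc,I},T_{\gamma^{\ext}})$ lists from $\profile(\tau_{y})$. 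Since each such list stores, alongside each vertex $y$, the label $\incidentedge(y,\wtilde{G}_{\Gamma}[\gamma^{\ext}])$, we obtain access to $\incidentedge(y,\wtilde{G}_{\Gamma}[\gamma^{\ext}])$ for every $y\in V^{\tmn}_{\gamma}(\tau_{y})$.

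Finally, since $x\in F\subseteq \{s,t\}\cup F$, the label $\incidentedge(x,\wtilde{G}_{\Gamma}[\gamma^{\ext}])$ stored at vertex $x$ is accessible to the query algorithm (note that $\id(\Gamma)$ lets us select the correct copy, as $x$ may lie in several $\gamma^{\ext}$ across the hierarchy). Having both $\incidentedge(x,\wtilde{G}_{\Gamma}[\gamma^{\ext}])$ and $\incidentedge(y,\wtilde{G}_{\Gamma}[\gamma^{\ext}])$ for each $y\in V^{\tmn}_{\gamma}(\tau_{y})$, we apply \Cref{ob:IncidentEdge} to compute $\delta_{\wtilde{G}_{\Gamma}[\gamma^{\ext}]}(x,y)$ and sum over $y$.

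The only non-routine point is ensuring we can actually retrieve $\incidentedge(y,\cdot)$ for every relevant $y$; this is where the design choice to piggyback $\incidentedge(y,\wtilde{G}_{\Gamma}[\gamma^{\ext}])$ on top of the $\InnerTerminals^{\to}_{\gamma}$ labels (rather than only on vertices of $\{s,t\}\cup F$) is essential, because the vertices $y\in V^{\tmn}_{\gamma}(\tau_{y})$ are generally not failed. Once this access is available, the computation is a straightforward $O(|V^{\tmn}_{\gamma}(\tau_{y})|\cdot\lambda_{\arbo})=\poly(f,\log n)$ traversal, completing the proof.
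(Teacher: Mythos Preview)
Your proof is correct and follows essentially the same approach as the paper: reduce to $\wtilde{G}_{\Gamma}[\gamma^{\ext}]$ using $x\in\gamma^{\ext}$ and $V^{\tmn}_{\gamma}(\tau_{y})\subseteq\gamma^{\ext}$, enumerate $V^{\tmn}_{\gamma}(\tau_{y})$ via $\ListTerminals$, retrieve $\incidentedge(x,\cdot)$ from $x\in F$ and $\incidentedge(y,\cdot)$ via \Cref{coro:InnerTerminals} and \Cref{ob:AccessInnerTerminals}, and apply \Cref{ob:IncidentEdge}. The paper's proof additionally treats the case $x\in(Q^{\ext}_{\Gamma}\setminus\gamma^{\ext})\cap F$ (using the $\Enum(x,y,\wtilde{G}_{\Gamma})$ labels), which is needed for the summation $\sum_{x\in Q^{\ext}_{\Gamma}\cap F}\delta_{\wtilde{G}_{\Gamma}}(x,V^{\tmn}_{\gamma}(\tau_{y}))$ in the implementation of $\EnumFromGiant$, but that case lies outside the hypothesis of the lemma as stated.
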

\begin{proof}
Because $\tau_{y}$ is non-giant, we can obtain the identifiers of all vertices in $V^{\tmn}_{\gamma}(\tau_{y})$ using $\ListTerminals(\tau_{y}, \gamma)$. We consider the following two cases.

Suppose $x\in \gamma^{\ext}\cap F$. Then
\[
\delta_{\wtilde{G}_{\Gamma}}(x,V^{\tmn}_{\gamma}(\tau_{y})) = \delta_{\wtilde{G}_{\Gamma}[\gamma^{\ext}]}(x, V^{\tmn}_{\gamma}(\tau_{y})) = \sum_{y\in V^{\tmn}_{\gamma}(\tau_{y})} \delta_{\wtilde{G}_{\Gamma}[\gamma^{\ext}]}(x, y).
\]
since $x\in \gamma^{\ext}$ and $V^{\tmn}_{\gamma}(\tau_{y}) \subseteq \gamma^{\ext}$, so it suffices to compute the latter. We enumerate (the identifiers of) vertices $y\in V^{\tmn}_{\gamma}(\tau_{y})$, and then compute $\delta_{\wtilde{G}_{\Gamma}[\gamma^{\ext}]}(x,y)$ using \Cref{ob:IncidentEdge}. Note that we can access $\incidentedge(x,\wtilde{G}_{\Gamma}[\gamma^{\ext}])$ because $x\in F$, and access $\incidentedge(y,\wtilde{G}_{\Gamma}[\gamma^{\ext}])$ because $y\in V^{\tmn}_{\gamma}(\tau_{y})\subseteq \bigcup_{I\in{\cal I}_{\tau_{y}}}\InnerTerminals^{\to}_{\gamma}(\ell_{\oc,I}, T_{\gamma^{\ext}})$ and for all $I\in{\cal I}_{\tau_{y}}$, $\InnerTerminals^{\to}_{\gamma}(\ell_{\oc,I}, T_{\gamma^{\ext}})$ are accessible by \Cref{ob:AccessInnerTerminals}.

Suppose $x\in (Q^{\ext}_{\Gamma}\setminus \gamma^{\ext})\cap F$, which implies $\gamma_{x}$ (the core containing $x$) satisfies $\gamma_{x}\prec\gamma$ and $\gamma_{x}$ is not in $\gamma^{\ext}$. Thus, we can access $\Enum(x,y',\wtilde{G}_{\Gamma})$ for all $y'\in N_{\gamma}(\Gamma')$ at vertex $x$. Again, enumerate (the identifiers of) all vertices $y\in V^{\tmn}_{\gamma}(\tau_{y})$. If $y\in N_{\gamma}(\Gamma_{x})$, then $\delta_{\wtilde{G}_{\Gamma}}(x,y) = \edgenum_{\wtilde{G}_{\Gamma}}(x,y)$ by definition, otherwise $\delta_{\wtilde{G}_{\Gamma}}(x,y) = 0$ (for implementation, just compare $\id(y)$ with the $\id(y')$ in each entry $\Enum(x,y,\wtilde{G}_{\Gamma})$). Finally, we compute
$\delta_{\wtilde{G}_{\Gamma}}(x,V^{\tmn}_{\gamma}(\tau)) = \sum_{y\in V^{\tmn}_{\gamma}(\tau)}\delta_{\wtilde{G}_{\Gamma}}(x,y)$.

\end{proof}

\medskip

\noindent{\underline{Labels $\prefixSum^{\inc}_{\gamma}(v_{\oc}, T, y)$ and $\prefixSum^{\exc}_{\gamma}(v_{\oc}, T, y)$.}} For each component $\Gamma\in{\cal C}$, each (extended) Steiner tree $T\in\{T_{\gamma^{\ext}}\}\cup\{T_{\gamma'}\mid \gamma'\prec \gamma\}$, and each occurrence $v_{\oc}\in \Euler(T_{\gamma})$ \underline{not} owned by $S$-vertices, we construct the following labels. Let $\bar{\gamma}$ be the set of vertices with terminal nodes in $T$ (i.e. $\bar{\gamma} = \gamma^{\ext}$ if $T = T_{\gamma^{\ext}}$, and $\bar{\gamma} = \gamma'$ if $T = T_{\gamma'}$ for some $\gamma'\prec\gamma$).

For each vertex $y\in \NeighborVertex^{\to}_{\gamma}(v_{\oc}, T)$, we define label $\prefixSum^{\inc}_{\gamma}(v_{\oc}, T, y)$ to be the total number of edges in $E(\wtilde{G}_{\Gamma})$ that connect a vertex $x\in \bar{\gamma}$ and the vertex $y$, summing over all $x$ whose principal occurrence $\pi(x,T)$ is to the left of $v_{\oc}$ or exactly $v_{\oc}$. We store $\prefixSum^{\inc}_{\gamma}(v_{\oc}, T,y)$ along with the vertex $y$ in the label $\NeighborVertex^{\to}_{\gamma}(v_{\oc}, T)$.

Similarly, for each vertex $y\in \NeighborVertex^{\larr}_{\gamma}(v_{\oc}, T)$, we define $\prefixSum^{\exc}_{\gamma}(v_{\oc}, T, y)$ to be the total number of edges in $E(\wtilde{G}_{\Gamma})$ that connect a vertex $x\in \bar{\gamma}$ and the vertex $y$, summing over all $x$ whose $\pi(x,T)$ is to the left of $v_{\oc}$ (excluding $v_{\oc}$). We store $\prefixSum^{\exc}_{\gamma}(v_{\oc}, T,y)$ along with the vertex $y$ in the label $\NeighborVertex^{\larr}_{\gamma}(v_{\oc}, T)$.

\begin{lemma}
\label{lemma:Term3}
Given $\id(\Gamma),\profile(\tau_{x}),\profile(\tau_{y})$ of two non-giant subtrees $\tau_{x}\in{\cal T}^{\ext}_{\Gamma}$ and $\tau_{y}\in{\cal T}_{\gamma^{\ext}}$ ($\tau_{x}$ and $\tau_{y}$ can be the same subtree), we can compute $\delta_{\wtilde{G}_{\Gamma}}(V^{\tmn}(\tau_{x}), V^{\tmn}_{\gamma}(\tau_{y}))$.
\end{lemma}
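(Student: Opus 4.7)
The plan is to decompose $V^{\tmn}(\tau_x)$ along the intervals of $\tau_x$ and express each per-interval, per-vertex contribution to $\delta_{\wtilde{G}_\Gamma}$ as a difference of two prefix sums along the Euler tour of $T_x$, the (extended) Steiner tree owning $\tau_x$. The non-giant assumptions on both $\tau_x$ and $\tau_y$ will be needed in different places: the one on $\tau_y$ lets me enumerate its $\gamma$-terminals, and the one on $\tau_x$ lets me actually access the relevant prefix-sum labels.

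First I would invoke $\ListTerminals(\tau_y,\Gamma)$ to obtain (the profiles of) all vertices in $V^{\tmn}_\gamma(\tau_y)$ explicitly, which is possible because $\tau_y$ is non-giant. Next, using \Cref{ob:IntervalsAndSubtrees}, I would write $V^{\tmn}(\tau_x)=\bigcup_{I\in\mathcal{I}_{\tau_x}}V^{\tmn}(I)$ as a disjoint union, where $V^{\tmn}(I)$ collects exactly the vertices whose principal occurrence on $T_x$ lies strictly between $\pos(\ell_{\oc,I})$ and $\pos(r_{\oc,I})$. Writing $\bar\gamma$ for the set of vertices with terminal nodes in $T_x$, so that $V^{\tmn}(\tau_x)\subseteq\bar\gamma$, this yields
\[
\delta_{\wtilde{G}_\Gamma}(V^{\tmn}(\tau_x), V^{\tmn}_\gamma(\tau_y)) \;=\; \sum_{I\in\mathcal{I}_{\tau_x}}\sum_{y\in V^{\tmn}_\gamma(\tau_y)}\delta_{\wtilde{G}_\Gamma}(V^{\tmn}(I), y),
\]
and by the definitions of the prefix-sum labels the inner term equals
\[
\delta_{\wtilde{G}_\Gamma}(V^{\tmn}(I), y) \;=\; \prefixSum^{\exc}_\gamma(r_{\oc,I}, T_x, y) \;-\; \prefixSum^{\inc}_\gamma(\ell_{\oc,I}, T_x, y),
\]
since both sides count $\wtilde{G}_\Gamma$-edges $\{x,y\}$ with $x\in\bar\gamma$ whose principal occurrence on $T_x$ lies in the open interval $(\pos(\ell_{\oc,I}),\pos(r_{\oc,I}))$. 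The two prefix sums are stored alongside $y$ in the labels $\NeighborVertex^{\to}_\gamma(\ell_{\oc,I},T_x)$ and $\NeighborVertex^{\larr}_\gamma(r_{\oc,I},T_x)$, both of which are accessible by \Cref{ob:AccessNeighborVertex}. Whenever $y$ is missing from one of these two lists I would simply declare the corresponding prefix-sum value to match the other one, giving a contribution of $0$ -- this is correct precisely because a missing $y$ certifies that no such edge exists in the relevant range.

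The main obstacle is to justify the last sentence, namely that whenever $\delta_{\wtilde{G}_\Gamma}(V^{\tmn}(I),y)>0$ the vertex $y$ is actually retained in both $\NeighborVertex$-lists. I would mirror the Condition-by-Condition case analysis from the proof of \Cref{lemma:ListNeighbors}, but now applied to the full edge set of $\wtilde{G}_\Gamma$ rather than $\wtilde{G}_\Gamma^{\qry}$: since $\tau_x$ is non-giant, there are at most $f/\phi$ unaffected $\gamma$-neighbors of $V^{\tmn}(\tau_x)$, while at most $h(f+2)\lambda_{\nb}$ $\gamma$-vertices in total can be incident to any affected shortcut edge (by the shortcut construction together with \Cref{ob:AffectedNum}). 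These two bounds sum to the capacity $h(f+2)\lambda_{\nb}+f/\phi+1$ from Condition \ref{Cond:boundary4}, and Conditions \ref{Cond:boundary2}--\ref{Cond:boundary3} similarly cannot force the greedy procedure to discard a relevant $y$ because, for each $y$, at most $h(f+2)+1$ edges suffice to cover one edge per affected type plus one unaffected type. Combined, the greedy construction retains every $y$ with a genuine $\wtilde{G}_\Gamma$-edge from $V^{\tmn}(I)$, so the prefix-sum retrieval is always possible and the sum can be computed.
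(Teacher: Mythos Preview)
Your approach is correct but takes a genuinely different route from the paper's proof.

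Both proofs begin identically: enumerate $V^{\tmn}_\gamma(\tau_y)$ via $\ListTerminals$, decompose $V^{\tmn}(\tau_x)$ over intervals $I\in\mathcal{I}_{\tau_x}$, and express each per-interval contribution as a difference of $\prefixSum$ values. The divergence is in how you handle a vertex $y$ that might be absent from one of the $\NeighborVertex$ lists. The paper splits on whether $y\in N^{\inc}_\gamma(I)$: if so, \Cref{coro:NeighborVertex} guarantees both prefix sums are accessible; if not, the paper observes that every $\wtilde G_\Gamma$-edge from $V^{\tmn}(I)$ to $y$ must be \emph{affected}, and then falls back to explicitly enumerating affected shortcut edges via the $\ArtificialE$ labels (as in \Cref{lemma:CountArtificialEdge}). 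You instead prove a strictly stronger containment --- that \emph{every} $y\in\gamma$ adjacent to $V^{\tmn}(I)$ in $\wtilde G_\Gamma$ (not just in $\wtilde G^{\qry}_\Gamma$) appears in both $\NeighborVertex$ lists --- so that a missing $y$ certifies $\delta_{\wtilde G_\Gamma}(V^{\tmn}(I),y)=0$ and no fallback is needed. Your Condition-\ref{Cond:boundary4} argument is sound: at the moment any edge $(x,y,\eta)$ with $x\in V^{\tmn}(I)$ is processed, every vertex already in $\NeighborVertex^{\to}_\gamma(\ell_{\oc,I},T)$ is either one of the at most $h(f+2)\lambda_{\nb}$ $\gamma$-vertices incident to some affected shortcut edge, or else lies in $N^{\inc}_\gamma(I)$ (bounded by $f/\phi$ since $\tau_x$ is non-giant), so the capacity is never reached. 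Your phrasing for Conditions~\ref{Cond:boundary2}--\ref{Cond:boundary3} is slightly muddled; the clean statement is simply that if either of those conditions blocks $(x,y,\eta)$, then an edge with head $y$ is \emph{already} in $\NeighborEdge$, so $y$ is already in $\NeighborVertex$. With that fix, your argument is complete and in fact removes the need for the $\ArtificialE$ labels in this lemma.
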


\begin{proof}
Because $\tau_{x},\tau_{y}$ are non-giant subtrees, we can obtain the identifiers of vertices in $V^{\tmn}_{\gamma}(\tau_{y})$ using $\ListTerminals(\tau_{y}, \Gamma)$. Because we have 
\[
\delta_{\wtilde{G}_{\Gamma}}(V^{\tmn}(\tau_{x}), V^{\tmn}_{\gamma}(\tau_{y})) = \sum_{y\in V^{\tmn}_{\gamma}(\tau_{y})} \delta_{\wtilde{G}_{\Gamma}}(V^{\tmn}(\tau_{x}), y),
\]
it suffices to compute $\delta_{\wtilde{G}_{\Gamma}}(V^{\tmn}(\tau_{x}),y)$ for each $y\in V^{\tmn}_{\gamma}(\tau_{x})$. %
We can further rewrite
\[
\delta_{\wtilde{G}_{\Gamma}}(V^{\tmn}(\tau_{x}),y) = \sum_{I\in{\cal I}_{\tau_{x}}}\delta_{\wtilde{G}_{\Gamma}}(V^{\tmn}(I),y)
\]
and compute $\delta_{\wtilde{G}_{\Gamma}}(V^{\tmn}(I),y)$ for each $I\in{\cal I}_{\tau_{x}}$. 

In what follows, we focus on compute $\delta_{\wtilde{G}_{\Gamma}}(V^{\tmn}(\tau_{x}),y)$ for a fixed $I\in{\cal I}_{\tau_{x}}$ and a fixed vertex $y\in V^{\tmn}_{\gamma}(\tau_{y})$. We consider two cases. 

\begin{itemize}
\item If $y\in N^{\inc}_{\gamma}(I)$, we will compute 
\[
\delta_{\wtilde{G}_{\Gamma}}(V^{\tmn}(I),y) = \prefixSum^{\exc}_{\gamma}(r_{\oc,I}, T, y) - \prefixSum^{\inc}_{\gamma}(\ell_{\oc,I}, T, y).
\]
The reason is that, by \Cref{coro:NeighborVertex}, $N^{\inc}_{\gamma}(I)\subseteq \NeighborVertex^{\to}_{\gamma}(\ell_{\oc,I},T)$ and $N^{\inc}_{\gamma}(I)\subseteq \NeighborVertex^{\larr}_{\gamma}(r_{\oc,I},T)$, we can access $\prefixSum^{\inc}_{\gamma}(\ell_{\oc,I}, T, y)$ and $\prefixSum^{\exc}_{\gamma}(r_{\oc,I}, T, y)$ because we have $\id(y)$ and we can access $\NeighborVertex^{\to}_{\gamma}(\ell_{\oc,I}, T)$ and $\NeighborVertex^{\larr}_{\gamma}(r_{\oc,I}, T)$ by \Cref{ob:AccessNeighborVertex}.

\item If $y\notin N^{\inc}_{\gamma}(I)$, we claim that all edges in $\wtilde{G}_{\Gamma}$ connecting $y$ and $V^{\tmn}(\tau_{x})$ are \emph{affected}. To see this, assume for contradiction that there is an \emph{unaffected} edge $e\in E(\wtilde{G}_{\Gamma})$ connecting $y$ and $V^{\tmn}(\tau_{x})$. Recall that $\wtilde{G}^{\qry}_{\Gamma} = \wtilde{G}_{\Gamma}[Q^{\ext}_{\Gamma}]\setminus \hat{E}_{\Gamma,\aff}$ (the subgraph of $\wtilde{G}_{\Gamma}$ induced by vertices $Q^{\ext}_{\Gamma}$ excluding all affected edges). Because $V^{\tmn}(I)\subseteq Q^{\ext}_{\Gamma}$ and $y\in Q^{\ext}_{\Gamma}$, this edge $e$ is in $\wtilde{G}^{\qry}_{\Gamma}$, so $y\in N^{\inc}_{\gamma}(I)$ by its definition (see \Cref{eq:Ninc}), a contradiction.

Therefore, we can use a strategy similar to the proof of \Cref{lemma:CountArtificialEdge}. Concretely, we can obtain $(\profile(u),\profile(v))$ for all affected shortcut edges $\{u,v\}$ in $\wtilde{G}_{\Gamma}$, and for each of them, check if it contributes to $\delta_{\wtilde{G}_{\Gamma}}(V^{\tmn}(I),y)$ using $\IsTerminal$.
\end{itemize}

\end{proof}

\subsubsection{Space Analysis} Finally, we analyse the space of our labeling schemes. First, we bound the size of \underline{one} label for all different types of labels.

\begin{itemize}
\item \underline{$\id(\cdot)$.} It takes $O(\log n)$ bits.
\item \underline{$\pos(v_{\oc})$ of occurrences.} It takes $O(\log n)$ bits because the length of each $\Euler(T)$ is polynomial. 
\item \underline{$\profile(v_{\oc})$ of occurrences.} It takes $O(\log n)$ bits because it stores two $\id(\cdot)$ and one $\pos(v_{\oc})$.
\item \underline{$\occurrences(v,T)$.} It takes $O(h\Delta\log n)$ bits because it stores $O(h\Delta)$ occurrence-profiles. Recall that each vertex $v\notin S$ owns at most $O(h\Delta)$ occurrences in $\Euler(T)$ by \Cref{lemma:ExtLowDegree}.
\item \underline{$\profile(\Gamma)$ of components.} It takes $O(h\log n)$ bits because it stores $h$ $\id(\cdot)$ and $h$ one-bit indicators. Note that the number of ancestors of $\Gamma$ is $h$.
\item \underline{$\profile(v)$ of vertices.} It takes $O(h\log n)$ bits because it stores $h$  occurrence-profiles. Note that the number of (extended) Steiner trees $T\in \{T_{\gamma_{v}}\}\cup\{T_{\gamma^{\ext}}\mid v\in \gamma^{\ext}\}$ is $h$.
\item \underline{$\SuccTerminal(v_{\oc},T)$.} It takes $O(\log n)$ bits because it stores one occurrence-profile.
\item \underline{$\InnerTerminals^{\to}_{\gamma}(v_{\oc}, T_{\gamma^{\ext}})$.} It takes $O(hf\log n/\phi)$ bits because it stores $f/\phi + 1$ vertex-profiles.
\item \underline{$\NeighborEdge^{\to}_{\gamma}(v_{\oc},T)$ and $\NeighborEdge^{\larr}_{\gamma}(v_{\oc},T)$.} It takes 
\[
(h(f+2)+1)(h(f+2)\lambda_{\nb} + f/\phi+1)\cdot O(\log n)\text{ bits},
\]
because the number of edges in $\NeighborEdge^{\to}_{\gamma}(v_{\oc},T)$ is at most $(h(f+2)+1)(h(f+2)\lambda_{\nb} + f/\phi + 1)$, and for each edge we store one occurrence-position and two $\id(\cdot)$.
\item \underline{$\NeighborVertex^{\to}_{\gamma}(v_{\oc},T)$ and $\NeighborVertex^{\larr}_{\gamma}(v_{\oc},T)$.} It takes $O((hf\lambda_{\nb} + f/\phi)h\log n)$ bits because the number of vertices in $\NeighborVertex^{\to}_{\gamma}(v_{\oc},T)$ is at most $h(f+2)\lambda_{\nb} + f/\phi+1$, and for each vertex we store its profile.
\item \underline{$\ArtificialE(\gamma',\wtilde{G}_{\Gamma})$.} It takes $O(h^{2}\lambda^{2}_{\nb}\log n)$ bits by the following reasons. Because $\wtilde{G}_{\Gamma}$ is a subgraph of $\hat{G}_{\Gamma}$, the number of $\gamma'$-type edges in $\wtilde{G}_{\Gamma}$ is at most that in $\hat{G}_{\Gamma}$. Recall the construction of shortcut edges. The $\gamma'$-type edges in $\hat{G}_{\Gamma}$ forms a biclique between $\hat{N}(\Gamma')\cap \Gamma$ and $\hat{N}_{\gamma}(\Gamma')$. Combining $|\hat{N}(\Gamma')\cap \Gamma|\leq |\hat{N}(\Gamma')|\leq h\lambda_{\nb}$ and $|\hat{N}_{\gamma}(\Gamma')|\leq \lambda_{\nb}$, we have the number of $\gamma'$-type edges in $\wtilde{G}_{\Gamma}$ is $O(h\lambda^{2}_{\nb})$. Lastly, for each such edge, we store two vertex-profiles and one $\id(\cdot)$.

\item \underline{$\incidentedge(x,\wtilde{G}_{\Gamma}[\gamma^{\ext}])$.} It takes $O(\lambda_{\arbo}\log n)$ bits because there are $\lambda_{\arbo}$ edges and for each edge we store two $\id(\cdot)$ and one polynomially bounded number.

\item \underline{$\Degree(x,\wtilde{G}_{\Gamma}[\gamma^{\ext}])$,  $\Enum(\gamma',y,\wtilde{G}_{\Gamma})$, $\Enum(x,y,\wtilde{G}_{\Gamma})$, $\prefixSum^{\inc}_{\gamma}(v_{\oc}, T, y)$, $\prefixSum^{\exc}_{\gamma}(v_{\oc}, T, y)$.} Each of them takes $O(\log n)$ bits because they are polynomially bounded number.

\end{itemize}

Next, fixing a vertex $v\in V(G)$, we bound the number of labels at $v$ for each label-type.
\begin{itemize}
\item \underline{$\occurrences(v,T)$.} The number is $O(h)$ because the number of (extended) Steiner trees $T$ s.t. $V(T)$ has nodes corresponding to $v\notin S$ is $O(h)$ (recall that this label requires $v\notin S$). 

\item \underline{$\profile(\Gamma)$ of components.} The number is $h$ because the number of components containing $v$ is at most $h$.

\item \underline{$\profile(v)$ of vertices.} The number is one.

\item \underline{$\SuccTerminal(v_{\oc},T)$.} The number is $O(h^{2}\Delta + h)$ because the number of (extended) Steiner trees $T$ s.t. $V(T)$ has nodes corresponding to $v$ is $O(h)$, and each $\Euler(T)$ has at most $O(h\Delta)$ occurrences owned by $v\notin S$ (recall that this label requires $v_{\oc}$ not owned by $S$-vertices). The vertex $v$ will additionally store $O(h)$ such labels with $v_{\oc} = \sstart(T)\text{ or }\eend(T)$, because $v$ is in at most $h$ components, and each component $\Gamma$ corresponds to two (extended) Steiner trees $T_{\gamma}$ and $T_{\gamma^{\ext}}$.

\item \underline{$\InnerTerminals^{\to}_{\gamma}(v_{\oc}, T_{\gamma^{\ext}})$.} The number is $O(h^{2}\Delta + h)$ because the number of extended Steiner trees $T$ s.t. $V(T)$ has nodes corresponding to $v$ is at most $h$, and each $\Euler(T)$ has at most $O(h\Delta)$ occurrences owned by $v\notin S$ (recall that this label requires $v_{\oc}$ not owned by $S$-vertices). The additional term $O(h)$ is due to the case $v_{\oc} = \sstart(T_{\gamma^{\ext}})\text{ or }\eend(T_{\gamma^{\ext}})$.

\item \underline{$\NeighborEdge^{\to}_{\gamma}(v_{\oc},T)$, $\NeighborEdge^{\larr}_{\gamma}(v_{\oc},T)$, $\NeighborVertex^{\to}_{\gamma}(v_{\oc},T)$ and $\NeighborVertex^{\larr}_{\gamma}(v_{\oc},T)$.} For each of them, the number is $O((h^{2}\Delta+h)h)$ by the following reasons. First, the number of (extended) Steiner trees $T$ s.t. $V(T)$ has nodes corresponding to $v$ is $O(h)$. Second, each $\Euler(T)$ has at most $O(h\Delta)$ occurrences owned by $v\notin S$ (recall that this label requires $v_{\oc}$ not owned by $S$-vertices). The additional term $O(h^{2})$ is due to the case $v_{\oc} = \sstart(T_{\gamma^{\ext}})\text{ or }\eend(T_{\gamma^{\ext}})$. Finally, fixing $v_{\oc}$ and $T$, the number of eligible $\gamma$ is $h$.

\item \underline{$\ArtificialE(\gamma',\wtilde{G}_{\Gamma})$.} The number is $O(h^{2})$, because $v$ is in at most $h$ component $\Gamma'$, and each $\Gamma'$ has $h$ ancestors.

\item \underline{$\incidentedge(x,\wtilde{G}_{\Gamma}[\gamma^{\ext}])$.} The number is $O(h + (h^{2}\Delta + h)(f/\phi + 1))$ by the following reasons. The first term $h$ is because we store $\incidentedge(v,\wtilde{G}_{\Gamma}[\gamma^{\ext}])$ at $v$ for each $\gamma^{\ext}\ni v$, and there are $h$ such $\gamma^{\ext}$. The second term $(h^{2}\Delta + h)(f/\phi + 1)$ is because we store a label $\incidentedge(x,\wtilde{G}_{\Gamma}[\gamma^{\ext}])$ along with each vertex in $\InnerTerminals^{\to}_{\gamma}(v_{\oc}, T_{\gamma^{\ext}})$. We have shown that there are $O(h^{2}\Delta + h)$ $\InnerTerminals^{\to}_{\gamma}(v_{\oc}, T_{\gamma^{\ext}})$ stored at $v$, and each of them has $f/\phi + 1$ vertices.

\item \underline{$\Degree(x,\wtilde{G}_{\Gamma}[\gamma^{\ext}])$.} The number is $O((h^{2}\Delta + h)(f/\phi + 1))$ because each $\Degree(x,\wtilde{G}_{\Gamma})$ is stored along with a vertex in $\InnerTerminals^{\to}_{\gamma}(v_{\oc}, T_{\gamma^{\ext}})$.

\item \underline{$\Enum(\gamma',y,\wtilde{G}_{\Gamma})$.} The number is $O(h^{2}\lambda_{\nb})$ by the following reasons. First, there are at most $h$ components $\Gamma'$ containing $v$. Second, each $\Gamma'$ has at most $h$ ancestor $\Gamma$. Finally, fixing $\Gamma'$ and $\Gamma$, the number of eligible $y$ is $|N_{\gamma}(\Gamma')|\leq \lambda_{\nb}$.

\item 
\underline{$\Enum(x,y,\wtilde{G}_{\Gamma})$.} The number is $O(h\lambda_{\nb})$ by the following reasons. First, $v$ is in exactly one $\gamma'$. Second, the number of component $\Gamma$ s.t. $\Gamma'\preceq \Gamma$ is at most $h$. Finally, fixing $\Gamma'$ and $\Gamma$, the number of eligible $y$ is $|N_{\gamma}(\Gamma')|\leq \lambda_{\nb}$.

\item \underline{$\prefixSum^{\inc}_{\gamma}(v_{\oc}, T, y)$ and $\prefixSum^{\exc}_{\gamma}(v_{\oc}, T, y)$.} For each of them, the number is $O((h^{2}\Delta + h)h\cdot (hf\lambda_{\nb} + f/\phi+1))$ by the following reasons. Take $\prefixSum^{\inc}_{\gamma}(v_{\oc}, T, y)$ as an example. Recall that we store this label along with each vertex $y\in \NeighborVertex^{\to}_{\gamma}(v_{\oc},T)$. We have shown that there are $O((h^{2}\Delta + h)h)$ $\NeighborVertex^{\to}_{\gamma}(v_{\oc},T)$ stored at $v$, and each of them has $hf\lambda_{\nb} + f/\phi+1$ vertices.
\end{itemize}

Regarding the total space at an vertex $v\in V(G)$, observe that the bottleneck is the label $\NeighborEdge^{\to}_{\gamma}(v_{\oc},T)$ (also $\NeighborEdge^{\larr}_{\gamma}(v_{\oc},T)$) and the label $\incidentedge(x,\wtilde{G}_{\Gamma}[\gamma^{\ext}])$. The former takes total bits
\[
(h(f+2)+1)(h(f+2)\lambda_{\nb} + f/\phi+1)\cdot O(\log n)\cdot O((h^{2}\Delta + h)h) = O((h^{5}f^{2}\lambda_{\nb}\Delta + h^{4}f^{2}\Delta/\phi)\log n),
\]
and the latter takes total bits
\[
O(\lambda_{\arbo}\log n)\cdot O(h + (h^{2}\Delta + h)(f/\phi+1)) = O(h^{2}f\lambda_{\arbo}\Delta\log n/\phi).
\]
Plugging in $h = O(\log n),\Delta = O(1/\phi), \lambda_{\nb} = O(f\log n)$ and $\lambda_{\arbo} = O(f^2\log^{3}n)$, the total number of bits is $O(f^{3}(\log^{7}n/\phi + \log^{6}n/\phi^{2}))$.

\subsubsection{The Final Labeling Scheme} Recall that the labeling scheme described above is for one $S_{i}\in{\cal S}$. In fact, our final labeling scheme will be made up of $f+1$ separated (sub-)labeling schemes for the $f+1$ groups $S_{i}$ in ${\cal S}$. For each sub-scheme (corresponding to $S_{i}$) and each vertex $v\in V(G)$, we add the index $i$ to the labels at $v$ belonging to this sub-scheme (view these labels as a whole, so the index $i$ will only be added once), so that we can locate the correct sub-scheme if we know (the index $i$) of a valid $S_{i}$ for a query $\langle s,t,F\rangle$. Note that such indices takes $O((f+1)\log f)$ extra bits at $v$ because each index $i$ takes $O(\log f)$ bits and there are $f+1$ sub-schemes.

Furthermore, we store a label $\ccolor(v)$ at each vertex $v$, where $\ccolor(v)$ is the unique index $i$ s.t. $v\in S_{i}$. For a query $\langle s,t,F\rangle$, to find a valid $S_{i}$ (i.e. $S_{i}$ is disjoint from $F$), it suffices to look at $\ccolor(v)$ of all $v\in F$ and pick an index $i$ different from any of such $\ccolor(v)$. This label $\ccolor(v)$ takes extra $O(\log f)$ bits at a vertex $v$.

In summary, the space of our final labeling scheme is $O(f^{4}(\log^{7}n/\phi + \log^{6}/\phi^{2}))$. 
The query time is $\poly(f,\log n)$.

\section{Randomized Edge Fault Connectivity Labels}\label{sec:edge-faults}

Dory and Parter~\cite{DoryP21} presented two Monte Carlo labeling schemes for $f$ edge faults.  
The first uses $O(f + \log n)$ bits,
which is optimal for $f\leq \log n$,
while the second is an $O(\log^3 n)$-bit sketch
based on $\ell_0$-samplers, 
following  Ahn, Guha, and McGregor~\cite{AhnGM12} and 
Kapron, King, and Mountjoy~\cite{KapronKM13}.
In \cref{sec:simple-edge-fault-labeling} we present a simpler proof of the $O(f + \log n)$-bit sketch, with a slightly
faster construction time $O(m(1 + f/\log n))$, 
rather than $O(m(f+\log n))$~\cite{DoryP21}, 
and in \cref{sec:smaller-edge-fault-labels} 
we combine the two sketches to yield an $O(\log^2 n\log(f/\log^2 n))$-bit sketch, which improves on~\cite{DoryP21} 
whenever $f=n^{o(1)}$.

\subsection{A Simple Labeling Scheme}\label{sec:simple-edge-fault-labeling}

\begin{theorem}[Cf.~Dory and Parter~\cite{DoryP21}]\label{thm:edge-failure-long-labels}
    Fix any undirected graph $G=(V,E)$ and integer $f\geq 1$.
    There are randomized labeling functions $L_V : V \to \{0,1\}^{O(\log n)}$ and $L_E : E \to \{0,1\}^{f + O(\log n)}$ such that
    given any query $\ang{s,t,F}$, $F\subset E$, $|F|\leq f$, with high probability 
    we can determine if $s$ and $t$ are connected in $G-F$, by inspecting the labels
    $L_V(s),L_V(t)$, and $\{L_E(e) \mid e\in F\}$.
    The labeling can be constructed in $O(m(1 + f/\log n))$ time.
\end{theorem}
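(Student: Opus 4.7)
The plan is to use random XOR sketches anchored on a spanning tree. Compute a DFS spanning tree $T$ of $G$ and let $L_V(v)$ store the DFS pre-order and subtree-max of $v$, so that $v$'s subtree-interval is recoverable in $O(\log n)$ bits. Sample a uniform $r_e \in \{0,1\}^{f+c\log n}$ for each edge $e$, where $c$ is a sufficiently large constant. For a non-tree edge $e=\{u,v\}$ set $L_E(e) = (L_V(u),L_V(v),r_e)$, and for a tree edge $e$ set $L_E(e)=(L_V(u),L_V(v),\chi_e)$ where
\[
\chi_e \bydef \bigoplus_{\substack{e'\in E\setminus T \\ e \in \text{fund-cycle}(e')}} r_{e'}.
\]
Both label types fit in $f+O(\log n)$ bits.

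To answer $\ang{s,t,F}$, partition $F = F_T \sqcup F_N$ by label type. The lower-subtree DFS-intervals of the edges in $F_T$ form a laminar family whose ``cells'' are precisely the subtrees $T_1,\dots,T_k$ of $T\setminus F_T$ (with $k\le f+1$); given $L_V(s),L_V(t)$, a predecessor search over those interval endpoints locates $i_s,i_t$ with $s\in T_{i_s}$ and $t\in T_{i_t}$. For each $T_i$ compute
\[
\sigma_i \bydef \bigoplus_{e\in F_T\;\text{on}\;\partial T_i}\chi_e \;\oplus\; \bigoplus_{\substack{e'\in F_N \\ |e'\cap T_i|=1}} r_{e'}.
\]
An $\mathbb{F}_2$-inclusion-exclusion argument, using that $\chi_e$ is the cut-XOR of the lower subtree of $e$ and that each $T_i$ can be written as a symmetric difference of such subtrees, yields the key identity
\[
\sigma_i \;=\; \bigoplus_{\substack{e'\in (E\setminus T)\setminus F \\ |e'\cap T_i|=1}} r_{e'},
\]
so $\sigma_i$ is the $r$-XOR of \emph{surviving} non-tree edges that cross $\partial T_i$. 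Declare ``$s,t$ connected'' iff $i_s,i_t$ lie in the same part of the finest partition $\mathcal{P}^\star$ of $[k]$ whose parts have vanishing $\sigma$-XOR; $\mathcal{P}^\star$ is obtained by Gaussian elimination on $\sigma_1,\dots,\sigma_k$ viewed as vectors in $\mathbb{F}_2^{f+O(\log n)}$, in $\poly(f,\log n)$ time.

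For correctness, any union of components $C=\bigsqcup_{i\in S}T_i$ of $G-F$ has $\bigoplus_{i\in S}\sigma_i=0$ because no $E\setminus F$-edge leaves $C$; thus the true component partition is a refinement of $\mathcal{P}^\star$. Conversely, if $A\subseteq [k]$ is \emph{not} a union of components of $G-F$, then $\{e'\in(E\setminus T)\setminus F:|e'\cap \bigcup_{i\in A} T_i|=1\}$ is nonempty, so $\bigoplus_{i\in A}\sigma_i$ is the XOR over a nonempty subset of independent uniform strings and vanishes with probability $2^{-(f+c\log n)}$. Union-bounding over the at most $2^{f+1}$ candidate sets gives failure probability $2^{-\Omega(c\log n)} = 1/\poly(n)$, so w.h.p.\ $\mathcal{P}^\star$ coincides with the component partition.

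For the construction time, first compute $\rho(u)\bydef\bigoplus_{\{u,v\}\in E\setminus T} r_{\{u,v\}}$ for every $u$ in $O(m(1+f/\log n))$ word operations; then a single postorder DFS sets $\chi_e = \bigoplus_{u\in\mathrm{subtree}(w)}\rho(u)$ for the tree edge $e$ with lower endpoint $w$, at an additional cost of $O(n(1+f/\log n))$ operations. The main technical hurdle is the union bound over the $\approx 2^f$ candidate XOR-$0$ subsets, which is made feasible precisely by the $f+\Omega(\log n)$-bit label length; the main conceptual point is the $\mathbb{F}_2$-cut identity, which reduces the connectivity question on $G-F$ to linear algebra on $k \le f+1$ short vectors.
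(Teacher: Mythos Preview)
Your proposal is correct and essentially identical to the paper's proof: the same random $(f+c\log n)$-bit XOR sketches anchored on a spanning tree (with tree-edge labels storing the subtree cut-XOR, which is exactly your $\chi_e$), the same symmetric-difference reconstruction of $\sigma_i=\sk^0_{E-F}(T_i)$, Gaussian elimination to extract zero-XOR subsets, and the same $2^{f+1}\cdot 2^{-(f+c\log n)}$ union bound for correctness. Your construction-time argument (per-vertex XOR $\rho(u)$ followed by postorder accumulation) also matches the paper verbatim.
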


    Let $T^*$ be any spanning tree of $G$, 
    rooted at an arbitrary vertex $\root(T^*)$,
    and let $T^*_v$ be the set of vertices in the subtree rooted at $v$.
    Let $L_V(v) = (\min_{u\in T_v^*} \DFS(u), \max_{u\in T_v^*} \DFS(u))$ contain the first and last DFS-numbers 
    in the subtree rooted at $v$.  Given $L_V(u),L_V(v)$, we can determine whether $u,v$ 
    have an ancestor/descendant relationship.
    Let $\sk^0 : E \to \{0,1\}^{c\log n + f}$ be a uniformly random labeling of the edges.  This notation is overloaded for vertices and vertex-sets as follows.
    \begin{align*}
        \sk^0_{\hat{E}}(v) &= \bigoplus_{\substack{e\in \hat{E}-E(T^*)\\ \text{s.t. $v\in e$}}} \sk^0(e) &\mbox{bitwise XOR of $\hat{E}-E(T)$ edges incident to $v$}\\
        \sk^0_{\hat{E}}(S) &= \bigoplus_{v\in S} \sk^0_{\hat{E}}(v) & \mbox{for $S\subset V$.}
    \end{align*} 
    \begin{definition}[Edge Fault Tolerant Labels for \cref{thm:edge-failure-long-labels}]
        Fix any edge $e = \{u,v\}\in E$.  
        The label $L_E(e)$ contains
        \begin{itemize}
            \item $L_V(u), L_V(v)$, and a bit indicating 
            whether $e \in E(T^*)$.
            \item Either $\sk^0(e)$, if $e\not\in T^*$, 
                 or $\sk^0_E(T^*_v)$,
                 if $e\in E(T^*)$ with $v$ being the child of $u$ in $T^*$.
        \end{itemize}
    \end{definition}

    Note that since $\sk^0(e)\oplus \sk^0(e)=\mathbf{0}$,
    this last component of $L_E(\{u,v\})$ is the XOR of all 
    $\sk^0$-labels of edges crossing the cut $(T^*_v,V-T^*_v)$.
    As a special case, $\sk^0_E(T^*_{\root(T^*)}) = \sk^0_E(V) = \mathbf{0}$.\footnote{Here $\mathbf{0}$ refers to a zero-vector of the appropriate length.}

\begin{observation}[Homomorphism from Sets to Sketches]\label{obs:symdiff}
    We also let $\oplus$ be the symmetric difference of sets, i.e., $A\oplus B = (A-B)\cup (B-A)$. 
    If $A,B\subset V$, 
    $\sk^0(A\oplus B) = \sk^0(A)\oplus \sk^0(B)$.
\end{observation}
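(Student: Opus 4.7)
The plan is a direct unfolding of the definition of $\sk^0_{\hat{E}}(\cdot)$ on vertex sets, followed by XOR cancellation. Concretely, I will interpret the observation (whose notation is slightly overloaded) as the statement $\sk^0_{\hat{E}}(A\oplus B) = \sk^0_{\hat{E}}(A)\oplus\sk^0_{\hat{E}}(B)$ for any fixed edge set $\hat{E}$, where the $\oplus$ inside the argument on the left is symmetric difference of vertex sets, and the $\oplus$ on the right is bitwise XOR of binary strings.

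First I would expand the right-hand side using the definition
\[
\sk^0_{\hat{E}}(S)\;=\;\bigoplus_{v\in S}\sk^0_{\hat{E}}(v),
\]
which gives
\[
\sk^0_{\hat{E}}(A)\oplus\sk^0_{\hat{E}}(B)\;=\;\Bigl(\bigoplus_{v\in A}\sk^0_{\hat{E}}(v)\Bigr)\oplus\Bigl(\bigoplus_{v\in B}\sk^0_{\hat{E}}(v)\Bigr).
\]
Then, using the commutativity and associativity of bitwise XOR, I would regroup this as a single XOR over $A\cup B$ in which each $v$ appears once if $v\in A\triangle B$ and twice if $v\in A\cap B$. Applying $x\oplus x = \mathbf{0}$ to cancel each doubled contribution leaves exactly $\bigoplus_{v\in A\oplus B}\sk^0_{\hat{E}}(v)$, which by the same definition equals $\sk^0_{\hat{E}}(A\oplus B)$.

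There is essentially no obstacle to the argument itself: this is the standard fact that an XOR-based linear sketch is a homomorphism from $(\,2^{V},\triangle\,)$ to $(\,\{0,1\}^{*},\oplus\,)$. The only thing worth being slightly careful about is the notational overload noted in the excerpt (the symbol $\oplus$ simultaneously denoting set symmetric difference and bitwise XOR, and $\sk^0$ being overloaded from edges to vertices to vertex sets via the suppressed $\hat{E}$); making the meaning of each occurrence explicit at the start of the proof is the main expository step. I do not anticipate needing any property of the random sketch $\sk^0:E\to\{0,1\}^{c\log n + f}$ beyond the fact that its values are elements of a group under $\oplus$, so the proof works for any edge set $\hat{E}$.
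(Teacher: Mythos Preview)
Your proposal is correct and is exactly the natural justification; the paper states this as an observation without explicit proof, since it follows immediately from the definition $\sk^0_{\hat E}(S)=\bigoplus_{v\in S}\sk^0_{\hat E}(v)$ and the cancellation $x\oplus x=\mathbf{0}$.
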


\begin{proof}[Proof of \cref{thm:edge-failure-long-labels}]
    To answer a query $\ang{s,t,F}$ we first 
    identify those tree edges $F\cap E(T^*) = \{e_1,\ldots,e_{f_0}\}$,
    and let $T^*_0,\ldots,T^*_{f_0}$ 
    be the connected components of $T^*-F$.
    We then compute $\sk^0_E(T^*_i)$ for all $i\in [0,f_0]$ as follows.
    Suppose the deleted tree edges incident to $T^*_i$ 
    are $F_i = \{\{u_1,v_1\},\ldots,\{u_t,v_t\}\}$,
    with $v_j$ the child of $u_j$.
    We claim that $\sk^0_E(T^*_i)$ is
    \[
    \sk^0_E(T^*_i) = \bigoplus_{j\in [t]} \sk^0_E(T^*_{v_j}),
    \]
    which can be calculated from the labels of $F$.
    If $T^*_i$ is rooted at $v_1$, then
    \[
    T^*_i = T^*_{v_1} - (T^*_{v_2}\cup \cdots \cup T^*_{v_t}) = \bigoplus_{j\in[t]} T^*_{v_j},
    \]
    and if $T^*_i$ is rooted at $\root(T^*)$, then 
    $T^*_i = \overline{T^*_{v_1}\cup \cdots \cup T^*_{v_t}} = \overline{\bigoplus_{j\in [t]} T^*_{v_j}}$.
    Correctness follows from \cref{obs:symdiff}, and the 
    fact that $\sk^0(S) = \sk^0(S)\oplus \mathbf{0} = \sk^0(S)\oplus \sk^0(V) = \sk^0(\overline{S})$, for any $S\subset V$.  See \cref{fig:XOR-trees}.

    \begin{figure}[h]
        \centering\includegraphics[scale=0.4]{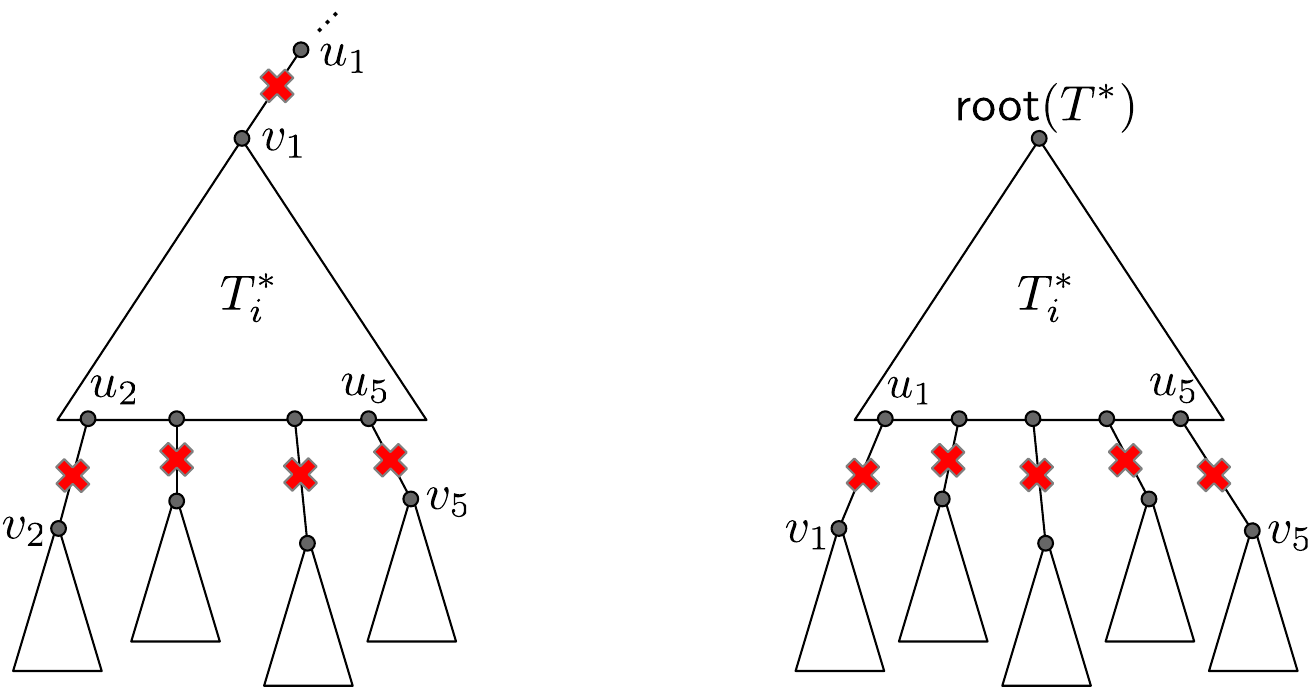}
        \caption{\label{fig:XOR-trees}}
    \end{figure}
    
    We can then delete the contribution of edges from $F-E(T^*)$
    by setting
    \[
    \sk^0_{E-F}(T^*_i) = \sk^0_E(T^*_i) \oplus \bigoplus_{\substack{\{u,v\}\in F - E(T^*)\\
    u\in T^*_i, v\not\in T^*_i}} \sk^0(\{u,v\}).
    \]
    Consider a set $S\subset V$, which is the union of some strict 
    subset of $\{T^*_0,\ldots,T^*_{f_0}\}$.
    Then $\sk^0_{E-F}(S) = \bigoplus_{T^*_i \subseteq S} \sk^0_{E-F}(T^*_i)$ is the XOR of the $\sk^0$-labels of edges crossing the cut $(S,\overline{S})$.
    Thus, if $S$ is the union of some connected components in 
    $G-F$, then $\sk^0_{E-F}(S) = \mathbf{0}$.  The converse is
    true with high probability, since the expected number of 
    \emph{false positive} zeros (over all non-trivial partitions of 
    $\{T^*_0,\ldots,T^*_{f_0}\}$) 
    is $(2^{f_0}-1)2^{-(c\log n + f)} < n^{-c}$.

    Using Gaussian elimination, 
    we find a subset $I\subset [f_0+1]$ for which 
    $S = \bigcup_{i\in I} T^*_i$ and $\sk_{E-F}(S)=\mathbf{0}$,
    then recursively look for more such subsets in $I$ and $[f_0+1]-I$.
    The leaves of this recursion tree enumerate all connected components of $G-F$, assuming no false positives.  
    We can then answer connectivity queries w.r.t.~$G-F$
    in $O(\min\{\frac{\log\log n}{\log\log\log n}, \frac{\log f}{\log\log n}\})$ time using predecessor search~\cite{PatrascuT06,PatrascuT14} over
    the set of $\DFS$-numbers of endpoints of edges in $F\cap E(T^*)$.

    Dory and Parter's~\cite{DoryP21} preprocessing algorithm
    takes linear time $O(m)$ to generate each bit of the labeling, or $O(m(f+\log n))$ time in total.  Assuming a machine with $(\log n)$-bit words, the random edge 
    labels $\{\sk^0(e) \mid e\in E\}$ can be generated in $O(m(1+f/\log n))$ time.  
    It takes $O(m(1+f/\log n))$ to form $\{\sk^0(v) \mid v\in V\}$,
    then another $O(n(1+f/\log n))$ to generate 
    $\{\sk^0(T_v^*) \mid v\in V\}$ with a postorder 
    traversal of $T^*$.
\end{proof}

\subsection{A Smaller Labeling Scheme}\label{sec:smaller-edge-fault-labels}

\begin{theorem}\label{thm:rand-edge-labeling}
    Fix any undirected graph $G=(V,E)$ and 
    integer $f\geq 2\log^2 n$.
    There are randomized labeling functions $L_V : V \to \{0,1\}^{O(\log n)}$ and $L_E : E \to \{0,1\}^{O(\log^2 n\log(f/\log^2 n))}$ such that
    given any query $\ang{s,t,F}$, $F\subset E$, $|F|\leq f$, with high probability 
    one can determine
    whether $s$ and $t$ are connected in 
    $G-F$ by inspecting only 
    $L_V(s),L_V(t),\{L_E(e) \mid e\in F\}$.
\end{theorem}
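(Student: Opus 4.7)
The plan is to combine the $(f + O(\log n))$-bit sketch from Theorem~\ref{thm:edge-failure-long-labels} with a Boruvka-style contraction driven by $\ell_0$-sampler sketches. Setting $r = \lceil \log((f+1)/\log^2 n) \rceil$, the scheme will use $r$ independent $O(\log^2 n)$-bit $\ell_0$-sampler labelings $\sk^1, \ldots, \sk^r$ and one additional Theorem~\ref{thm:edge-failure-long-labels}-style $O(\log^2 n)$-bit random labeling $\sk^{r+1}$; each edge $e = \{u,v\}$ would store $L_V(u), L_V(v)$, a tree/non-tree bit, and, for each $i \in [r+1]$, either $\sk^i_E(T^*_v)$ (if $e \in E(T^*)$ with $v$ the child of $u$) or $\sk^i(e)$ otherwise, exactly as in Theorem~\ref{thm:edge-failure-long-labels}. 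The total label size is then $O(\log n) + (r+1) \cdot O(\log^2 n) = O(\log^2 n \log(f/\log^2 n))$.

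The query algorithm I have in mind first forms the $f_0 + 1$ subtrees of $T^* - F$ as initial ``super-groups'' and then runs $r$ Boruvka phases. In phase $i$, for each current super-group $S$, it computes the adjusted sketch
\[
\sk^i_{E - F - E(T^*)}(S) \;=\; \bigoplus_{T^*_j \subseteq S} \sk^i_E(T^*_j) \;\oplus\; \bigoplus_{\substack{e \in F \setminus E(T^*) \\ e \text{ crosses } (S, V-S)}} \sk^i(e),
\]
both XORs being derivable from the labels of $F$ alone, decodes the result as an $\ell_0$-sample to obtain a uniform random non-tree, non-failed edge leaving $S$, and performs one Boruvka merge. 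A crucial observation is that edges chosen by Boruvka in previous phases are automatically canceled from subsequent XORs (their endpoints lie in the newly merged super-group), so the scheme never needs access to the labels of these ``Boruvka edges'' (which the query does not have). After the $r$ Boruvka phases, the algorithm computes $\sk^{r+1}_{E-F}(S)$ for every remaining super-group, tags those with a non-zero sketch as ``active'', and runs Theorem~\ref{thm:edge-failure-long-labels}-style Gaussian elimination on just the active sketches to identify the residual connectivity among them.

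For correctness, the standard Boruvka-halving argument yields that within each connected component of $G - F$ having $k_i$ initial subtrees, the super-group count after $r$ phases is $\max(1, k_i/2^r)$, so the total number of \emph{active} super-groups is at most $\sum_{k_i > 2^r} k_i/2^r \le (f+1)/2^r = O(\log^2 n)$. Moreover, a complete super-group (a full component of $G - F$) satisfies $\sk^{r+1}_{E-F}(S) = 0$ deterministically, since every tree edge crossing $(S, V-S)$ lies in $F$ and no non-tree non-failed edge crosses; conversely, any truly active $S$ has $\sk^{r+1}_{E-F}(S)$ nonzero except with probability $2^{-\Omega(\log^2 n)}$. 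Gaussian elimination on the $K_a = O(\log^2 n)$ active sketches then has false-positive probability at most $2^{K_a - \Omega(\log^2 n)} = n^{-\Omega(1)}$. The main obstacle I foresee is controlling the error across the $r = O(\log n)$ $\ell_0$-sampler invocations and showing that the active/complete tagging is reliable: instantiating each sampler with per-query failure probability $n^{-c}$ and union-bounding over the at most $f + 1$ super-groups queried in each of the $r$ phases yields overall success probability $1 - n^{-\Omega(c)}$, provided $c$ is a sufficiently large constant.
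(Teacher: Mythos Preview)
Your proposal has a genuine gap in the bit accounting for the $\ell_0$-samplers. You write that each $\sk^i$ is an $O(\log^2 n)$-bit sampler \emph{and} that you will ``instantiate each sampler with per-query failure probability $n^{-c}$.'' These two requirements are incompatible: an $\ell_0$-sampler consisting of $O(\log n)$ sampling levels, each storing an $O(\log n)$-bit identifier, uses $O(\log^2 n)$ bits but only succeeds with \emph{constant} probability (this is exactly the paper's Cut Sketch lemma). Boosting the success probability to $1-n^{-c}$ requires $\Theta(\log n)$ independent repetitions, i.e.\ $O(\log^3 n)$ bits per phase, which would give a total label size of $O(\log^3 n\log(f/\log^2 n))$---no better than Dory--Parter. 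Conversely, if you keep the $O(\log^2 n)$-bit samplers, your ``standard Bor\r{u}vka halving'' claim that the super-group count drops to $\max(1,k_i/2^r)$ is not a w.h.p.\ statement; it holds only in expectation, and the deviation analysis is nontrivial.

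The paper's proof confronts exactly this difficulty. It uses constant-probability samplers (one $O(\log^2 n)$-bit row per Bor\r{u}vka step) and proves, via a martingale concentration inequality (\cref{lem:Boruvka-tail-bound}), that with probability $1-\exp(-\Omega(|\mathcal{P}^*_i|))$ the number of non-isolated parts drops by a constant factor in each step. The point the paper stresses is that the success indicators $\{X_P\}_{P\in\mathcal{P}_i}$ are \emph{not independent}---two parts sharing a cut edge have correlated outcomes---so Chernoff does not apply directly; a Doob-martingale argument over the edge-rank variables, with a careful per-edge variance bound, is what makes the $O(\log^2 n)$-bit-per-step budget work. This concentration step is the missing idea in your plan.
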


The vertex labeling function $L_V$ of \cref{thm:rand-edge-labeling}
is the same as \cref{thm:edge-failure-long-labels}; 
only the edge-labels will be different.
As in \cite{DoryP21}, the edge-label sketches
contain the names of a set of edges XORed together.  
We need to be able to decide (w.h.p.) 
when this set has cardinality zero, one, or greater than one.
\cref{lem:singleton} improves the seed-length of
the singleton-detection schemes of
Ghaffari and Parter~\cite{GhaffariP16} and 
Gibb, Kapron, King, and Thorn~\cite{GibbKKT15}
from $O(\log^2 n)$ bits to $O(\log n)$ bits.
See \cref{app:singleton} for proof.

\begin{lemma}\label{lem:singleton}
There are functions 
$\uid : \{0,1\}^{O(\log n)} \times E \to \{0,1\}^{O(\log n)}$ and 
$\singleton : \{0,1\}^{O(\log n)}\times \{0,1\}^{O(\log n)} 
\to \{0,1\}^{O(\log n)} \cup\{\perp\}$ with the following properties.
\begin{itemize}
    \item Given $\uid_s(e)$ where $e=\{u,v\}$ and any seed $s\in\{0,1\}^{O(\log n)}$, we can recover $L_V(u),L_V(v)$ with probability 1.
    
    \item For any single-edge set $E' = \{e^*\} \subset E$
    and any seed $s\in\{0,1\}^{O(\log n)}$, 
    $\singleton_s(\bigoplus_{e\in E'} \uid_s(e)) = \uid_s(e^*)$ with probability 1.
    \item When $|E'|>1$, 
    $\Pr[\singleton_s(\bigoplus_{e\in E'} \uid_s(e))= \,\perp] = 1 - 1/\poly(n)$.  With probability $1/\poly(n)$, it may return a \emph{false positive} $\uid(e)$, where $e$ may or may not be in $E'$. (These probabilities are over the choice of the random seed $s$.)
\end{itemize}
\end{lemma}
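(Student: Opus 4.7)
The plan is to encode each edge as a randomized fingerprint appended to its endpoint labels. Let $q = 2^k$ with $k = \Theta(\log n)$ chosen so that $q$ dominates any $\poly(n)$ bound needed below, and fix a canonical injective encoding $\id(e) \in \F_q$ of the pair $(L_V(u), L_V(v))$ so that the endpoint labels can be decoded deterministically from $\id(e)$. Define
\[
    \uid_s(e) \;=\; \bigl(\id(e),\, h_s(\id(e))\bigr) \;\in\; \F_q \times \F_q,
\]
where $h_s : \F_q \to \F_q$ is a hash function drawn via seed $s$. The $\singleton_s$ function takes $(X, Y)$, recomputes $T' = h_s(X)$, and returns $(X, Y)$ if $T' = Y$, else $\perp$. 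The first two properties of the lemma are immediate: recovery follows from decoding $\id(e)$, and for a singleton $E' = \{e^*\}$ the XOR is $\uid_s(e^*)$ itself, which passes the check.

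For the third property, I will draw $h_s$ from an $\epsilon$-almost $(f+1)$-wise independent hash family with $\epsilon = 1/\poly(n)$, which can be seeded with $O(\log(q f / \epsilon)) = O(\log n)$ bits via Naor--Naor style constructions based on $\epsilon$-biased sample spaces. Fix any $E' = \{e_1, \ldots, e_t\}$ with $t \geq 2$, set $z_i = \id(e_i)$ and $Z = z_1 \oplus \cdots \oplus z_t$, and note that XOR coincides with $\F_q$-addition since $q$ is a power of two. The XOR of labels is $(Z, \bigoplus_i h_s(z_i))$, so the false-positive event $h_s(Z) = \bigoplus_i h_s(z_i)$ unfolds to
\[
    h_s(Z) \oplus h_s(z_1) \oplus \cdots \oplus h_s(z_t) \;=\; 0,
\]
an $\F_q$-linear equation on the hash evaluations at the arguments $\{z_1, \ldots, z_t, Z\}$. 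Under almost $(f+1)$-wise independence, the evaluations at any $r$ distinct arguments are within statistical distance $\epsilon$ of uniform in $\F_q^r$, so any nontrivial linear equation on them vanishes with probability at most $1/q + \epsilon = 1/\poly(n)$, as required.

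The main obstacle I expect is verifying that the equation above is genuinely nontrivial under all \emph{degenerate} configurations. Since the $z_i$ are distinct (as $\id$ is injective and edges are distinct), only $Z$ can possibly coincide with some $z_i$ in the multiset of arguments, in which case the corresponding terms cancel in pairs over $\F_q$ in characteristic $2$; a short case analysis based on the parity of the multiplicity of each argument shows that at least one hash term survives with a nonzero coefficient whenever $t \geq 2$, so the equation remains nontrivial. Beyond this, the proof is routine accounting: verify that $q, f, 1/\epsilon$ are all $\poly(n)$ so the Naor--Naor seed length remains $O(\log n)$, and confirm that the canonical encoding of $(L_V(u), L_V(v))$ fits into $\F_q$.
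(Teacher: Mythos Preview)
Your framework---set $\uid_s(e) = (\id(e), \sigma_s(e))$ for some signature $\sigma_s$ and declare a singleton when the XOR'd signature matches the signature of the XOR'd identifier---is the same as the paper's. The gap is in your choice of signature. You invoke $(f+1)$-wise almost independence, but the lemma places no upper bound on $|E'|$: in the intended application (cut sketches for $\ell_0$-sampling), $E'$ is the set of edges crossing a cut at a fixed rank level and may contain up to $m$ edges, far more than $f+1$. Your false-positive event involves the hash values at up to $t+1$ arguments $z_1, \ldots, z_t, Z$, and the step ``any nontrivial linear equation on them vanishes with probability at most $1/q + \epsilon$'' requires those values to be jointly $\epsilon$-close to uniform---which $(f+1)$-wise almost independence simply does not give once $t > f$. (As an extreme illustration of what goes wrong with structured hash families in characteristic $2$: any $\F_2$-linear $h$ satisfies $h(\bigoplus_j z_j) = \bigoplus_j h(z_j)$ identically, so your consistency test would accept every $E'$.) Raising the independence degree to $m+1$ would repair the argument but push the seed length well past $O(\log n)$.

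The paper's signature avoids bounded-wise independence altogether. It sets $\sigma(x)$ to be $c\log n$ parallel copies of Thorup's one-bit distinguisher $\mathsf{Sample}_{a,t}(x) = \ind{ax \bmod 2^w < t}$, which has the property that for \emph{any} nonempty set $S$ (of arbitrary size), the parity $\bigoplus_{x \in S} \mathsf{Sample}_{a,t}(x)$ equals $1$ with probability at least $1/8$. This per-coordinate guarantee is uniform in $|E'|$, so $c\log n$ independent copies drive the false-positive probability to $(7/8)^{c\log n} = 1/\poly(n)$, at the cost of an $O(\log^2 n)$-bit seed. The seed is then brought down to $O(\log n)$ bits by replacing the independent seeds with a length-$O(\log n)$ random walk on a constant-degree expander over the seed space and invoking the hitting property of expander walks.
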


For brevity the subscript $s$ is always omitted.

We construct an $\ell_0$-sampling sketch as in~\cite{KapronKM13,AhnGM12}
as follows.  Let $B=\Theta(\log(f/\log^2 n))$.
For each $i\in [B]$, $\rank_i : E \to \Z^+$ is a 
random rank assignment such 
that $\Pr(\rank_i(e)=j)=2^{-j}$, independent of other edges.
Define $\sk(e)$ to be a $B\times \log m$ matrix where
\begin{align*}
\sk(e)[i,j] &= \left\{
\begin{array}{ll}
    \uid(e) & \mbox{ if $\rank_i(e)=j$,}\\
    \mathbf{0}     & \mbox{ otherwise.}
\end{array}\right.
\intertext{Overloading the notation to vertices and vertex sets,}
\sk_{\hat E}(v) &=\bigoplus_{\substack{e \in \hat{E} - E(T)\\ \text{s.t.~$v\in e$}}} \sk_{\hat E}(e),\\
\sk_{\hat E}(S) &= \bigoplus_{v\in S} \sk_{\hat E}(v).
\end{align*}
Here $\oplus$ is applied entrywise to the sketch array.

\begin{definition}[Edge Fault Tolerant Labels for \cref{thm:rand-edge-labeling}]
    The label $L_E(e)$, $e=\{u,v\}$ has bit-length $O(\log^2 n\log(f/\log^2 n))$.  It consists of:
\begin{itemize}
    \item The random $O(\log n)$-bit seed $s$.
    
    \item The sketch from \cref{thm:edge-failure-long-labels}, 
    where $\sk^0 : E \to \{0,1\}^{\log^2 n}$ assigns $\log^2 n$-bit labels, independent of $f$.  Specifically, it includes 
    $L_V(u),L_V(v)$, and either $\sk^0(e)$, if $e\not\in T^*$,
    or $\sk^0(T^*(v))$, where $v$ is the child of $u$, if $e\in T^*$.
    \item Either $\sk(e)$, if $e\not\in T^*$, or $\sk_E(T^*_v)$, 
    if $e=\{u,v\}\in T^*$, where $v$ is the child of $u$ in $T^*$.
\end{itemize}
\end{definition}

\cref{thm:rand-edge-labeling} is 
proved in the remainder of this section.

\medskip 

Consider a query $\ang{s,t,F}$.
Removing the faulty tree edges $\{e_1,\ldots,e_{f_0}\} = F\cap E(T^*)$
results in a set of trees $\{T^*_0,\ldots,T^*_{f_0}\}$.  
Suppose the deleted tree edges incident to $T^*_i$ are 
$\{\{u_1,v_1\},\ldots,\{u_t,v_t\}\}$, 
with $v_j$ the child of $u_j$, and let $F_i\subset F$
be the set of non-tree edges with exactly one endpoint in $T^*_i$.
Then
\[
\sk_{E-F}(T^*_i) = \bigoplus_{j\in [t]} \sk_E(T^*_{v_j}) \oplus \bigoplus_{e\in F_i} \sk(e), \quad\qquad\text{and}\qquad\quad
\sk^0_{E-F}(T^*_i) = \bigoplus_{j\in [t]} \sk^0_E(T^*_{v_j}) \oplus \bigoplus_{e\in F_i} \sk^0(e),
\]
which can be computed directly from the labels 
$\{L_E(e) \mid e\in F\}$. 

At this point we run 
$B=O(\log(f/\log^2 n))$ probabilistic \emph{\Boruvka{} steps}.
We begin with the partition 
$\mathcal{P}_0 = \{T^*_0,\ldots,T^*_{f_0}\}$ of $V$ 
and maintain the loop invariant 
that after $i$ \Boruvka{} steps, 
for each part $P\in \mathcal{P}_{i}$, 
we have the sketches $\sk_{E-F}(P)$ and $\sk^0_{E-F}(P)$.
Here $\mathcal{P}_i$ is a coarsening of $\mathcal{P}_{i-1}$. 

In the $(i+1)$th \Boruvka{} step we attempt, 
for each $P\in \mathcal{P}_{i}$,  
to extract from $\sk_{E-F}(P)$ the $\uid(e)$ 
of an edge $e=\{u,u'\}$, with $u\in P$ and $u'\in P'\neq P$, 
then unify $P$ and $P'$ in $\mathcal{P}_{i+1}$.

\begin{lemma}[Cut Sketch~\cite{AhnGM12,KapronKM13}]
For any $i$ and $P\in\mathcal{P}_i$, 
with constant probability there exists a $j$
such that 
$\singleton(\sk_{E-F}(P)[i,j]) = \uid(e)$.
Conditioned on $\singleton$ returning a $\uid(e)$,
$e\in E-F$ is an edge crossing the cut $(P,\overline{P})$,
with probability $1-1/\poly(n)$.
\end{lemma}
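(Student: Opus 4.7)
The plan is to interpret $\sk_{E-F}(P)$ as a linear $\ell_0$-sketch of the non-tree, non-faulty edges crossing the cut $(P,\overline{P})$, then invoke standard $\ell_0$-sampling analysis together with the singleton detector of \cref{lem:singleton}.

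First I would identify the effective sketched set. Writing $\hat E = E \setminus E(T^*)$, any non-tree edge $e=\{u,v\}$ with both endpoints in $P$ appears in both $\sk_{\hat E}(u)$ and $\sk_{\hat E}(v)$ and cancels in $\bigoplus_{v\in P}\sk_{\hat E}(v)$; a non-tree edge with exactly one endpoint in $P$ appears once. After XOR-ing out the $\sk$-contributions of non-tree edges of $F$ (tracked explicitly in the query algorithm), we obtain $\sk_{E-F}(P) = \bigoplus_{e \in X} \sk(e)$, where $X$ is the set of non-tree, non-faulty edges crossing $(P,\overline{P})$. Unpacking entrywise, $\sk_{E-F}(P)[i,j] = \bigoplus_{e\in X,\, \rank_i(e)=j} \uid(e)$, so each entry is itself the XOR of $\uid$'s of a subset of $X$ determined by the rank function.

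Next I would handle the existence claim. Let $k=|X|$; the claim is vacuous if $k=0$ (then $P$ is already a connected component of $G-F$), so assume $k\geq 1$ and set $j^{*}=\lceil \log(2k)\rceil$. The number of edges of $X$ with $\rank_i(\cdot)=j^{*}$ is $\mathrm{Binomial}(k,2^{-j^{*}})$ with mean in $[1/4,1/2]$, and a one-line computation gives
\[
\Pr\bigl[\text{exactly one edge of $X$ has rank }j^{*}\bigr] \;=\; k\cdot 2^{-j^{*}}(1-2^{-j^{*}})^{k-1} \;\geq\; c
\]
for an absolute constant $c>0$. On that event, $\sk_{E-F}(P)[i,j^{*}] = \uid(e)$ for a unique $e\in X$, and by \cref{lem:singleton} the singleton detector returns $\uid(e)$ with probability one. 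This $e$ lies in $E-F$ and crosses $(P,\overline{P})$ by construction of $X$.

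Third, for the conditional correctness statement, the only way $\singleton$ can emit a $\uid$ that is not actually in $E-F$ or does not cross $(P,\overline{P})$ is to return a false-positive $\uid$ on an input that is not a genuine singleton. By \cref{lem:singleton}, a single invocation produces such a non-$\perp$ false positive with probability only $1/\poly(n)$. The sketch is queried at most polynomially many times over all Boruvka rounds: $O(B\log m)$ coordinates per part $P$, and at most $f+1 \leq \poly(n)$ distinct parts ever arise (the partition only coarsens). A union bound turns the per-invocation guarantee into a global one, so with probability $1-1/\poly(n)$ every non-$\perp$ output $\uid(e)$ comes from a genuinely singleton entry, forcing $e\in X \subseteq E-F$ with $e$ crossing $(P,\overline{P})$. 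The main obstacle is essentially cosmetic --- the lemma just fuses linear-sketch cancellation, the standard "peel at the right level" trick from $\ell_0$-sampling~\cite{AhnGM12,KapronKM13}, and \cref{lem:singleton}; the only care needed is bookkeeping the union bound across all sketch coordinates and parts that the Boruvka execution will ever touch.
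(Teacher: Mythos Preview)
Your proposal is correct and follows essentially the same route as the paper: identify the sketched set as the (non-tree, non-faulty) cut edges, pick the level $j$ matching $\log$ of the cut size so that a single edge survives with constant probability, and then invoke \cref{lem:singleton} plus a union bound over the $O(\log m)$ levels to control false positives. The only cosmetic difference is that you carry the union bound further (over all parts and all \Boruvka{} rounds), which goes slightly beyond what this lemma asserts for a single $(i,P)$ but is harmless.
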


\begin{proof}
    Suppose the number of edges in $E-F$
    crossing the cut $(P,\overline{P})$ 
    is in the range $[2^{j-1},2^j)$, 
    then with constant probability there is exactly 
    one such $e$ with $\rank_i(e) = j$, 
    in which case $\singleton(\sk_{E-F}(P)[i,j]) = \uid(e)$.
    By \cref{lem:singleton}, the probability that $\singleton(\sk_{E-F}(P)[i,j'])$ returns
    a false positive, for any $j'\in [\log m]$, 
    is $\log m / \poly(n) = 1/\poly(n)$.
\end{proof}

Let $X_P\in\{0,1\}$ be an indicator for the event 
that $\sk_{E-F}(P)$ reports
a valid edge in the $(i+1)$th 
\Boruvka{} step. 
If the $\{X_P\}_{P\in \mathcal{P}_i}$ were independent
then Chernoff-Hoeffding bounds would imply
that $\sum_P X_P$ is concentrated around its expectation,
meaning the number of non-isolated parts in the partitions 
would drop by a constant factor, w.h.p., 
so long as there are $\Omega(\log n)$ non-isolated parts.
However, the $\{X_P\}$ are \emph{not independent}, 
so we require a more careful analysis.

\begin{lemma}\label{lem:Boruvka-tail-bound}
Let $\mathcal{P}^*_i\subset \mathcal{P}_i$ be the parts
that are not already connected components of $G-F$.
With probability $1-\exp(-\Omega(|\mathcal{P}^*_i|))$,
$|\mathcal{P}^*_{i+1}| < 0.94 |\mathcal{P}^*_i|$.
\end{lemma}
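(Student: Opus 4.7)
Set $m = |\mathcal{P}_i^*|$ and let $s = \sum_{P \in \mathcal{P}_i^*} X_P$ be the number of parts that successfully extract a crossing edge in the $(i+1)$-st Borůvka step. The plan is in two stages: (i) convert successes into part-reductions deterministically, and then (ii) obtain an exponential-in-$m$ tail bound on $s$ despite the dependencies among the $X_P$'s.

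For the deterministic step I will show $|\mathcal{P}_{i+1}^*| \leq m - s/2$. The Borůvka step induces a functional graph $D$ on $\mathcal{P}_i^*$ (one arc out of every successful part, pointing to the part it merges with), and the new parts descending from $\mathcal{P}_i^*$ are exactly the weakly connected components of $D$. A short counting argument on functional graphs---every component is either a tree rooted at its unique unsuccessful vertex (with $v_c-1$ arcs), or contains a cycle and has all $v_c$ vertices successful (with $v_c$ arcs, and with the cycle of length at least $2$)---shows the number of components is at most $m - s + s/2 = m - s/2$. Hence it suffices to prove $s > 0.12\,m$ with probability $1 - \exp(-\Omega(m))$.

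For the expectation, condition on the $\uid/\singleton$ seed being ``good,'' an event of probability $1-1/\poly(n)$. The cut-sketch lemma then gives $\E[X_P] \geq c_0$ for some absolute constant $c_0 > 0$; the explicit $(1/e)$-type singleton probabilities from the $\ell_0$-sketch analysis yield $c_0 \geq 1/e - o(1)$, so $\E[s] \geq c_0\, m$ with $c_0 > 0.12$ by a comfortable margin.

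The main obstacle is concentration: the $\{X_P\}$ are \emph{not} mutually independent, because two parts $P, P'$ sharing a crossing edge $e$ both have sketches depending on the shared random variable $\rank_i(e)$. The key structural observation is that each independent random variable in the collection $\{\rank_i(e) : e \in E - F\}$ is ``read'' by at most two of the $X_P$'s, namely those for the two parts containing the endpoints of $e$. This is precisely the \emph{read-$2$} setting, so I will apply the Gavinsky--Lovett--Saks--Srinivasan read-$k$ Chernoff inequality (or an equivalent matching-based argument) to obtain
\[
\Pr\bigl[s \leq (1-\delta)\E[s]\bigr] \;\leq\; \exp\!\bigl(-\Omega(\delta^{2}\,\E[s])\bigr),
\]
and choosing $\delta$ with $(1-\delta)c_0 = 0.12$ then yields $\Pr[s \leq 0.12\,m] \leq \exp(-\Omega(m))$. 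A naive McDiarmid argument over the $|E-F|$ independent rank variables only delivers a bound of the form $\exp(-\Omega(m^2/|E-F|))$, which is too weak when $|E-F| \gg m$, and it is precisely the read-$2$ structure that rescues the exponential-in-$m$ tail required by the lemma.
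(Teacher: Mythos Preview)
Your plan is correct, and the deterministic reduction $|\mathcal{P}^*_{i+1}|\le m-s/2$ via the functional-graph argument matches what the paper does (tersely) with its ``each edge can be reported twice, by either endpoint'' step. The real divergence is in how you obtain the exponential-in-$m$ tail bound on $s$.

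The paper does \emph{not} invoke an off-the-shelf read-$k$ bound. Instead it fixes a concrete underestimate for $X_P$ --- the event that exactly one edge incident to $P$ has $\rank_i(e)\ge\lfloor\log\deg(P)\rfloor+1$ --- which depends only on the $\rank_i$ variables and has $\E[X_P]\ge 2e^{-2}>0.27$. It then runs a Doob martingale over the edge-rank variables and applies a Freedman-style inequality (their Theorem~4.5), doing an explicit case analysis to bound $\Var(\Delta_j\mid \mathbf{Y}_{j-1})<8/\min\{\deg(P),\deg(P')\}$ where $e_j$ joins parts $P,P'$. Summing over all edges gives total variance $V<8|\mathcal{P}_i^*|$ --- this is precisely where the read-$2$ structure enters their argument --- and $M=2$ since revealing one rank affects only two $X_P$'s.

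Your GLSS route is a legitimate and cleaner alternative: the read-$2$ structure is exactly the hypothesis of that theorem, and it delivers the same $\exp(-\Omega(m))$ tail without the variance bookkeeping. Two small points to tighten: (i) define $X_P$ so that it depends only on $\{\rank_i(e)\}$ and not on the $\singleton$ seed --- either the paper's specific underestimate, or the event ``some level $j$ isolates exactly one crossing edge,'' will do --- so that the read-$2$ hypothesis is literally satisfied before any conditioning; and (ii) the claimed $c_0\ge 1/e-o(1)$ is not quite what the computation gives; the paper's event yields $c_0>0.27$, still comfortably above the $0.12$ you need. What your approach buys is brevity and modularity; what the paper's buys is self-containment (no external citation) and explicit constants in the exponent.
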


Before proving \cref{lem:Boruvka-tail-bound} 
let us briefly explain how queries are handled.  
We use the $\sk$-sketches to implement 
$B = O(\log(f/\log^2 n))$ \Boruvka{} steps.
The success of these steps are independent, as step $i$ only 
uses $\sk_{E-F}(P)[i,\cdot]$, 
which depends only on $\rank_i$.
\cref{lem:Boruvka-tail-bound} guarantees that
the number of non-isolated components drops by 
a constant factor in each step, 
hence after $B$ \Boruvka{} steps the number of 
non-isolated parts is at most $f' = (\log^2 n)/2$ 
with probability $1-\exp(-\Omega(\log^2 n))$.

We declare $P$ \emph{isolated} if $\sk^0_{E-F}(P) = \mathbf{0}$,
then determine the connected components of 
the remaining components using 
Gaussian elimination on
the $\sk^0$-sketches 
$\{\sk^0_{E-F}(P) \mid P\in\mathcal{P}_B, P \mbox{ non-isolated}\}$, 
exactly as in \cref{thm:rand-edge-labeling}.
This last step succeeds with probability 
$1-\exp(-\Omega(\log^2 n))$
as $\sk^0$ assigns edge labels with 
$\log^2 n = f' + (\log^2 n)/2$ bits.

The proof of \cref{lem:Boruvka-tail-bound} uses the following martingale concentration inequality.

\begin{theorem}[See \cite{DubhashiPanconesi09} or \cite{ChungL06}]\label{thm:martingale-concentration}
Let $f=f(\mathbf{Y}_m)$ be some function of independent 
random variables $\mathbf{Y}_m = (Y_1,\ldots,Y_m)$.
Define 
$\Delta_i = \E(f \mid \mathbf{Y}_i) - \E(f \mid \mathbf{Y}_{i-1})$, 
$v_i = \lim\sup_{\mathbf{Y}_{i-1}} \Var(\Delta_i \mid \mathbf{Y}_{i-1})$,
$V=\sum_i v_i$, and $M$ be such that $\forall i. \Delta_i \leq M$. 
Then for any $\lambda > 0$,
\[
\Pr(\E(f)-f \geq \lambda), \Pr(f-\E(f) \geq \lambda) \leq \exp\left(-\frac{\lambda^2}{2(V + \lambda M/3)}\right).
\]
\end{theorem}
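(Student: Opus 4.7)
The plan is to derive the claimed concentration bound via the standard exponential moment (Chernoff/Freedman) argument applied to the Doob martingale $Z_i = \E(f \mid \mathbf{Y}_i)$. Since $\mathbf{Y}_0$ is trivial we have $Z_0 = \E(f)$, and $Z_m = f$ because $f$ is a (measurable) function of $\mathbf{Y}_m$. Thus $f - \E(f) = \sum_{i=1}^m \Delta_i$ is a sum of martingale differences, i.e. $\E(\Delta_i \mid \mathbf{Y}_{i-1}) = 0$ by the tower property, and the conditional variance $\Var(\Delta_i \mid \mathbf{Y}_{i-1})$ is bounded by $v_i$ by hypothesis.

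The heart of the proof is a one-step MGF bound: for every $0 \le t < 3/M$,
\[
\E\!\bigl[e^{t\Delta_i}\,\big|\,\mathbf{Y}_{i-1}\bigr]\;\le\;\exp\!\left(\frac{t^2\,v_i}{2(1-tM/3)}\right).
\]
To derive this, I would use the elementary calculus inequality
\[
e^{y}-1-y\;\le\;\frac{y^2/2}{1-y/3}\qquad(y<3),
\]
applied pointwise to $y = t\Delta_i \le tM$, then take conditional expectation and invoke $\E(\Delta_i\mid\mathbf{Y}_{i-1})=0$ together with $\E(\Delta_i^2\mid\mathbf{Y}_{i-1}) \le v_i$. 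Iteratively conditioning on $\mathbf{Y}_{m-1},\mathbf{Y}_{m-2},\ldots,\mathbf{Y}_0$ and applying the tower rule multiplies these one-step bounds to give
\[
\E\!\bigl[e^{t(f-\E f)}\bigr]\;\le\;\exp\!\left(\frac{t^2 V}{2(1-tM/3)}\right).
\]

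Markov's inequality now yields, for any $0<t<3/M$,
\[
\Pr\bigl(f-\E f \ge \lambda\bigr)\;\le\;\exp\!\left(-t\lambda+\frac{t^2 V}{2(1-tM/3)}\right),
\]
and the choice $t = \lambda/(V+\lambda M/3)$ (which satisfies $t<3/M$) makes the exponent equal to $-\lambda^2/\bigl(2(V+\lambda M/3)\bigr)$. This is the upper-tail bound. For the lower tail, I apply the same argument to the negated Doob increments $-\Delta_i$; the conditional mean is still $0$, the conditional variance is unchanged, and the one-sided bound $\Delta_i\le M$ is used in the sense of its symmetric counterpart (following Chung--Lu and Dubhashi--Panconesi, the theorem is customarily invoked with $M$ serving as an a.s.\ bound for both $\Delta_i$ and $-\Delta_i$).

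The main (and essentially only) obstacle is the one-step MGF bound above, since everything else is the mechanical Chernoff/Markov template. The key calculus fact $e^{y}-1-y \le \tfrac{y^2/2}{1-y/3}$ is standard but must be verified carefully: one checks at $y=0$ that both sides and their first derivatives agree, and that the ratio of second derivatives is monotone on $(-\infty,3)$. Given this inequality, the rest of the argument is routine, which is why the statement is quoted from \cite{DubhashiPanconesi09,ChungL06} rather than reproven.
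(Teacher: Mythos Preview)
The paper does not prove this theorem; it is stated as a black box with citations to \cite{DubhashiPanconesi09} and \cite{ChungL06}. Your proposal is the standard Freedman/Bernstein-type martingale argument and is correct, including the optimization of $t$ and your observation that the lower-tail bound tacitly requires $-\Delta_i \le M$ as well.
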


\begin{proof}[Proof of \cref{lem:Boruvka-tail-bound}]
We order the edges in $E-F$ 
with endpoints in distinct components 
arbitrarily as $e_1,\ldots,e_m$ and let 
$Y_j = \rank_i(e_j)$.
Define $f(\mathbf{Y}_m) = \sum_{P\in \mathcal{P}_i} X_P$, 
where $X_P$ is an indicator for the event that
\emph{exactly} one edge $e$ incident to $P$ has
$\rank_i(e) \geq \floor{\log\deg(P)}+1$, where 
$\deg(P)$ is the number of $E-F$ 
edges with exactly one endpoint in $P$.
Clearly $f$
is an \emph{underestimate} 
for the number of components
$P$ that isolate a single incident edge in 
$\sk_{E-F}(P)[i,\cdot]$.

For $p=\Pr(\rank_i(e) \geq \floor{\log\deg(P)}+1) = 2^{-\floor{\log\deg(P)}}$, 
we have
$\E(X_P)=\deg(P)p(1-p)^{\deg(P)-1} 
\geq \deg(P)pe^{-\deg(P)p}$.  
In the interval $[1,2)$ this is minimized
when $\deg(P)p \to 2$, so $\E(X_P) \geq 2e^{-2} > 0.27$.

Suppose $e_j$ joins parts $P,P'\in\mathcal{P}_i$.
Note that revealing $Y_j=\rank_i(e_j)$
can only change the conditional probability of 
$X_P$ and $X_{P'}$.  In particular, $\Delta_j \leq M \bydef 2$
and
\begin{align}
\Var(\Delta_j \mid \mathbf{Y}_{j-1}) &=
 \E((X_P - \E(X_P \mid \mathbf{Y}_{j-1}))^2 \mid \mathbf{Y}_{j-1})
 +
 \E((X_{P'} - \E(X_{P'} \mid \mathbf{Y}_{j-1}))^2 \mid \mathbf{Y}_{j-1})\nonumber\\
 &\qquad +\, 
 2\E((X_P - \E(X_P \mid \mathbf{Y}_{j-1}))(X_{P'} - \E(X_{P'} \mid \mathbf{Y}_{j-1})) \mid \mathbf{Y}_{j-1}),\label{eqn:variance}
\end{align}
where the expectations are over choice of $Y_j$.
Suppose that $\mathbf{Y}_{j-1}$ reveals the levels of all but $g$ edges incident to $P$, and among those revealed, $b$ are at level at least $\floor{\log\deg(P)}+1$.

\paragraph{Case $b\geq 2$.} Then it is already known that $X_P=0$.

\paragraph{Case $b=1$.} Then $X_P=1$ iff the remaining $g$ 
edges choose levels at most $\floor{\log\deg(P)}$.
\begin{align*}
    &\E((X_P - \E(X_P \mid \mathbf{Y}_{j-1}))^2 \mid \mathbf{Y}_{j-1})\\ 
    &= p\cdot (0 - (1-p)^{g})^2 
    + (1-p)\cdot ((1-p)^{g-1} - (1-p)^{g})^2 \\
    &\leq p(1-p)^2 + (1-p)p^2 = p(1-p) \tag{maximized at $g=1$.}
\end{align*}

\paragraph{Case $b=0$.} Then $X_P=1$ iff exactly one of the remaining $g$ edges chooses a level at least $\floor{\log\deg(P)}+1$.
\begin{align*}
    &\E((X_P - \E(X_P \mid \mathbf{Y}_{j-1}))^2 \mid \mathbf{Y}_{j-1})\\
    &= p\cdot ((1-p)^{g-1} - gp(1-p)^{g-1})^2
    + (1-p)\cdot ((g-1)p(1-p)^{g-2} - gp(1-p)^{g-1})^2\\
    &\leq p\cdot (1-p)^2 + (1-p)p^2 = p(1-p) \tag{maximized at $g=1$.}
\end{align*}

Let $p = 2^{-\floor{\log\deg(P)}}$ and $q = 2^{-\floor{\log\deg(P')}}$ where $q \leq p$.
From the calculations above, which are maximized at $g=1$,
we can upper bound the last term of \cref{eqn:variance} as
follows.  Note: $Y_j \in [1,\floor{\log\deg(P)}]$, $(\floor{\log\deg(P)},\floor{\log\deg(P')}]$, and $(\floor{\log\deg(P')},\infty)$ 
with probability $1-p,p-q,$ and $q$, respectively.
\begin{align*}
    &\E((X_P - \E(X_P \mid \mathbf{Y}_{j-1}))(X_{P'} - \E(X_{P'} \mid \mathbf{Y}_{j-1})) \mid \mathbf{Y}_{j-1})\\
    &= q(1-p)(1-q) 
    + (p-q)(1-p)q
    + (1-p)pq \\
    &= (1-p)q(1-2q+2p)
\intertext{and therefore}
\Var(\Delta_j \mid \mathbf{Y}_{j-1}) 
&\leq p(1-p) + q(1-q) + 2(1-p)q(1-2q+2p)\\
&< 4p.
\end{align*}
In other words, $\Var(\Delta_j \mid \mathbf{Y}_{j-1}) < 4/2^{\floor{\min\{\log\deg(P),\log\deg(P')\}}} < 8/\min\{\deg(P),\deg(P')\}$, 
and therefore 
$V = \sum_j \Var(\Delta_j \mid \mathbf{Y}_{j-1}) < 8|\mathcal{P}^*_i|$.
By \cref{thm:martingale-concentration},
\begin{align*}
\Pr(f < \E(f) - \lambda) < \exp\left(-\frac{\lambda^2}{2(V + \lambda M/3)}\right) = \exp\left(-\frac{\lambda^2}{16|\mathcal{P}^*_i| + 4\lambda/3}\right).
\end{align*}
Setting $\lambda=\E(f)/2 > 0.13|\mathcal{P}^*_i|$, 
we conclude that
$\Pr(f \geq \E(f)/2 > 0.13|\mathcal{P}^*_i|) > 1 - \exp(-\Omega(|\mathcal{P}^*_i|))$.
This implies the number of \emph{distinct} edges reported 
in the $i$th \Boruvka{} step is at least $0.13|\mathcal{P}^*_i|/2 = 0.065|\mathcal{P}^*_i|$, as each edge can be reported twice, 
by either endpoint.  Hence $|\mathcal{P}^*_{i+1}|\leq 0.935|\mathcal{P}^*_i|$.
\end{proof}

\section{Randomized Vertex Fault Connectivity Labels}

We improve the size of labeling scheme under $f$ vertex faults from $\tilde{O}(f^3)$ \cite{ParterPP24} to $\tilde{O}(f^2)$ bits.

\begin{theorem}\label{thm:rand-vertex-labeling}
    Fix any undirected graph $G=(V,E)$ and 
    integer $f\geq 1$.
    There are randomized labeling functions $L_V : V \to \{0,1\}^{O(f^2 \log^6 n)}$ such that
    given any query $\ang{s,t,F}$, $F\subset V$, $|F|\leq f$, with high probability 
    one can determine
    whether $s$ and $t$ are connected in 
    $G-F$ by inspecting only 
    $L_V(s),L_V(t),\{L_V(v) \mid v\in F\}$.
\end{theorem}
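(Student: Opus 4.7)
The plan is to randomize the deterministic $\tilde O(f^4)$-bit scheme of \Cref{thm:vertex label} by replacing its two main size-bottlenecks---the labels $\NeighborEdge^{\to}_\gamma(v_{\oc}, T)$ and $\incidentedge(x, \wtilde{G}_\Gamma[\gamma^{\ext}])$, which store explicit edge enumerations of sizes $\tilde\Theta(f^2)$ and $\lambda_{\arbo} = O(f^2 \log^3 n)$ respectively---with $\ell_0$-sampling cut-sketches of size $O(\log^3 n)$, in direct analogy to the way \Cref{thm:rand-edge-labeling} improves the deterministic edge-fault scheme of \Cref{thm:det-edge-labeling}. The entire structural backbone of \Cref{sec:structure vertex}---the vertex-expander hierarchy $(\mathcal{C}, \mathcal{H})$, low-degree extended Steiner trees $T_{\gamma^{\ext}}$, sparsified shortcut graphs $\wtilde{G}_\Gamma$, the neighborhood-hitter partition $\mathcal{S} = \{S_1, \ldots, S_{f+1}\}$, and the divide-and-conquer correctness lemma \Cref{coro:MainCorrectness}---is preserved, as is the bottom-up query strategy of \Cref{sec:strategy vertex} that merges subtrees of each affected component via rules \textbf{R1}--\textbf{R3}.

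The implementation of the primitives $\ListNeighbors$, $\EnumFromGiant$, and their supporting labels changes as follows. Each subtree $\tau$ of $T_{\gamma^{\ext}} \setminus F$ is represented by a collection of intervals $\mathcal{I}_\tau$ on the Euler tour, so $V^{\tmn}(\tau)$ is a disjoint union of ``interval vertex sets.'' I would attach to each Euler-tour occurrence $v_{\oc}$ an $\ell_0$-sampling prefix-sketch of the boundary edges (in each relevant shortcut graph $\wtilde{G}_\Gamma$ along the ancestor chain of $v_{\oc}$) incident to terminals appearing at or before $v_{\oc}$; XORing the sketches at the endpoints of the intervals in $\mathcal{I}_\tau$ then yields a cut-sketch for $V^{\tmn}(\tau)$ with respect to the appropriate $\wtilde{G}^{\qry}_\Gamma$. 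An $O(\log f)$-round Boruvka-style procedure on these cut-sketches unifies subtrees across unaffected edges, with the concentration analysis of \Cref{lem:Boruvka-tail-bound} guaranteeing a constant-factor reduction in the number of non-isolated components per round; a final $O(\log^2 n)$-bit certificate sketch (analogous to $\sk^0$ in \Cref{thm:rand-edge-labeling}) finishes off the surviving $O(\log^2 n)$ components via Gaussian elimination.

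For the space bound, each vertex $v$ owns $O(h\Delta) = O(\log^2 n)$ Euler-tour occurrences per (extended) Steiner tree (by \Cref{lemma:ExtLowDegree}) across $O(h) = O(\log n)$ hierarchy levels, and each occurrence carries a cut-sketch of size $O(\log^3 n)$. Replicating across the $f+1$ neighborhood-hitter subschemes and accounting for $O(f\log n)$ independent sketch copies (needed to simultaneously Boruvka-sample for all $O(hf)$ potentially affected components per query with high probability against $\binom{n}{\leq f}$ fault patterns), the total label size comes out to $O(f^2 \log^6 n)$ bits.

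The main obstacle will be the randomized analog of $\EnumFromGiant$, which in the deterministic scheme delicately combines four counters ($\ArtificialE$, $\Degree$, $\Enum$, and a prefix-sum) to deduce the \emph{number} of edges from a non-giant $\tau_y$ to the union of giant subtrees. Exact counts are not natural in a sketch model, so the plan is to circumvent counting by directly constructing the unification: Boruvka rounds on $\wtilde{G}^{\qry}_\Gamma \setminus F$ are run until either a sampled edge links $\tau_y$ to the giant group, or enough rounds have elapsed to certify with high probability that no such edge exists (exactly the regime in which rule \textbf{R3} is not supposed to trigger). The delicate part is ensuring that all prefix sketches are consistent---sharing the same random seed and rank assignments---so that XORs taken across labels stored at different vertices still yield valid $\ell_0$-samples for the cut of $V^{\tmn}(\tau)$; this is handled, as in the proof of \Cref{thm:rand-edge-labeling}, by storing a single global seed in every label.
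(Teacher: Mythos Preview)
Your proposal takes a completely different route from the paper. The paper does \emph{not} touch the deterministic $\tilde O(f^4)$ scheme of \Cref{thm:vertex label}; instead it invokes a black-box reduction from \cite{ParterPP24}, stated here as \Cref{lem:reduction}: any collection $\mathcal S=\{S_1,\dots,S_s\}$ of neighborhood hitters that is ``good'' for every fault set $F$ (meaning some $S_i$ avoids $F$ and hits every large neighborhood $N_j$) yields randomized labels of size $O(sf^2\log^5 n)$. Parter et al.\ used $s=f+1$ so that $S_i\cap F=\emptyset$ holds for some $i$ by pigeonhole, giving $O(f^3\log^5 n)$. The paper's entire contribution here is the observation that in the Monte Carlo setting one may sample each $S_i$ independently with vertex probability $1/f$ and take only $s=O(\log n)$ copies: then $\Pr[S_i\cap F=\emptyset]=\Omega(1)$ while $\Pr[S_i\text{ misses some large }N_j]\le n^{-\Omega(1)}$, so $\mathcal S$ is good for $F$ with high probability and the label size drops to $O(f^2\log^6 n)$. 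That is the whole proof---roughly ten lines.

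Your sketch-based plan is more ambitious but has a concrete gap you have not addressed. The prefix cut-sketches you attach to Euler-tour occurrences are necessarily with respect to the preprocessing-time graph $\wtilde G_\Gamma$, yet the Bor\r uvka merging must run on $\wtilde G^{\qry}_\Gamma=\wtilde G_\Gamma[Q^{\ext}_\Gamma]\setminus\hat E_{\Gamma,\aff}$. You correctly note that affected edges can be enumerated (via $\ArtificialE$) and XORed out, but you say nothing about edges $\{x,y\}\in E(\wtilde G_\Gamma)$ with $x\in V^{\tmn}_\gamma(\tau)$ and $y\in\Gamma\setminus Q^{\ext}_\Gamma$, i.e.\ $y$ lying in an unaffected core $\gamma''\prec\gamma$ with $\gamma''\not\subseteq\gamma^{\ext}$. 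These edges survive in the cut-sketch of $\tau$ and can dominate it (a non-giant $\tau$ has at most $f/\phi$ many $\gamma$-neighbors, but its $\gamma$-terminals may have arbitrarily many $\wtilde G_\Gamma$-edges to vertices outside $Q^{\ext}_\Gamma$). When the $\ell_0$-sampler returns such an edge, its far endpoint belongs to no subtree in $\mathcal T^{\ext}_\Gamma$, so the Bor\r uvka step makes no progress. The deterministic scheme avoids exactly this issue because its $\NeighborEdge$ labels are one-sided---they record only edges whose $\gamma$-endpoint is on the ``$y$-side''---an asymmetry that the symmetric XOR-cancellation of cut-sketches destroys. This is fixable (e.g., restrict the sketch to $\wtilde G_\Gamma[\gamma^{\ext}]$ and handle the remaining affected cores $\gamma'\not\subseteq\gamma^{\ext}$ by enumerating their small neighbor sets $N_\gamma(\Gamma')$ separately), but it is real additional work that your writeup does not supply, and the space accounting would need to be redone accordingly.
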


Our scheme is a small modification to that of 
Parter, Petruschka, and Pettie~\cite{ParterPP24}. 
Fix a collection of vertex sets $N_{1},N_{2},\dots,N_{k}\subset V$ where $k\le n$.\footnote{In \cite{ParterPP24}, each $N_{j}=N(\Gamma_{j})$ is a neighborhood for some component $\Gamma_{j}$ in the low-degree hierarchy, which is deterministically constructed and fixed. Here, we can think of each $N_{j}$ as some arbitrary vertex set.} We say that a \emph{neighborhood hitter }$S\subset V$ is \emph{good} for a fault set $F\subset V$ of size $|F|\le f$ if
\[
S\cap F=\emptyset\text{ and }S\cap N_{j}\neq\emptyset\text{ for all }N_{j}\text{ where }|N_{j}|\ge cf\log n,
\]
where $c$ is a constant.
A collection of neighborhood hitters
${\cal S} = \{S_{1},\dots,S_{s}\}$ 
is good for $F$ if there exists 
$S_{i} \in \cal{S}$ that is good 
for $F$.
Parter et al.~\cite{ParterPP24} reduces the labeling
problem to constructing a collection of neighborhood 
hitters as follows. 
\begin{lem}
[Section 5.2 of \cite{ParterPP24}]\label{lem:reduction}Suppose we can construct a collection of neighborhood hitters ${\cal S}=\{S_{1},\dots,S_{s}\}$ such that, for each fault set $F$ of size at most $|F|\le f$, ${\cal S}$ is good for $F$ with high probability, then there exists a randomized vertex labeling of size $O(sf^{2}\log^{5}n)$ satisfying the guarantee in the setting of \Cref{thm:rand-vertex-labeling}.
\end{lem}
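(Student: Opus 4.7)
The plan is to prove this reduction by constructing, for each hitter $S_i \in {\cal S}$, an atomic labeling scheme $L^{(i)}_V : V \to \{0,1\}^{O(f^2 \log^5 n)}$ that correctly answers any query $\langle s,t,F\rangle$ for which $S_i$ is good for $F$ (i.e., $S_i \cap F = \emptyset$ and $S_i$ hits every $N_j$ of size at least $cf\log n$). The final label concatenates all $s$ atomic labels, together with a length-$s$ membership bitmap $(\mathbbm{1}[v\in S_i])_{i\in[s]}$, for a total of $O(sf^2 \log^5 n)$ bits. To answer a query, we first inspect the bitmaps at $\{s,t\}\cup F$ to locate an index $i$ with $S_i \cap F = \emptyset$ (this read takes only $O(sf)$ bits and such an $i$ exists with high probability since ${\cal S}$ is good for $F$), and then run the $i$th atomic query procedure.

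Each atomic scheme is built from the same four ingredients as in the deterministic scheme of \Cref{sec:implementing vertex label} but indexed by the single hitter $S_i$ rather than a color class of the deterministic partition of \Cref{lemma:NeighborHitter}: a vertex expander hierarchy $({\cal C},{\cal H})$, low-degree Steiner trees $\{T_\gamma\}$, extended Steiner trees $T_{\gamma^{\ext}}$ where $\gamma^{\ext} = \gamma \cup \bigcup \{\gamma'\prec\gamma : N_\gamma(\Gamma')\cap S_i \neq \emptyset\}$, and a (randomly) sparsified shortcut multigraph $\wtilde{G}_\Gamma$. The associated label families ($\occurrences$, $\profile$, $\SuccTerminal$, $\InnerTerminals$, $\NeighborEdge/\NeighborVertex$, $\ArtificialE$, $\incidentedge$, $\Degree$, $\Enum$, $\prefixSum$) are assembled verbatim from \Cref{sec:implementing vertex label}. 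The two invariants that justify the entire construction, namely (i) $S_i$-vertices never fail and (ii) every $\gamma' \not\subseteq \gamma^{\ext}$ satisfies $|N_\gamma(\Gamma')| \leq \lambda_{\nb} = O(f\log n)$, follow directly from ``$S_i$ is good for $F$'', so the divide-and-conquer correctness proofs of \Cref{lemma:MainCorrectness,coro:MainCorrectness} and the query-strategy proofs of \Cref{sec:strategy vertex} carry over without change.

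For the size bound, I would repeat the space accounting of \Cref{sec:implementing vertex label} on a per-atomic-scheme basis, plugging in $h = O(\log n)$, $\Delta = O(1)$, $\phi = \Omega(1)$, $\lambda_{\nb} = O(f\log n)$, and a randomized sparsifier arboricity $\lambda_{\arbo} = \tilde{O}(f)$ (in place of the deterministic Nagamochi--Ibaraki bound $\tilde{O}(f^2)$). With these choices the dominant label family is $\NeighborEdge$, contributing $O(f^2 \log^5 n)$ bits per vertex, and every other label family is asymptotically no larger; summing over the $s$ atomic schemes yields the claimed $O(sf^2 \log^5 n)$ bound. The query time is $\poly(f,\log n)$ as in the deterministic scheme.

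The main obstacle I expect is the randomized shortcut-graph sparsifier: to beat the deterministic $\tilde{O}(f^2)$ arboricity from \Cref{coro:FaultTolerantSparsify} by a factor of $f$, one has to show that a Karger-style edge sampling (or similarly tailored random construction) of the shortcut multigraph $\hat{G}_{\gamma^{\ext}}$ preserves pairwise connectivity under arbitrary vertex failures of the relevant magnitude $\tilde{O}(f)$, with probability $1 - 1/\poly(n)$ \emph{uniformly over} all affected components and all fault sets. After this sparsification lemma is in hand, the reduction is essentially a bookkeeping exercise: the correctness proofs from \Cref{sec:structure vertex,sec:strategy vertex} use only the two ``good-for-$F$'' properties of $S_i$, and the size analysis from \Cref{sec:implementing vertex label} degrades gracefully when $\lambda_{\arbo}$ is lowered from $\tilde{O}(f^2)$ to $\tilde{O}(f)$.
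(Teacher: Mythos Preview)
The paper does not prove this lemma at all; it is quoted verbatim from \cite{ParterPP24} (note the bracketed attribution ``Section 5.2 of \cite{ParterPP24}'') and used as a black box. The $O(sf^{2}\log^{5}n)$ bound therefore comes from the \emph{randomized graph-sketching} machinery of \cite{ParterPP24} (low-degree hierarchy plus $\ell_{0}$-sampling cut sketches in the style of \cite{AhnGM12,KapronKM13}), not from the expander-based deterministic scheme developed in \Cref{sec:implementing vertex label} of the present paper. Your proposal to re-derive the lemma by porting \Cref{sec:implementing vertex label} onto a single hitter $S_i$ is a genuinely different route, and it is interesting, but it does not reach the stated bound.

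Concretely, your space accounting is off. Even with the optimistic parameter choices $\phi=\Omega(1)$, $\Delta=O(1)$, $\lambda_{\nb}=O(f\log n)$, the $\NeighborEdge$ family costs
\[
\underbrace{(hf)\cdot(hf\lambda_{\nb}+f/\phi)\cdot O(\log n)}_{\text{bits per label}}
\;\times\;
\underbrace{O((h^{2}\Delta+h)h)}_{\text{labels at }v}
\;=\;O(f^{3}\log^{7}n)
\]
per hitter, exactly as in the paper's own count $O\!\bigl((h^{5}f^{2}\lambda_{\nb}\Delta+h^{4}f^{2}\Delta/\phi)\log n\bigr)$; the extra factor of $f$ comes from $\lambda_{\nb}$, not from $\lambda_{\arbo}$, so swapping Nagamochi--Ibaraki for a randomized $\tilde{O}(f)$-arboricity sparsifier does nothing for this term. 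To hit $O(f^{2}\log^{5}n)$ you would need to redesign the $\NeighborEdge/\NeighborVertex$ labels themselves (or, more to the point, abandon them in favor of the randomized sketches that \cite{ParterPP24} actually uses). The ``main obstacle'' you flag is thus not the real obstacle; the real obstacle is that the deterministic interface-based scheme of this paper is inherently an $f$-factor heavier per hitter than the sketching scheme of \cite{ParterPP24}.
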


We construct ${\cal S}=\{S_{1},\dots,S_{s}\}$ as follows. Set $s=O(c\log n)$. For each $i$, sample each vertex into $S_{i}$ with probability $1/f$. 
Observe that $S_{i}$ is good for $F$ with constant probability. 
Indeed, 
\[
\Pr[S_{i}\cap F=\emptyset]=(1-1/f)^{|F|} \ge \Omega(1).
\]
Also, for each $N_{j}$ where $|N_{j}|\ge cf\log n$, 
\[
\Pr[S_{i}\cap N_{j}=\emptyset]=(1-1/f)^{|N_{j}|} \le 1/n^{\Omega(c)}.
\]
Thus, by a union bound, $\Pr[\exists j\text{ s.t. }S_{i}\cap N_{j}=\emptyset] \le n^{-\Omega(c)}$. Hence, for any fixed $F$, ${\cal S}$ is not good for $F$ with probability at most $(1-\Omega(1))^{s} \le n^{-\Omega(c)}$.
By plugging ${\cal S}$ into \Cref{lem:reduction}, we obtain \Cref{thm:rand-vertex-labeling}.\footnote{Parter et al.~\cite{ParterPP24} gave the same construction of ${\cal S}$,
but set $s=f+1$ so that 
the criterion that
there exists $S_i\in {\cal S}$
with 
$S_i\cap F=\emptyset$ holds 
with probability 1, which is necessary for a \emph{deterministic} 
labeling scheme. Our observation is simply that in a Monte Carlo labeling scheme, we only need this property to hold with high probability.}

\section{Lower Bound for Global Connectivity under Vertex Faults}
\label{sec:lowerbound}

In this section, we show that any vertex labeling scheme that supports \emph{global} connectivity queries under $f$ vertex faults requires $\Omega(n^{1-1/f}/f)$ bits.
This improves on an 
$\Omega(\min\{n/f,4^{f}/f^{3/2}\})$ lower bound of Parter et al.~\cite{ParterPP24},
and gives a negative answer to the open problem by \cite{ParterPP24}, which asks 
for an $\tilde{O}(1)$-size labeling scheme 
for global connectivity queries when $f=O(1)$. In contrast, under edge faults it is easy to answer global connectivity queries with previous schemes~\cite{DoryP21,IzumiEWM23} 
or \Cref{thm:shorter-deterministic-labels,thm:rand-edge-labeling}.

The lower bound is stated below. 
The proof closely follows
\cite[Theorem 9.2]{ParterPP24} 
but with different parameters.

\begin{theorem}\label{thm:lower bound}
Let $L:V\rightarrow\{0,1\}^{b}$ be a $b$-bit vertex labeling scheme such that, given $\{L(v)\mid v\in F\}$ where $|F|\le f$, reports whether $G-F$ is still connected. Then $b =\Omega(n^{1-1/f}/f)$.
\end{theorem}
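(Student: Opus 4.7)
The plan is to lower bound the label size via a counting argument over an exponentially large family of graphs sharing a common vertex set. Fix an integer $k$ (to be optimized), a set of base vertices $B=\{b_1,\ldots,b_k\}$ containing a distinguished $b^{*}\in B$, and for each $f$-element set $S\subseteq B\setminus\{b^{*}\}$ introduce a witness vertex $w_S$; set $V=B\cup\{w_S\}_S$ so $|V|=k+\binom{k-1}{f}$. I choose $k$ so $|V|\leq n$, padding with isolated vertices if needed. For each $\mathcal{H}\subseteq\binom{B\setminus\{b^{*}\}}{f}$, I will build the graph $G_\mathcal{H}$ by making $B$ a clique and attaching each witness via $N(w_S)=S$ if $S\in\mathcal{H}$ and $N(w_S)=S\cup\{b^{*}\}$ if $S\notin\mathcal{H}$.

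The first step is to verify the key indicator property: for every $S\in\binom{B\setminus\{b^{*}\}}{f}$, the graph $G_\mathcal{H}-S$ is disconnected iff $S\in\mathcal{H}$. Indeed, after removing $S$, the set $B\setminus S$ remains a clique containing $b^{*}$; every other witness $w_{S'}$ with $S'\neq S$ satisfies $S'\setminus S\neq\emptyset$ (since $|S'|=f=|S|$) and thus keeps a neighbor in $B\setminus S$; and $w_S$ is isolated iff it has no neighbor outside $S$, i.e., iff $S\in\mathcal{H}$. The second step is the counting argument. Suppose $\mathcal{H}\neq\mathcal{H}'$ differ at some $S$; the query $F=S$ must then produce different answers on $G_\mathcal{H}$ and $G_{\mathcal{H}'}$. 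Since the query depends only on $\{L_G(v)\mid v\in S\}$, the restrictions $L_{G_\mathcal{H}}|_{B}$ and $L_{G_{\mathcal{H}'}}|_{B}$ cannot agree---equal functions on $B$ yield equal multisets on any $S\subseteq B$, hence equal query answers. Therefore the $B$-restrictions of labelings across the $2^{\binom{k-1}{f}}$ graphs are all distinct, and since there are only $2^{kb}$ possible labelings of $B$, we obtain $kb\geq\binom{k-1}{f}$.

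Finally, I will optimize $k$. Taking the largest $k$ with $k+\binom{k-1}{f}\leq n$ gives $k=\Theta\bigl((n\cdot f!)^{1/f}\bigr)$ and $\binom{k-1}{f}=\Theta(n)$, so $b\geq\binom{k-1}{f}/k=\Omega(n/k)=\Omega\bigl(n^{1-1/f}/(f!)^{1/f}\bigr)=\Omega\bigl(n^{1-1/f}/f\bigr)$ using $(f!)^{1/f}=\Theta(f)$ by Stirling. I expect the most delicate part of the write-up to be the distinguishability step, since the query's input is a (multi)set rather than a vector of labels; but the argument sidesteps the issue because identical restrictions $L_{G_\mathcal{H}}|_B=L_{G_{\mathcal{H}'}}|_B$ automatically yield identical multisets on every $S\subseteq B$, so distinct query answers immediately force distinct $B$-restrictions without any delicate reasoning about whether the labeling is injective.
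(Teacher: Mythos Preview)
Your proof is correct and takes essentially the same approach as the paper: a small ``hub'' set ($B$ for you, $R$ in the paper) whose labels must encode one bit per witness vertex, each bit recoverable by deleting the witness's associated $f$-subset of the hub. The only slip is the padding---isolated padding vertices would make every $G_{\mathcal{H}}$ disconnected before any faults, destroying the indicator property; attach the padding vertices to $b^{*}$ instead (they then survive every deletion $S\subseteq B\setminus\{b^{*}\}$ and stay in the main component).
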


\begin{proof}
First, construct a ``base'' bipartite graph $G_{0}=(L\cup R,E_{0})$ as follows. Set $L=\{v^{*}\}\cup\{v_{1},\dots,v_{n}\}$ and $R=\{u_{1},\dots,u_{r}\}$ where $r=\left\lceil fn^{1/f}\right\rceil $. Connect $v^{*}$ to all vertices in $R$. For each $i\in[n]$, $F_{i}\subseteq R$ is a neighbor set of $v_{i}$ where $|F_{i}|=f$. We make sure that the $\{F_{i}\}$ are all distinct, i.e., $F_{i}\neq F_{j}$ for all $i\neq j\in[n]$. This is possible because $\binom{r}{f}\ge(r/f)^{f}\ge n$ by the choice of $r$. Finally, we create a family ${\cal G}$ of $2^{n}$ graphs from a fixed graph $G_{0}$ as follows: For each $v_{i}$, we can choose to add new edges into $G_{0}$ so that $v_{i}$ is connected to all vertices in $R$. 

Suppose an unknown graph $G$ is promised to be from ${\cal G}$. Observe that $G-F_{i}$ is disconnected if and only if we did not add new edges incident to $v_{i}$ into $G_{0}$. So, by reading the labels on $F_{i}$, we can check what choice was made for $v_{i}$. Thus, the labels of all vertices in $R$ can determine $G\in{\cal G}$. Thus, 
$2^{rb}\ge|{\cal G}|$, implying that  
$b\ge n/\left\lceil fn^{1/f}\right\rceil $. 
\end{proof}

\section{Conclusion and Open Problems }

In this paper we gave improved constructions of 
\emph{expander hierarchies} (w.r.t.~both vertex and edge cuts)
and developed shorter \emph{labeling schemes} for 
$f$-fault connectivity queries, 
in all four quadrants 
of \{vertex faults, edge faults\} $\times$ \{randomized, deterministic\}.

\medskip 

Our deterministic labeling scheme for edge faults has size $\tilde{O}(\sqrt{f})$, but it is not clear that there 
must be a 
polynomial dependence on $f$.

\begin{question}
Is there a deterministic, 
$f$-edge-fault connectivity labeling scheme 
with $\tilde{O}(1)$-bit labels?
\end{question}

In the case of randomized edge-labeling schemes, 
we improved Dory and Parter~\cite{DoryP21} 
from $O(\log^3 n)$ to $O(\log^2 n\log f)$.
An interesting problem is to prove a non-trivial
lower bound on edge labels, randomized or not.  
There are natural targets around $f=\omega(\log n)$ 
and $f=\poly(n)$.

\begin{question}
    Concerning the $f$-edge fault connectivity problem:
    \begin{itemize}
    \item When $f=\omega(\log n)$, are $\omega(\log n)$-bit edge labels necessary? (See~\cite{DoryP21} and \cref{thm:edge-failure-long-labels}.)
    \item When $f=\poly(n)$, are $\Theta(\log^3 n)$-bit edge labels optimal?  (See $\Omega(\log^3 n)$-bit lower bounds of Nelson and Yu~\cite{NelsonY19} and Yu~\cite{Yu21} for similar problems.)
    \end{itemize}
\end{question}

We actually know \emph{how} 
to improve the randomized 
label length to $O(\log n\log^2(f\log n))$ 
\underline{\emph{if}} the following conjecture holds.

\newcommand{\cutsize}{\mathsf{cutsize}}

\begin{conjecture}\label{conj:distribution-spanning-trees}
Given a graph $G$ and spanning tree $T$, 
define $\cutsize_T(e)$ to be 0 if $e\not\in T$ and 
the number of (non-tree) edges crossing the cut 
defined by $T-e$ otherwise.
For any $c_0>0$, there exists a $c_1>0$,
such that for any graph $G$ and integer $f\gg \log n$,
there is a distribution $\mathcal{T}$ of its 
spanning trees such that for any 
$F=\{e_1,\ldots,e_f\}\subset E$,
\[
\Pr_{T\sim \mathcal{T}}(|\{i\in [f] : \cutsize_T(e_i) > (f\log n)^{c_1}\}| \geq c_1\log n) < n^{-c_0}.
\]
\end{conjecture}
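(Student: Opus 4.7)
The plan is to construct a distribution $\mathcal{T}$ over spanning trees by exploiting the edge-expander hierarchy from \Cref{thm:edge exp hie}, then to bound the probability that any individual edge is a tree edge with large fundamental cut, and finally to apply concentration. Specifically, given $G$, I would first compute an $(h,\phi)$-edge-expander hierarchy with $h=O(\log n)$ and $\phi=\Omega(1/\sqrt{\log n})$ partitioning $E$ into levels $E_1,\ldots,E_h$. To sample $T\sim\mathcal{T}$, I process levels bottom up: at level $\ell$, for each connected component $\Gamma$ of $G_{\leq\ell}$, draw a uniformly random spanning tree of the multigraph obtained by contracting each level-$(<\ell)$ component of $\Gamma$ to a vertex and retaining the (expander) edges of $E_{\ell}\cap\Gamma$. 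The union over all levels is a spanning tree $T$ of $G$ whose edges of level $\ell$ respect the cut structure of the level-$\ell$ expanders.

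The core of the argument is a per-edge tail bound: for every $e\in E$,
\[
\Pr_{T\sim\mathcal{T}}\bigl[\cutsize_T(e)>(f\log n)^{c_1}\bigr]\;\leq\;\frac{c_1\log n}{2f}.
\]
This would immediately give $\mathbb{E}\bigl[|\{i:\cutsize_T(e_i)>(f\log n)^{c_1}\}|\bigr]\leq (c_1\log n)/2$ by linearity, and then a Chernoff-type bound converts this into the desired high-probability statement $n^{-c_0}$, with $c_1$ chosen large enough as a function of $c_0$. For the concentration step, one can exploit that the indicator variables $\mathbf{1}[e\in T]$ produced by uniform spanning-tree sampling are \emph{negatively correlated} (Feder--Mihail), which lifts to negatively associated behavior across levels of the hierarchical construction, so standard Chernoff bounds apply despite the complicated dependency pattern among the $\cutsize_T(e_i)$.

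To prove the per-edge bound, I would split into two cases. If $e\notin T$, then $\cutsize_T(e)=0$, so the event fails trivially. If $e\in T$, say $e\in E_{\ell}$, then the fundamental cut of $e$ in $T$ decomposes as a sum of contributions from each level $\ell'\geq\ell$. Within the level-$\ell$ expander where $e$ was sampled, expansion plus the uniform random tree distribution ensure that the fundamental cut size concentrates around the average degree of the smaller side, which is $\tilde{O}(1/\phi)$ times the minimum-degree side-volume by the $\phi$-expansion guarantee. For contributions from higher levels $\ell'>\ell$, I would charge them against the coarser cuts present in the hierarchy and use the fact that the number of cross-level edges above each subtree is controlled by the recursive expander structure.

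The main obstacle will be making the per-edge bound quantitatively sharp, since a naive analysis loses $n^{\Omega(1)}$ factors when summing cut contributions across $h=O(\log n)$ levels rather than the required $\mathrm{poly}(\log n)$. In particular, even within a single $\phi$-expander, known high-probability concentration of fundamental cut sizes for uniform spanning trees via electrical-network/effective-resistance arguments gives expected cutsize $\tilde{O}(1/\phi)$ but obtaining an $\tilde{O}(1/f)$ \emph{per-edge} tail bound is nonstandard. A possible workaround is to replace the uniform spanning tree on each expander with a Räcke-style oblivious-routing tree whose loads concentrate polynomially, at the cost of potentially changing the construction from spanning trees to spanning subgraphs with tree topology; verifying that this variant still fits in the framework required by the intended application is itself delicate and likely the real technical barrier.
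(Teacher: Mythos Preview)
The statement you are attempting to prove is \emph{Conjecture}~\ref{conj:distribution-spanning-trees}: the paper does not prove it, and in fact presents it explicitly as an open problem in the conclusion. There is no ``paper's own proof'' to compare against. The paper even remarks that R\"acke's spanning-tree distribution already gives the per-edge marginal bound $\Pr_{T\sim\mathcal{R}}(\cutsize_T(e) > f\,\poly(\log n)) < 1/f$, and identifies the missing ingredient as precisely the concentration step: showing that these events are ``sufficiently independent'' across the $f$ edges of $F$ so that a Chernoff-type tail bound applies.

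Your proposal does not close this gap; it relocates it. You correctly identify that a per-edge tail bound plus concentration would suffice, but the concentration argument you sketch is not sound as stated. Feder--Mihail gives negative correlation for the \emph{edge-membership} indicators $\mathbf{1}[e\in T]$ in a uniform spanning tree; it says nothing about the events $\{\cutsize_T(e_i) > (f\log n)^{c_1}\}$, each of which is a complicated, non-monotone function of many such indicators (whether $e_i\in T$, and then how many non-tree edges cross the induced cut). Negative association does not automatically transfer through such functions, and your hierarchical construction couples the levels in a way that makes the dependency structure even less transparent. You acknowledge this yourself when you call the quantitative per-edge bound ``nonstandard'' and the fallback via R\"acke-style trees ``likely the real technical barrier.'' That barrier is exactly why the paper states this as a conjecture rather than a theorem. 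As written, your proposal is a plausible research plan, not a proof.
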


If \cref{conj:distribution-spanning-trees} were true,
we would pick the spanning tree $T\sim \mathcal{T}$ used in 
the labeling scheme of \cref{thm:rand-edge-labeling}.
Since, with probability $1-n^{c_0}$, 
the cut sizes for all but $O(\log n)$ trees
of $T-F$ would be bounded by $f(f\log n)^{c_1}$, we
would only need to store $O(\log(f\log n))$ rows
in the sketch-matrix $\sk$, rather than $O(\log n)$,
in order to implement \Boruvka{} steps.
The remaining $c_1\log n$ trees with large 
cut-sizes would be handled using the $\sk^0$ sketch,
just as we handle the residual trees in 
\cref{thm:rand-edge-labeling}.
Using \emph{spanning} trees~\cite{AbrahamN19} in 
R\"acke's tree distribution $\mathcal{R}$~\cite{Racke08} 
guarantees that for any $e\in E(G)$, 
$\Pr_{T\sim\mathcal{R}}(\cutsize_T(e) > f\poly(\log n)) < 1/f$.
\cref{conj:distribution-spanning-trees} 
can be viewed as asserting that there is a distribution
where these events are sufficiently independent,
for any $F\subset E(G)$ with $|F|=f$,
so that we can get a Chernoff-like tail bound on the event
$|\{i\in [f] : \cutsize_T(e_i) > (f\log n)^{c_1}\}| \geq c_1\log n$.

\medskip 

The state-of-the-art for connectivity labels under vertex faults are now $\tilde{O}(f^{2})$ for randomized schemes, by \Cref{thm:rand-vertex-labeling}, 
and $\tilde{O}(f^{4})$ for deterministic schemes, 
by \Cref{thm:vertex label}, 
while there is a simple lower bound of $\Omega(f+\log n)$~\cite{ParterPP24}. 
We believe the correct exponent is likely $2$, but any non-trivial lower bound would be welcome.

\begin{question}
Is there an $\Omega(f^{1.1})$ 
lower bound for $f$-vertex fault connectivity labels?
What is the optimal exponent?
\end{question}

\Cref{thm:lower bound} strongly separated pairwise connectivity and global connectivity for vertex-labeling schemes under vertex faults. Can we match the lower bound for global connectivity?
\begin{question}
Show a labeling scheme for global connectivity under vertex faults whose size matches the lower bound by \Cref{thm:lower bound}. 
\end{question}

The (edge) expander hierarchy of \cref{thm:edge exp hie} 
improves the expansion of \Patrascu{} and Thorup's~\cite{PatrascuT07} by a $\Theta(\log n)$-factor.  The 
$O(f\log^2 n\log\log n)$ query time of their 
$f$-edge-failure connectivity oracle contains some
$O(\log^2 n)$ terms unrelated to the expander hierarchy parameters,
so it may be worth revisiting the complexity of this 
problem in the deterministic setting.  
See~\cite{DuanP20,GibbKKT15} for smaller and faster 
randomized data structures.

\section*{Acknowledgments}
\addcontentsline{toc}{section}{Acknowledgments}

Yaowei Long and Thatchaphol Saranurak are partially funded by the Ministry of Education and Science of Bulgaria's support for INSAIT, Sofia University ``St. Kliment Ohridski'' as part of the Bulgarian National Roadmap for Research Infrastructure.

\appendix

\section{Low-Degree Steiner Trees Spanning Tough Sets}
\label{app:low-deg steiner tree}

In this section, we prove that there exists a $O(1/\phi)$-degree Steiner tree spanning any $\phi$-vertex-expanding set.

\steinertree*

In fact, we will show that the statement holds even for \emph{$\phi$-tough sets}, which we define now. For any graph $G$ and vertex set $X\subseteq V(G)$, let $c_{G}(X)$ count the number of connected components $C$ in $G$ containing some vertex of $X$. In particular, if $G$ is connected, then $c_{G}(X)=1$ for every $X$. We say that $X$ is \emph{$\phi$-tough} in $G$ if, for every vertex set $S\subseteq V(G)$, we have 
\[
|S|\ge\phi\cdot c_{G-S}(X)
\]
 whenever $c_{G-S}(X)>1$, i.e., $X$ is not connected in $G-S$. 

We say that $G$ has \emph{toughness $\tau(G)=\phi$} if $V(G)$ is $\phi$-tough. The toughness of graphs is a well-studied measure of the robustness of graphs with many connections to other graph properties (see, e.g., \cite{chvatal1973tough,enomoto1985toughness,win1989connection,bauer2006toughness,gu2021proof}). Most literature considers the toughness of $V(G)$, but here, we will focus on the toughness of an arbitrary vertex subset $X$. 

Observe that any vertex-expanding set is tough with the same parameter to up a constant. 
\begin{fact}
If $X$ is $3\phi$-vertex-expanding in $G$, then $X$ is $\phi$-tough. 
\end{fact}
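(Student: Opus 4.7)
The plan is to reduce the toughness condition to a single invocation of the vertex-expansion hypothesis. Fix any $S \subseteq V(G)$ with $k := c_{G-S}(X) > 1$; it suffices to show $|S| \geq \phi k$. Let $C_1, \ldots, C_k$ enumerate the connected components of $G-S$ that intersect $X$, set $n_i := |X \cap C_i| \geq 1$ and $N := \sum_{i=1}^k n_i \geq k$, and let $C_0$ denote the union of all components of $G-S$ disjoint from $X$ (possibly empty).

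The crux is a simple combinatorial partition lemma: the index set $[k]$ admits a partition into non-empty $A$ and $B$ with $\min\{\sum_{i \in A} n_i,\,\sum_{i \in B} n_i\} \geq k/3$. Assume without loss of generality that $n_1 = \max_i n_i$. If $n_1 \geq N/3$, I take $A = \{1\}$ and $B = \{2, \ldots, k\}$: the $A$-side has mass $n_1 \geq N/3 \geq k/3$, while the $B$-side has mass at least $k - 1 \geq k/3$ since $k \geq 2$. Otherwise $n_1 < N/3$; a greedy balanced partition (process the $n_i$ in decreasing order, appending each to the lighter side) terminates with imbalance at most $n_1 < N/3$, so both sides have mass at least $(N - n_1)/2 > N/3 \geq k/3$.

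Given such a partition, I set $L := \bigcup_{i \in A} C_i$ and $R := \bigcup_{i \in B} C_i \cup C_0$. Since each $C_i$ is a connected component of $G-S$, no edge of $G$ joins $L$ to $R$, and both $L, R$ are non-empty, so $(L, S, R)$ is a valid vertex cut of $G$. Applying the $3\phi$-vertex-expansion of $X$ to this cut gives
\[
|S| \;\geq\; 3\phi \cdot \min\{|X \cap (L \cup S)|,\; |X \cap (R \cup S)|\} \;\geq\; 3\phi \cdot \min\Big\{\sum_{i \in A} n_i,\; \sum_{i \in B} n_i\Big\} \;\geq\; 3\phi \cdot \tfrac{k}{3} \;=\; \phi k,
\]
which is exactly the $\phi$-tough inequality. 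The only step requiring any real thought is the partition lemma; the constant $3$ in the hypothesis is chosen precisely so that the $k/3$ guarantee it produces is enough to conclude.
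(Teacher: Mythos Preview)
Your proof is correct and follows essentially the same approach as the paper: partition the $X$-intersecting components of $G-S$ into two non-empty groups $L$ and $R$, then apply the $3\phi$-vertex-expansion hypothesis to the vertex cut $(L,S,R)$. The only difference is that your partition lemma is more elaborate than necessary; the paper simply takes $L = C_1 \cup \cdots \cup C_{\lceil k/2 \rceil}$ and $R$ the rest (together with $C_0$), and uses the trivial bound $n_i \geq 1$ to get $|X \cap L| \geq \lceil k/2 \rceil \geq k/3$ and $|X \cap R| \geq \lfloor k/2 \rfloor \geq k/3$ directly, avoiding the case analysis and greedy argument.
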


\begin{proof}
Suppose that $X$ is not $\phi$-tough, i.e. there exists $S$ where $|S|<\phi\cdot c_{G-S}(X)$. Let $c=c_{G-S}(X)$ and $C_{1},\dots,C_{c}$ be different connected components in $G-S$ where $|C_{i}\cap X|>0$ for all $i$. Let $C'$ be the union of other connected components in $G-S$ disjoint from $X$. Let $L=C_{1}\cup\dots\cup C_{\left\lceil c/2\right\rceil }$ and $R=C_{\left\lceil c/2\right\rceil +1}\cup\dots\cup C_{c}\cup C'$. We have that $|X\cap L|\ge\ceil{c/2}\ge c/3,|X\cap R|\ge c-\ceil{c/2}c/3$. So 
\[
|S|<\phi c\le3\phi\min\{|X\cap(L\cup S)|,|X\cap(R\cup S)|\},
\]
meaning that $X$ is not $3\phi$-vertex-expanding. 
\end{proof}
The following theorem stating that there exists a $O(1/\phi)$-degree Steiner tree spanning any $\phi$-tough set immediately implies \Cref{lemma:LowDegreeSteinerTree}. 
\begin{theorem}
\label{thm:steiner from tough}Given a graph $G$ such that a set $X\subseteq V(G)$ is $\phi$-tough in G, there is an algorithm that computes a $(2/\phi+3)$-degree Steiner tree spanning $X$. The running time is $O(mn\log n)$. 
\end{theorem}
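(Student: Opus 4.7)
The plan is to mimic the local-improvement algorithm of F\"urer and Raghavachari~\cite{FurerR94} for minimum-degree Steiner trees, replacing their witness/LP argument by a toughness argument tailored to $X$. I will begin with any Steiner tree $T$ spanning $X$ in which every leaf lies in $X$ (achieved by repeatedly pruning Steiner leaves). The algorithm then repeats the following \emph{swap} until termination: find a non-tree edge $\{u,w\}\in E(G)\setminus E(T)$ and a tree edge $\{v,v'\}$ on the unique $T$-cycle closed by $\{u,w\}$ such that replacing $\{v,v'\}$ with $\{u,w\}$ strictly decreases the potential $\Phi(T) = \sum_{x\in V(T)} 3^{\deg_T(x)}$. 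A direct calculation gives $\Delta\Phi = 2\bigl[3^{\deg(u)}+3^{\deg(w)}-3^{\deg(v)-1}-3^{\deg(v')-1}\bigr]$, so an improving swap exists precisely when the bracketed quantity is negative. Since $\Phi$ is a positive integer that strictly decreases each step, the procedure terminates; each iteration costs $O(m)$ to locate an improving swap, and an amortized argument in the spirit of F\"urer-Raghavachari bounds the iteration count by $O(n\log n)$, giving the claimed $O(mn\log n)$ total time.

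Let $T$ be the tree at termination and $\Delta$ its maximum degree; the goal is to show $\Delta \leq 2/\phi+3$ using $\phi$-toughness of $X$. The key structural consequence of local optimality is that for every non-tree edge $\{u,w\}$ with $\deg(u),\deg(w)\leq\Delta-2$, the $T$-path from $u$ to $w$ avoids every vertex of $V_\Delta=\{x:\deg_T(x)=\Delta\}$: if such a path passed through $v\in V_\Delta$, then choosing $(v,v')$ to be a tree-edge on the cycle gives $3^{\deg(u)}+3^{\deg(w)}\leq 2\cdot 3^{\Delta-2}$ but $3^{\deg(v)-1}+3^{\deg(v')-1}\geq 3^{\Delta-1}+1$, contradicting local optimality. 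An analogous inequality at lower degree levels lets me take $V_h$ to be an appropriate set of high-degree vertices (roughly $V_{\geq\Delta-1}$) so that $V_h$ \emph{separates} the components $F_1,\dots,F_p$ of $T-V_h$ in the graph $G$. The $\phi$-toughness of $X$ then yields $|V_h|\geq\phi\cdot c_{G-V_h}(X)\geq\phi\cdot p'$, where $p'$ counts the components $F_i$ containing at least one vertex of $X$.

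It remains to lower-bound $p'$ in terms of $|V_h|$ and $\Delta$, which I will do via a boundary-edge count mirroring the one in Section~\ref{sec:simple edge label}. The degree sum $\sum_{v\in V_h}\deg_T(v)\geq |V_h|(\Delta-1)$ together with $|E(T[V_h])|\leq |V_h|-1$ gives at least $|V_h|(\Delta-3)+2$ tree edges leaving $V_h$, and standard counting converts this into a lower bound on $p$. Moreover, every Steiner-only $F_i$ has at least two boundary edges to $V_h$ in $T$, since each of its leaves is a Steiner vertex of $T$-degree $\geq 2$ and each such leaf contributes an edge to $V_h$; $X$-containing components contribute at least one boundary edge each. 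Combining these ingredients produces a bound of the form $p'\geq |V_h|(\Delta-c_1)+c_2$ for small explicit constants; substituting into $|V_h|\geq\phi p'$ and rearranging gives the inequality $\Delta\leq 2/\phi+3$.

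The step I expect to be the main obstacle is rigorously establishing the separator property of $V_h$. The clean case of a $T$-path through a vertex of $V_\Delta$ is handled by the strong form of the local-min inequality above, but borderline configurations --- most notably a non-tree edge with both endpoints of degree exactly $\Delta-2$ whose $T$-path threads through a vertex of degree $\Delta-1$ --- are not forbidden by the local optimum of $\Phi$ alone, since the corresponding swap achieves $\Delta\Phi=0$. Handling these borderline edges requires either strengthening the improvement criterion to a lexicographic one on the sorted degree sequence (so that ``tied'' swaps execute whenever the lex-sorted sequence strictly decreases), or enlarging $V_h$ by a bounded number of borderline endpoints per $(V_{\Delta-1}\setminus V_\Delta)$-vertex and absorbing this overhead into the boundary-edge count. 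Either route preserves the asymptotic bound $\Delta=O(1/\phi)$, and careful bookkeeping of constants delivers the exact bound $\Delta\leq 2/\phi+3$ claimed in the theorem.
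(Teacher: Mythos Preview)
Your high-level strategy matches the paper's: run a F\"urer--Raghavachari (FR) style local search, extract a set of high-degree vertices that separates $G$ the same way it separates $T$ (restricted to $X$), and combine with $\phi$-toughness via a boundary-edge count. The paper, however, does not re-derive FR from scratch. It invokes FR as a black box through a structural lemma: the FR algorithm returns, in $O(mn\log n)$ time, a tree $T$ of maximum degree $\Delta$ together with a set $B\subseteq V(T)$ such that (i) every leaf of $T$ is in $X$, (ii) every $v\in B$ has $\deg_T(v)\ge\Delta-1$, and crucially (iii) for all $s,t\in X\setminus B$, $s$ and $t$ are connected in $G-B$ iff they are connected in $T-B$. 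Property~(iii) is exactly the separator property you are trying to manufacture, and once it is in hand the counting is short: classify the components of $T-E_B$ (where $E_B$ is the set of tree edges incident to $B$) as trivial ($=|B|$), internal (no $B$, no $X$; each touches $\ge 2$ edges of $E_B$, hence at most $|E_B|/2$ of them), and $X$-containing ($=c_{T-B}(X)$). Since the total is $|E_B|+1$, one gets $c_{T-B}(X)\ge|E_B|/2-|B|+1\ge|B|(\Delta-1)/2-|B|+1>|B|(\Delta-3)/2$, and toughness (using $c_{G-B}(X)=c_{T-B}(X)$ from~(iii)) gives $\Delta<2/\phi+3$ directly.

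The gap in your write-up is precisely the place you flagged: your potential $\Phi=\sum_x 3^{\deg_T(x)}$ with \emph{single-edge} swap local optimality does not deliver the separator property for $V_h=V_{\ge\Delta-1}$. The borderline configuration you describe (a non-tree edge with both endpoints of degree $\Delta-2$ whose $T$-path passes through a vertex of degree $\Delta-1$) is not ruled out, and neither of your proposed fixes is free. A lexicographic improvement rule destroys the polynomial bound on the number of iterations unless you also re-do the FR amortization (your appeal to ``an amortized argument in the spirit of FR'' is doing a lot of work here, and note that $\Phi$ itself can be as large as $n\cdot 3^{n-1}$, so ``positive integer that strictly decreases'' gives no useful bound). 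Enlarging $V_h$ by borderline endpoints can cascade, and controlling that cascade while keeping $\min_{v\in V_h}\deg_T(v)\ge\Delta-1$ is essentially the content of FR's blocking-set analysis. In short, the heavy lifting you are sketching is already packaged in FR's output set $B$; the cleanest fix is to cite their lemma verbatim and then run the three-type component count above, which also yields the exact constant $2/\phi+3$ without any additional bookkeeping.
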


The weaker statement of this theorem was shown by Win \cite{win1989connection}, who gave  a non-algorithmic version of this theorem when $A=V$, i.e., a \emph{spanning} tree case. 

To prove \Cref{thm:steiner from tough}, we apply the additive-1 approximation algorithm by \cite{FurerR94} for finding a minimum degree Steiner trees. 
The structural guarantees of their algorithm can be summarized as follows. 
\begin{lem}
[\cite{FurerR94}]There is an algorithm that, given a graph $G$ with $n$ vertices and $m$ edges and a vertex $X$, in $O(mn\log n)$ time returns a tree $T$ in $G$ with maximum degree $\Delta$ and a vertex set $B\subseteq V(T)$ such that
\begin{enumerate}
\item Every leaf of $T$ is a vertex in $X$,
\item Each vertex $v\in B$ has degree $\deg_{T}(v)\ge\Delta-1$, and
\item \label{enu:bad set preserve conn}For any two vertices $s,t\in X-B$, $s$ and $t$ are connected in $G-B$ if and only if they are connected in $T-B$. 
\end{enumerate}
\end{lem}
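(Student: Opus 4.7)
The plan is to follow the local-search framework of F\"urer and Raghavachari adapted to the Steiner setting. We initialize $T$ to be an arbitrary Steiner tree of $G$ spanning $X$ (for example, the Steiner subtree of the union of shortest paths between $X$-vertices), and iteratively delete non-$X$ leaves. This pruning gives property (1) and never raises any vertex's degree; we maintain this invariant throughout the algorithm (pruning again whenever a swap creates a new non-$X$ leaf).

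The core loop performs \emph{swaps}. A swap is specified by a non-tree edge $\{u,w\}\in E(G)\setminus E(T)$ together with a vertex $v$ on the unique tree path $P_{uw}$ from $u$ to $w$; we add $\{u,w\}$ and delete one of the two edges incident to $v$ on $P_{uw}$. The net degree change is $\deg_T(v)\mathrel{-}=1$ and $\deg_T(u),\deg_T(w)\mathrel{+}=1$. Writing $S_i=\{v:\deg_T(v)=i\}$ and $\Delta=\max_v\deg_T(v)$, the algorithm searches for a swap that lexicographically decreases the potential vector $\Phi(T)=(|S_\Delta|,|S_{\Delta-1}|,|S_{\Delta-2}|,\dots)$. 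At the top level this requires $v\in S_\Delta$ and $\deg_T(u),\deg_T(w)\le\Delta-2$; at the next level, $v\in S_{\Delta-1}$ with $\deg_T(u),\deg_T(w)\le\Delta-3$, and so on. Iterate until no such swap exists. Since $\Phi$ decreases lexicographically and takes polynomially many distinct values on integer degree sequences bounded by $n$, the loop terminates after $\poly(n)$ iterations; each iteration can be implemented in $O(m)$ time using DFS/LCA queries on $T$ to locate the tree path, giving total time $O(mn\log n)$ after amortizing standard tree-update data structures.

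At termination, set $B=\{v:\deg_T(v)\ge\Delta-1\}$. Property (1) is preserved by our pruning rule, and property (2) holds by definition. For property (3), the nontrivial direction is: if $s,t\in X\setminus B$ are connected in $G\setminus B$ then they lie in the same component of $T\setminus B$. Assume not; pick a path in $G\setminus B$ witnessing the connection, locate consecutive vertices $p,q$ on it lying in distinct components of $T\setminus B$, and observe that $\{p,q\}$ must be a non-tree edge whose tree path $P_{pq}$ crosses some $v\in B$. We have $\deg_T(p),\deg_T(q)\le\Delta-2$ since $p,q\notin B$. If $\deg_T(v)=\Delta$, the swap on $\{p,q\},v$ strictly drops $|S_\Delta|$ (no vertex of degree $>\Delta-1$ is created), contradicting termination. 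If $\deg_T(v)=\Delta-1$, one uses the F\"urer--Raghavachari chaining argument: either $\deg_T(p),\deg_T(q)\le\Delta-3$, in which case the swap directly lex-decreases $\Phi$ at coordinate $\Delta-1$, or else along a carefully chosen sequence of such ``tight'' non-tree edges one can exhibit a composite perturbation that eventually reduces $|S_\Delta|$, again contradicting local optimality.

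The main obstacle is the $\deg_T(v)=\Delta-1$ case in the proof of property (3): a direct single swap need not strictly decrease $\Phi$ because $p$ or $q$ may already have degree $\Delta-2$ and thus enter $S_{\Delta-1}$ after the swap. Handling this cleanly requires the multi-level lexicographic potential together with the F\"urer--Raghavachari chaining lemma (their ``cycle improvement'' procedure), which shows that a local optimum with respect to all levels of $\Phi$ cannot simultaneously admit the configuration witnessing a violation of property (3). Once this structural fact is in hand, the rest of the proof is routine.
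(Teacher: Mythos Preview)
The paper does not prove this lemma; it is quoted from \cite{FurerR94} and used as a black box in the proof of the theorem that follows it. Your sketch follows the right local-search idea, but it has a genuine flaw beyond the hand-wave you already flag.

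The problem is the choice $B=\{v:\deg_T(v)\ge\Delta-1\}$: this set need not satisfy property~(3) at a local optimum of your potential $\Phi$. Take $\Delta=4$ and $G=T\cup\{\{p,q\}\}$, where $T$ has a unique degree-$4$ vertex $a$, a degree-$3$ vertex $v$ adjacent to $p,q$ (each of degree $2$), $a$ does not lie on the tree path $p$--$v$--$q$, and $s,t\in X$ are leaves below $p,q$ respectively. Your level-$4$ rule cannot use $\{p,q\}$ (its fundamental cycle misses $a$); your level-$3$ rule requires endpoints of degree $\le 1$, which $p,q$ fail; so the algorithm halts. With $B=\{a,v\}$, the leaves $s,t$ are disconnected in $T-B$ yet connected in $G-B$ via $\{p,q\}$, violating (3). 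Swapping at $v$ would \emph{raise} $|S_{3}|$ (lose $v$, gain $p$ and $q$), so $\Phi$ gives no leverage here. In F\"urer--Raghavachari's actual algorithm $B$ is \emph{not} the full high-degree set but the set of vertices \emph{marked} by a propagation procedure: start with $S_\Delta$ marked, and mark a degree-$(\Delta-1)$ vertex only when it is an endpoint of a candidate swap whose fundamental cycle already meets the marked set; property~(3) then holds at termination essentially by construction, since any non-tree connection with unmarked endpoints whose cycle meets $B$ would have triggered either an improvement chain or further marking. The ``chaining argument'' you defer to is tied to that marking-based $B$, not to $\Phi$-local-optimality, and the two are not interchangeable. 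A secondary issue: in the Steiner setting the $G\setminus B$ path witnessing a violation of (3) may traverse vertices outside $V(T)$, so the improving move is a \emph{path} insertion rather than the single non-tree-edge swap you describe.
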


The last property says that the connectivity between vertices of $X$ in $G-B$ and $T-B$ are preserved exactly. 
\begin{proof}
[Proof of \Cref{thm:steiner from tough}]We claim that $|B|<2c_{T-B}(X)/(\Delta-3)$. Since $c_{G-B}(X)=c_{T-B}(X)$ because of Property \ref{enu:bad set preserve conn} and $X$ is $\phi$-tough, it must be that $\frac{2}{(\Delta-3)}>\phi$, meaning that $\Delta<2/\phi+3$. That is, $T$ is a $(2/\phi+3)$-degree Steiner tree spanning $X$ as desired and $T$ can be computed in $O(mn\log n)$ time. 

Now we prove the claim. Just for analysis, consider the set $E_{B}$ of all edges incident to $B$. If we delete \emph{edges} in $E_{B}$, observe that $T-E_{B}$ contains exactly $|E_{B}|+1$ connected components. We classify these connected components in $T-E_{B}$ into three types: 
\begin{enumerate}
\item \textbf{(Trivial components):} Components that contain of a single vertex $v\in B$,
\item \textbf{(Internal components):} Components that contains no vertex in $B$ or $X$,
\item \textbf{(Leaf components):} Components that contains no vertex in $B$, but contains a vertex in $X$.
\end{enumerate}
The number trivial components is clearly $|B|$. The number of internal components is at most $|E_{B}|/2$. This is because each deleted edge in $E_{B}$ has at most one endpoint in non-trivial components (internal or leaf components.) But, crucially, each internal component must be incident to at least $2$ deleted edges (otherwise, it will contain a leaf, which is a vertex in $X$). The number of leaf components is precisely $c_{T-B}(X)$. Since the total number of components is $|E_{B}|+1$, we have that
\[
|E_{B}|+1\le|B|+|E_{B}|/2+c_{T-B}(X).
\]
So, 
\[
c_{T-B}(X)\ge|E_{B}|/2-|B|+1\ge |B|(\Delta-1)/2-|B|+1>|B|(\Delta-3)/2
\]
where the second inequality is because $B$ has minimum degree $\Delta-1$.
\end{proof}

\section{Improved Singleton-Detection Scheme: Proof of \cref{lem:singleton}}\label{app:singleton}

\newcommand{\Sample}{\textsf{Sample}}
\newcommand{\Sig}{\textsf{Sig}}

Let $a$ be a uniformly random odd $w$-bit integer and
$t$ a uniformly random $w$-bit integer.  
Thorup~\cite{thorup2018sample} proved that the function $\Sample : [2^w] \to \{0,1\}$, $\Sample_{a,t}(x) = \ind{ax \mod 2^w < t}$ 
(written \texttt{a*x < t} in C++ notation) 
is a \emph{distinguisher} with probability 1/8.
In other words, for any non-empty set $S\subset [2^w]$,
\[
\Pr\left(\sum_{x\in S}\Sample_{a,t}(x) \equiv 1 \pmod 2\right) \geq 1/8.
\]
We identify the edge-set $E$ with $\{0,1\}^{2\log n}$.
Suppose $\uid : \{0,1\}^{2\log n}\to \{0,1\}^{2\log n} \times \{0,1\}^{c\log n}$ is defined so that
$\uid(x) = (x,\Sig(x))$, where
the signature \
\[
\Sig(x) = \left(\Sample_{a_1,t_1}(x),\ldots,\Sample_{a_{c\log n},t_{c\log n}(x)}\right)
\]
consists of $c\log n$ independent invocations of the distinguisher.
In this case $\uid$ would satisfy
the singleton-detection properties of \cref{lem:singleton},
but with a $O(\log^2 n)$-bit seed.  To see why, 
consider an arbitrary set $S\subset E$ of edges.
Let $e^* = \bigoplus_{e\in S} e$ be the XOR of the names 
of all edges in $S$.  If $|S|>2$, then the probability that
we mistakenly believe $S$ to be the 
singleton set $\{e^*\}$ is
\begin{align*}
\Pr\left(\bigoplus_{e\in S} \uid(e) = (e^*, \Sig(e^*))\right)
= 
\Pr\left(\bigoplus_{e\in (S\oplus \{e^*\})} \uid(e) = (\mathbf{0},\mathbf{0})\right)
\leq (7/8)^{c\log n} = 1/\poly(n).
\end{align*}

Like~\cite{GibbKKT15,GhaffariP16}, 
this scheme uses a $O(\log^2 n)$-bit seed, but it is
simpler than both of \cite{GibbKKT15,Ghaffari16}.
The seed-length can be reduced to $O(\log n)$ bits 
by taking a random walk on an expander of length $O(\log n)$.

\begin{theorem}[{Hitting Property of Random Walks; see Vadhan~\cite[Theorem 4.17]{Vadhan12}}]
If $G$ is a regular digraph with spectral 
expansion $1-\lambda$, then for any $B \subset V(G)$
of density $\mu$, the probability that a random 
walk $(v_1,\ldots,v_t)$ of $t-1$ steps in $G$ starting 
at a uniformly random vertex $v_1$ always 
remains in $B$ is
\[
\Pr\left(\bigwedge_{i\in [t]} (v_i\in B)\right)
\leq (\mu+\lambda(1-\mu))^t.
\]
\end{theorem}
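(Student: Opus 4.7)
The plan is to reduce the hitting probability to an operator-norm computation on the transition matrix of $G$ restricted by $\Pi_B$, following the standard spectral template. Let $N=|V(G)|$, let $A\in\mathbb{R}^{N\times N}$ denote the normalized transition matrix of $G$ (doubly stochastic, since $G$ is regular), let $\pi=\mathbf{1}/N$ denote the uniform stationary distribution, and let $\Pi_B$ be the $0/1$ diagonal matrix supported on $B$. The spectral expansion hypothesis is the statement that $\|Au\|_2\le\lambda\|u\|_2$ for every $u$ orthogonal to $\mathbf{1}$.

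First I would observe that the hitting event has a clean linear-algebraic form. Conditioning on the walk having remained in $B$ through step $i$ multiplies the unnormalized distribution by $\Pi_B A$; projecting the final vertex into $B$ multiplies by one more $\Pi_B$. Hence
\[
\Pr\!\left(\bigwedge_{i\in[t]}v_i\in B\right)=\bigl\|(\Pi_B A)^{t-1}\,\Pi_B \pi\bigr\|_1.
\]
Starting from $u_0:=\Pi_B\pi$, which is supported on $B$ and satisfies $\|u_0\|_2^2=\mu/N$, the goal is to control the $\ell_1$ norm of $u_{t-1}:=(\Pi_B A)^{t-1}u_0$ (a vector also supported on $B$) via its $\ell_2$ norm using Cauchy--Schwarz, $\|u_{t-1}\|_1\le\sqrt{|B|}\,\|u_{t-1}\|_2=\sqrt{\mu N}\,\|u_{t-1}\|_2$.

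The central step is the operator bound: for every $u$ supported on $B$,
\[
\|\Pi_B A u\|_2\le\bigl(\mu+\lambda(1-\mu)\bigr)\|u\|_2.
\]
To prove it, decompose $u=\alpha\mathbf{1}/\sqrt{N}+u_\perp$ with $u_\perp\perp\mathbf{1}$. Since $A\mathbf{1}=\mathbf{1}$ we have $Au=\alpha\mathbf{1}/\sqrt{N}+Au_\perp$, and Cauchy--Schwarz applied to the support of $u$ yields $|\alpha|=|\langle u,\mathbf{1}/\sqrt{N}\rangle|\le\sqrt{\mu}\,\|u\|_2$, while the expansion condition gives $\|Au_\perp\|_2\le\lambda\|u_\perp\|_2\le\lambda\|u\|_2$. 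The component $\alpha\mathbf{1}/\sqrt{N}$, once restricted by $\Pi_B$, contributes an $\ell_2$ mass of at most $\sqrt{\mu}\cdot|\alpha|\le\mu\|u\|_2$, and $\Pi_B$ is a contraction on the orthogonal component; combining by the triangle inequality in $\ell_2$ gives the claimed $(\mu+\lambda(1-\mu))\|u\|_2$ bound.

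Iterating the operator bound $t-1$ times starting from $u_0$ gives $\|u_{t-1}\|_2\le(\mu+\lambda(1-\mu))^{t-1}\sqrt{\mu/N}$, and then the Cauchy--Schwarz conversion yields
\[
\|u_{t-1}\|_1\le\sqrt{\mu N}\cdot(\mu+\lambda(1-\mu))^{t-1}\sqrt{\mu/N}=\mu\cdot(\mu+\lambda(1-\mu))^{t-1}\le(\mu+\lambda(1-\mu))^{t},
\]
which is the desired hitting bound. The subtle point, and the main obstacle, is the refinement of the operator bound from the naive spectral split (which would only give $\sqrt{\mu+\lambda^2(1-\mu)}$ per step) to the sharper form $\mu+\lambda(1-\mu)$ claimed in the statement. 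The refinement comes from extracting the $\sqrt{\mu}$ factor twice---once from the Cauchy--Schwarz bound on $\alpha$ and once from the fact that the resulting uniform component is reweighted by $\Pi_B$ before being added to the orthogonal component in $\ell_2$---so that the $\mathbf{1}$-direction contributes $\mu$ rather than $\sqrt{\mu}$ to the per-step contraction factor, yielding the clean $(\mu+\lambda(1-\mu))^{t}$ bound. Any remaining slack is absorbed into the final inequality $\mu\cdot(\cdot)^{t-1}\le(\cdot)^{t}$ above, using $\mu\le\mu+\lambda(1-\mu)$.
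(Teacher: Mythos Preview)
The paper does not prove this statement; it is quoted from Vadhan's monograph and used as a black box. Your overall framework---expressing the hitting probability as $\|(\Pi_B A)^{t-1}\Pi_B\pi\|_1$, converting to $\ell_2$ via Cauchy--Schwarz on the $\mu N$ coordinates of $B$, and iterating a per-step contraction---is the standard one and matches the cited proof. The gap is entirely in the per-step operator bound.

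You claim that decomposing $u=\alpha\mathbf{1}/\sqrt{N}+u_\perp$ and applying the triangle inequality yields $\|\Pi_B A u\|_2\le(\mu+\lambda(1-\mu))\|u\|_2$. It does not. Your two summands are bounded by $\sqrt{\mu}\,|\alpha|\le\mu\|u\|_2$ and $\lambda\|u_\perp\|_2\le\lambda\|u\|_2$, so the triangle inequality only gives $(\mu+\lambda)\|u\|_2$. Optimizing more carefully over the split $|\alpha|^2+\|u_\perp\|_2^2=\|u\|_2^2$ with $|\alpha|\le\sqrt{\mu}\,\|u\|_2$ yields at best $\mu+\lambda\sqrt{1-\mu}$ (boundary case) or $\sqrt{\mu+\lambda^2}$ (interior optimum), both of which strictly exceed $\mu+\lambda(1-\mu)$ for typical parameters (e.g.\ $\mu=\lambda=1/2$). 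The factor $(1-\mu)$ you want simply does not arise from bounding $\|u_\perp\|$; note in fact $\|u_\perp\|_2^2=\|u\|_2^2-\alpha^2\ge(1-\mu)\|u\|_2^2$, the inequality going the wrong way.

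The fix, which is what Vadhan does, is to decompose the \emph{operator} rather than the vector. Write $A=(1-\lambda)J+\lambda C$ with $J=\mathbf{1}\mathbf{1}^\top/N$; since $G$ is regular, $A$ is doubly stochastic, $A$ preserves $\mathbf{1}^\perp$, and one checks $\|C\|_{2\to 2}\le 1$. Then for $u$ supported on $B$, $\|\Pi_B J u\|_2\le\mu\|u\|_2$ exactly as you computed, while $\|\Pi_B C u\|_2\le\|u\|_2$, and the \emph{convex combination} gives $(1-\lambda)\mu+\lambda=\mu+\lambda(1-\mu)$ directly. The rest of your argument then goes through unchanged.
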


In our case $V(G)$ corresponds to the set of possible seeds
$(a,t)$ for $\Sample$, so $|V(G)|=\poly(n)$,
and $\mu=7/8$.
Whenever $\lambda<1$, $t=O(\log n)$ suffices to 
reduce the error probability to $1/\poly(n)$.
If $G$ is $d$-regular, the cost of encoding the random walk
is $O(\log n) + t\log d = O(\log n)$ bits.

\newcommand{\etalchar}[1]{$^{#1}$}


\begin{thebibliography}{AALOG18}

\bibitem[AAK{\etalchar{+}}06]{AbiteboulAKMR06}
Serge Abiteboul, Stephen Alstrup, Haim Kaplan, Tova Milo, and Theis Rauhe.
\newblock Compact labeling scheme for ancestor queries.
\newblock {\em {SIAM} J. Comput.}, 35(6):1295--1309, 2006.

\bibitem[AALOG18]{alev2018graph}
Vedat~Levi Alev, Nima Anari, Lap~Chi Lau, and Shayan Oveis~Gharan.
\newblock Graph clustering using effective resistance.
\newblock In {\em 9th Innovations in Theoretical Computer Science Conference (ITCS 2018)}. Schloss-Dagstuhl-Leibniz Zentrum f{\"u}r Informatik, 2018.

\bibitem[ABR05]{AlstrupBR05}
Stephen Alstrup, Philip Bille, and Theis Rauhe.
\newblock Labeling schemes for small distances in trees.
\newblock {\em {SIAM} J. Discret. Math.}, 19(2):448--462, 2005.

\bibitem[ACGP16]{AbrahamCGP16}
Ittai Abraham, Shiri Chechik, Cyril Gavoille, and David Peleg.
\newblock Forbidden-set distance labels for graphs of bounded doubling dimension.
\newblock {\em {ACM} Trans. Algorithms}, 12(2):22:1--22:17, 2016.

\bibitem[ADK17]{AlstrupDK17}
Stephen Alstrup, S{\o}ren Dahlgaard, and Mathias B{\ae}k~Tejs Knudsen.
\newblock Optimal induced universal graphs and adjacency labeling for trees.
\newblock {\em J. {ACM}}, 64(4):27:1--27:22, 2017.

\bibitem[AG11]{AbrahamG11}
Ittai Abraham and Cyril Gavoille.
\newblock On approximate distance labels and routing schemes with affine stretch.
\newblock In {\em Proceedings 25th International Symposium on Distributed Computing ({DISC})}, pages 404--415, 2011.

\bibitem[AGHP16a]{AlstrupGHP16b}
Stephen Alstrup, Cyril Gavoille, Esben~Bistrup Halvorsen, and Holger Petersen.
\newblock Simpler, faster and shorter labels for distances in graphs.
\newblock In {\em Proceedings of the Twenty-Seventh Annual {ACM-SIAM} Symposium on Discrete Algorithms ({SODA})}, pages 338--350, 2016.

\bibitem[AGHP16b]{AlstrupGHP16a}
Stephen Alstrup, Inge~Li G{\o}rtz, Esben~Bistrup Halvorsen, and Ely Porat.
\newblock Distance labeling schemes for trees.
\newblock In {\em Proceedings 43rd International Colloquium on Automata, Languages, and Programming (ICALP)}, volume~55 of {\em LIPIcs}, pages 132:1--132:16. Schloss Dagstuhl - Leibniz-Zentrum f{\"{u}}r Informatik, 2016.

\bibitem[AGM12]{AhnGM12}
Kook~J. Ahn, Supdipto Guha, and Andrew McGregor.
\newblock Analyzing graph structure via linear measurements.
\newblock In {\em Proceedings of the 23rd Annual ACM-SIAM Symposium on Discrete Algorithms (SODA)}, pages 459--467, 2012.

\bibitem[AHL14]{AlstrupHL14}
Stephen Alstrup, Esben~Bistrup Halvorsen, and Kasper~Green Larsen.
\newblock Near-optimal labeling schemes for nearest common ancestors.
\newblock In {\em Proceedings of the Twenty-Fifth Annual {ACM-SIAM} Symposium on Discrete Algorithms ({SODA})}, pages 972--982, 2014.

\bibitem[AKTZ19]{AlstrupKTZ19}
Stephen Alstrup, Haim Kaplan, Mikkel Thorup, and Uri Zwick.
\newblock Adjacency labeling schemes and induced-universal graphs.
\newblock {\em {SIAM} J. Discret. Math.}, 33(1):116--137, 2019.

\bibitem[AN19]{AbrahamN19}
Ittai Abraham and Ofer Neiman.
\newblock Using petal-decompositions to build a low stretch spanning tree.
\newblock {\em {SIAM} J. Comput.}, 48(2):227--248, 2019.

\bibitem[ARV09]{AroraRV09}
Sanjeev Arora, Satish Rao, and Umesh~V. Vazirani.
\newblock Expander flows, geometric embeddings and graph partitioning.
\newblock {\em J.~ACM}, 56(2), 2009.

\bibitem[BBS06]{bauer2006toughness}
Douglas Bauer, Hajo Broersma, and Edward Schmeichel.
\newblock Toughness in graphs--a survey.
\newblock {\em Graphs and Combinatorics}, 22:1--35, 2006.

\bibitem[BCG{\etalchar{+}}22]{Bar-NatanCGMW22}
Aviv Bar{-}Natan, Panagiotis Charalampopoulos, Pawel Gawrychowski, Shay Mozes, and Oren Weimann.
\newblock Fault-tolerant distance labeling for planar graphs.
\newblock {\em Theor. Comput. Sci.}, 918:48--59, 2022.

\bibitem[BCHR20]{BaswanaCHR20}
Surender Baswana, Keerti Choudhary, Moazzam Hussain, and Liam Roditty.
\newblock Approximate single-source fault tolerant shortest path.
\newblock {\em {ACM} Trans. Algorithms}, 16(4):44:1--44:22, 2020.

\bibitem[BF67]{BreuerF67}
Melvin~A Breuer and Jon Folkman.
\newblock An unexpected result in coding the vertices of a graph.
\newblock {\em Journal of Mathematical Analysis and Applications}, 20(3):583--600, 1967.

\bibitem[BGP22]{BonamyGP22}
Marthe Bonamy, Cyril Gavoille, and Michal Pilipczuk.
\newblock Shorter labeling schemes for planar graphs.
\newblock {\em {SIAM} J. Discret. Math.}, 36(3):2082--2099, 2022.

\bibitem[Bre66]{Breuer66}
Melvyl Breuer.
\newblock Coding the vertexes of a graph.
\newblock {\em IEEE Transactions on Information Theory}, 12(2):148--153, 1966.

\bibitem[CGKT08]{CourcelleGKT08}
Bruno Courcelle, Cyril Gavoille, Mamadou~Moustapha Kant{\'{e}}, and Andrew Twigg.
\newblock Connectivity check in 3-connected planar graphs with obstacles.
\newblock {\em Electron. Notes Discret. Math.}, 31:151--155, 2008.

\bibitem[Cho16]{choudhary2016optimal}
Keerti Choudhary.
\newblock An optimal dual fault tolerant reachability oracle.
\newblock In {\em 43rd International Colloquium on Automata, Languages, and Programming (ICALP 2016)}. Schloss-Dagstuhl-Leibniz Zentrum f{\"u}r Informatik, 2016.

\bibitem[Chv73]{chvatal1973tough}
Vasek Chv{\'{a}}tal.
\newblock Tough graphs and hamiltonian circuits.
\newblock {\em Discret. Math.}, 5(3):215--228, 1973.

\bibitem[CL06]{ChungL06}
Fan R.~K. Chung and Lincoln Lu.
\newblock Survey: Concentration inequalities and martingale inequalities: {A} survey.
\newblock {\em Internet Math.}, 3(1):79--127, 2006.

\bibitem[CMW23]{chechik2023optimal}
Shiri Chechik, Shay Mozes, and Oren Weimann.
\newblock $\tilde{O}$ptimal fault-tolerant reachability labeling in planar graphs.
\newblock {\em arXiv preprint arXiv:2307.07222}, 2023.

\bibitem[CPR11]{ChanPR11}
Timothy~M. Chan, Mihai P\v{a}tra\c{s}cu, and Liam Roditty.
\newblock Dynamic connectivity: Connecting to networks and geometry.
\newblock {\em SIAM J.~Comput.}, 40(2):333--349, 2011.

\bibitem[CT07]{CourcelleT07}
Bruno Courcelle and Andrew Twigg.
\newblock Compact forbidden-set routing.
\newblock In {\em Proceedings 24th Annual Symposium on Theoretical Aspects of Computer Science (STACS)}, volume 4393 of {\em Lecture Notes in Computer Science}, pages 37--48. Springer, 2007.

\bibitem[DP09]{DubhashiPanconesi09}
Devdatt~P. Dubhashi and Alessandro Panconesi.
\newblock {\em Concentration of Measure for the Analysis of Randomized Algorithms}.
\newblock Cambridge University Press, 2009.

\bibitem[DP20]{DuanP20}
Ran Duan and Seth Pettie.
\newblock Connectivity oracles for graphs subject to vertex failures.
\newblock {\em {SIAM} J. Comput.}, 49(6):1363--1396, 2020.

\bibitem[DP21]{DoryP21}
Michal Dory and Merav Parter.
\newblock Fault-tolerant labeling and compact routing schemes.
\newblock In {\em Proceedings of the 40th {ACM} Symposium on Principles of Distributed Computing (PODC)}, pages 445--455, 2021.

\bibitem[EJKS85]{enomoto1985toughness}
Hikoe Enomoto, Bill Jackson, Panagiotis Katerinis, and Akira Saito.
\newblock Toughness and the existence of k-factors.
\newblock {\em Journal of Graph Theory}, 9(1):87--95, 1985.

\bibitem[FHL05]{feige2005improved}
Uriel Feige, MohammadTaghi Hajiaghayi, and James~R Lee.
\newblock Improved approximation algorithms for minimum-weight vertex separators.
\newblock In {\em Proceedings of the thirty-seventh annual ACM symposium on Theory of computing}, pages 563--572, 2005.

\bibitem[FR94]{FurerR94}
Martin F{\"{u}}rer and Balaji Raghavachari.
\newblock Approximating the minimum-degree {S}teiner tree to within one of optimal.
\newblock {\em J.~Algor.}, 17(3):409--423, 1994.

\bibitem[Gha16]{Ghaffari16}
Mohsen Ghaffari.
\newblock An improved distributed algorithm for maximal independent set.
\newblock In {\em Proceedings 27th Annual {ACM-SIAM} Symposium on Discrete Algorithms (SODA)}, pages 270--277, 2016.

\bibitem[GKKT15]{GibbKKT15}
David Gibb, Bruce~M. Kapron, Valerie King, and Nolan Thorn.
\newblock Dynamic graph connectivity with improved worst case update time and sublinear space.
\newblock {\em CoRR}, abs/1509.06464, 2015.

\bibitem[GP16]{GhaffariP16}
Mohsen Ghaffari and Merav Parter.
\newblock {MST} in log-star rounds of congested clique.
\newblock In {\em Proceedings of the 35th {ACM} Symposium on Principles of Distributed Computing ({PODC})}, pages 19--28, 2016.

\bibitem[GPPR04]{GavoillePPR04}
Cyril Gavoille, David Peleg, St{\'{e}}phane P{\'{e}}rennes, and Ran Raz.
\newblock Distance labeling in graphs.
\newblock {\em J. Algorithms}, 53(1):85--112, 2004.

\bibitem[GRST21]{goranci2021expander}
Gramoz Goranci, Harald R{\"a}cke, Thatchaphol Saranurak, and Zihan Tan.
\newblock The expander hierarchy and its applications to dynamic graph algorithms.
\newblock In {\em Proceedings of the 2021 ACM-SIAM Symposium on Discrete Algorithms (SODA)}, pages 2212--2228. SIAM, 2021.

\bibitem[Gu21]{gu2021proof}
Xiaofeng Gu.
\newblock A proof of brouwer's toughness conjecture.
\newblock {\em SIAM Journal on Discrete Mathematics}, 35(2):948--952, 2021.

\bibitem[GU23]{GawrychowskiU23}
Pawel Gawrychowski and Przemyslaw Uznanski.
\newblock Better distance labeling for unweighted planar graphs.
\newblock {\em Algorithmica}, 85(6):1805--1823, 2023.

\bibitem[HKNS15]{HenzingerKNS15}
Monika Henzinger, Sebastian Krinninger, Danupon Nanongkai, and Thatchaphol Saranurak.
\newblock Unifying and strengthening hardness for dynamic problems via the online matrix-vector multiplication conjecture.
\newblock In {\em Proceedings of the 47th Annual {ACM} Symposium on Theory of Computing (STOC)}, pages 21--30, 2015.

\bibitem[HL09]{HsuL09}
Tai{-}Hsin Hsu and Hsueh{-}I Lu.
\newblock An optimal labeling for node connectivity.
\newblock In {\em Proceedings of 20th International Symposium on Algorithms and Computation (ISAAC)}, volume 5878 of {\em Lecture Notes in Computer Science}, pages 303--310. Springer, 2009.

\bibitem[IEWM23]{IzumiEWM23}
Taisuke Izumi, Yuval Emek, Tadashi Wadayama, and Toshimitsu Masuzawa.
\newblock Deterministic fault-tolerant connectivity labeling scheme with adaptive query processing time.
\newblock In {\em Proceedings of the 42nd ACM Symposium on Principles of Distributed Computing (PODC)}, 2023.

\bibitem[IN12]{IzsakN12}
Rani Izsak and Zeev Nutov.
\newblock A note on labeling schemes for graph connectivity.
\newblock {\em Inf. Process. Lett.}, 112(1-2):39--43, 2012.

\bibitem[KKKP04]{KatzKKP04}
Michal Katz, Nir~A. Katz, Amos Korman, and David Peleg.
\newblock Labeling schemes for flow and connectivity.
\newblock {\em {SIAM} J. Comput.}, 34(1):23--40, 2004.

\bibitem[KKM13]{KapronKM13}
Bruce~M. Kapron, Valerie King, and Ben Mountjoy.
\newblock Dynamic graph connectivity in polylogarithmic worst case time.
\newblock In {\em Proceedings of the 24th Annual ACM-SIAM Symposium on Discrete Algorithms (SODA)}, pages 1131--1142, 2013.

\bibitem[KNR92]{KannanNR92}
Sampath Kannan, Moni Naor, and Steven Rudich.
\newblock Implicit representation of graphs.
\newblock {\em {SIAM} J. Discret. Math.}, 5(4):596--603, 1992.

\bibitem[Kos23]{kosinas2023connectivity}
Evangelos Kosinas.
\newblock Connectivity queries under vertex failures: Not optimal, but practical.
\newblock In {\em 31st Annual European Symposium on Algorithms (ESA 2023)}. Schloss-Dagstuhl-Leibniz Zentrum f{\"u}r Informatik, 2023.

\bibitem[KP21]{KarthikP21}
{Karthik {C. S.}} and Merav Parter.
\newblock Deterministic replacement path covering.
\newblock In {\em Proceedings of the 32nd {ACM-SIAM} Symposium on Discrete Algorithms (SODA)}, pages 704--723, 2021.

\bibitem[KPP16]{KopelowitzPP16}
Tsvi Kopelowitz, Seth Pettie, and Ely Porat.
\newblock Higher lower bounds from the {3SUM} conjecture.
\newblock In {\em Proceedings of the 27th Annual {ACM-SIAM} Symposium on Discrete Algorithms ({SODA})}, pages 1272--1287, 2016.

\bibitem[LS22]{LongS22}
Yaowei Long and Thatchaphol Saranurak.
\newblock Near-optimal deterministic vertex-failure connectivity oracles.
\newblock In {\em Proceedings 63rd Annual {IEEE} Symposium on Foundations of Computer Science ({FOCS})}, pages 1002--1010, 2022.

\bibitem[LW24]{long2024better}
Yaowei Long and Yunfan Wang.
\newblock {Better Decremental and Fully Dynamic Sensitivity Oracles for Subgraph Connectivity}.
\newblock In {\em Proceedings 51st International Colloquium on Automata, Languages, and Programming (ICALP 2024)}, volume 297 of {\em Leibniz International Proceedings in Informatics (LIPIcs)}, pages 109:1--109:20, Dagstuhl, Germany, 2024. Schloss Dagstuhl -- Leibniz-Zentrum f{\"u}r Informatik.

\bibitem[MS18]{moshkovitz2018decomposing}
Guy Moshkovitz and Asaf Shapira.
\newblock Decomposing a graph into expanding subgraphs.
\newblock {\em Random Structures \& Algorithms}, 52(1):158--178, 2018.

\bibitem[NI92]{NagamochiI92}
Hiroshi Nagamochi and Toshihide Ibaraki.
\newblock A linear-time algorithm for finding a sparse $k$-connected spanning subgraph of a $k$-connected graph.
\newblock {\em Algorithmica}, 7(5{\&}6):583--596, 1992.

\bibitem[NSWN17]{NanongkaiSW17}
Danupon Nanongkai, Thatchaphol Saranurak, and Christian Wulff-Nilsen.
\newblock Dynamic minimum spanning forest with subpolynomial worst-case update time.
\newblock In {\em Proceedings 58th Annual IEEE Symposium on Foundations of Computer Science (FOCS)}, pages 950--961, 2017.

\bibitem[NY19]{NelsonY19}
Jelani Nelson and Huacheng Yu.
\newblock Optimal lower bounds for distributed and streaming spanning forest computation.
\newblock In {\em Proceedings of the 30th Annual {ACM-SIAM} Symposium on Discrete Algorithms ({SODA})}, pages 1844--1860, 2019.

\bibitem[PP22]{ParterP22a}
Merav Parter and Asaf Petruschka.
\newblock {\~{O}}ptimal dual vertex failure connectivity labels.
\newblock In {\em Proceedings of the 36th International Symposium on Distributed Computing ({DISC})}, volume 246 of {\em LIPIcs}, pages 32:1--32:19. Schloss Dagstuhl - Leibniz-Zentrum f{\"{u}}r Informatik, 2022.

\bibitem[PPP24]{ParterPP24}
Merav Parter, Asaf Petruschka, and Seth Pettie.
\newblock Connectivity labeling and routing with multiple vertex failures.
\newblock In {\em Proceedings of the 56th Annual {ACM} Symposium on Theory of Computing (STOC)}, pages 823--834, 2024.

\bibitem[PSS{\etalchar{+}}22]{PilipczukSSTV22}
Michal Pilipczuk, Nicole Schirrmacher, Sebastian Siebertz, Szymon Torunczyk, and Alexandre Vigny.
\newblock Algorithms and data structures for first-order logic with connectivity under vertex failures.
\newblock In {\em Proceedings of the 49th International Colloquium on Automata, Languages, and Programming ({ICALP})}, volume 229 of {\em LIPIcs}, pages 102:1--102:18. Schloss Dagstuhl - Leibniz-Zentrum f{\"{u}}r Informatik, 2022.

\bibitem[PSY22]{PettieSY22}
Seth Pettie, Thatchaphol Saranurak, and Longhui Yin.
\newblock Optimal vertex connectivity oracles.
\newblock In {\em Proceedings of the 54th Annual ACM Symposium on Theory of Computing (STOC)}, pages 151--161, 2022.

\bibitem[PT06]{PatrascuT06}
Mihai P{\v a}tra{\c s}cu and Mikkel Thorup.
\newblock Time-space trade-offs for predecessor search.
\newblock In {\em Proceedings of the 38th ACM Symposium on Theory of Computing (STOC)}, pages 232--240, 2006.

\bibitem[PT07]{PatrascuT07}
Mihai P{\v a}tra{\c s}cu and Mikkel Thorup.
\newblock Planning for fast connectivity updates.
\newblock In {\em Proceedings of the 48th IEEE Symposium on Foundations of Computer Science (FOCS)}, pages 263--271, 2007.

\bibitem[PT14]{PatrascuT14}
Mihai P\v{a}tra\c{s}cu and Mikkel Thorup.
\newblock Dynamic integer sets with optimal rank, select, and predecessor search.
\newblock In {\em Proceedings 55th Annual {IEEE} Symposium on Foundations of Computer Science ({FOCS})}, pages 166--175, 2014.

\bibitem[R{\"{a}}c08]{Racke08}
Harald R{\"{a}}cke.
\newblock Optimal hierarchical decompositions for congestion minimization in networks.
\newblock In {\em Proceedings 40th Annual {ACM} Symposium on Theory of Computing (STOC)}, pages 255--264, 2008.

\bibitem[RST14]{racke2014computing}
Harald R{\"a}cke, Chintan Shah, and Hanjo T{\"a}ubig.
\newblock Computing cut-based hierarchical decompositions in almost linear time.
\newblock In {\em Proceedings of the twenty-fifth annual ACM-SIAM symposium on Discrete algorithms}, pages 227--238. SIAM, 2014.

\bibitem[Tho18]{thorup2018sample}
Mikkel Thorup.
\newblock $\mathtt{Sample(x)= (a* x <= t)}$ is a distinguisher with probability 1/8.
\newblock {\em SIAM Journal on Computing}, 47(6):2510--2526, 2018.

\bibitem[TZ05]{TZ05}
Mikkel Thorup and Uri Zwick.
\newblock Approximate distance oracles.
\newblock {\em J.~ACM}, 52(1):1--24, 2005.

\bibitem[Vad12]{Vadhan12}
Salil~P. Vadhan.
\newblock Pseudorandomness.
\newblock {\em Foundations and Trends in Theoretical Computer Science}, 7(1–3):1--336, 2012.

\bibitem[vdBS19]{BrandS19}
Jan van~den Brand and Thatchaphol Saranurak.
\newblock Sensitive distance and reachability oracles for large batch updates.
\newblock In {\em Proceedings of the 60th Annual {IEEE} Symposium on Foundations of Computer Science ({FOCS})}, pages 424--435, 2019.

\bibitem[Win89]{win1989connection}
Sein Win.
\newblock On a connection between the existence of k-trees and the toughness of a graph.
\newblock {\em Graphs and Combinatorics}, 5(1):201--205, 1989.

\bibitem[Yu21]{Yu21}
Huacheng Yu.
\newblock Tight distributed sketching lower bound for connectivity.
\newblock In {\em Proceedings of the 32nd {ACM-SIAM} Symposium on Discrete Algorithms ({SODA})}, pages 1856--1873, 2021.

\end{thebibliography}
\end{document}